\author{Michael Benedikt$^1$\and Stanislav Kikot\and Johannes Marti \and Piotr Ostropolski-Nalewaja$^4$
\affiliations
$^1$University of Oxford
$^4$University of Wrocław, Poland; TU Dresden, Germany
}
\title{Monotone Rewritability and the Analysis of   Queries, Views, and Rules}
\definecolor{Gray}{rgb}{0.85, 0.85, 0.85}
\definecolor{DarkGray}{rgb}{0.65, 0.65, 0.65}
\definecolor{Teal}{rgb}{0, 0.5, 0.5}
\newcommand{\fullversion}{\cite{this-paper-full-version}}
\newif\ifcameraready
   \newcommand{\appwrap}[1]{the full version \fullversion} 
   \newcommand{\appwrap}[1]{#1} 
\newcommand{\atw}{\kw{tw}^+}
\newcommand{\certainanswerrewriting}{\kw{C.A.\,Rewriting}}
\newcommand{\unfaithfultests}{\kw{UT}}
\newcommand{\ut}{\kw{ut}}
\newcommand{\pair}[1]{\langle#1\rangle}
\newcommand{\rebut}[1]{{#1}}
\newcommand{\aut}{A}
\newcommand{\codes}[2]{{#1}^\mathrm{code}_{#2}}
\newcommand{\local}{\mathsf{LOCAL}}
\newcommand{\Lang}{\mathcal{L}}
\newcommand{\mtuple}{\tuple{m}}
\newcommand{\ntuple}{\tuple{n}}
\newcommand{\tuple}[1]{\mathbf{#1}}
\renewcommand{\iff}{\quad \mbox{iff} \quad}
\newcommand{\tree}{\mathcal{T}}
\newcommand{\restr}[2]{{#1}\!\!\upharpoonright_{#2}}
\newcommand{\approxInst}{\mathcal{U}}
\newcommand{\view}{V}
\newcommand{\viewback}{\someviewback{\view}}
\newcommand{\someviewback}[1]{\mathrm{Back}_{#1}}
\newcommand{\viewinst}{\mathcal{J}}
\newcommand{\viewinstance}{\viewinst}
\newcommand{\viewschema}{\sigma_V}
\newcommand{\rinst}{\mathcal{R}}
\newcommand{\chase}{\kw{Chase}}
\newcommand{\baseschema}{\kw{B}}
\newcommand{\ysucc}{\kw{YSucc}}
\newcommand{\xsucc}{\kw{XSucc}}
\newcommand{\xproj}{\kw{XProj}}
\newcommand{\yproj}{\kw{YProj}}
\newcommand{\horproj}{\xproj}
\newcommand{\vertproj}{\yproj}
\newcommand{\ha}{\kw{HA}}
\newcommand{\va}{\kw{VA}}
\newcommand{\horadj}{\ha}
\newcommand{\vertadj}{\va}
\newcommand{\rightof}{\kw{RightOf}}
\newcommand{\downto}{\kw{DownTo}}
\newcommand{\qstart}{\kw{start}}
\newcommand{\qverify}{\kw{verify}}
\newcommand{\origin}{\kw{Origin}}
\newcommand{\zerox}{\kw{Xzero}}
\newcommand{\xzero}{\zerox}
\newcommand{\zeroy}{\kw{Yzero}}
\newcommand{\yzero}{\zeroy}
\newcommand{\init}{\kw{Init}}
\newcommand{\bottomedge}{\kw{BottomEdge}}
\newcommand{\axisy}{\kw{AxisY}}
\newcommand{\atorigin}{\kw{AtOrigin}}
\newcommand{\guard}{\kw{G}}
\newcommand{\oguard}{\kw{G}'}
\newtheorem{theorem}{Theorem}
\newtheorem{proposition}{Proposition}
\newtheorem{lemma}{Lemma}
\newtheorem{example}{Example}
\newtheorem{corollary}{Corollary}
\newtheorem{definition}{Definition}
\newtheorem{claim}{Claim}
\newcommand{\adom}{\kw{adom}}
\newcommand{\ruleof}{\kw{RuleOf}}
\newcommand{\goal}{\kw{Goal}}
\newcommand{\reach}{\kw{Reach}}
\newcommand{\code}{\kw{Code}}
\newcommand{\myeat}[1]{}
\newcommand{\tspan}{\kw{tspan}}
\newcommand{\fgdatalog}{\kw{FGDL}}
\newcommand{\fgdl}{\fgdatalog}
\newcommand{\rewriting}{\kw{R}}
\newcommand{\wrt}{w.r.t.}
\newcommand{\poslfp}{\kw{PosLFP}}
\newcommand{\struct}{{\frak M}}
\newcommand{\datalogarrow}{:=}
\newcommand{\datalogwedge}{,}
\newcommand{\canondb}{\kw{Canondb}}
\newcommand{\inst}{{\mathcal I}}
\newcommand{\jnst}{{\mathcal J}}
\newcommand{\jinst}{\jnst}
\newcommand{\vinst}{\jnst}
\newcommand{\TD}{{T\kern-1.1mm{}D}}
\newcommand{\vertices}{{\textsc{vertices}}}
\renewcommand{\fpeval}[2]{\mathsf{FPEval}(#1,#2)}
\newcommand{\doutput}[2]{\mathsf{Output}(#1,#2)}
\newcommand{\logic}[1]{\textup{\small #1}\xspace}
\renewcommand{\vec}[1]{\boldsymbol{#1}}
\newcommand{\aschema}{\mathbf{S}}
\newcommand{\schema}{\aschema}
\newcommand{\views}{\mathbf{V}}
\newcommand{\backview}{\kw{BackV}}
\newcommand{\defeq}{:=}
\newcommand{\mondet}{\kw{MonDet}}
\def\Q{{\mathcal{Q}}}
\newcommand{\A}{\mathbb{A}}
\newcommand{\set}[1]{\left\{{#1}\right\}}
\newcommand{\dom}{{\kw{dom}}}
\newcommand{\fgtgd}{\logic{FGTGD}}
\newcommand{\complexity}[1]{\textsc{#1}}
\newcommand{\conexp}{\complexity{CoNExp}\xspace}
\newcommand{\twoexptime}{\complexity{2Exp}\xspace}
\newcommand{\twoexp}{\twoexptime}
\newcommand{\threeexp}{\complexity{3Exp}\xspace}
\newcommand{\decode}[1]{\mathfrak{D}(#1)}
\renewcommand{\varphi}{\phi}
\newcommand{\kw}[1]{\textsc{#1}}
\newcommand{\lemmagfpbwd}[1]{\hyperref[lemma:backwards-gfp]{#1}\xspace}
\newcommand{\lemmagnfpbwd}[1]{\hyperref[lemma:backwards-gnfp]{#1}\xspace}
\newcommand{\myparagraph}[1]{{\bf #1.}}
\newcommand{\frone}{\kw{FR-1}}
\begin{document}
\maketitle
\begin{abstract}
We study the interaction of views, queries, and background knowledge in the form of existential rules.
The motivating questions concern monotonic determinacy of a query using views w.r.t.  rules,
which refers to the ability 
to  recover the query answer from the views via a monotone function. We study the decidability of monotonic determinacy,
and compare with variations that require  the ``recovery function'' to be in a well-known monotone query language, such as conjunctive queries 
or Datalog.
Surprisingly, we find that even in the presence of basic existential rules, the borderline between well-behaved and badly-behaved answerability
differs radically from the unconstrained case. In order to understand this boundary, we require new results concerning 
entailment problems involving views and rules.
\end{abstract}

\vspace{-2mm}
\section{Introduction} \label{sec:intro}
Views are a mechanism for defining an interface to a set of datasources. In the context of relational databases,
views allow one to associate a complex transformation -- a view definition -- to a new
relation symbol, indicating that the symbol will stand for the output of the transformation. 
Given a set of views $\views$ and an instance $\inst$ of the original schema, an external user will have access to
the \emph{view image of $\inst$}, $\views(\inst)$, in which each view symbol is interpreted by evaluating the corresponding
definition on $\inst$.
 Views have many uses, including   privacy and  query optimization \cite{afratichirkovabook}. 

Our work is motivated by the question of rewriting queries using views. We have a set of views $\views$ defined over some ``base schema'', a query $Q$ over the same
schema, and we want to \emph{rewrite  $Q$ using $\views$}: find a function $R$ on the  view schema such that applying $R$ to the output of the
views definitions will implement $Q$. We will be interested in the setting where $Q$ and the view definitions are \emph{monotone queries} -- defined
in  
languages that do not have negation or difference. Conjunctive queries (CQs), their unions (UCQs), and
Datalog are query languages that define only monotone queries. In this case, it is natural to desire a rewriting that is 
a monotone function of the view images. We write $\views(\inst) \subseteq \views(\inst')$ if for every view definition
in $\views$, its evaluation on $\inst$ is contained in its evaluation on $\inst'$.
Then $Q$ is \emph{monotonically determined over $\views$} \cite{perezthesis,NSV,calvanese2007view} if for every $\inst$, $\inst'$ with  $\views(\inst)\subseteq \views(\inst')$, 
 $\doutput{Q}{\inst} \subseteq \doutput{Q}{\inst'}$.
A query $Q$ may not have any rewriting using the views $\views$ over all datasources, but it may be well-behaved for the data of interest
to an application. 
Here we will consider datasource restrictions given by standard classes of knowledge in the form of \emph{existential rules}:\footnote{Existential rules are also called \emph{tuple-generating dependencies}~\cite{AHV}, \emph{conceptual graph rules}~\cite{SalMug96}, Datalog$^{\pm}$~\cite{datalogpmj}, and \emph{$\forall \exists$-rules}~\cite{Baget-LMS-walking-decidability-line} in the literature.}
rules with existential quantifiers in the head.
Given a set of  views $\views=\view_1 \ldots \view_k$, query $Q$, and rules $\Sigma$, we say that
\emph{$Q$ is monotonically determined over $\views$ w.r.t. $\Sigma$} if for $\inst, \inst'$ satisfying $\Sigma$ we have that
$\views(\inst) \subseteq \views(\inst')$ implies $\doutput{Q}{\inst} \subseteq \doutput{Q}{\inst'}$.

We say a query $R$ over the view schema is a \emph{rewriting} of $Q$ with respect to $\Sigma$ if for every $\inst$ satisfying $\Sigma$ we have that $R(\views(\inst)) = \doutput{Q}{\inst}$ where $\views(\inst)$ is the view image of $\inst$.
\begin{example} \label{ex:constraints}
Suppose our base signature has binary relations $R$ and $S$, and let $Q$ be $\exists x  ~ R(x,x)$.
Consider the view $V(x,y) := R(x,y) \vee S(x,y)$. It is easy to see that $Q$ is  not rewritable over
$V$, monotonically or otherwise: given  a pair in $V$, we cannot tell if it is in $R$ or in $S$.
But suppose we know that our schema satisfies the rule $\forall x  ~ (S(x,x) \rightarrow R(x,x))$.
Then $Q$ has a simple rewriting $\exists  x  ~ V(x,x)$.
\end{example}

There are two natural questions about monotonic determinacy:
\begin{inparaenum}
\item \emph{Decidability}: Fixing languages for specifying queries, views, and rules, can we decide monotonic determinacy?
\item \emph{Required expressiveness of rewritings}: 
can we say that
when  $Q$ is monotonically determined by $\views$ w.r.t. $\Sigma$ there is  necessarily a rewriting in a restricted monotone language 
(e.g. CQ, UCQ, Datalog)?
\end{inparaenum}

The monotonicity requirement is motivated by the fact that when we begin with a monotone query, we expect a monotone
rewriting, and also  by  the need
to make  the two questions above manageable.  If we do not impose that a rewriting is monotone,
it is known that the behavior of even very simple queries and views -- CQs--  is badly-behaved, even in the absence
of rules.  We cannot decide whether a CQ query has a rewriting over CQ views, and in cases where such a rewriting exists we can say basically
nothing about how complex it may need  to be \cite{redspider,rainworm}.
In contrast, it is known that monotonic determinacy behaves very well for CQ queries and views: decidable in NP \cite{NSV,lmss}. And
when rewritings exist, they can be taken to be CQs \cite{NSV}. These results have been shown to extend to many classes of rules
\cite{thebook}.

Monotone determinacy is well understood for first-order queries and views, in the absence of rules. It is decidable in the case of 
UCQ views and queries, and whenever one has monotonic determinacy, one has a rewriting that is a UCQ \cite{thebook}. 
Prior work also
investigated the situation when views and queries are \emph{monotone and recursive}, either as \emph{regular path queries} or 
\emph{Datalog}
\cite{losslessregular,calvanese2007view,determinacyregularpath,uspods20journal}. Recall that Datalog is a standard language for defining recursive monotone queries.
For example, the  Datalog query $Q_R$ consisting of the following two rules computes the pairs $(x,y)$ where $x$ reaches $y$ via edges in a binary relation $R$:
\vspace{-1mm}
\begin{align*}
\reach(x,y) \datalogarrow& R(x,y)\\  \reach(x,y) \datalogarrow& R(x,z), \reach(z,y) 
\end{align*}
\vspace{-5mm}

\noindent
Let us give a rough summary of the situation for Datalog, based
on the most recent work \cite{uspods20journal}. Monotonic determinacy does not behave well for general Datalog queries and views -- it is undecidable, and when
it holds one may need arbitrarily complex rewriting. At a high level,  two well-behaved paradigms for monotonic determinacy  have been identified:
1) \emph{non-recursive queries (i.e. UCQ) and general Datalog views}, and 2) \emph{both views and queries guarded}.
 Specifically, for decidability, we have:
\begin{compactitem}
\item 
\emph{The query being non-recursive suffices for good behaviour}:
The query  $Q$ being an (unguarded) CQ or even a UCQ, suffices for decidability,
even
for  general Datalog views;

\item \emph{Guarded queries and views behave well}:
If the views and queries are both in the standard ``guarded recursive language'', frontier-guarded Datalog, 
monotonic determinacy is decidable, even if there is recursion in both;

\item \emph{One case of a recursive query and non-guarded views behaves well:}
If the query is recursive,  monotonic determinacy 
behaves well
if the query is ``very guarded''-- Monadic Datalog (MDL) --  and the views are CQs.

\item \emph{Recursive queries and unguarded views are a problem}:  If the views are general (unguarded) UCQs and we allow
recursion in the query $Q$ (even MDL), monotonic determinacy is undecidable.
\end{compactitem}

All the cases above that are well-behaved for decidability are also well-behaved for expressiveness of rewritings: whenever monotonic determinacy
holds there is a rewriting in Datalog.
But there are additional ''tame'' cases for rewritability; 
\emph{when a Datalog query is monotonically determined in terms of CQ views or guarded views,  there is a rewriting in Datalog}. That is, 
for views that are CQs or in guarded Datalog, we do not need to restrict the Datalog query $Q$ to be sure that monotonic determinacy is witnessed by a Datalog rewriting.
Figures \ref{fig:nocondecide} and \ref{fig:noconrewrite} provide a detailed look at the case without any knowledge in the form of rules. In the table
Und. means undecidable, and ``n. n. Datalog'' means that rewritings are not necessarily in Datalog.

We  consider how this changes with background knowledge. Clearly, for ``arbitrary knowledge'' 
in first-order logic, nothing interesting can be said about decidability or about the necessary language for rewritings.
We will thus focus on ``tame existential rules'' such as those in the Datalog$^{\pm}$ family
\cite{datalogpm}, frontier-guarded TGDs, which are known to behave well for standard decidability questions. 
In the presence of tame existential rules, a number of new complications arise. First, we need to distinguish two notions of monotonic determinacy:
\emph{finite monotonic determinacy}, in which the rewriting must exist only for finite databases, and
\emph{unrestricted monotonic determinacy}, in which the rewriting holds for all relational structures, finite or infinite.
In the case of Datalog without rules, these two notions coincide \cite{uspods20journal}. But even in the presence of very
simple rules, they differ.
\begin{example} \label{ex:fc}
Let our schema consist of a binary relation $R$, along with  a rule: $\quad\forall x y ~ R(x,y) \rightarrow \exists z ~ R(y,z)$

This is a \emph{Linear TGD}: a universally quantified implication
with a single atom in the hypothesis. It is even a \emph{unary inclusion dependency} (UID): there are no repeated variables, and only one
variable occurs on both sides of the implication. Such rules are extremely well-behaved. 
For example,  the implication problem between them is known
to be decidable in PSPACE \cite{AHV}. Observe that in a \emph{finite} database satisfying the rule, there must be a cycle of $R$ edges.

Consider the Datalog program $Q = Q_R \cup \set{\goal \datalogarrow \reach(x,x)}$, where $Q_R$ is defined above, and
$\goal$ the goal predicate. Then $Q$ is a Boolean query checking for the existence of an $R$ cycle. 
Suppose also that we have only the Boolean view $V() := \exists x y ~ R(x,y)$.  Clearly, we cannot answer $Q$ using the views over all instances.
But over finite databases, $Q$ is equivalent to  $V()$, so it is monotonically determined over finite instances.
\end{example}

In this work, we deal mainly with \emph{unrestricted monotonic determinacy}, admittedly because it is simpler to analyze.
But even for the unrestricted case, our proofs will involve interplay between finite and infinite. And we will sometimes be able
to infer that the finite and unrestricted cases agree.
We show that the case where both views and queries are guarded is still well-behaved 
for decidability in the presence of guarded rules.
But the case of non-guarded views is significantly different. Without rules, non-recursiveness can substitute for guardedness.
But we show that \emph{even when both the query and views are non-recursive,
and even  when the rules are in a well-behaved class, monotonic determinacy is undecidable}.
We show that decidability can be regained by restricting the view and query to be CQs, and also imposing that the 
rules are extremely simple: linear TGDs, a case
analogous to SQL referential constraints.

For expressiveness of rewritings, the most significant differences introduced by rules
come for guarded views and queries. We show that for monotonically determined guarded queries and views,
we obtain rewritings in fixpoint logic, and
thus in polynomial time. In certain cases we can conclude rewritability in Datalog.

\myparagraph{Contributions}
Our first contribution is  a set of  tools  for reasoning with queries, views, and rules:

\begin{compactitem}
\item We extend the ``forward-backward'' method for obtaining rewritings to the presence of rules.

\item We provide new results on \emph{certain-answer rewritability} of Datalog queries with rules, and on \emph{finite controllability of recursive queries with rules}. 

\item We present a new kind of ``compactness property'' for entailments involving Datalog queries, rules, and views. 
It states that when an infinite view instance entails that a query holds on the base instance it is derived from,
then the same holds for some finite subinstance.
This  is crucial  to rewritability and non-rewritability results.

\end{compactitem}
While we apply these tools to get positive results about decidability and rewritings for monotonic determinacy, we   hope they
are of broader interest.

Our second major contribution is a set of
surprising undecidability results -- strongly contrasting with
the case without background knowledge. For example, monotonic determinacy for CQ queries and CQ views is undecidable even
for reasonably simple rules: combinations of frontier-one and linear.

\myparagraph{Organization}
Section \ref{sec:prelims} contains  preliminaries.
Section \ref{sec:tools} presents key tools
for our positive results.
Section \ref{sec:decide} applies the tools to get positive results on decidability and rewritability, while
Section \ref{sec:undecide} presents  the surprising undecidability results.
The paper ends with  a summary in Section \ref{sec:conc}.
Many proofs, along with some auxiliary results,  are deferred to \appwrap{the appendix}.


\begin{figure}
\begin{minipage}{0.49\textwidth}
\centering
\scalebox{0.75}{
\begin{tabular}{|c|c|c|c|c|}
\hline
\rowcolor{Gray}
\cellcolor{DarkGray}Query $\backslash$ Views & CQ            & \makecell{MDL, FGDL}      &UCQ                              & DL\\
\hline
\cellcolor{Gray}CQ,  UCQ                 & NP            & \multirow{3}{*}{$\twoexp$}&$\conexp$                        & $\twoexp$\\
\hhline{|-|-|~|-|-|}
\cellcolor{Gray}MDL                      & in $\threeexp$&                           &\multicolumn{2}{c|}{\multirow{3}{*}{Undecidable}}\\ 
\hhline{|-|-|~|~~|}
\cellcolor{Gray} FGDL                     & Open          &                           &\multicolumn{2}{c|}{}           \\
\hhline{|-|-|-~~|}
\cellcolor{Gray}DL                       &   \multicolumn{4}{c|}{}                                                 \\
\hline
\end{tabular}
}
\caption{Decidability/Complexity without rules} \label{fig:nocondecide} \hfill
\end{minipage} 
\begin{minipage}{0.49\textwidth}
\centering
\scalebox{0.75}{
\begin{tabular}{|c|c|c|c|c|} %
\hline
\rowcolor{Gray}
\cellcolor{DarkGray}
Query $\backslash$ Views & CQ & MDL, FGDL          & UCQ & DL\\

\hline
\cellcolor{Gray}
CQ                       &\multicolumn{4}{c|}{CQ}\\

\hline
\cellcolor{Gray}
UCQ                      &\multicolumn{4}{c|}{UCQ}\\

\hline
\cellcolor{Gray}
MDL                      &\multirow{2}{*}{FGDL}&MDL                 &\multicolumn{2}{c|}{\multirow{3}{*}{n. n. Datalog}}\\

\hhline{|-|~|-|~~|}
\cellcolor{Gray}
FGDL                     & &\multirow{2}{*}{DL} &\multicolumn{2}{c|}{\multirow{3}{*}{}}\\

\hhline{|-|-|~|~~|}
\cellcolor{Gray}
DL                     &\multicolumn{1}{c}{}  &                    &\multicolumn{2}{c|}{\multirow{3}{*}{}}\\
\hline
\end{tabular}
}
\caption{Rewritability without rules} \label{fig:noconrewrite}
\end{minipage} 
\end{figure}

\vspace{-5pt}
\section{Preliminaries} \label{sec:prelims}
We assume the usual notion of relational schema: a finite set of relations,
with each relation associated with a number 
its  \emph{arity}.
For predicate $R$ of arity $n$, an $R$-fact is  of the form $R(c_1 \ldots c_n)$.
A \emph{database instance} (or simply \emph{instance})
is a set of facts.
The  \emph{active domain} of an instance,   $\adom(\inst)$,  is the set of elements that occur
in some fact.
A \emph{query} of arity $n$ over schema $\aschema$
is a function from instances over $\aschema$ to relations of arity $n$.
A \emph{Boolean query} is a query of arity $0$.
The output of a query $Q$ on instance $\inst$ is  denoted as $\doutput{Q}{\inst}$.
We also use the standard notations $\inst \models Q(\vec c)$ or $\inst, \vec c \models Q$
to indicate that $\vec c$ is in the output
of $Q$ on input $\inst$.
A query is \emph{monotone} if  $\inst \subseteq \inst'$ implies that $\doutput{Q}{\inst} \subseteq \doutput{Q}{\inst'}$.
A \emph{homomorphism} from  instance  $\inst$ to instance $\inst'$ is a mapping
$h$ such that $R(c_1 \ldots c_n) \in \inst$ implies $R(h(c_1) \ldots h(c_n)) \in \inst'$.
We assume familiarity with standard \emph{tree automata} over finite trees of fixed branching depth.
A few results  use   automata 
over infinite trees \cite{thomas}. \emph{B\"uchi tree automata} have the same syntactic form as regular tree automata: 
a finite set of states,  subsets that are initial and accepting, and a transition relation. An automaton accepts an 
infinite tree if there is a run that assigns states to each vertex of the tree, obeying the transition relation,
and where every  path hits an accepting state infinitely often.


\noindent
\myparagraph{CQs and Datalog}
We assume familiarity with FO logic, including the notion of free and bound variable.
A \emph{conjunctive query} (CQ) is a
formula of the form $Q(\vec x) = \exists \vec{y}\, \phi(\vec x,  \vec y)$, where
$\phi(\vec x,  \vec y)$ is a conjunction of relational atoms.
Given any CQ $Q$, its \emph{canonical database}, denoted $\canondb(Q)$, is the instance formed 
by treating each atom of $Q$ as a fact.
The  output of a CQ $Q$ on some instance $\inst$ is the set of all tuples $\vec t$ such that there is homomorphism $h: \canondb(Q) \to \inst$ satisfying $\vec t = h(\vec x)$.
A \emph{union of conjunctive queries} UCQ  is a disjunction of CQs with the same free variables.

Datalog is a language for defining  queries over a relational schema $\aschema$.
Datalog rules are of the form:
$P(\vec x)    \datalogarrow    \phi(\vec x, \vec y)$
where $P(\vec x)$ is an atom s.t.  $P \not\in\aschema$,
and $\phi$ is a conjunction of atoms, with $\vec y$ implicitly existentially quantified.
The left side of the rule is the \emph{head}, while the right side is the \emph{body} of the rule.
In a set of rules, the 
relation symbols in rule heads  are called  \emph{intensional
database predicates} (IDBs), while relations in $\aschema$ are called  \emph{extensional
relations} (EDBs).
A \emph{Datalog program} is a finite collection of rules.
For an instance $\inst$ and  Datalog program $\Pi$, we let $\fpeval{\Pi}{\inst}$ denote the 
$\supseteq$-minimal IDB-extension of the input $\inst$ which satisfies $\Pi$ treated as a set of universal FO implications.
A \emph{Datalog query} $Q = (\Pi, \goal)$ is a Datalog program $\Pi$ together
with a distinguished intensional \emph{goal relation} $\goal$ of arity $k \ge 0$. 
The \emph{output} of  Datalog query $Q$ on an instance $\inst$ (denoted as $\doutput{Q}{\inst}$ or simply $Q(\inst)$) 
is the set $\set{\vec c \mid \goal(\vec c) \in \fpeval{\Pi}{\inst}}$.
%
\emph{Monadic Datalog} (MDL) is the fragment of Datalog
where all IDB predicates are unary.
\emph{Frontier-guarded Datalog} ($\fgdatalog$) requires that in each rule $P(\vec x)    \datalogarrow    \phi(\vec x, \vec y)$  there exists an EDB atom in $\phi$ containing $\vec x$. 
Frontier-guarded Datalog does not contain MDL; for example, in an MDL program
we can have a rule $I_1(x) \datalogarrow I_2(x)$, where $I_1$ and $I_2$
are both intensional. However every MDL program can be rewritten
  in $\fgdatalog$ \cite{uspods20journal}, and thus
we declare, as a convention, that any MDL program is FGDL.

When Datalog query $Q$ holds for tuple $\vec t$ in  instance $\inst$, this means  one of an infinite
sequence of CQ queries $Q_n$ holds. 
Here
 $Q_n$ is  obtained by  \emph{unfolding the intensional predicates by their rule bodies some number of times}. 
Such a query is called a  \emph{CQ approximation of $Q$}.
Datalog can also be seen as a subset of Least Fixpoint Logic (LFP), the extension of first-order logic
 with a least fixed point operator construct: if $\phi(x_1 \ldots x_K, X)$ is a formula
with free variables $x_1\ldots x_k$ and an additional $k$-ary second order free variable  $X$, 
$\mu_{X, \vec x} \phi(y_1 \ldots y_k)$ is a formula in which $X$ is not free, but $\vec y$ are free. 
In this work we will also consider an extension of Datalog where we have both the least fixed point operator, the dual greatest
fixpoint operator $\nu$, but no negation or universal quantification. This logic, denoted $\poslfp$, can still only express monotone queries, but
it can express properties beyond Datalog: e.g., the $\poslfp$ formula
$\nu_{P,x}.[A(x) \wedge (\exists  y~R(x,y) \wedge P(y))] (z)$ holds of  element
$z$ when it is in a unary predicate $A$ and has paths to an $A$ node of finite length.

\myparagraph{Existential rules}
Semantic relationships between relations can be described using \emph{existential rules} - also referred to as \emph{Tuple Generating Dependencies} (TGDs) - that are logical sentences of the form
    $\forall \vec x ~ \lambda(\vec x) \rightarrow \exists \vec y ~ \rho(\vec x, \vec y)$,
where
$\lambda(\vec x)$ and $\rho(\vec x,\vec y)$ are conjunctions of relational
atoms whose free variables are contained in ${\vec x}$, and ${\vec x
\cup \vec y}$ correspondingly.
The left-hand side of a TGD (i.e., the conjunction $\lambda(\vec x)$)
is the \emph{body} of the dependency, and the right-hand side is the
\emph{head}.
By abuse of notation, we often treat heads and bodies as sets of
atoms, and we commonly omit the leading universal quantifiers.
The variables that appear in both the head and the body are the \emph{frontier}
of the rule, and are also said to be the \emph{exported variables}.

Let $\inst$ be an instance and let $\tau$ be a TGD with head and body as above.
The notion of dependency $\tau$
\emph{holding} in $\inst$ (or $\inst$ \emph{satisfying} $\tau$, written ${\inst \models
\tau}$) is the usual one in first-order logic.

Given CQs $Q_1$ and $Q_2$ along with dependencies $\Sigma$, we say
$Q_1$ is contained in $Q_2$ relative to $\Sigma$ if for every
instance $\inst$ satisfying $\Sigma$, $\doutput{Q_1}{\inst} \subseteq \doutput{Q_2}{\inst}$.
We also write $\Sigma \wedge Q_1 \models Q_2$ in this case.
Given a set of facts $D$, Boolean CQ $Q$, and dependencies $\Sigma$, we
say $D$ and $\Sigma$ \emph{entails} $Q$ if $Q$ holds in every instance
containing $D$ and satisfying $\Sigma$. We also write
$\Sigma \wedge D \models Q$.
We can similarly talk about ``finite entailment'', where ``every instance containing
$D$'' is replaced by ``every finite instance containing $D$''.
For a class of queries and existential rules, we say \emph{entailment is finitely controllable} if entailment agrees
with finite entailment for each finite instance $D$ and each query and rules in the class.

\myparagraph{Special classes of rule}
A rule is \emph{linear} if the body has a single atom, and \emph{full} if there are no existential quantifiers in the head.
It is \emph{frontier-guarded} (FGTGD) if there is a body atom  that contains
all exported variables. It is \emph{frontier-one} ($\frone$) if there is only one
variable in the frontier.
A $\frone$ linear TGD with no constants or repeated variables
is a {\em Unary Inclusion Dependency} (UID).
A set of rules $\Sigma$ is \emph{Source-to-Target} if for every $\rho, \rho' \in \Sigma$ the head-predicates of $\rho$ do not appear in the body of $\rho'$.


\myparagraph{The chase}
In certain arguments we  use
the characterization of logical entailment between CQs in the presence
of existential rules in terms of the chase procedure \cite{maier,fagindataex}.
The chase modifies an instance by a sequence of \emph{chase steps} until all
dependencies are satisfied. Let $\inst$ be an instance,  and consider a TGD
 $\tau=\forall \vec x. \lambda  \rightarrow \exists \vec y. \rho$.
 Let  $h$ be a \emph{trigger} -- 
a
homomorphism from $\lambda$ into $\inst$. Performing a chase step for
$\tau$ and $h$ to $\inst$ extends $\inst$ with each fact of the conjunction ${h'(\rho(\vec
x,\vec y))}$, where $h'$ is a substitution such that ${h'(x_i) = h(x_i)}$ for
each variable ${x_i \in \vec x}$, and $h'(y_j)$, for each ${y_j \in \vec y}$,
is a fresh labeled null that does not occur in $\inst$. 
For $\Sigma$ a set of TGDs  and $\inst$ an instance, we use $\chase_\Sigma(\inst)$ to denote
any instance (possibly infinite) formed from $\inst$ by iteratively applying chase steps, where
$\Sigma$ holds. We say that \emph{$\chase_\Sigma(\inst)$ is finite} if we can perform finitely many chase
steps and obtain an instance satisfying $\Sigma$. We say that $\Sigma$ \emph{has terminating chase} if $\chase_\Sigma(\inst)$ is finite
for every finite $\inst$.

\myparagraph{Views and rewritability} 
A \emph{view} over some relational schema $\aschema$ 
is a tuple $(V, Q_V)$ where $V$ is a view relation and $Q_V$ is an associated 
query over $\aschema$ whose arity  matches that of $V$. $Q_V$ is referred to as the \emph{definition} of view $V$.
By $\views$ we denote a collection of views over a schema
 $\aschema$. We sometimes refer to the vocabulary of the definitions $Q_V$ as the \emph{base schema} for $\views$, denoting
it as $\Sigma_{\baseschema}$, while the predicates  components $V$ are referred to as the \emph{view schema},
denoted $\Sigma_{\views}$.
For an  instance $\inst$ and set of views $\views= \{(V, Q_V) \mid V \in \Sigma_\views \}$, the \emph{view image} of $\inst$, 
denoted by $\views(\inst)$, 
is the instance over $\Sigma_\views$ where each view predicate $V \in \Sigma_\views$ is  interpreted by $\doutput{Q_V}{\inst}$.
Recall from the introduction that query $Q$ is monotonically determined over views $\views$ relative to rules $\Sigma$ if
for each $\inst, \inst'$ satisfying $\Sigma$ with $\views(\inst) \subseteq \views(\inst')$, $Q(\inst) \subseteq Q(\inst')$.
Given views $\views$, rules $\Sigma$ and a query $Q$, a query $\rewriting$ over the view schema $\Sigma_\views$ is a 
\emph{rewriting}
of $Q$ with respect to $\views$ and $\Sigma$ if:
for each $\inst$ over $\aschema$ satisfying $\Sigma$, the output of $\rewriting$ on $\views(\inst)$ 
is the same as the output of $Q$ on $\inst$.
A rewriting  that can be specified in a particular
language $L$ (e.g. Datalog, CQs) is an \emph{$L$}-rewriting of $Q$ w.r.t. $\views$ and $\Sigma$,
and if this exists we say $Q$ is \emph{$L$-rewritable} over $\views, \Sigma$. We drop $\views$ and/or $\Sigma$ from the notation when it
is clear from context.

It is clear that if $Q$ has a rewriting in a  language that defines only monotone queries,
like Datalog, then $Q$ must be monotonically determined. We will be concerned with the converse
to this question.
The main questions we will consider, fixing languages  $L_Q$, $L_\views$, $L_\Sigma$  for the queries, views, and rules (e.g.
Datalog, fragments of Datalog for the first two, special classes of existential rules for the third) are:
\begin{compactitem}
\item can we decide whether a $Q$ in $L_Q$ is monotonically determined over $\views, \Sigma$ in $L_\views, L_\Sigma$?
\item fixing another language $L$ for rewritings,
if $Q$ is monotonically determined over $\views, \Sigma$, does it necessarily have a rewriting
in $L$?
\end{compactitem}

\myparagraph{Finite and unrestricted variants}
We have noted in the introduction that there are variants of our definitions of monotonic
determinacy and rewritability depending on whether only finite instances are considered, or all instances.
In the definitions above, we make unrestricted instances the default.
We say that query $Q$ is monotonically determined over views $\views, \Sigma$ \emph{over
finite instances} if in the definition ``for any each instances $\inst, \inst'$ satisfying \ldots''
is replaced by ``for each finite instances $\inst, \inst'$ satisfying \ldots''.
We similarly arrive at the notion of  a query $\rewriting$
being a rewriting over $\views, \Sigma$ for finite
instances by changing ``for each $\inst$ over $\aschema$'' in the definition of
rewriting to ``for each finite $\inst$ over $\aschema$''. If the finite and unrestricted versions coincide, we say that \emph{monotonic determinacy is finitely controllable}
for a class of view, queries, and rules.
In \cite{uspods20journal} finite controllability was observed in the absence of rules,
while Example \ref{ex:fc} shows that it fails even for simple rules.

\myparagraph{Monotonic determinacy characterized using approximations and the chase}
Figure \ref{alg:query-mondet-constraints} gives an abstract procedure for checking monotonic determinacy of Boolean Datalog query $Q$ over
Datalog views $\views$ with respect to rules $\Sigma$. We universally choose an approximation $Q_n$ of $Q$, and then chase the canonical database of
this query  with $\Sigma$. One can think of each of the results as a generic instance satisfying $Q$ and $\Sigma$. We let $\viewinst_n$ be the 
view image of such  a database: thus $\viewinst_n$ is a set of view facts. We then take each fact $\views_i(\vec c)$ in $\viewinst_n$, where $\views_i$ is  a view predicate,
and \emph{non-deterministically choose witnesses facts for it}. For each $\views_i(\vec c)$ we choose  an approximation for the query
$Q_{\views_{i}}$ associated with $\views_i$, and then take the canonical database of this with $\vec c$ substituted in for the
free variables. For an instance $\viewinst$ of the view schema,
we let $\backview_{\views}(\viewinst)$ be the instances of the base schema that results from this process.
In the case that the views are CQs, we can assume there is only one instance in $\backview_{\views}(\viewinst)$, namely the
chase of $\viewinst$ with rules of the form $\forall \vec x ~ [\view_i(\vec x) \rightarrow Q_{\views_{i}}(\vec x)]$. 
When views are UCQs or Datalog, $\backview_{\views}(\viewinst)$  can be thought  of as the result of  a ``disjunctive chase''. Next, we
chase   instances in $\backview_{\views}(\viewinst_n)$ using $\Sigma$ -- thus chasing a second time. 
We have described a non-deterministic process that generates instances $\inst'$ in the base schema, where each $\inst'$ satisfies
$\Sigma$, and has a view image containing one of the $\viewinst_n$. That is, each of these ``test instances'' has a view image
containing the view image of an instance satisfying $Q$ and $\Sigma$, and thus monotonic determinacy states they should satisfy $Q$.
The process is non-deterministic because we have guesses for the initial approximation $Q_n$,
and  guesses for the approximation witnessing each view fact. 
If each of the databases $\inst'$ resulting from this process satisfies 
the original query 
$Q$,  monotonic determinacy holds.

\begin{figure}
\hspace{2mm}$\mondet(Q,\views, \Sigma)\textbf{:}\hfill$ 
\begin{algorithmic}[1]
    \For{$Q_n$ approximation of  $Q$} \Comment unfold the query
    \State $C_n \defeq \chase_{\Sigma}(\canondb(Q_n))$                                            \label{alg:query-determinacy-constraints:init} \Comment Chase an unfolding
     \State $\viewinst_n \defeq \views(C_n)$                       \Comment Apply views \label{alg:query-determinacy-constraints:view-forward}
        \For{$Q'_{m,n} \in \backview_{\views}(\viewinst_n)$}             \label{alg:query-determinacy-constraints:view-acc-inverse}
\Comment Guess a witness for each view fact
         \State $C'_{m,n} \defeq \chase_{\Sigma}(Q'_{m,n})$
\Comment Chase again\label{alg:second-chase}
        \State{\kw{If} ~ {$C'_{m,n} \not \models Q$} \Comment Check if $Q$ holds   \label{alg:query-determinacy-constraints:return} \\
        \phantom{a}\hspace{1.5cm}\textbf{ return} \textbf{false}}        
         \EndFor
      \EndFor
            \State \textbf{return} \textbf{true}
\end{algorithmic}
\caption{Process  for checking monotonic determinacy.}\label{alg:query-mondet-constraints}

\end{figure}

The fact that this process captures monotonic determinacy with respect to existential rules is straightforward (see also Lemma 5.4 in \cite{uspods20journal})
\begin{proposition} \label{prop:pipeline} $Q$ is monotonically determined over $\views$ w.r.t. $\Sigma$ if and only if the process 
of Figure \ref{alg:query-mondet-constraints}
 returns true.
\end{proposition}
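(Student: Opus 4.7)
The plan is to prove both directions by reducing everything to the universal property of the chase and the fact that Datalog queries (and monotone queries in general) are preserved under homomorphisms.

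For the forward direction, assume $Q$ is monotonically determined over $\views$ w.r.t.\ $\Sigma$. Fix a branch of the nondeterministic process: an approximation $Q_n$ of $Q$, chased to $C_n = \chase_\Sigma(\canondb(Q_n))$, producing the view image $\viewinst_n = \views(C_n)$, then a particular witness $Q'_{m,n}\in\backview_\views(\viewinst_n)$ and its chase $C'_{m,n} = \chase_\Sigma(Q'_{m,n})$. Both $C_n$ and $C'_{m,n}$ satisfy $\Sigma$ by construction, and $C_n\models Q$ because $Q_n$ is an approximation of $Q$. The key observation is that $\views(C_n)\subseteq \views(C'_{m,n})$: by the definition of $\backview_\views$, each fact $V_i(\vec c)\in\viewinst_n$ has its witnessing atoms (an approximation of $Q_{V_i}$ with $\vec x$ substituted by $\vec c$) inside $Q'_{m,n}\subseteq C'_{m,n}$, so $V_i(\vec c)\in\views(C'_{m,n})$. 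Monotonic determinacy then gives $Q(C_n)\subseteq Q(C'_{m,n})$, so $C'_{m,n}\models Q$ as required.

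For the backward direction, suppose the process returns true on every branch; we show monotonic determinacy. Let $\inst,\inst'$ satisfy $\Sigma$ with $\views(\inst)\subseteq \views(\inst')$ and suppose $\vec t\in Q(\inst)$. Since $Q$ is Datalog, there is a CQ approximation $Q_n$ and a homomorphism $h\colon \canondb(Q_n)\to \inst$ sending the free variables to $\vec t$. Because $\inst\models\Sigma$, the universal property of the chase lifts $h$ to a homomorphism $\hat h\colon C_n\to \inst$. By monotonicity of view evaluation, $\viewinst_n = \views(C_n)\subseteq \views(\inst)\subseteq \views(\inst')$. Thus for each fact $V_i(\vec c)\in\viewinst_n$ there is some approximation of $Q_{V_i}$ whose canonical database, instantiated by $\vec c$, homomorphically maps into $\inst'$; collecting one such choice per fact yields a particular $Q'_{m,n}\in\backview_\views(\viewinst_n)$ together with a homomorphism $Q'_{m,n}\to\inst'$. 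Lifting along the chase again (using $\inst'\models\Sigma$) produces a homomorphism $C'_{m,n}\to\inst'$. By hypothesis $C'_{m,n}\models Q$, and homomorphism-preservation of Datalog transfers this to $\inst'\models Q(\vec t)$.

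The main subtle step is the handling of $\backview_\views$ when the views are UCQs or Datalog rather than CQs, which is why $\backview_\views(\viewinst_n)$ is a \emph{set} of instances (a disjunctive chase). The forward direction needs universal quantification over the choices in this set, while the backward direction only needs existence of one suitable choice; fortunately, these align: in the backward direction the choice is dictated by the witnesses actually present in $\inst'$, while in the forward direction the same structural reason (witnessing atoms are literally included in $Q'_{m,n}$) guarantees containment of view images regardless of which approximation was guessed. Once this is made precise, both directions reduce to standard applications of the universality of the chase for $\Sigma$-models and of homomorphism-closure of Datalog.
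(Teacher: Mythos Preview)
Your argument follows essentially the same route as the paper's proof (which phrases both directions by contrapositive and packages the backward direction into two small helper propositions about $\backview_\views$), and the overall structure is sound. There is one imprecision worth fixing: in the backward direction you write ``by monotonicity of view evaluation, $\viewinst_n=\views(C_n)\subseteq\views(\inst)\subseteq\views(\inst')$'', but $\views(C_n)$ and $\views(\inst)$ live over different domains, so there is no literal set inclusion here---what you actually have is that the homomorphism $\hat h\colon C_n\to\inst$ sends each fact $V_i(\vec c)\in\views(C_n)$ to $V_i(\hat h(\vec c))\in\views(\inst)\subseteq\views(\inst')$, by preservation of Datalog under homomorphisms rather than by monotonicity. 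Consequently the witnesses you find in $\inst'$ are for $V_i(\hat h(\vec c))$, not for $V_i(\vec c)$; to build $Q'_{m,n}\in\backview_\views(\viewinst_n)$ together with a homomorphism to $\inst'$, you instantiate the chosen approximation at $\vec c$ (with fresh existentials) and then map $\vec c\mapsto\hat h(\vec c)$ and the fresh variables to the witnesses in $\inst'$. The paper handles exactly this point via its Proposition on $\backview_\views$ preserving homomorphisms; once you track $\hat h$ explicitly, your direct argument goes through without further change.
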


The ``process'' above is \emph{not} an algorithm,  even in the absence of rules,  since there are infinitely many choices for $Q_n$ and infinitely
many choices for the approximations to substitute in  Step \ref{alg:query-determinacy-constraints:view-acc-inverse}.
With rules, there is also the problem that the chase may be infinite. Nevertheless, all of our results on
monotonic determinacy will make use of this characterization.
Intuitively, one way to  analyze monotonic determinacy is by \emph{moving forward} in the process: getting effective representations
of the intermediate artifacts produced in each step.
A second technique is to \emph{move backward}, rewriting away steps of the process to  get a simpler algorithm that does not use
these steps. We will make use of both of these approaches in our results.

\vspace{-5pt}
\section{Tools for the positive results} \label{sec:tools}
Our undecidability results will be direct reductions, and thus do not require much prior machinery.
But as mentioned in the introduction, for our positive results we develop some techniques for analyzing
rules, views,  and queries involving Datalog, which we then apply to the process of Figure \ref{alg:query-mondet-constraints}.

\myparagraph{Review of the forward-backward approach and the bounded tree-width view image property}
We now review an  idea from \cite{uspods20journal}: we can ``capture the intermediate artifacts in Figure \ref{alg:query-mondet-constraints} using tree automata''. And we can sometimes move  from tree automata in the view signature to Datalog over the
views (backward mapping). 

For a number $k$
a \emph{tree decomposition of adjusted width $k$} for an 
instance $\inst$ is a pair $\TD = (\tau, \lambda )$ consisting
of a rooted directed tree  $\tau = (\vertices,E)$, either finite or countably infinite, and a
map $\lambda$ associating a tuple
of distinct elements $\lambda(v)$ of length at most $k$
(called a \emph{bag}) to each vertex $v$ in $\vertices$
such that the following conditions hold:
\begin{inparaenum}
  \item[--] for every atom $R(\vec{c})$ in $\inst$, there is a vertex $v\in \vertices$ with
$\vec{c} \subseteq \lambda(v)$;
  \item[--] for every element $c$ in $\inst$, the subgraph of $\tau$ induced over the set  $\{\, v\in \vertices \mid c \in \lambda(v) \,\} $ is connected.
\end{inparaenum}
A tree decomposition $\TD$ of an instance $\inst$ can be associated with a labelled tree $T$ where labels describe the facts
holding on the elements associated with a node. We use a standard encoding -- see \appwrap{Appendix \ref{sec:appprelims}}.
Any such tree $T$ will be a $k$ \emph{tree code} of the instance $\inst$. Given a tree code $T$, we denote the instance it encodes with $\decode{T}$.

Above we abuse notation slightly by re-using $\lambda(v)$ to indicate  the underlying set of elements, as
well as the tuple. 
The \emph{adjusted treewidth} of an instance $\inst$, $\atw(\inst)$, is the minimum adjusted
width
of a tree decomposition of $\inst$.
For a tree decomposition $\TD$ of data instance $\inst$
let $\tspan(\TD)$ (the \emph{treespan} of the decomposition)
be the maximum over elements $e$ of $\inst$ of the number of bags containing $e$.

A counterexample to monotonic determinacy consists of instances $\inst, \inst'$ satisfying $\Sigma$ such that $\views(\inst) \subseteq \views(\inst')$, $\inst \models Q$, and
$\inst' \models \neg Q$. We will consider such a pair as a single instance in the signature with two copies of the base schema, one interpreted as in $\inst$ while the other
 is interpreted as in $\inst'$, along with one copy of the view predicates, interpreted as in $\views(\inst)$.
A \emph{tree code of a counterexample to monotonic determinacy} is a tree code for the instance formed from a counterexample in the vocabulary above.
The following proposition, an application of Courcelle's theorem \cite{courcelle},  states that for any fixed $k$, we can recognize counterexamples that are of low treewidth, using automata.

\begin{proposition} \label{prop:kpipeline} [Forward Mapping] For each $Q$ in FGDL, $\views$ in FGDL or FO, $\Sigma$ existential rules, and each $k$ there is a tree automaton that accepts  all finite $k$ tree codes of counterexamples
to monotonic determinacy of $Q, \views, \Sigma$. There is a  a B\"uchi automaton over infinite trees that holds exactly  when there is
an arbitrary (possibly infinite) $k$ tree code of such an instance. 
\end{proposition}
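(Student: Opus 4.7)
The plan is to express the property ``$T$ encodes a counterexample to monotonic determinacy'' as a sentence in a logic for which Courcelle-style translations to tree automata are available, and then invoke those translations. I would work in the signature $\aschema^* \defeq \aschema \uplus \aschema' \uplus \Sigma_\views$ containing two disjoint copies of the base schema together with one copy of the view schema, so that a $k$-tree code over $\aschema^*$ decodes to a triple $(\inst,\inst',\cJ)$. The counterexample property is the conjunction of: (a) $\inst,\inst' \models \Sigma$; (b) $\cJ \supseteq \views(\inst)$ and $\cJ \subseteq \views(\inst')$, giving $\views(\inst) \subseteq \views(\inst')$; (c) $\inst \models Q$; and (d) $\inst' \not\models Q$.

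First I would verify that each conjunct is MSO-definable over the tree code. Condition (a) is first-order, since existential rules are universal first-order sentences. Condition (b) uses the view definitions; for FO views it is directly FO, and for FGDL views it reduces to evaluating an LFP formula, which is MSO-definable on bounded-treewidth structures. Conditions (c) and (d) similarly reduce to evaluating an FGDL query and its negation; FGDL sits inside LFP and is MSO-expressible on bounded treewidth, with MSO closed under negation. Applying Courcelle's theorem to the $k$-tree codes then yields, in the finite-tree case, a finite tree automaton recognizing exactly the MSO-defined class of counterexample codes, proving the first part.

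For the infinite-tree part I would invoke the infinite-tree analogue of the translation (MSO on structures of bounded treewidth embedded into labelled trees corresponds to Rabin/parity tree automata) and then argue that a B\"uchi acceptance condition in fact suffices for the specific conjunction we need. Concretely, ``$\inst \models Q$'' is witnessed by a nondeterministic B\"uchi automaton that guesses a bottom-up annotation of each bag with a set of IDB facts, verifies local consistency with the FGDL rules, and uses the B\"uchi condition to enforce well-foundedness of the guessed derivations (ruling out infinite descent of justifications). Combining this with the safety automaton for (a), a B\"uchi automaton derived from the view definitions for (b), and a dual B\"uchi automaton for (d), and taking the product, yields the desired B\"uchi tree automaton.

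The main obstacle is the B\"uchi construction for (d): negation of a Datalog query is generically $\Pi^1_1$ and on infinite trees may only be captured by co-B\"uchi or parity conditions. I expect the technical heart of the lemma to be the exploitation of the guardedness of FGDL together with the bounded treewidth of tree codes to determinize the bottom-up type automaton on the finite $k$-bag state space, so that its complement can be read as a reachability-style condition expressible with a B\"uchi acceptance. Once this step is in place, closure under product completes the construction.
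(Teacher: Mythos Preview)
Your treatment of the finite case is essentially the paper's approach: express the counterexample property in a logic (the paper uses GSO on the decoded structure, you use MSO on the tree code directly---these are equivalent via Courcelle) and translate to a finite tree automaton. No issues there.

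For the infinite-tree part, your plan is correct in spirit but takes an unnecessarily hard road. You correctly identify that the obstacle is condition~(d), $\inst' \not\models Q$, and you propose to overcome it by determinizing a bottom-up type automaton so that the complement becomes B\"uchi-expressible. That can be made to work, but the paper bypasses the whole issue with a single observation: the counterexample property is definable in \emph{Weak} MSO (WMSO), and WMSO-definable tree languages are exactly those recognizable by a B\"uchi automaton whose complement is also B\"uchi-recognizable \cite{weakbuchi}. The reason WMSO suffices is that a Datalog (hence FGDL) query holds in a structure iff some finite CQ approximation holds, so ``$Q$ holds'' is expressed by existentially quantifying a finite set of witnesses; WMSO is closed under negation, so ``$\neg Q$'' is also WMSO; the FO conditions~(a) and~(b) are trivially WMSO. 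This immediately yields a B\"uchi automaton with no need for the determinization argument you sketch.

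So your proposal is not wrong, but the ``technical heart'' you anticipate is avoidable: replace the component-by-component B\"uchi construction and the determinization step with the single observation that finite witnesses for Datalog put the whole formula in WMSO.
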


Of course, for monotonic determinacy, we are interested not just in treelike counterexamples, but arbitrary counterexamples. However, in some cases, low treewidth is enough.
A triple $(Q, \views, \Sigma)$ has the \emph{bounded treewidth view image property} (BTVIP) if we can compute  a  $k$ 
such that for every approximation $Q_n$ and some chase $C$ of $\canondb(Q_n)$ under $\Sigma$ it holds that
 $\atw(\views(C)) \leq k$. If we can choose a finite $C$, we say $(Q, \views, \Sigma)$ has
the \emph{finite bounded treewidth view image property} (FBTVIP).
%

\begin{proposition} \label{prop:ksuffices} If $\Sigma$ are $\fgtgd$s, $(Q, \views, \Sigma)$ has the BTVIP (resp. FBTVIP), $\views$ is in 
Datalog, then there is $k$, computable
from $(Q, \views, \Sigma)$, such that whenever monotonic determinacy
fails, there is some counterexample (resp finite counterexample) of treewidth $k$. 
\end{proposition}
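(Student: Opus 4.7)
The plan is to trace the pipeline characterization of Proposition \ref{prop:pipeline} and bound the adjusted treewidth of each artifact it produces. Suppose monotonic determinacy fails; then the pipeline produces an approximation $Q_n$, a chase $C_n = \chase_\Sigma(\canondb(Q_n))$ (so $\inst = C_n$), a back-view guess $Q'_{m,n} \in \someviewback{\views}(\views(C_n))$, and a chase $\inst' = \chase_\Sigma(Q'_{m,n})$ that falsifies $Q$. I aim to show that all three pieces $\inst$, $\views(\inst)$, and $\inst'$ admit compatible tree decompositions of computable bounded width, so that the joint counterexample in the doubled base plus view vocabulary has bounded treewidth $k$.

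For the first two pieces the arguments are standard: since $Q \in \fgdatalog$, CQ approximations of $Q$ have guarded canonical databases of adjusted treewidth bounded by the maximum arity, and frontier-guarded TGDs admit chase strategies preserving bounded adjusted treewidth, giving $\atw(C_n) \leq k_1$ for a computable $k_1$. BTVIP directly supplies $k_2$ with $\atw(\views(C_n)) \leq k_2$, and by selecting the chase that witnesses BTVIP we can enforce both bounds simultaneously. The FBTVIP variant additionally yields a finite $C_n$.

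For $\inst'$ I would exploit the width-$k_2$ tree decomposition $(T,\lambda)$ of $\views(C_n)$ to construct a back-view guess $Q''$ of bounded treewidth. For each view fact $V(\vec c)$ appearing in a bag of $(T,\lambda)$, pick a witness of $V(\vec c)$ in $C_n$, i.e., a CQ approximation of $Q_V$ together with a homomorphism into $C_n$, and substitute a fresh isomorphic copy of the image of this homomorphism for $V(\vec c)$, gluing through the attachment tuple $\vec c$. Each piece sits inside $C_n$, hence has adjusted treewidth at most $k_1$; gluing along $(T,\lambda)$ yields $\atw(Q'') \leq k_1 + k_2$, and chasing with the frontier-guarded $\Sigma$ preserves bounded treewidth up to a computable $k_3$. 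Merging the tree decompositions of $C_n$, $\views(C_n)$, and $\chase_\Sigma(Q'')$ along their shared active domain then produces a tree decomposition of the combined counterexample of width $k$ computable from $(Q, \views, \Sigma)$.

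The main obstacle is ensuring that the refined guess $Q''$ still satisfies $\chase_\Sigma(Q'') \not\models Q$. Substituting images from $C_n$ risks enriching the witness structure, since an image in $C_n$ may carry additional $C_n$-facts beyond what the abstract canonical database of the CQ approximation would provide, potentially introducing spurious $Q$-matches. Addressing this requires a careful choice of minimal witness pieces --- for instance, taking only the subinstance of $C_n$ generated by a single derivation tree of the view fact, restricted to $\vec c$ and its necessary interior elements --- together with a homomorphism-preservation argument for Datalog showing that any $Q$-match in the refined chase pulls back to a $Q$-match in the original $\chase_\Sigma(Q'_{m,n})$. An alternative route is to combine the bounded-width tree decomposition constructed above with the automata of Proposition \ref{prop:kpipeline}, obtaining a $k$-tree code of a counterexample recognized by the automaton whenever any counterexample exists.
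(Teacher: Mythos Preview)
Your proposal has a genuine gap, and the paper's route avoids it by a simpler observation. You try to construct a \emph{new} back-view instance $Q''$ of bounded treewidth by pulling witness pieces out of $C_n$, and then (as you rightly flag) cannot guarantee $\chase_\Sigma(Q'') \not\models Q$. This detour is unnecessary: \emph{every} element of $\backview_\views(\viewinst_n)$ already has bounded adjusted treewidth, including the particular failing $Q'_{m,n}$ that Proposition~\ref{prop:pipeline} hands you. By definition, $\backview_\views$ attaches to each view fact $V(\vec c)$ a \emph{fresh} canonical database of some CQ approximation of the Datalog view $Q_V$, with $\vec c$ substituted for the free variables and fresh values elsewhere. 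Each such witness carries its own canonical tree decomposition of width bounded by the maximal rule-body size of the views (Proposition~\ref{prop:boundedtreewidthapprox}, which holds for arbitrary Datalog, not only $\fgdatalog$). Since the witness shares only the tuple $\vec c$ with $\viewinst_n$, and $\vec c$ lies in a single bag of the width-$k$ decomposition guaranteed by BTVIP, you hang each witness's decomposition off that bag. The second chase with the $\fgtgd$s $\Sigma$ then preserves bounded treewidth in the standard tree-like-chase way. No replacement of witnesses is needed, so $\neg Q$ is inherited for free from the pipeline.

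Two smaller issues. First, you assume $Q \in \fgdatalog$ to bound $\atw(\canondb(Q_n))$, but the proposition does not assume this (and, e.g., Theorem~\ref{thm:fgviewsplusdatalogqueryrewriting} applies it with general Datalog $Q$); bounded treewidth of approximations again comes from Proposition~\ref{prop:boundedtreewidthapprox}, which needs no guardedness. Second, your ``alternative route'' via Proposition~\ref{prop:kpipeline} is circular: that proposition only recognizes treewidth-$k$ counterexamples with an automaton; it does not manufacture one from an arbitrary counterexample, which is exactly what Proposition~\ref{prop:ksuffices} is supposed to supply.
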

In particular, the two propositions  tell us  that, when the hypotheses of both propositions apply, it suffices to check that the automaton from Proposition \ref{prop:kpipeline}  is nonempty. This will
allow us to get decidability results on monotonic determinacy in the presence of the BTVIP.

For getting results on rewriting, it is useful to move backward from tree automata accepting certain codes to a Datalog query
accepting their decodings.
Here is one formalization of the backward mapping, a variation of  \cite[Proposition 7.1]{uspods20journal}

\begin{theorem} \label{thm:auttodlog} [Backward Mapping]
 Let $\sigma$ be a relational signature, $k \in \omega$ and $A$ 
a tree automaton over the  signature for  tree codes of treewidth $k$ structures over $\sigma$. Then 
there is a Datalog program $E_A$ such that for every
$\sigma$-structure $\struct$:  $\struct \models E_A$
iff there is a finite tree code $\tree$ over $\sigma$ accepted by $A$ with a
homomorphism from $\decode{\tree}$ to $\struct$. 
\end{theorem}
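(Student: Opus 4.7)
The plan is to simulate the bottom-up run of $A$ on tree codes using Datalog, by introducing, for each state $q$ of $A$, an IDB predicate $P_q$ whose arity is (at most) $k$. The intended meaning of $P_q(\vec c)$ in $\struct$ is: there is a finite tree code $\tree$ over $\sigma$ whose run under $A$ reaches $q$ at the root, together with a homomorphism from $\decode{\tree}$ to $\struct$ that sends the root bag to $\vec c$. The goal predicate will be derived iff such data exists for some accepting state, at some tuple of $\struct$.

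First I would translate each transition of $A$ into a Datalog rule. Recall that a tree-code label $\ell$ records two things: the $\sigma$-atoms that hold on the bag tuple, and the overlap pattern between the bag and each of its children (which positions of the child bag coincide with which positions of the parent bag). For a transition $(q_1,\ldots,q_m)\to_\ell q$ I introduce the rule $P_q(\vec x)\datalogarrow \mathrm{atoms}(\ell)(\vec x)\wedge \bigwedge_{i=1}^m P_{q_i}(\vec y_i)$, where $\mathrm{atoms}(\ell)(\vec x)$ is the conjunction of $\sigma$-atoms on the slot variables $\vec x$ dictated by $\ell$, and the variable tuples $\vec y_i$ share variables with $\vec x$ exactly according to the overlap pattern encoded in $\ell$; positions of a child bag that are ``newly introduced'' (not in the parent) use fresh variables, which are implicitly existentially quantified in the rule body. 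Leaf transitions give rules without any $P_{q_i}$ on the right-hand side. Finally I add $\goal\datalogarrow P_q(\vec x)$ for each accepting $q$.

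Correctness is then a straightforward pair of inductions. For the ``only if'' direction, I traverse a Datalog derivation of $\goal$ in $\struct$: the shape of the derivation is itself a tree code $\tree$, the sequence of rules fired yields an accepting run of $A$ on $\tree$ by construction, and the variable bindings used at each rule application collectively define a homomorphism from $\decode{\tree}$ into $\struct$. For the ``if'' direction, given a finite tree code $\tree$ accepted by $A$ with a homomorphism $h:\decode{\tree}\to\struct$, I prove by bottom-up induction on $\tree$ that $P_q(h(\vec b))$ is derivable at every node whose bag is $\vec b$ and whose run state is $q$, so at the root $\goal$ is derived. The principal technical obstacle is the bookkeeping in the first step: making the tree-code encoding explicit enough that the pattern of shared variables between $\vec x$ and each $\vec y_i$ faithfully mirrors the overlap structure in $\ell$, and that the atoms of $\mathrm{atoms}(\ell)$ correspond exactly to the $\sigma$-atoms produced in $\decode{\tree}$ at that node. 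Once this correspondence is fixed, the construction is essentially a rule-by-rule adaptation of the backward mapping from \cite{uspods20journal}.
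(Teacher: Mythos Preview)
Your approach matches the paper's: one IDB per automaton state, transitions become Datalog rules whose body conjoins the bag atoms with the child-state predicates sharing variables according to the overlap pattern, and correctness via the two inductions you describe. One bookkeeping point the paper handles explicitly that you should watch: a bag position need not be covered by any $\sigma$-atom in the label, so a rule $P_q(\vec x)\datalogarrow \mathrm{atoms}(\ell)(\vec x)\wedge\cdots$ can leave some $x_i$ unbound (unsafe); the paper fixes this by introducing an auxiliary unary IDB $\adom$, populated from every EDB relation, and conjoining $\bigwedge_i \adom(x_i)$ into each body. A purely cosmetic difference is that the paper's tree-code convention stores the overlap map $g$ at the \emph{child} (relative to its parent), which forces it to index the state predicates as $P_{q,g}$ and to quantify over all possible child-side injections $g^1,\ldots,g^r$ in the inductive rule; your convention (overlap recorded parent-side, inside $\ell$) avoids this and lets you index by $q$ alone.
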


In our rewriting theorems --  see Theorems \ref{thm:fgviewsplusdatalogqueryrewriting} and  \ref{thm:fgviewsplusfgdlquery} in Section \ref{sec:decide} --  the idea
is to first apply the forward mapping of Proposition \ref{prop:kpipeline} 
to get an automaton accepting treelike counterexamples.
We then project  to get an automaton over codes of view instances. We apply backward mapping of Theorem \ref{thm:auttodlog}
to get a Datalog program.
We emphasize that the same process was  used in \cite{uspods20journal} for the rewriting results in Figure \ref{fig:noconrewrite}.
The main difference is that we will need a  modification  dealing with the fact that the treecodes involved may be infinite.
This required us to extend to automata over infinite trees in Proposition \ref{prop:kpipeline}, and in  Theorem 
\ref{thm:fgviewsplusdatalogqueryrewriting} it
will require us to expand the rewriting language from Datalog to the larger logic $\poslfp$.

\myparagraph{Certain answer rewritings}
The alternative to ``moving forward'' in the process of Figure \ref{alg:query-mondet-constraints} is to go backwards, eliminating
steps in the process via \emph{certain answer rewriting results}. This will require us to obtain new results eliminating
entailment steps.

For an instance $\inst$, logical sentences $\Sigma$, and Boolean query $Q$,
we write $\inst \wedge \Sigma \models Q$ if $Q$ returns
true on every instance that includes
$\inst$ and satisfies $\Sigma$. This is just the usual logical entailment
when $\inst$ is considered as a conjunction of facts. We also say
that $Q$ is \emph{certain} for $\inst, \Sigma$. 
A \emph{certain-answer rewriting} ($\certainanswerrewriting$) of $Q$ 
with respect to 
 rules $\Sigma$ is a
query $Q_\Sigma$ such that running $Q_{\Sigma}$ on every $\inst$ will tell
whether $Q$ is certain for $\inst, \Sigma$. For a query language $L$, we talk about an $L$-$\certainanswerrewriting$.
Informally, $Q$ has an $L$-$\certainanswerrewriting$ over $\Sigma$ if we can check whether $Q$ is certain w.r.t $\Sigma$ with an $L$ query.

The following is easy to derive from prior results.

\begin{theorem} \label{thm:bbcrewriting} UCQs have
UCQ $\certainanswerrewriting$s 
  over linear TGDs \cite[Thm. 3.3]{calirewriting}. UCQs 
have $\fgdl$ $\certainanswerrewriting$s over $\fgtgd$s \cite[Thm 5.6]{bbc}.
\end{theorem}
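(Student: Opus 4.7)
The plan is to reduce each statement directly to the cited result, observing that both cited theorems already produce, for a given conjunctive query $Q$ and rule set $\Sigma$ in the respective class, a query whose evaluation on any instance $\inst$ returns exactly the tuples certain for $Q$ under $\inst, \Sigma$; this is precisely a $\certainanswerrewriting$ in the sense of the present paper. The only gap to close is the lift from CQs to UCQs.

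For this lift I would invoke the universal-model property of the chase: for any TGD set $\Sigma$ and any instance $\inst$, the (possibly infinite) chase $\chase_\Sigma(\inst)$ admits a homomorphism into every instance that contains $\inst$ and satisfies $\Sigma$. Since UCQs are preserved under homomorphisms, a tuple $\vec t$ is certain for $Q = \bigvee_i Q_i$ under $\inst, \Sigma$ iff $\vec t \in Q(\chase_\Sigma(\inst))$, iff some disjunct $Q_i$ returns $\vec t$ on the chase, iff $\vec t$ is certain for some $Q_i$. Hence a $\certainanswerrewriting$ for a UCQ is obtained by assembling $\certainanswerrewriting$s of its disjuncts.

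For the first part I would apply \cite[Thm. 3.3]{calirewriting} to each disjunct and take the union of the resulting UCQs, which is again a UCQ. For the second part I would apply \cite[Thm 5.6]{bbc} to each disjunct to obtain an $\fgdl$ program, then combine them by taking the disjoint union of their rules (renaming IDBs apart) together with a fresh goal predicate and one rule per disjunct copying the old goal into the new one; since the head of each such rule has the same arguments as a frontier-guarded atom below, the combined program remains in $\fgdl$.

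The main obstacle worth flagging is not a technical difficulty but a definitional check: one must confirm that the ``perfect rewriting'' notion used in the cited works coincides with $\certainanswerrewriting$ as defined above. This is immediate because both cited constructions are sound and complete against arbitrary EDB instances $\inst$, not merely those already satisfying $\Sigma$; the rewriting captures exactly the tuples entailed by $\inst \wedge \Sigma$, which is the definition of certain answer used here.
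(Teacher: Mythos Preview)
The paper does not actually prove this theorem; it introduces it with ``The following is easy to derive from prior results'' and treats the two citations as doing the work, with no argument given either in the body or in the appendix (Section~\ref{subsec:app-certain} proves only Theorem~\ref{thm:rewritefrontieronemdl} and Proposition~\ref{prop:fgdlcertain}). So you are supplying more than the paper does, and your reduction to the cited CQ results via the chase's universal-model property and the disjunct-wise decomposition of certain answers is the standard and correct way to justify the UCQ lift.

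One small technical wrinkle in your $\fgdl$ combination step: the copy rules you propose, of the form $\goal(\vec x) \datalogarrow \goal_i(\vec x)$, have only an intensional atom in the body and hence are \emph{not} frontier-guarded under the paper's definition (which requires an EDB atom in the body containing all head variables; the paper explicitly remarks that rules like $I_1(x) \datalogarrow I_2(x)$ are not $\fgdl$). The fix is immediate: since all the $\goal_i$ have the same arity, simply rename each $\goal_i$ to the common $\goal$ rather than introducing copy rules. Every rule that formerly produced $\goal_i$ already carries an EDB guard for its head variables, so the merged program stays in $\fgdl$.
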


In all cases where we write that something has a $\certainanswerrewriting$, the generation of the rewriting
is effective. We omit the precise bounds here.
We can refine the argument from \cite{bbc} to show that for $\frone$ rules, the rewriting 
is in MDL:

\begin{theorem} \label{thm:rewritefrontieronemdl}  CQs have MDL
$\certainanswerrewriting$s  over  $\frone$
TGDs.
\end{theorem}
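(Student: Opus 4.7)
The plan is to give a direct construction of an MDL $\certainanswerrewriting$ of a CQ $Q$ over $\frone$ rules $\Sigma$, refining the FGDL construction underlying Theorem~\ref{thm:bbcrewriting}. The structural observation driving the proof is that in any $\chase_\Sigma(\inst)$, the nulls organize into a forest of ``chase-subtrees'', each anchored at a single element of $\adom(\inst)$: since every rule in $\Sigma$ has a one-variable frontier, the nulls introduced by one rule application are connected to the rest of the chase only through the image of that single frontier variable, and recursively each sub-chase-subtree is itself anchored at a single null. Intuitively, all the ``interfaces'' created by chasing are unary, which is exactly what should let us stay within monadic Datalog.

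For each subset of atoms $P$ of $Q$ with a distinguished variable $x$ (taken up to renaming of the other variables, which range over a finite pool of $Q$-variables), I introduce a unary IDB $I_{P,x}$ intended to hold on $a \in \adom(\inst)$ precisely when there is a homomorphism from $P$ into some $\chase_\Sigma(\inst)$ sending $x$ to $a$ and every other variable of $P$ into the chase-subtree rooted at $a$. The defining rules for $I_{P,x}$ proceed by ``one chase step'': for each way to match the head of a rule $\phi(y) \to \exists \vec z\, \psi(y, \vec z) \in \Sigma$ to cover some atoms of $P$ with $y$ mapped to $x$, together with a decomposition of the remaining atoms of $P$ into partial sub-CQs $(P_i, x_i)$ each anchored at a single position of $\psi$ (either $x$ itself or a fresh $z_j$), I add an MDL rule whose body is $\phi(x)$ conjoined with the unary atoms $I_{P_i', x}$ obtained by ``pushing'' each sub-CQ anchored at a $z_j$ back to $x$ via the same construction. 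Since only the anchor is shared across pieces, every IDB produced is unary, so the program lies in MDL.

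The goal rule then non-deterministically decomposes $Q$ itself into a ``ground'' piece (atoms mapping directly into $\inst$) plus several tree-pieces $(P_i, x_i)$ each anchored at an $\adom(\inst)$-element, and fires the goal when the ground part matches in $\inst$ and each $I_{P_i, x_i}(x_i)$ holds; shared variables between pieces must therefore be ground, which again forces the decomposition to be glued along single variables. The main obstacle is the converse correctness direction: assuming $Q$ is certain for $\inst,\Sigma$, one must exhibit a witness decomposition firing the goal rule. I would prove this by fixing a chase $\chase_\Sigma(\inst)$ and a homomorphism $h\colon \canondb(Q) \to \chase_\Sigma(\inst)$, and decomposing $h$ according to the anchor forest: variables of $Q$ sent to $\adom(\inst)$ form the ground piece, and each maximal block of variables sent into a single chase-subtree rooted at $a$ yields a tree-piece with $x_i$ mapped to $a$. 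The $\frone$ property is exactly what guarantees that the boundary between any two blocks is a single variable, keeping the required IDBs unary. A secondary subtlety is ensuring the program is finite: the relevant $(P,x)$ are subsets of the atoms of $Q$ together with variable identifications, and these are bounded up to equivalence. Correctness of the recursive rules for $I_{P,x}$ follows by essentially the same argument applied inside a single chase-subtree.
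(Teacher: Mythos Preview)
Your approach is genuinely different from the paper's, and the difference is worth noting. The paper does not trace the chase-forest at all. It first reduces to the case $Q=\goal$ (a $0$-ary atom) by adding $Q\rightarrow\goal$ to $\Sigma$, then observes that for $\frone$ rules every \emph{ground} fact derivable in the chase is over a single instance element. It introduces a unary IDB $U_A$ for each single-variable atom $A$, and includes as MDL rules \emph{all} Datalog rules with at most $k$ body atoms ($k$ the maximal rule size of $\Sigma$) and head some $U_A(x)$ that are logical consequences of $\Sigma$ together with the copy rules; this soundness check is discharged by a decision procedure for guarded negation logic. Correctness comes from the ``bounded base lemma'' of \cite{bbc}: an instance is fact-saturated under $\Sigma$ iff every sub-instance on at most $k$ elements is. So the paper's argument is semantic and black-box, with no piece decomposition and no ``pushing back''.

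Your construction has the right structural intuition but contains a real gap in the inductive rules. You write that the body of the rule defining $I_{P,x}$ is ``$\phi(x)$ conjoined with the unary atoms $I_{P_i',x}$'', with $\phi$ the body of the TGD being simulated. Treating $\phi$ as an extensional conjunction is not sufficient: a TGD can fire at an instance element $a$ with its body matching \emph{derived} atoms (atoms over nulls already created in the subtree at $a$, or via other instance elements' subtrees). Concretely, with $\Sigma=\{A(x)\rightarrow\exists y\,R(x,y),\; R(x,y)\rightarrow\exists z\,S(x,z)\}$ and $\inst=\{A(a)\}$, the second rule fires at $a$ with body $R(a,n)$ on a null $n$; your rule for $I_{\{S(x,z)\},x}$ has body $R(x,y)$ as an EDB atom, which never matches in $\inst$. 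To repair this you must also decompose $\phi$ itself into a ground piece and tree-pieces and replace the extensional $\phi(x)$ by further $I$-atoms, which forces the IDB family to include sub-CQs of the rule bodies of $\Sigma$ (not only of $Q$), and the resulting MDL rule bodies will in general mention variables beyond $x$. This is fixable and leads to something close to the tree-witness rewriting technique from description logics, but it is not what you wrote. Relatedly, the ``pushing each sub-CQ anchored at a $z_j$ back to $x$'' step is left vague; making it precise requires exactly the missing recursive treatment of rule bodies at nulls. The paper's ``all sound short rules'' construction absorbs both issues into the entailment oracle, which is why its argument is so short.
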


\noindent
Less is known  when  $Q$ is in Datalog.
For $Q \in \fgdatalog$ we get a result from Thm. \ref{thm:bbcrewriting}
and ``moving Datalog rules into the existential rules'':

\begin{proposition} \label{prop:fgdlcertain}
FGDL queries have FGDL $\certainanswerrewriting$s over $\fgtgd$s.
\end{proposition}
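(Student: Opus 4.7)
The plan is to reduce Proposition~\ref{prop:fgdlcertain} directly to Theorem~\ref{thm:bbcrewriting} by absorbing the Datalog rules of the query into the existential-rule set. The key observation is that each FGDL rule is already an FGTGD: the definition of FGDL demands an EDB body atom containing the rule's frontier, which supplies the required frontier-guard, while the head is a single atom with no existential quantifier. So given an FGDL query $Q = (\Pi, \goal)$ and FGTGDs $\Sigma$, I first rename the IDB predicates of $\Pi$ to fresh symbols disjoint from the schema of $\Sigma$ and then set $\Sigma' \defeq \Sigma \cup \Pi$, obtaining a set of FGTGDs.

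I then argue that, for any instance $\inst$ over the base schema $\aschema$ (so with empty IDBs of $\Pi$), ``$Q$ is certain for $\inst,\Sigma$'' is equivalent to ``the atomic CQ $\goal(\vec x)$ is certain for $\inst,\Sigma'$''. For the $(\Rightarrow)$ direction: any $\inst''$ that contains $\inst$ and satisfies $\Sigma'$ is, by virtue of satisfying $\Pi$ viewed as FO implications, closed under Datalog derivation, so by minimality $\fpeval{\Pi}{\restr{\inst''}{\aschema}} \subseteq \inst''$; certainty of $Q$ for $\inst,\Sigma$ then places $\goal$ into $\fpeval{\Pi}{\restr{\inst''}{\aschema}}$ and hence into $\inst''$. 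For $(\Leftarrow)$: given any $\inst' \supseteq \inst$ satisfying $\Sigma$, the extension $\fpeval{\Pi}{\inst'}$ still satisfies $\Sigma$ because the new facts use only the fresh IDB symbols of $\Pi$, and by construction it satisfies $\Pi$; hence it is a model of $\Sigma'$ containing $\inst$, and so it must contain $\goal$.

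Finally I invoke Theorem~\ref{thm:bbcrewriting} on the CQ $\goal(\vec x)$ and the FGTGDs $\Sigma'$ to obtain an FGDL $\certainanswerrewriting$ $R$; by the equivalence above, $R$ applied to $\inst$ (reading the IDBs of $\Pi$ as empty in the input) correctly answers whether $Q$ is certain for $\inst,\Sigma$, and is the desired rewriting. There is no substantive obstacle: the only point needing care is that renaming the IDBs of $\Pi$ to be disjoint from the signature of $\Sigma$ is what makes the $(\Leftarrow)$ direction go through, and this is immediate from the Datalog convention that IDBs are fresh symbols.
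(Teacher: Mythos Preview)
Your proof is correct and follows essentially the same approach as the paper: both absorb the FGDL rules of $Q$ into $\Sigma$ to obtain an FGTGD set $\Sigma'$, argue the equivalence $\inst \wedge \Sigma \models Q \iff \inst \wedge \Sigma' \models \goal$, and then invoke Theorem~\ref{thm:bbcrewriting} on the atomic CQ $\goal$. Your treatment is in fact slightly more explicit than the paper's about the role of the IDB/base-schema disjointness in the $(\Leftarrow)$ direction, but the argument is the same.
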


\myparagraph{A new finite controllability result} 
We now consider a subset of Datalog that is less restrictive than MDL or even $\fgdatalog$.
The \emph{Extensional Gaifman graph} of a Datalog rule is the graph whose nodes are the variables
in the head and whose edges connect two variables if there is an atom
over an extensional relation that connects them.
A query or program is \emph{extensionally-connected} (EC) if in each rule the Extensional Gaifman graph
is connected. EC Datalog subsumes $\fgdl$
and hence MDL.
Recall the definition of ``entailment is finite controllable'' from Section \ref{sec:prelims}. We can show:

\begin{theorem}\label{thm:fcecdatalog}
For the class of $\frone$ rules and EC-Datalog queries,
entailment is finitely controllable.
\end{theorem}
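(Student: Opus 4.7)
My plan is to prove the nontrivial direction (finite entailment $\Rightarrow$ unrestricted entailment) by contraposition: from $D \wedge \Sigma \not\models Q$, I would build a finite model of $D \cup \Sigma$ on which $Q$ still fails. Since $Q$ is a monotone Datalog query and $\Sigma$ is a set of TGDs, the chase $U = \chase_\Sigma(D)$ is a universal model, so $Q$ fails on $U$, i.e.\ no CQ approximation $C_n$ of $Q$ maps homomorphically into $U$. Under $\frone$ rules each chase step attaches a ``sprout'' of new nulls to a single frontier element, so $U$ has a tree-of-sprouts structure rooted at $D$ in which every extensional atom lies either wholly in $D$ or wholly within one sprout.

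The finite $M$ would be a folded quotient of $U$: because $\frone$ rules react only to the $1$-type of a single element when deciding which sprouts to attach, only finitely many $1$-types appear in $U$, and identifying elements at bounded depth with ancestors of the same $1$-type yields a finite $M$ with $D \subseteq M$ and $M \models \Sigma$. Let $f : U \twoheadrightarrow M$ be the folding map; it preserves $1$-types but typically creates cycles, so there is no inverse homomorphism $M \to U$ and no direct composition argument is available for preserving $Q$-failure.

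The heart of the proof would be to show that every homomorphism $h : C_n \to M$ lifts to a homomorphism $\tilde h : C_n \to U$, contradicting $Q$-failure on $U$ and hence ruling out $h$. I would build $\tilde h$ inductively along the proof tree of $C_n$. At each rule application, the EC hypothesis makes all head variables extensionally connected in the body; by the sprout dichotomy above, the image of this connected block lies either in $D$ or in a single sprout image of $M$, and I would pick a preimage sprout of $U$ accordingly and lift the mapping of the head variables. The main obstacle is compatibility of lifts across branching points of the proof tree, where parent and child rule applications share the variables of an IDB atom: this is precisely where $\frone$ is essential, since each sprout interfaces with the rest of $U$ through a unique frontier element, and $f$ preserves that element's $1$-type, so shared IDB variables always admit consistent preimages. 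Verifying that the induction terminates cleanly at the leaves of $C_n$'s proof tree -- which must match into $D$ rather than a sprout -- is the most delicate bookkeeping, but follows by standard Gaifman-locality arguments around the finite core $D$.
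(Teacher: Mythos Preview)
Your high-level strategy---quotient the chase $U$ to a finite $M$, then lift any homomorphism $C_n \to M$ back to $U$---matches the paper's skeleton, but two load-bearing steps fail as stated. First, the folding: $\frone$ rules do \emph{not} react only to the atomic $1$-type of the frontier element; the body is an arbitrary unary CQ, so whether it fires at $x$ depends on the neighbourhood of $x$, not just the unary facts at $x$. Quotienting by atomic $1$-type can therefore create new triggers in $M$ with no head witness, and $M \models \Sigma$ is not guaranteed. The paper folds instead by depth modulo $N$ together with the \emph{$N$-perspective} (the isomorphism type of the subtree rooted at the $N$-ancestor, with $N = 4n^2$ for $n$ the maximal rule size); this keeps $N$-neighbourhoods homomorphically equivalent and hence preserves all unary CQs of size $\le n$, which is what $M \models \Sigma$ needs. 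Second, the lifting: your claim that the EDB-connected block of head variables in one rule body lands in ``a single sprout image'' is false already in $U$---an EDB path of length $n$ can cross several rule-head sprouts, since each sprout contributes only a few edges and meets its neighbour in a single element---and after folding the mixing is worse. Moreover, the interface between parent and child in the proof tree is an IDB atom of arbitrary arity (EC-Datalog does not bound IDB arity), so your appeal to the single-element $\frone$ interface does not explain why a multi-variable tuple lifted at the parent remains compatible with the EDB structure required by the child body.

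The paper does not attempt a proof-tree-inductive lift at all. It proves two global structural facts and combines them: (i) any instance that maps homomorphically to $M$ but not to $U$ must contain an \emph{unabridged} cycle of length $\ge N$; (ii) the EDB restriction of every EC-Datalog unfolding contains no unabridged cycle of length $\ge N$. Fact (ii) is where the EC hypothesis is actually exploited: by induction on unfoldings one obtains $|p| \le n \cdot \mathrm{dist}(x,y)$ for any EDB path $p$ between variables $x,y$ of an unfolding, and the cycle bound follows. So EC is used to control a global metric distortion of the entire approximation $C_n$, not a per-rule local property as in your plan.
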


This is \emph{the first non-trivial finite-controllability result we know of for Datalog queries with arbitrary arity}.
The proof proceeds along the general lines used in finitely controllability for description logics, see in particular
\cite{finitecontrol1}.

\newcommand{\ch}{\mathcal{C}}
\newcommand{\rs}{\Sigma}
\newcommand{\dl}{Q}
\newcommand{\db}{\mathcal{D}}

\begin{proof} Fix the database instance $\db$, the set of frontier-one TGDs $\rs$, and an EC Datalog query $\dl$. Let $\ch$ denote the chase of $\db$ by $\rs$, let $n$ denote the maximal size of any rule in both $\rs$ and $\dl$, and let $N = 4 \cdot n^2$.
In this sketch we make some drastic simplifications to convey the idea:

\begin{inparaenum}
	\item The database instance $\db$ is a single unary atom. 
	\item The EDB relational symbols are at most binary. 
	\item The rules of $\rs$ have exactly one frontier variable.
	\item No atoms of the shape $E(x,x)$ appear in $\db, \rs,$ and  $\dl$ 
	\item Heads of rules of $\rs$ are trees.
\end{inparaenum}
We shall disregard the direction of binary predicates whenever discussing graph-theoretic notions, such as distances, trees, or cycles

From the assumptions we easily see:

\begin{proposition}\label{prop:fc-chase-treebody}
The chase of $\db$ by $\rs$ forms a regular tree.
\end{proposition}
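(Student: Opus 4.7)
The plan is to prove both halves of the claim -- tree-shape and regularity -- by induction on the chase construction, leveraging each of the five simplifying assumptions in turn.

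First I would establish the tree-shape of $\ch$. The database $\db$ consists of a single unary atom $A(c)$ and is trivially tree-shaped (a one-node tree). Each chase step fires some frontier-one rule $\phi(x, \vec y) \rightarrow \exists \vec z.\, \psi(x, \vec z)$ at an element $a$, introducing fresh nulls for $\vec z$ and adding the atoms $\psi(a, \vec z)$. Because the head $\psi$ is assumed to be a tree and contains no atoms of the shape $E(x,x)$, and because all of $\vec z$ are fresh nulls not occurring elsewhere, the new atoms form a tree attached to the previously-built chase only through the single node $a$. Glueing a tree to a tree at a single vertex yields a tree, so by induction $\ch$ remains tree-shaped throughout.

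Second I would argue regularity, i.e.\ that there are finitely many isomorphism types of subtrees rooted at nodes of $\ch$. Define the local type $\tau(a)$ of a chase node $a$ to be the pair consisting of the set of unary predicates holding at $a$ together with the multiset of binary predicates (with directions) linking $a$ to its parent in the chase tree -- $\tau(c)$ for the root is just its unary label. Since the signature is finite, the range of $\tau$ is finite. I would then prove, by induction on the chase, that if $\tau(a) = \tau(a')$ then the subtrees rooted at $a$ and $a'$ are isomorphic. The inductive step reduces to showing that the set of triggers applicable at $a$ depends only on $\tau(a)$: since the rule is frontier-one, only the image of $x$ is relevant to which atoms get added, and by tree-shape together with the bound $n$ on rule sizes, any homomorphism of a rule body $\phi(x, \vec y)$ into the tree $\ch$ sending $x$ to $a$ is forced into a bounded neighborhood whose isomorphism type is determined by $\tau(a)$ and the types of a bounded number of ancestors/siblings. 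Finitely many types then yield finitely many subtrees up to isomorphism, which is the definition of a regular tree.

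The main obstacle is precisely the last point: the body of a rule can contain non-frontier variables $\vec y$ whose images under a trigger are not constrained to lie in the subtree below $a$. One must argue carefully that such triggers either live in the bounded parent-side neighborhood (handled by the parent type recorded in $\tau(a)$) or live within the subtree below $a$ (handled by the inductive hypothesis). The simplifying assumptions -- single unary starting atom, tree-shaped heads, no self-loops, at-most-binary EDB, single frontier variable -- are each exactly what is needed to block the alternative scenarios (multi-branch starting structure, cycles, wide frontiers) that would otherwise allow the chase to ``remember'' unbounded global information and defeat regularity.
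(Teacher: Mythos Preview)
The paper's own proof is a single line (``From Assumptions 1 to 5''), so your sketch is far more detailed than what the authors provide. Your tree-shape argument is fine: frontier-one plus tree-shaped heads plus a single-atom database means every chase step grafts a tree onto the existing tree at one vertex.

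Your regularity argument, however, has a genuine gap in the definition of $\tau$. You set $\tau(a)$ to be just the unary predicates at $a$ plus the edge labels to its parent, and then claim that the triggers applicable at $a$ depend only on $\tau(a)$. This is false as stated: a frontier-one rule body $\phi(x,\vec y)$ may mention non-frontier variables $\vec y$ that match ancestors or cousins of $a$, so two nodes with the same $\tau$-value but different grandparents can see different triggers. For instance, with rules $A(x)\rightarrow\exists y\,E(x,y)$ and $A(x)\wedge E(x,y)\rightarrow\exists z\,F(y,z)$, whether the second rule fires at a node $a$ depends on whether $a$'s \emph{parent} carries $A$, which your $\tau(a)$ does not record. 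You partially acknowledge this in your ``main obstacle'' paragraph, but the resolution you gesture at (``determined by $\tau(a)$ and the types of a bounded number of ancestors/siblings'') is not what you actually set up to prove by induction.

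The standard fix is to take $\tau(a)$ to be the isomorphism type of the $n$-neighbourhood of $a$ in $\ch$ (where $n$ bounds rule-body size), or equivalently the set of pointed CQs of size at most $n$ holding at $a$. This is still a finite set of types because the tree has bounded degree (finitely many rules, bounded head size), and with this richer $\tau$ your inductive argument goes through: equal $n$-neighbourhood types imply the same triggers fire, and the children they create again have $n$-neighbourhood types determined by the parent's.
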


\begin{definition}[Unabridged]
Given an instance $\inst$ and a cycle $C$ in $\inst$ we say that $C$ is {\em unabridged} if for every pair of elements 
$t,t'$ of $C$ the shortest path between them in $\inst$ goes through $C$. 
\end{definition}
Note that in the above we treat instances as undirected graphs.

\begin{definition}[Trimming]
Consider a tree $T$ with a node $v$. The process of removing all the children of node $v$ is referred to as {\em trimming} at node $v$. Note, that trimming at leaves is allowed, but it has no effect.
\end{definition}

\begin{definition}[Unfolding tree]
For a Datalog program $P$ without repeated variables in rule heads, the \emph{unfolding tree} is any tree derived from a CQ-approximation tree $T$ of $P$ through the process of trimming at one or more nodes of $T$. 

Then, an \emph{unfolding} of $P$ is a conjunction of labels from any unfolding tree of $P$. Note that the resulting CQ may include IDB predicates. 
\end{definition}

\begin{definition}[Succession]
We say that an unfolding tree $T$ is a \emph{direct successor} of an unfolding $T'$ if $T'$ can be obtained from $T$ by trimming it at a node with only leaves as its children. We define \emph{succession} as the transitive closure of direct succession. These notions naturally extend to unfoldings.
\end{definition}

To prove Theorem \ref{thm:fcecdatalog} we need to show that
iff $\ch$ does not entail $\dl$ then there exists a finite model $M$ of $\rs$ containing $\db$ that does not entail $\dl$ as well.
We give only the construction of $M$ here.

\begin{definition}[Ancestor]
Given an infinite tree $\mathcal{T}$, a natural number $m$, and a node $u$ of $\mathcal{T}$, we define the \emph{$m$-ancestor} of $u$ as the ancestor of $u$ at a distance of $m$, if it exists; otherwise, the $m$-ancestor of $u$ is the root of $T.$.
\end{definition}

\begin{definition}[Perspective]
Given a natural number $m$ and a node $u$ of an infinite tree $\mathcal{T}$, we define the \emph{$m$-perspective} of $u$ as the pair $\pair{T', u}$ where $T'$ is the subtree of $T$ that is rooted at the $m$-parent of $u$. We consider $m$-perspectives up to isomorphism.
\end{definition}

Let $type(u)$, for a term $u$ in $\ch$, consist of two values:

\begin{inparaenum}
	\item The depth of $u$ in $\ch$ modulo $N$.
	\item The $N$-perspective of $u$. Note that \cref{prop:fc-chase-treebody} indicates there are only a finite number of such perspectives, keeping in mind that we count only up to isomorphism.
\end{inparaenum}
Define $M$ as a structure that is a quotient of $\ch$ using the ``is of the same type'' relation, where $type$ is defined as above.

Using an analysis of how $\dl$ can be satisfied, we can show that $M$ witnesses finite controllability:
 $M$ extends $\db$, satisfies $\rs$, and does not satisfy $\dl$.
\end{proof}

\myparagraph{Compactness of entailment}
Let us go back to the forward approach for analyzing pipelines of chasing and views, such as the process of
Figure \ref{alg:query-mondet-constraints}. We start by using an automaton to represent the view images of chases of unfoldings of the query, which
we can do when we have the BTVIP and some extra conditions on the views. 
But usually we need an automaton over  infinite trees to do this, as in Proposition \ref{prop:kpipeline}.  This is unfortunate, because our backward mapping result,
Theorem \ref{thm:fgviewsplusdatalogqueryrewriting}, requires an ordinary finite tree automaton to map backward into Datalog.
What will help us is that we are interested in cases where the entire process succeeds -- which means monotonic determinacy holds.
We would like to argue that this depends on only a finite part of the view image of the chase, and later conclude that a 
finite tree automata suffices to capture this part. We give a general result about entailment that will be useful for those purposes.

Fix a set of views $\views$, rules $\Sigma$, and a query $Q$,
and let $\vinst$ be an instance of the view schema.
A \emph{$\views, \Sigma$-sound realization of $\vinst$} is an instance of the base schema satisfying $\Sigma$ whose view image
includes $\vinst$.  An instance $\vinst$ is said to be \emph{$Q$-entailing} (w.r.t $\Sigma, \views$) if 
every
$\views, \Sigma$-sound realization of $\vinst$ satisfies $Q$.
We  also write $\vinst \models_{\views,\Sigma} Q$.
Monotonic determinacy can be restated as: for every instance $\inst$ satisfying $Q$ and $\Sigma$,  its view image is $Q$-entailing w.r.t. 
$\Sigma$,$\views$.

It is easy to see that if $\Sigma$ consists of existential rules, and views are CQs, then when $\vinst$ is $Q$-entailing there is  a finite subinstance $\vinst_0$ 
that is $Q$-entailing w.r.t. $\Sigma$, $\views$. We only need enough facts from $\vinst$ to fire the chase rules needed to generate a match of $Q$.
In the case where the views are in Datalog, this is not clear. But it turns out we can often obtain  this ``compactness property'':

\begin{theorem} \label{thm:compact} [Compactness for view and rule entailment] Let $\Sigma$ consist of FGTGDs, $\views$ a set of views defined
by Datalog queries, and $Q$ an $\fgdl$ query.
For every $\vinst$ 
such that $\vinst$ is $Q$-entailing w.r.t. $\Sigma$, $\views$, there is
$\vinst_0$ a finite subinstance of $\vinst$ such that $\vinst_0$ is $Q$-entailing.
\end{theorem}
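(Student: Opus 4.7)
The plan is a two-step reduction. First, by Proposition~\ref{prop:fgdlcertain} take an FGDL certain-answer rewriting $Q^{\kw{CA}}$ of $Q$ with respect to $\Sigma$. A short monotonicity argument shows that $\vinst$ is $Q$-entailing w.r.t.~$\Sigma,\views$ if and only if every instance $\inst$ with $\views(\inst) \supseteq \vinst$ satisfies $Q^{\kw{CA}}$ (i.e.\ dropping the $\Sigma$-constraint on the realization), and the same equivalence applies uniformly with any finite subinstance of $\vinst$ in the role of $\vinst$. Hence it suffices to prove a rule-free compactness statement: if every $\inst$ with $\views(\inst) \supseteq \vinst$ satisfies the FGDL query $Q^{\kw{CA}}$, then some finite $\vinst_0 \subseteq \vinst$ has the same property.

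I would prove the rule-free statement by contradiction, combined with an ultraproduct. Assume no finite $\vinst_0$ has the property, and for each finite $\vinst_0 \subseteq \vinst$ fix a counter-realization $\inst_{\vinst_0}$ with $\views(\inst_{\vinst_0}) \supseteq \vinst_0$ and $\inst_{\vinst_0} \not\models Q^{\kw{CA}}$. Let $\U$ be an ultrafilter on the directed set of finite subsets of $\vinst$ containing every tail $\{\vinst_0 : \vinst_0 \supseteq J\}$ for finite $J \subseteq \vinst$, and set $\inst^* = \prod_{\vinst_0} \inst_{\vinst_0}/\U$, with $\adom(\vinst)$ treated as distinguished constants so those elements survive. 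Each CQ approximation $Q^{\kw{CA},m}$ of $Q^{\kw{CA}}$ is an existential first-order sentence that fails in every factor, so by \L{}o\'s's theorem $\inst^* \not\models Q^{\kw{CA},m}$ for every $m$, and hence $\inst^* \not\models Q^{\kw{CA}}$.

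The main obstacle is verifying $\views(\inst^*) \supseteq \vinst$: for each $V(\vec c) \in \vinst$ one needs a Datalog proof inside $\inst^*$, and Datalog satisfaction is not first-order. A $\U$-large set of factors $\inst_{\vinst_0}$ (those with $\vinst_0 \ni V(\vec c)$) satisfies $V(\vec c) \in \views(\inst_{\vinst_0})$ via some witness unfolding $Q_V^{n_{\vinst_0}}(\vec c)$, but the depth $n_{\vinst_0}$ may grow with $\vinst_0$ and a non-standard depth in the ultraproduct does not correspond to a genuine derivation. To overcome this I would preprocess the counter-realizations so that each $V(\vec c)$ is witnessed by an unfolding of depth bounded by a function $n(V,\vec c)$ depending only on $(V,\vec c)$, not on the surrounding $\vinst_0$. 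The existence of such a uniform bound is where the FGDL structure of $Q^{\kw{CA}}$ enters: each CQ approximation has bounded treewidth and is connected via frontier-guard atoms, so any match of $Q^{\kw{CA}}$ in a realization only probes finitely many view facts, and only finite-depth unfoldings of each. A K\"onig-style finite-branching argument over the tree of bounded-depth unfolding assignments then either extracts the desired uniform witness depths or exposes a finite $\vinst_0$ that is already $Q^{\kw{CA}}$-entailing, contradicting the starting assumption. Once the witness depths are uniform, the CQ ``$\exists \vec y\, Q_V^{n(V,\vec c)}(\vec c,\vec y)$'' transfers from $\U$-almost every factor to $\inst^*$ by \L{}o\'s, giving $V(\vec c) \in \views(\inst^*)$ and completing the contradiction.
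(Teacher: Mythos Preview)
Your first reduction step---replacing $Q$ by its FGDL certain-answer rewriting $Q^{\kw{CA}}$ to drop $\Sigma$ from the realization condition---is correct and is a clean way to reach the rule-free setting. The paper does essentially the same thing implicitly, by working with the GSO sentence $\Sigma \rightarrow Q$ rather than $Q$ alone.

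The gap is in the second step. You correctly isolate the obstacle: Datalog satisfaction of the views does not transfer through the ultraproduct, because witness depths may be unbounded across the factors. But your proposed repair does not close the gap. The claim that ``any match of $Q^{\kw{CA}}$ in a realization only probes finitely many view facts, and only finite-depth unfoldings of each'' is beside the point: you are working with counter-realizations where $Q^{\kw{CA}}$ \emph{fails}, so there is no match to analyse, and the finiteness of CQ approximations of $Q^{\kw{CA}}$ says nothing about how deep the unfoldings witnessing the view facts must be. The ``K\"onig-style finite-branching argument'' you invoke cannot get off the ground as stated, because for Datalog views the tree of unfolding choices is \emph{infinitely} branching at every node.

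What the paper does instead is exploit the FGDL-ness of $Q^{\kw{CA}}$ in a different way: $Q^{\kw{CA}}$ is expressible in guarded second-order logic, hence has some GSO quantifier depth $k$. The relation $\equiv_k$ (agreement on GSO sentences of depth $\le k$) has finite index, and a Feferman--Vaught style composition lemma shows that swapping a view-fact unfolding for a $\equiv_k$-equivalent one preserves truth of $Q^{\kw{CA}}$. So one can replace each Datalog view by a \emph{finite} UCQ consisting of one representative per $\equiv_k$-class of its CQ approximations, without changing which view instances are $Q^{\kw{CA}}$-entailing. With UCQ views the branching is finite, and a direct K\"onig argument (no ultraproduct needed) finishes. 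Your proposal is missing this Datalog-to-UCQ reduction, which is where the real work lies.
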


Note that under the hypotheses, if we take a particular \emph{unfolding of $\vinst$} -- a choice of witness for each view fact --  then  $Q$ will hold in the chase of the
corresponding instance, and the satisfaction of $Q$ will depend on only finitely many facts in the chase, thus on only finitely many facts
in $\vinst$. The issue is that there are infinitely many witness unfoldings, and one worries that more and more facts from
$\vinst$ are required for different witnesses.  The proof of Theorem \ref{thm:compact} works by observing that the set of witnesses
needed will only  depend
on the  $j$ quantifier rank type in guarded second order logic, for sufficiently large $j$. There are only finitely many such types,
and thus we need only finitely many facts. Note that when we move to general Datalog $Q$, this result fails: see \appwrap{the appendix} for details.

\myparagraph{Recognizing  $Q$-entailing instances with automata}
Now consider a \emph{tree-like} view  instance $\vinst$:  one with treewidth $k$, given by some tree code $T$.
Theorem \ref{thm:compact} tells us that there is a finite prefix $T_0$ of $T$ whose decoding is $Q$-entailing.
It turns out that we can often recognize these $Q$-entailing finite prefixes by running
an ordinary finite tree automaton. This result and Theorem \ref{thm:compact} together will allow us to reduce monotonic determinacy
to reasoning with finite tree automata, rather than dealing with infinite trees.

\begin{theorem} \label{thm:autpre} [Automata for entailing witnesses] Let $\Sigma$ consist of FGTGDs, $\views$ a set of views defined
by $\fgdl$ queries, and $Q$ a $\fgdl$ query.
Let $k$ be a number.
There is a tree automaton $T_{Q, \views, \Sigma, k}$ running over $k$-tree codes in the view signature
such that for  finite $\vinst_0$ of tree-width $k$,
$T_{Q, \views, \Sigma, k}$ accepts a code of $\vinst_0$ if and only if $\vinst_0 \models_{\views, \Sigma} Q$.
\end{theorem}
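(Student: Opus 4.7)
The plan reduces $\vinst_0 \models_{\views, \Sigma} Q$ to a property recognizable by a finite tree automaton via three moves: eliminate $\Sigma$ by certain-answer rewriting, bound the treewidth of counterexample realizations, then apply Courcelle's theorem together with tree-automaton projection and complement. First, by Proposition \ref{prop:fgdlcertain} I take an FGDL certain-answer rewriting $Q^\Sigma$ of $Q$ with respect to $\Sigma$, so that $\inst^* \wedge \Sigma \models Q$ iff $\inst^* \models Q^\Sigma$ for every base instance $\inst^*$. Since view definitions are monotone, any $\inst^*$ with $\views(\inst^*) \supseteq \vinst_0$ can be $\Sigma$-chased to a sound realization whose view image still contains $\vinst_0$, and conversely sound realizations themselves serve. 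Hence $\vinst_0 \models_{\views, \Sigma} Q$ iff every base $\inst^*$ with $\views(\inst^*) \supseteq \vinst_0$ satisfies $Q^\Sigma$.

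\emph{Bounded-treewidth witnesses.} I next show that if $\vinst_0 \not\models_{\views, \Sigma} Q$, this is witnessed by a finite base instance $\inst^{**}$ of treewidth at most some $k' = k'(k, \views)$. Starting from any counterexample $\inst^*$, for each view fact $V(\vec c) \in \vinst_0$ pick a CQ approximation $C_{V, \vec c}$ of $Q_V$ matching into $\inst^*$ at $\vec c$. Frontier-guardedness of the FGDL view $Q_V$ implies $C_{V, \vec c}$ admits a tree decomposition of width some $k_V$ with $\vec c$ in its root bag. Grafting each such decomposition beneath the bag of $\vinst_0$'s $k$-decomposition that contains $\vec c$ yields a tree decomposition of the finite union $\inst^{**} := \bigcup_{V,\vec c} C_{V, \vec c}$ of width $k' := \max(k, \max_V k_V)$. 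By construction $\views(\inst^{**}) \supseteq \vinst_0$, and by monotonicity of $Q^\Sigma$, since $\inst^{**} \subseteq \inst^*$, we have $\inst^{**} \not\models Q^\Sigma$.

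\emph{Automaton via MSO and projection.} Over $k'$-tree codes on the combined signature $\aschema \cup \Sigma_\views$, the condition ``$\views(\inst^{**}) \supseteq \vinst_0$'' is FGDL-expressible and ``$\inst^{**} \not\models Q^\Sigma$'' is the negation of an FGDL-expressible condition; both are MSO. By Courcelle's theorem and closure of tree automata under intersection and complement, a finite tree automaton $B$ recognizes their conjunction on such codes. A standard projection --- at each node of $\vinst_0$'s $k$-tree code the automaton guesses a regular ``extension type'' summarizing the attached base-schema subtree, and these types form a finite space --- yields a finite tree automaton $A'$ on $k$-tree codes of $\vinst_0$ accepting iff some bounded-treewidth counterexample exists, iff (by the previous paragraph) $\vinst_0 \not\models_{\views, \Sigma} Q$. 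Define $T_{Q, \views, \Sigma, k}$ as the complement of $A'$.

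\emph{Main obstacle.} The delicate step is the bounded-treewidth claim: it hinges on FGDL's frontier-guarded property, which ensures CQ approximations of view definitions admit a root bag containing the exported tuple and so can be grafted into $\vinst_0$'s tree decomposition without exceeding $k'$. A secondary subtlety is the projection, where one must verify that the possible base-schema subtree extensions form a regular tree language; this regularity underwrites the standard projection-via-extension-types construction that converts the combined-signature automaton $B$ into the view-signature automaton $A'$.
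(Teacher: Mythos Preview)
Your approach is correct and genuinely different from the paper's. The paper first invokes Lemma~\ref{lem:datalog to ucq} to replace the $\fgdl$ views by UCQ views (choosing one disjunct per GSO quantifier-rank type of approximation); with UCQ views each view fact has a witness of bounded \emph{size}, so a full realization can be packed into an ``extended tree code'' having the \emph{same tree shape} as $\vinst_0$'s $k$-code but with bags enlarged to $k'$ local names. The universal quantification over realizations is then written directly in MSO as a block $\forall Q_1\ldots\forall Q_s(\eta\rightarrow\varphi')$ over the extra label predicates, and the automaton falls out of a single MSO-to-automaton translation with no separate projection or complementation step. You instead eliminate $\Sigma$ up front via the certain-answer rewriting $Q^\Sigma$ (Proposition~\ref{prop:fgdlcertain}) and then exploit that $\fgdl$ approximations --- though unbounded in size --- have bounded treewidth with the answer tuple guarded in the root bag, so counterexample realizations arise by grafting approximation subtrees as extra \emph{branches} onto $\vinst_0$'s decomposition. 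The price is that your combined $k'$-code no longer shares its shape with the $k$-code of $\vinst_0$, so the ``projection'' is not mere predicate-forgetting: you must summarize each grafted subtree by the $B$-state reachable at its root (well-defined since the approximation codes form a regular tree language, cf.\ Proposition~\ref{prop:boundedtreewidthapprox}). You correctly flag this as the delicate point; that is where the real automata-theoretic work sits in your version, whereas in the paper it is front-loaded into the UCQ-reduction lemma and the extended-code encoding. One small imprecision: your $\inst^{**}$, being a union of canonical databases with fresh existential witnesses, is not literally contained in $\inst^*$ but only maps homomorphically into it; the conclusion $\inst^{**}\not\models Q^\Sigma$ then follows from preservation of the $\fgdl$ query $Q^\Sigma$ under homomorphisms rather than bare $\subseteq$-monotonicity.
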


The proof of this theorem involves  \emph{elimination of quantification over extensions of a tree
in favor of quantification within a tree}, an idea inspired by prior work in embedded finite model theory over trees
\cite{treeextension}.
Consider the set $E$ of tree codes of finite instances $\vinst_0$ such that $\vinst_0 \models_{\views, \Sigma} Q$.
The entailment relation quantifies over all extensions of $\vinst_0$. But in our setting, it equivalently quantifies over all
tree-like instances, finite or infinite. Since all the queries involved are well-behaved, we can write a monadic second order sentence $\phi$ quantifying
over infinite trees such that a finite tree is in $E$ if and only if all of its extensions -- ``suffixes'' that add on additional branches, both
 finite and infinite --  satisfy $\phi$.
But we can argue that quantifying over all infinite extensions in the way does not take us out of regular tree languages.

\vspace{-5pt}
\section{Applying the tools for decidability and rewritability}  \label{sec:decide}
We first apply the tools from Section \ref{sec:tools} to give \emph{decidability results}.
We focus on three decidable cases: 
 \emph{FGDL TGDs, FGDL queries, and FGDL views}
-- that is  ``everything is guarded''; 
\emph{Fr-1 TGDs, MDL queries and CQ views} -- which we shorten as ``Fr-1 rules and queries, CQ views''.
and finally,
 \emph{linear TGDs, CQ query, CQ views} -- that is, ``linearity and CQs'';
For brevity we omit
 complexity bounds in the statements, but elementary upper bounds are easy to derive. 
What we want to highlight here is that \emph{the conditions are very restrictive, but just afterwards we will show that they are necessary}.


\myparagraph{Decidability via bounding treewidth}
In the first case mentioned above, we establish decidability via the forward-reasoning approach in the previous section, and more specifically the \emph{Forward mapping tools}:

\begin{theorem} \label{thm:datalogqfgdlviewsdecide} Let $Q$ range over $\fgdl$ queries, $\Sigma$ over FGTGDs, and $\views$ over FGDL views.
Then monotonic determinacy is decidable.
\end{theorem}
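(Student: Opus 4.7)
The plan is to combine the Forward Mapping machinery (Proposition \ref{prop:kpipeline}) with a treewidth bound (Proposition \ref{prop:ksuffices}) to reduce monotonic determinacy to nonemptiness of a B\"uchi tree automaton over infinite trees.

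First I would verify that the triple $(Q, \views, \Sigma)$ has the BTVIP in our setting. Fix a CQ approximation $Q_n$ of the FGDL query $Q$. Since $Q$ is frontier-guarded Datalog, unfolding its rules produces CQs whose canonical databases have bounded treewidth, with the width depending only on $Q$ (each unfolding step introduces variables that are jointly guarded by an EDB atom, which serves as a bag in the tree decomposition). Chasing with FGTGDs preserves bounded treewidth by a well-known argument: each trigger adds a fresh bundle of facts guarded by the image of a frontier atom, so the chase tree decomposition extends the tree decomposition of the input. Finally, applying FGDL views to this bounded-treewidth chase again produces a bounded-treewidth instance, because for each derived view fact $V(\vec c)$ the tuple $\vec c$ must be contained in some EDB guard atom by frontier-guardedness, hence $\vec c$ lies inside a single bag of the existing decomposition. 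Thus we can compute a uniform bound $k$ on $\atw(\views(C))$ across all approximations $Q_n$, giving BTVIP.

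Next I would apply Proposition \ref{prop:ksuffices}: since $\Sigma$ are FGTGDs, $\views$ is Datalog, and BTVIP holds, there is a computable $k$ such that whenever monotonic determinacy fails there is a counterexample $(\inst, \inst')$ of adjusted treewidth at most $k$. By Proposition \ref{prop:kpipeline} applied to $(Q, \views, \Sigma)$ at width $k$, there is a B\"uchi tree automaton $\mathcal{A}$ over $k$-tree codes in the combined signature (two copies of the base schema plus the view schema) that accepts exactly the tree codes of (possibly infinite) counterexamples. Hence monotonic determinacy fails iff $\mathcal{A}$ accepts at least one tree, and this is decidable by standard nonemptiness results for B\"uchi tree automata \cite{thomas}.

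The main technical obstacle is the BTVIP verification, and in particular the need to handle the fact that an FGDL \emph{view} applied to a bounded-treewidth structure can introduce view atoms whose arity exceeds the bag width. The key point is that frontier-guardedness of the view definition forces every derived tuple to sit inside a guard already present in the base instance, so the width bound can absorb the view predicates by a uniform widening depending only on the maximal arity of $\views$. A secondary obstacle is the need to use B\"uchi automata over infinite trees rather than ordinary finite tree automata, since the chases under FGTGDs may be infinite and so may the tree codes of counterexamples; this is precisely why we invoke the infinite-tree variant of Proposition \ref{prop:kpipeline}. Given these ingredients, decidability follows immediately from decidability of B\"uchi tree automaton emptiness.
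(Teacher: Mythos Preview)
Your proposal is correct and follows essentially the same approach as the paper: establish the BTVIP (treelike approximations, FGTGD chase preserves bounded treewidth, FGDL views add only guarded facts compatible with the existing decomposition), then invoke Proposition~\ref{prop:ksuffices} and Proposition~\ref{prop:kpipeline} to reduce to nonemptiness of a B\"uchi tree automaton. One minor remark: the bounded-treewidth property of CQ approximations holds for \emph{any} Datalog query (Proposition~\ref{prop:boundedtreewidthapprox}), not just FGDL, so frontier-guardedness of $Q$ is only needed so that $\neg Q$ is expressible in GSO for the automaton construction in Proposition~\ref{prop:kpipeline}.
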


\begin{proof}
It is easy to show that we have the BTVIP in the case above: the approximations of $Q$ are tree-like, chasing with FGTGDs
adds on additional branches to the tree, and applying  FGDL preserves the tree structure. We just check non-emptiness of the  automaton of Proposition \ref{prop:kpipeline}.
By Proposition \ref{prop:ksuffices} this is sufficient
\end{proof}

The same method applies to the \emph{second case above}, MDL queries and CQ views, and  $\frone$ rules:

\begin{theorem} \label{thm:mdlqcqviewslindecide} Let $Q$ range over MDL queries, $\Sigma$ over $\frone$ TGDs, and $\views$ over CQ views.
Then monotonic determinacy is decidable.
\end{theorem}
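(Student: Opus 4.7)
My plan is to mirror the proof of Theorem~\ref{thm:datalogqfgdlviewsdecide}: verify that the triple $(Q,\views,\Sigma)$ has the BTVIP, then appeal to Propositions~\ref{prop:kpipeline} and~\ref{prop:ksuffices} to reduce decidability of monotonic determinacy to the emptiness problem for a tree automaton. Both propositions apply to our setting: $\frone$ TGDs are FGTGDs, CQs are Datalog (and FO), and MDL is FGDL by the convention of Section~\ref{sec:prelims}.

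The heart of the argument is the verification of BTVIP, which I would carry out by following the three stages of Figure~\ref{alg:query-mondet-constraints}. First, every CQ approximation $Q_n$ of an MDL query $Q$ has a canonical database of bounded adjusted treewidth: because the IDBs are unary, unfolding the goal produces a tree-shaped composition of rule bodies, yielding a natural tree decomposition whose bags are the variables of single rule bodies. The bag size is bounded by a constant $k_1$ depending only on $Q$. Second, chasing with $\frone$ TGDs preserves bounded treewidth: since each rule has only one frontier variable, every chase step attaches, at a single element of the current instance, a bundle of atoms over fresh nulls. This step is reflected in the decomposition by adding a fresh child bag containing the chosen frontier element together with the new nulls; the resulting chase, even if infinite, admits a decomposition whose bags have size bounded by $k_2 = \max(k_1, \|\Sigma\|)$. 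Third, because each view is a CQ of bounded size, every view fact $V(\vec c)$ of $\views(C)$ is witnessed by a homomorphism whose image lies within bounded Gaifman distance in $C$; enlarging each bag of the $C$-decomposition by a constant-radius neighbourhood then produces a decomposition of $\views(C)$ of adjusted treewidth bounded by some $k$ computable from $(Q,\views,\Sigma)$.

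With BTVIP established, Proposition~\ref{prop:ksuffices} yields a computable $k$ such that, if monotonic determinacy fails, there is a counterexample of treewidth at most $k$; Proposition~\ref{prop:kpipeline} provides a B\"uchi tree automaton that accepts precisely the $k$-tree codes of counterexamples. Decidability reduces to non-emptiness of this automaton, which is decidable by standard results on tree automata.

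The main obstacle lies in the third step of the BTVIP analysis. Applying CQ views to the tree-decomposed chase must preserve bounded treewidth while also keeping the treespan of every element finite, so that the resulting view image admits the kind of tree code on which the automaton of Proposition~\ref{prop:kpipeline} operates. A careful bag-enlargement argument suffices, but some delicacy is required because the chase $C$ may be infinite and the enlarged bags must simultaneously cover each new view tuple in a single bag while preserving the connectivity condition of tree decompositions; this is where the interplay between the one-variable frontier of $\Sigma$ and the bounded size of the view bodies is used most directly.
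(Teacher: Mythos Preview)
Your overall strategy---establish the BTVIP, then invoke Propositions~\ref{prop:kpipeline} and~\ref{prop:ksuffices}---is exactly the paper's. The first two stages of your BTVIP verification (bounded treewidth of MDL approximations; preservation under the $\frone$ chase) are fine and match the paper.

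The gap is in your third stage. Your bag-enlargement argument (``enlarging each bag of the $C$-decomposition by a constant-radius neighbourhood'') keeps bags bounded only if the Gaifman degree of $C$ is uniformly bounded. But in the $\frone$ chase a single element $e$ can serve as the frontier variable for unboundedly many triggers---the body of a $\frone$ rule may match in many ways elsewhere in the instance while pinning the frontier to $e$---so neither the degree of $e$ in $C$ nor the branching of the natural tree decomposition at $e$ is uniformly bounded in the approximation index $n$. Hence radius-$r$ balls can blow up, and your enlargement does not yield bounded width. Your Gaifman-distance claim also tacitly assumes connected view bodies.

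The paper's route through this step is different: it argues that the chase of an MDL approximation under $\frone$ rules admits a decomposition of bounded width \emph{and bounded treespan} (every element lies in a uniformly bounded number of bags), and then invokes \cite[Lem.~6.5, Thm.~8.2]{uspods20journal} for the fact that CQ views applied to a structure with bounded treewidth and bounded treespan again yield bounded treewidth. You mention treespan in your final paragraph, but only as a side condition on the output rather than as the structural invariant carried through the chase; making bounded treespan---not Gaifman radius---the central property of $C$ is what closes the gap.
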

\begin{proof}
When we approximate an MDL query and  chase, we get a structure with bounded tree-width and
low treespan. That is, a value appears in a fixed number of bags. When we apply CQ views, the
treewidth is bounded  \cite[Lem 6.5, Thm 8.2]{uspods20journal}.
\end{proof}

\myparagraph{Decidability based on certain answer rewriting} 
We tackle the final decidable case by giving decidability via the  ``backwards analysis'', using \emph{certain answer rewriting techniques from
the previous section}.
Recall that in the case of CQ views,  Step \ref{alg:query-determinacy-constraints:view-acc-inverse} in
the process of Figure \ref{alg:query-mondet-constraints}
amounts
to chasing with respect to the source-to-target rules $\backview_\views$ mentioned earlier. Thus the entire process can be considered to consist
of chase steps, and we can proceed by a rewriting approach that  removes some of these steps:
 we first find a $\certainanswerrewriting$ $R_1$ of $Q$ with respect to $\Sigma$, and then a $\certainanswerrewriting$ $R_2$ of
$R_1$ with respect to $\backview_\views$.  
Then we have monotonic determinacy of $Q$ over $\views$ with respect to $\Sigma$ exactly when $Q$ entails
$R_2$ with respect to $\Sigma$.
We can apply this straightforwardly in the case of linear TGDs, where Theorem \ref{thm:bbcrewriting} tells us we can find UCQ $\certainanswerrewriting$s $R_1$, which
will generate another UCQ $\certainanswerrewriting$ $R_2$.

\begin{theorem} \label{thm:linearcqcq}
If $\Sigma$ ranges over linear TGDs,
then monotonic determinacy of UCQ  $Q$ over CQ views $\views$ relative
to $\Sigma$ is decidable.
\end{theorem}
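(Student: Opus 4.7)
The plan is to execute the backward strategy sketched just above the theorem. By Proposition \ref{prop:pipeline}, together with the fact that the CQ approximations of a UCQ are exactly its disjuncts, monotonic determinacy amounts to checking that, for each disjunct $Q_i$ of $Q$ and $C_i = \chase_{\Sigma}(\canondb(Q_i))$, every instance $\inst'$ satisfying $\Sigma$ whose view image contains $\views(C_i)$ also satisfies $Q$. Because the views are CQs, the containment $\views(\inst') \supseteq \views(C_i)$ is captured by requiring $\inst'$ to satisfy the source-to-target backview rules $\backview_\views$ from the seed $\views(C_i)$. So monotonic determinacy becomes the entailment $\views(C_i) \wedge \backview_\views \wedge \Sigma \models Q$ holding for every $i$.

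I would next apply Theorem \ref{thm:bbcrewriting} twice to peel away both sets of rules. First, linearity of $\Sigma$ yields a UCQ $R_1$ with $\inst \wedge \Sigma \models Q$ iff $\inst \models R_1$ for every instance $\inst$. Second, each rule of $\backview_\views$ has a body consisting of the single view atom $V_i(\vec x)$, so $\backview_\views$ is itself a set of linear TGDs over the combined view-and-base schema; Theorem \ref{thm:bbcrewriting} applied to $R_1$ and $\backview_\views$ gives a UCQ $R_2$ over the view schema with $\inst \wedge \backview_\views \models R_1$ iff $\inst \models R_2$. Combining the two equivalences, monotonic determinacy reduces to checking that $\views(C_i) \models R_2$ for every disjunct $Q_i$.

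To finish, I would unfold every view atom occurring in $R_2$ by the CQ body of its defining view, producing a UCQ $R_2'$ over the base schema. Then $\views(C_i) \models R_2$ iff $C_i \models R_2'$, and since $C_i$ is the chase of $\canondb(Q_i)$ under $\Sigma$, this in turn is equivalent to $Q_i \wedge \Sigma \models R_2'$. Ranging over all $i$, the whole problem has been reduced to UCQ containment under linear TGDs, which is decidable via one more invocation of Theorem \ref{thm:bbcrewriting} followed by ordinary UCQ containment. The only delicate point worth flagging is that $C_i$ and hence $\views(C_i)$ may be infinite, but this causes no obstacle: certain-answer rewritings are defined through entailment over \emph{arbitrary} instances, so the equivalences above apply to infinite inputs without modification. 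The essential leverage of linearity, used twice, is what keeps the rewriting language stable at UCQ and allows the chain of reductions to terminate in a decidable containment problem.
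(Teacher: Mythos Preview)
Your proof is correct and follows essentially the same backward-rewriting strategy as the paper: obtain a UCQ certain-answer rewriting $R_1$ of $Q$ with respect to $\Sigma$, then a further UCQ rewriting that absorbs the view-related steps, and finally reduce to a decidable UCQ entailment under $\Sigma$. The only organizational difference is that the paper consolidates ``apply views then backchase'' into a single family of source-to-target rules $\phi_V(\vec x,\vec y)\to\exists\vec y'\,\phi'_V(\vec x,\vec y')$ (invoking first-order rewritability of source-to-target TGDs), whereas you treat $\backview_\views$ as linear TGDs, rewrite $R_1$ to $R_2$ over the view schema, and then explicitly unfold the view atoms. One small point worth tightening: Theorem~\ref{thm:bbcrewriting} does not by itself guarantee that $R_2$ lives purely over the view schema---the perfect rewriting may retain base atoms---but since you only ever evaluate $R_2$ on the pure view instance $\views(C_i)$, disjuncts containing base atoms are vacuously false and can be discarded before the unfolding step; with that adjustment your chain of equivalences goes through unchanged.
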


\myparagraph{Decidability in the finite using the finite controllability result} 
We have sketched the argument for decidability of monotonic determinacy over all instances for three cases.  In  these cases, we
 can show that monotonic determinacy is also decidable in the finite as well.
We focus on a special case of the second decidable case, which will make use of the \emph{finite controllability tool} (Theorem \ref{thm:fcecdatalog}) from the previous section.

\begin{theorem} \label{thm:finconfone}
Let $Q$ be a CQ  query, $\views$ a set of  CQ views, and $\Sigma$ is a set of $\frone$ TGDs.
If $Q$ is monotonically determined by $\views$ with respect to $\Sigma$ over finite structures, then the same holds over all structures.
\end{theorem}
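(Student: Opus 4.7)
I would prove the contrapositive: show that if monotonic determinacy fails over all structures then it fails over finite structures. The construction starts from a (possibly infinite) counterexample $\inst, \inst' \models \Sigma$ with $\views(\inst) \subseteq \views(\inst')$ and $\vec c \in Q(\inst) \setminus Q(\inst')$, and combines Theorem~\ref{thm:fcecdatalog} with the $N$-perspective quotient construction at the heart of its proof. Since Theorem~\ref{thm:fcecdatalog} applies to EC-Datalog and thus, for CQs, only to connected ones, I would first decompose $Q$ into its connected components $Q_1,\ldots,Q_k$, identify an index $i$ with $\vec c_i \in Q_i(\inst) \setminus Q_i(\inst')$, and handle $Q_i$ as below; witnesses for the other components can be attached as disjoint canonical databases that $\Sigma$ can be closed over independently because $\Sigma$ is $\frone$ and fires only locally.

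For the positive side, take a finite $I_0 \subseteq \inst$ witnessing $\vec c \in Q(\inst)$ and form the chase $C_0 = \chase_\Sigma(I_0)$. Since $\inst \models \Sigma \cup I_0$, there is a canonical homomorphism $h \colon C_0 \to \inst$ fixing $I_0$. Applying the quotient by ``same $N$-perspective'' from the proof of Theorem~\ref{thm:fcecdatalog}, with $N$ larger than the sizes of $Q$ and of every view definition $Q_V$, produces a finite $\inst_f \models \Sigma$ that still contains $I_0$. The standard property of this quotient is that for CQs of size below $N$ matches both project from $C_0$ to $\inst_f$ and reflect back; consequently $\vec c \in Q(\inst_f)$, and every view fact in $\views(\inst_f)$ is obtainable, via a representative-choice and then $h$, from a view fact in $\views(\inst) \subseteq \views(\inst')$.

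For the negative side, collect the finite set $\views(\inst_f)$ and form a finite base instance $D^{**}$ consisting, for each $V(\vec d) \in \views(\inst_f)$, of a canonical backward witness (the body of the definition of $V$ with frontier $\vec d$ and fresh labeled nulls elsewhere). By construction $\views(D^{**}) \supseteq \views(\inst_f)$, and the traceability of each such fact to $\views(\inst')$ from the previous paragraph supplies a homomorphism $D^{**} \to \inst'$. Thus $\inst'$ is a model of $\Sigma \cup D^{**}$ refuting $Q(\vec c)$, so $\Sigma \wedge D^{**} \not\models Q(\vec c)$. Applying Theorem~\ref{thm:fcecdatalog} to $D^{**}$ and the connected CQ $Q(\vec c)$ yields a finite $\inst'_f \models \Sigma \cup D^{**}$ with $\vec c \notin Q(\inst'_f)$ and $\views(\inst'_f) \supseteq \views(D^{**}) \supseteq \views(\inst_f)$. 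The pair $(\inst_f, \inst'_f)$, placed on a shared universe via the representative choices used to build $D^{**}$, is the desired finite counterexample.

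The main obstacle is the universe alignment in the last step: ensuring that $\views(\inst_f) \subseteq \views(\inst'_f)$ holds literally, not merely up to homomorphism. A naive renaming of $\inst_f$'s elements can accidentally identify distinct classes and introduce spurious view matches, breaking the traceability to $\views(\inst')$. The fix is to enrich the $N$-perspective equivalence so that it additionally records the ``local view signature'' of each element, and to choose representatives that are simultaneously compatible with every $Q_V$-match in $\inst_f$ that reflects to $C_0$. The $\frone$ hypothesis is essential throughout, because it gives $C_0$ the forest shape that makes $N$-perspectives finitely many and makes the local-lifting argument available both for $Q$ and for each view definition.
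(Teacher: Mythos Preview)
Your route and the paper's diverge at the key step. The paper never opens up the quotient construction inside Theorem~\ref{thm:fcecdatalog}; it instead invokes certain-answer rewriting (Theorem~\ref{thm:rewritefrontieronemdl}) to manufacture an MDL query $R_2$ over the base schema with the property that $\inst \models R_2$ iff $\chase_\Sigma(\backview_{\views}(\views(\inst))) \models Q$. Failure of unrestricted monotonic determinacy then reads as $\Sigma \wedge Q \not\models R_2$; since MDL is EC-Datalog, one black-box call to Theorem~\ref{thm:fcecdatalog} yields a finite $\inst_1 \models \Sigma \wedge Q \wedge \neg R_2$. The set $\backview_{\views}(\views(\inst_1))$ is finite and fails to entail $Q$ under $\Sigma$, so a second black-box call produces the finite $\inst_2$. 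No alignment or representative-choice issue ever arises, because the property you want to control on the view image has been encoded as the single EC-Datalog predicate $\neg R_2$ evaluated directly on the base instance. This also makes your connectedness decomposition of $Q$ unnecessary: $R_2$ is MDL regardless of whether $Q$ is connected.

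The obstacle you identify is a genuine gap, and the fix you sketch does not close it. Even granting that every individual match of a view body $Q_V$ in the quotient $\inst_f$ lifts to representatives in $C_0$ (which already needs a neighborhood-isomorphism statement stronger than what the proof of Theorem~\ref{thm:fcecdatalog} actually establishes), constructing a homomorphism $D^{**} \to \inst'$ requires a \emph{single} assignment $[u] \mapsto h(u')$ per class that works for every view fact touching $[u]$ simultaneously. Two such facts may lift via different representatives $u', u'' \in [u]$ with $h(u') \neq h(u'')$ in $\inst'$, and refining the perspective type by any local data on $\inst_f$ cannot prevent this, since the conflict comes from the global behaviour of $h$ on $C_0$, not from the local shape of the quotient. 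The clean repair is precisely the paper's move: rather than tracing individual view facts back to $\inst'$, express ``the view image does not entail $Q$ via $\backview_{\views}$ and $\Sigma$'' as an EC-Datalog query and let finite controllability handle it atomically.
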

\begin{proof}
Using our result on certain-answer rewriting,  Theorem \ref{thm:rewritefrontieronemdl},
we  get an MDL certain answer rewriting $R_1$ of $Q$ w.r.t. $\Sigma$.
By a direct argument we get an MDL c.a. rewriting $R_2$ of $R_1$ with respect to the rules that correspond to  view definitions.

Now if monotonic determinacy fails over all instances, we know, by Proposition \ref{prop:pipeline} and the properties
of c.a. rewritings,  that $R_2$ is not entailed by $Q$ and $\Sigma$.
But now we can apply Theorem \ref{thm:fcecdatalog} to conclude  there is a finite counterexample $\inst_1$, satisfying
$Q \wedge\Sigma \wedge \neg R_2$.
The view image of $\inst_1$, and its chase under the backward mapping rules is likewise a finite instance $\inst'_1$.
We claim that there is a finite instance  $\inst_2$ containing $\inst'_1$, satisfying $\Sigma \wedge \neg Q$. 
If this were the 
case,  $\inst_1$ and $\inst_2$ together would contradict monotonic determinacy in the finite.
We obtain $\inst_2$ by simply applying \cref{thm:fcecdatalog} again for $\inst'_1, \Sigma$, and $R_1$.
\end{proof}

We now apply the tools to investigate what kind of rewritings we can obtain.  

\myparagraph{View rewritings based on certain answer rewritings}  
One approach to obtaining rewritings is  ``working backwards'' on the pipeline in  Fig. \ref{alg:query-mondet-constraints}, applying the
tools related to \emph{certain answer rewritings} in Section \ref{sec:tools} to remove the last steps in the process.


\begin{theorem} \label{thm:fgdlcqviews}
Let $\Sigma$ be a set of $\fgtgd$s, $Q$ be a $\fgdl$ query, and
$\views$ a set of CQ views. Then if $Q$ is monotonically determined over $\views$
with respect to $\Sigma$, there is a rewriting of $Q$ in Datalog. 
\end{theorem}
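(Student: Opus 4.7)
The plan is to work backward through the process of Figure~\ref{alg:query-mondet-constraints}, eliminating its two chase steps by successive certain-answer rewritings. Because the views are CQs, Step~\ref{alg:query-determinacy-constraints:view-acc-inverse} of the process amounts to chasing with the source-to-target rules $\backview_{\views}=\{\view_i(\vec x)\rightarrow \exists \vec y\,\phi_{V_i}(\vec x,\vec y)\mid V_i\in\views\}$. Each such rule has a single body atom containing every frontier variable, so $\backview_{\views}$ is a set of linear TGDs, and in particular FGTGDs. Thus both remaining chase steps fall within the scope of Proposition~\ref{prop:fgdlcertain}, which lets us replace each by a FGDL certain-answer rewriting.

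First, apply Proposition~\ref{prop:fgdlcertain} to $Q$ and $\Sigma$ to obtain a FGDL certain-answer rewriting $R_1$ over the base schema satisfying $R_1(\inst)=\{\vec c\mid \inst\wedge\Sigma\models Q(\vec c)\}$ for every base instance $\inst$. Then apply Proposition~\ref{prop:fgdlcertain} once more, now to $R_1$ and the FGTGDs $\backview_{\views}$, obtaining a FGDL query $R_2$ over the view schema such that $R_2(\vinst)=\{\vec c\mid \vinst\wedge \backview_{\views}\models R_1(\vec c)\}$. Since FGDL is a fragment of Datalog, $R_2$ is the candidate Datalog rewriting.

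To see that $R_2$ is indeed a rewriting, fix an instance $\inst$ satisfying $\Sigma$. For soundness, suppose $\vec c\in R_2(\views(\inst))$. The instance $\inst\cup\views(\inst)$ extends $\views(\inst)$ and satisfies $\backview_{\views}$ because every view fact $\view_i(\vec d)\in\views(\inst)$ is witnessed by a match of $\phi_{V_i}$ in $\inst$; so by the certain-answer property of $R_2$, $\vec c\in R_1(\inst\cup\views(\inst))$, and hence $\vec c\in R_1(\inst)$ since $R_1$ is over the base schema. Because $\inst\models\Sigma$, the certain-answer property of $R_1$ yields $\vec c\in Q(\inst)$. For completeness, suppose $\vec c\in Q(\inst)$ and let $\jnst$ be any extension of $\views(\inst)$ satisfying $\backview_{\views}$, and $\jnst'$ any extension of $\jnst$ whose base-schema restriction $\jnst'_{\baseschema}$ satisfies $\Sigma$. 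The triggers of $\backview_{\views}$ on the facts of $\views(\inst)$ are witnessed inside $\jnst'_{\baseschema}$, so $\views(\inst)\subseteq\views(\jnst'_{\baseschema})$; monotonic determinacy of $Q$ over $\views$ w.r.t.\ $\Sigma$ then gives $\vec c\in Q(\jnst'_{\baseschema})$, and monotonicity of the Datalog query $Q$ extends this to $\jnst'$. Hence $R_1$ holds for $\vec c$ on $\jnst$, so $\vec c\in R_2(\views(\inst))$.

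The main subtlety lies in the completeness direction: one must recognize any $\backview_{\views}$-model $\jnst\supseteq\views(\inst)$ as effectively ``coming from'' a base instance whose view image dominates $\views(\inst)$, so that the monotonic determinacy assumption can be invoked. Once this is phrased via the extension $\jnst'$ and its restriction $\jnst'_{\baseschema}$, the two certain-answer rewritings compose cleanly, and the soundness direction follows symmetrically by taking the extension $\inst\cup\views(\inst)$ as witness.
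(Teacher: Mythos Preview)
Your proposal is correct and follows essentially the same approach as the paper: apply Proposition~\ref{prop:fgdlcertain} to obtain an FGDL certain-answer rewriting $R_1$ of $Q$ with respect to $\Sigma$, then obtain a certain-answer rewriting $R_2$ of $R_1$ with respect to the source-to-target rules $\backview_{\views}$, and verify that $R_2$ is the desired rewriting. The paper's proof is terser (it simply asserts ``We can check that $R_2$ is the desired rewriting''), whereas you spell out the soundness and completeness arguments explicitly, correctly invoking monotonic determinacy for the completeness direction.
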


\begin{proof}
We apply Proposition \ref{prop:fgdlcertain}
to get a Datalog certain answer rewriting $R_1$ of $Q$ under $\Sigma$, thus
reversing the final step in the process. We then get a certain answer rewriting $R_2$ of $R_1$ under the rules $\backview_{\views}$,
which are source-to-target TGDs.   We can check that $R_2$ is the desired rewriting.
\end{proof}

\myparagraph{View rewriting based on forward-backward}
The certain-answer rewriting approach applies only to CQ views. 
We now provide an  approach based on bounding treewidth.

\begin{theorem} \label{thm:fgviewsplusdatalogqueryrewriting} 
 Let $\Sigma$ be a set of $\fgtgd$s, 
$Q$ a Boolean Datalog query and $\views$ a set of $\fgdatalog$ views.
Then if $Q$  is monotonically
determined by $\views$ $\wrt$ $\Sigma$ there is a 
$\poslfp$ rewriting.
\end{theorem}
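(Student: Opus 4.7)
The plan is to extend the forward-backward strategy of \cite{uspods20journal} from the rule-free setting to accommodate the possibly infinite tree codes that arise when the chase under $\Sigma$ does not terminate. The candidate rewriting $\psi$ will be a $\poslfp$ formula that holds on a view-signature structure $\vinst$ exactly when $\vinst \models_{\views, \Sigma} Q$. For any $\inst \models \Sigma$ with $\vinst = \views(\inst)$, monotonic determinacy forces $Q(\inst) = \true$ iff every $(\views, \Sigma)$-sound realization of $\vinst$ satisfies $Q$, so such a $\psi$ is a correct rewriting; moreover, $Q$-entailment is $\subseteq$-monotone in $\vinst$ (adding view facts only shrinks the family of sound realizations), which is necessary for $\poslfp$-expressibility.

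The forward step applies Proposition \ref{prop:kpipeline} in its B\"uchi variant. The stated version is for $\fgdl$ queries, but it extends to general Boolean Datalog $Q$ because any Datalog query is \mso-definable on structures of bounded treewidth (via a Courcelle-style reduction). For a treewidth $k$ determined by $(Q, \views, \Sigma)$, this yields a B\"uchi tree automaton $A_k$ over (possibly infinite) $k$-tree codes in the combined view-plus-base signature, accepting codes of treelike pairs $(\vinst_0, \inst_0)$ with $\inst_0 \models Q \wedge \Sigma$ and $\vinst_0 \subseteq \views(\inst_0)$. Projecting away the base-signature component gives a B\"uchi automaton $B_k$ on view-signature tree codes accepting those $\vinst_0$ that admit such a treelike sound realization. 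A single bound $k$ suffices because chases under $\fgtgd$s and evaluations of $\fgdl$ views add only bounded-treewidth scaffolding on top of a given treelike instance.

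The backward step extends Theorem \ref{thm:auttodlog} from finite tree automata and Datalog to B\"uchi tree automata and $\poslfp$. A B\"uchi acceptance condition --- every infinite path visits an accepting state infinitely often --- is naturally captured by nesting a greatest fixpoint (``from here a successful continuation exists forever'') inside a least fixpoint (``we eventually reach such a productive region''); this alternation is precisely what $\poslfp$ supports but Datalog alone does not. The resulting $\poslfp$ formula $\psi$ will hold on a view-signature structure $\struct$ iff some tree code accepted by $B_k$ admits a homomorphism into $\struct$. On view images, monotonicity of $Q$-entailment together with homomorphism-lifting from treelike $\vinst_0$ to arbitrary $\vinst$ will yield $\psi(\views(\inst)) = Q(\inst)$ for every $\inst \models \Sigma$.

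The main obstacle will be the backward step: one must check that the infinite-tree analogue of Theorem \ref{thm:auttodlog}, in particular the homomorphism-lifting property that makes the rewriting sound on arbitrary (not necessarily treelike) view instances, survives the move to B\"uchi acceptance with a $\poslfp$ target. A secondary difficulty is isolating a single treewidth bound $k$ despite $Q$ being unrestricted Datalog; this combines the treelike behaviour of $\fgtgd$ chases and $\fgdl$ view evaluation with the observation that matches of $Q$ can be detected by the automaton directly inside the resulting bounded-treewidth structure.
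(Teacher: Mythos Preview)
Your high-level plan---forward to a B\"uchi automaton on tree codes, project to the view signature, then backward-map to $\poslfp$---is exactly the paper's strategy. Two points, however, need attention.

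First, your opening characterization of the target formula is not what the construction delivers. You say $\psi(\vinst)$ should hold iff $\vinst \models_{\views,\Sigma} Q$, but the backward mapping (as you yourself note two paragraphs later) gives instead ``$\psi(\struct)$ iff some accepted tree code has a decoding mapping homomorphically into $\struct$.'' These two conditions are not equivalent on arbitrary view instances; they coincide only on actual view images $\views(\inst)$ of $\Sigma$-models, and only under the monotonic-determinacy hypothesis. The correctness argument must go through the second characterization, not the first.

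Second, and more substantively, the backward step hides the real work. Your claim that B\"uchi acceptance is ``naturally captured by nesting a greatest fixpoint inside a least fixpoint'' is not correct for arbitrary B\"uchi automata: the condition ``every path visits $F$ infinitely often'' is a $\nu\mu$ property, not $\mu\nu$. The paper's key technical move is to show that the specific automaton arising here can be chosen to be \emph{shallow}---accepting states transition only to accepting states---so that accepting runs consist of a finite non-accepting prefix followed by an all-accepting suffix. This mirrors the pipeline: the CQ approximation of $Q$ contributes a finite tree (non-accepting states, handled by an outer $\mu$), and the $\fgtgd$ chase together with the $\fgdl$ view evaluation contributes the infinite extension (accepting states, handled by an inner $\nu$). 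Establishing shallowness requires a careful choice of what the automaton accepts (the paper uses ``hybrid envelope trees'' with ``speedily witnessed'' TGDs) and is the main technical content of the proof; without it your $\mu\nu$ encoding does not obviously capture B\"uchi acceptance, and the homomorphism-lifting correctness argument you flag as the main obstacle does not go through cleanly.
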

\begin{proof}
We use a modification of the forward-backward approach mentioned after Theorem \ref{thm:auttodlog}.
Using the same argument as Proposition \ref{prop:kpipeline}  we can get a B\"uchi automaton for tree-like view images of unfolding of
$Q$.
We can project on to the view signature to get a B\"uchi automaton accepting view images.
And because  $\Sigma, Q, \views$ has the BTVIP it is sufficient to check such tree-like counterexamples. 
If we ignore  acceptance conditions in this automaton, we get an ordinary finite tree automaton. 
We apply  the backward mapping of  Theorem \ref{thm:auttodlog} to this
to get Datalog rules. We   can add nested fixpoints to capture
the B\"uchi acceptance conditions.
\end{proof}

We do not know if Datalog suffices for $\fgtgd$s. But we can show $\poslfp$ is necessary in order to rewrite
$Q$ monotonically determined over $\views$ w.r.t.  $\Sigma$ for cases when we have the BTVIP:
\begin{theorem} \label{thm:noncompact} There are $\Sigma$,  Datalog $Q$ and $\views$ such that the BTVIP holds, 
with $Q$ monotonically determined by $\views$ w.r.t. $\Sigma$, where there is a $\poslfp$ rewriting but no Datalog rewriting over all structures.
\end{theorem}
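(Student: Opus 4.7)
The plan is to exhibit a concrete triple $(\Sigma, Q, \views)$ in which the monotonic-determinacy rewriting must detect an essentially infinite pattern in the view image: expressible in $\poslfp$ via a greatest fixpoint, but not in Datalog because of the compactness of Datalog evaluation. The construction instantiates the non-compactness phenomenon attributed to general Datalog queries in the remark following Theorem~\ref{thm:compact}.

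Concretely, I would take $\Sigma$ to consist of $\fgtgd$s whose chase produces infinite tree-shaped extensions (for instance, a rule of the form $A(x) \to \exists y.\, R(x,y) \wedge A(y)$), a Boolean Datalog query $Q$ outside $\fgdatalog$ that recursively probes these extensions, and a set of $\fgdatalog$ views $\views$ that expose the tree structure while keeping view images of chases of CQ approximations of $Q$ of bounded treewidth. Verifying BTVIP is direct from the shape of the construction, and monotonic determinacy follows by inspection, as the view image encodes enough of the base structure to determine the answer of $Q$ on every sound realization. A $\poslfp$ rewriting then exists by Theorem~\ref{thm:fgviewsplusdatalogqueryrewriting}; one can alternatively describe it by hand as a nested $\nu$-fixpoint capturing the B\"uchi acceptance condition asserting that the view image admits a consistent infinite unfolding.

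The main obstacle is ruling out any Datalog rewriting. The key step is to construct an infinite view instance $\vinst$ that is simultaneously the view image $\views(\inst^*)$ of some (infinite) realization $\inst^*$ satisfying $\Sigma$ and is $Q$-entailing with respect to $\Sigma, \views$, while no finite subinstance of $\vinst$ is $Q$-entailing\,---\,every finite prefix of $\vinst$ admits a sound realization in which $Q$ fails. Building $\vinst$ is the technical crux: it must encode an infinite ``proof pattern'' in the views that collectively forces $Q$, yet whose finite prefixes leave a $Q$-falsifying realization available. Concretely, I expect $\vinst$ to be the view image of the chase of a finite base instance, so that the $A$-chain generated by $\Sigma$ is what makes the rewriting succeed, and the query $Q$ is designed so that truncating the chain at any finite depth permits a base instance not satisfying $Q$.

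Given such $\vinst$, the separation from Datalog is immediate. For any purported Datalog rewriting $R$, we must have $R(\vinst) = Q(\inst^*) = \mathbf{true}$, using that $\inst^*$ is a sound realization of a $Q$-entailing view instance and that $R$ rewrites $Q$ on view images of realizations of $\Sigma$. Since $R$ is a Datalog query, its evaluation is compact: $R(\vinst) = \mathbf{true}$ implies $R(\vinst_0) = \mathbf{true}$ for some finite $\vinst_0 \subseteq \vinst$. Taking a sound realization $\inst_0$ of $\vinst_0$ with $Q(\inst_0) = \mathbf{false}$ and using monotonicity of $R$ yields $R(\views(\inst_0)) \supseteq R(\vinst_0) = \mathbf{true}$, contradicting $R$ being a rewriting of $Q$. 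The analogous compactness does not hold for $\poslfp$, since $\nu$-fixpoints can inspect infinite substructures; this is exactly why the $\poslfp$ rewriting survives but no Datalog one does.
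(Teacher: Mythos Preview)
Your high-level strategy matches the paper's: exhibit a concrete $(\Sigma, Q, \views)$, verify BTVIP and monotonic determinacy, and rule out any Datalog rewriting via a failure of compactness of entailment---an infinite $Q$-entailing view instance none of whose finite subinstances is $Q$-entailing. The compactness-of-Datalog argument in your final paragraph is exactly how the paper closes.

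The gap is your choice of $\Sigma$ as a set of $\fgtgd$s. The paper explicitly states, just before Theorem~\ref{thm:noncompact}, that it is \emph{open} whether Datalog rewritings always suffice when $\Sigma$ consists of $\fgtgd$s; and Theorem~\ref{thm:compact} establishes that compactness of entailment \emph{does} hold for $\fgtgd$s with $Q$ in $\fgdatalog$. Your plan to take $Q$ outside $\fgdatalog$ does dodge the hypotheses of Theorem~\ref{thm:compact}, but you would then have to actually build such an example, and the paper gives no indication this is known to be possible with $\fgtgd$s. So invoking Theorem~\ref{thm:fgviewsplusdatalogqueryrewriting} for the $\poslfp$ rewriting is convenient, but it commits you to a setting where the separation you need is not established.

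The paper instead takes $Q$ to be a Boolean CQ (so in $\fgdatalog$), $\views$ to be $\fgdatalog$ views (two copy views plus an MDL reachability view $\reach_U$), and crucially lets $\Sigma$ contain \emph{non}-frontier-guarded rules, e.g.\ $R(r,r') \wedge L(l,l') \wedge A(r,l) \rightarrow A(r',l')$, whose frontier $\{r',l'\}$ is not covered by any body atom. Chasing $\canondb(Q)$ with $\Sigma$ produces an infinite $R$-chain but never fires the non-guarded rules (no $L$-atoms are present), so BTVIP holds for this particular $(Q,\views,\Sigma)$ despite $\Sigma$ not being $\fgtgd$s. The infinite view image $\viewinst_\infty$ records the $R$-chain, the $A$ self-loop, and a single $\reach_U$ fact; for any finite prefix $\viewinst_n$, one realizes $\reach_U$ by an $L$-path of length $n{+}1$ whose $U$-endpoint is never reached by the non-guarded $A$-propagation rule along the truncated $R$-chain, so $\goal$ (hence $Q$) fails. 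That ``zipper'' interaction between the $R$-chain and the $L$-path is precisely what the non-guarded rule enables, and is the concrete content you are missing.
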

\noindent 
The argument will rely on the \emph{failure} of ``compactness of entailment'' for this class. We provide an example where
compactness fails, and show that this failure implies that a Datalog rewriting is impossible.

\myparagraph{Better rewritings using Compactness of Entailment and Automata for Entailing Witnesses}
We can also use \emph{Compactness of Entailment} and \emph{Automata for entailing witnesses} from Section \ref{sec:tools} to show that if $Q$ is in $\fgdatalog$, we can do better than $\poslfp$:

\begin{theorem} \label{thm:fgviewsplusfgdlquery}
 Let $\Sigma$ be a set of $\fgtgd$s,
$Q$  a Boolean $\fgdatalog$  query, $\views$ a set of $\fgdatalog$ views.
Then if $Q$ is monotonically determined by $\views$ \wrt $\Sigma$, then
there is a Datalog rewriting.
\end{theorem}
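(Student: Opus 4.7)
The plan is to follow the forward-backward strategy from Theorem~\ref{thm:fgviewsplusdatalogqueryrewriting}, but replace the B\"uchi step by a \emph{finite} tree automaton step, taking advantage of Compactness (Theorem~\ref{thm:compact}) and Automata for Entailing Witnesses (Theorem~\ref{thm:autpre}). Concretely, by the same BTVIP argument used in Theorem~\ref{thm:datalogqfgdlviewsdecide}, we fix a $k$ such that for every approximation $Q_n$ there is a chase $C$ of $\canondb(Q_n)$ by $\Sigma$ with $\atw(\views(C)) \le k$. Using Theorem~\ref{thm:autpre} we then obtain a finite tree automaton $A = T_{Q,\views,\Sigma,k}$ on $k$-tree codes in the view signature that, on finite treewidth-$k$ view instances $\vinst_0$, accepts exactly the codes of those $\vinst_0$ that are $Q$-entailing. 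Feeding $A$ into the Backward Mapping of Theorem~\ref{thm:auttodlog} produces a Datalog program $R = E_A$ over the view schema, and our claim is that $R$ is the desired rewriting.

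For the forward direction I would show: if $\inst \models \Sigma \wedge Q$, then $\vinst := \views(\inst) \models R$. Since $Q$ is Datalog, some unfolding $Q_n$ holds on $\inst$, giving a homomorphism $h_0 : \canondb(Q_n) \to \inst$. Chasing $\canondb(Q_n)$ under $\Sigma$ yields a $C$ with $\atw(\views(C)) \le k$, and $h_0$ extends to a homomorphism $C \to \inst$ (since $\inst \models \Sigma$), hence to a homomorphism $\views(C) \to \vinst$. By monotonic determinacy, $\views(C)$ is $Q$-entailing, so Theorem~\ref{thm:compact} yields a finite $Q$-entailing subinstance $\vinst_0 \subseteq \views(C)$, which inherits treewidth $\le k$. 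Its code is accepted by $A$ by Theorem~\ref{thm:autpre}, and the restriction of the homomorphism $\views(C) \to \vinst$ to $\vinst_0$ witnesses $\vinst \models R$ by Theorem~\ref{thm:auttodlog}.

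For the backward direction, assume $\inst \models \Sigma$ and $\vinst := \views(\inst) \models R$. Theorem~\ref{thm:auttodlog} supplies a finite tree code $\tree$ accepted by $A$ and a homomorphism $h : \decode{\tree} \to \vinst$; call the source $\vinst_0$, which is $Q$-entailing by Theorem~\ref{thm:autpre}. I then invoke a lemma stating that $Q$-entailment is closed under homomorphism into the target: given any sound realization $\inst^*$ of $\vinst$, one identifies each $c \in \dom(\vinst_0)$ with $h(c) \in \dom(\inst^*)$, producing a structure that still satisfies $\Sigma$ (its facts are those of $\inst^*$) and whose view image now contains $\vinst_0$, hence is a sound realization of $\vinst_0$; entailment for $\vinst_0$ then forces $Q$ to hold there, i.e.\ on $\inst^*$. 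Applying this with $\inst^* = \inst$ yields $\inst \models Q$, so $R$ is indeed a rewriting.

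The main obstacle I foresee is the backward-direction lemma that $Q$-entailment travels along homomorphisms into the target. Monotonicity under $\subseteq$ is immediate from the definition of sound realization, but closure under identification of elements requires the identification argument sketched above and implicitly uses that $\views$ is monotone (so view facts survive the identification) and that $Q$ is preserved under homomorphic images of models. A secondary point to check carefully is that the BTVIP really delivers a chase $C$ with $\atw(\views(C)) \le k$ and that treewidth is preserved when restricting to a subinstance $\vinst_0$, so that $\vinst_0$'s code is legitimately in the alphabet of $A$; both are standard but I would make the constant $k$ explicit to make sure the same bound is used in Theorem~\ref{thm:autpre} and in the compactness step.
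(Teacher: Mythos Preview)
Your approach matches the paper's exactly: BTVIP fixes $k$, Theorems~\ref{thm:compact} and~\ref{thm:autpre} supply a finite tree automaton over $k$-codes recognizing the $Q$-entailing view instances, and Theorem~\ref{thm:auttodlog} backward-maps it to Datalog; the paper's sketch in fact omits the forward/backward verification that you spell out.

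One caution on your backward-direction lemma: the ``identification'' sketch does not quite work when $h$ is non-injective, since identifying distinct $c_1,c_2\in\dom(\vinst_0)$ with a common $h(c_1)=h(c_2)$ yields a structure whose view image contains only $h(\vinst_0)$, not $\vinst_0$ itself, so it is not a sound realization of $\vinst_0$. The clean fix uses precisely the ingredients you name at the end: for each fact $V_i(\vec c)\in\vinst_0$ the fact $V_i(h(\vec c))$ lies in $\views(\inst^*)$, so some CQ approximation of the Datalog view $V_i$ matches in $\inst^*$ at $h(\vec c)$; choosing these witnesses gives an $\inst''\in\backview_\views(\vinst_0)$ together with a homomorphism extending $h$ into $\inst^*$, whence $\chase_\Sigma(\inst'')\models Q$ (by $Q$-entailment of $\vinst_0$) transfers to $\inst^*$ by homomorphism preservation of $Q$.
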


\begin{proof}
Since $Q, \views, \Sigma$ has the BTVIP, there exists natural $k$ such that for every CQ approximation of $Q$ there exists its chase $C_n$ under $\Sigma$ having $\atw(\views(C_n)) \leq k$.

Applying  \emph{Compactness of Entailment}, Theorem \ref{thm:compact}, for any view instance of treewidth $k$ that entails (via $\backview$ and $\Sigma$) $Q$, there
is a finite  $\viewinstance$ contained in it that entails $Q$ the same way. By  the \emph{Automata for entailing witnesses result}, Theorem \ref{thm:autpre}, there is an ordinary
tree automaton $A$ that captures the codes of these finite subinstances.
We  convert $A$ to Datalog, using the backward mapping of \cref{thm:auttodlog}.
\end{proof}

Using a similar technique, we can also conclude rewritability when $Q$ is a CQ but $\views$ are arbitrary Datalog.
\begin{theorem} \label{thm:datalogviewsandqquery}
Suppose $Q$  is a Boolean UCQ (resp. CQ), $\views$ a set of Datalog
views, with
 $Q$  monotonically determined by $\views$ \wrt $\Sigma$.  Then
there is a UCQ (resp. CQ) rewriting of $Q$ over $\views$ relative to $\Sigma$.
\end{theorem}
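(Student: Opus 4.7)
The approach is a forward-backward argument driven by compactness of entailment, paralleling \cref{thm:fgviewsplusfgdlquery} but exploiting the non-recursive structure of $Q$ to remain within UCQ. The key observation is that since $Q$ is a UCQ, its CQ approximations are exactly its disjuncts $Q_1, \ldots, Q_m$, of which there are \emph{finitely many}. This finiteness is what eliminates the need for fixpoint operators in the rewriting. I assume throughout that $\Sigma$ consists of FGTGDs (inherited from the surrounding context and required by \cref{thm:compact}).

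For each disjunct $Q_i$, form $C_i \defeq \chase_\Sigma(\canondb(Q_i))$ and its view image $\viewinst_i \defeq \views(C_i)$; both may be infinite when $\Sigma$ does not admit a terminating chase. Monotonic determinacy forces $\viewinst_i$ to be $Q$-entailing w.r.t.\ $\Sigma,\views$: $C_i$ satisfies $\Sigma$ and satisfies $Q$ (via the identity hom from $\canondb(Q_i)$), so for every $\inst'$ satisfying $\Sigma$ with $\views(\inst') \supseteq \viewinst_i$, monotonic determinacy transfers $Q$ from $C_i$ to $\inst'$. Now apply \cref{thm:compact} -- noting that a Boolean UCQ trivially lies in $\fgdl$, since its natural rules have empty frontier and any body atom vacuously guards the empty tuple -- to obtain a \emph{finite} subinstance $\viewinst_i^0 \subseteq \viewinst_i$ that is itself $Q$-entailing.

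The rewriting is then $\rewriting \defeq \bigvee_{i=1}^{m} \rewriting_i$, where $\rewriting_i$ is the CQ over the view schema asserting the existence of a homomorphism from $\viewinst_i^0$ into the input view instance (treating the labeled nulls of $\viewinst_i^0$ as existentially quantified variables). Soundness: a homomorphism $h : \viewinst_i^0 \to \views(\inst)$ witnesses that the renamed copy $h(\viewinst_i^0)$ -- still $Q$-entailing, as the property is invariant under renaming of elements -- is a subinstance of $\views(\inst)$, and hence $\inst$, being a $\views,\Sigma$-sound realization of $h(\viewinst_i^0)$, must satisfy $Q$. Completeness: if $\inst \models Q$ then some disjunct $Q_i$ admits a hom $h : \canondb(Q_i) \to \inst$; chase universality extends this to $h' : C_i \to \inst$; and since Datalog views are preserved under homomorphism, $h'$ induces a hom $\viewinst_i \to \views(\inst)$ which restricts to the required hom on $\viewinst_i^0 \subseteq \viewinst_i$. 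When $Q$ is a single CQ, $m = 1$ and $\rewriting$ is itself a CQ, giving the parenthetical case.

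The main obstacle is the application of \cref{thm:compact}: a priori $\viewinst_i$ is infinite and could in principle require unboundedly many facts to remain $Q$-entailing -- indeed \cref{thm:noncompact} shows that this kind of compactness fails for recursive $Q$, so this step is what makes the argument collapse all the way to UCQ rather than merely $\poslfp$. Once compactness is available, the rest of the proof is essentially bookkeeping on homomorphisms between base-schema and view-schema instances, together with the standard preservation of Datalog under homomorphism and universality of the chase.
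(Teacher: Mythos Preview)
Your argument is correct and follows exactly the paper's approach: apply compactness of entailment (\cref{thm:compact}) to the view image of the chase of each disjunct, and take the canonical CQ of the resulting finite subinstance as the rewriting. Your write-up is in fact more detailed than the paper's own sketch---the paper simply says ``changing $V_0^-$ into a CQ gives the desired rewriting''---though note that your soundness step relies on $Q$-entailment being preserved under \emph{homomorphic image} rather than mere renaming, which is true (via the back-view/chase argument you allude to) but slightly stronger than ``invariance under renaming of elements.''
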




\vspace{-5pt}
\section{Surprising Undecidability}  \label{sec:undecide}
The hypotheses that give us decidability in Section \ref{sec:decide}  are much stronger than in the absence of rules, where all the cases without recursion
are decidable.
Remarkably, they cannot be loosened. If we just mix the constraint classes from the second and third decidable case  above, we get undecidability
\emph{even with CQ  queries and views}.

\begin{theorem} \label{thm:undecfrontonepluslinearcqboth}
The problem of monotonic determinacy is undecidable when $\Sigma$ ranges over combinations of Linear TGDs
and Frontier-$1$ TGDs, $Q$ over Boolean
CQs, and $\views$ over CQ views. If we allow MDL queries, we have undecidability with just Linear TGDs.
\end{theorem}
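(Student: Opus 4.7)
The plan is to reduce from the halting problem of a Turing machine (or equivalently, a tiling or PCP variant). Fix a TM $M$ and an input. We construct a base schema, a set $\Sigma$ of rules combining Linear and Frontier-$1$ TGDs, a collection $\views$ of CQ views, and a Boolean CQ query $Q$, such that monotonic determinacy of $Q$ over $\views$ w.r.t.\ $\Sigma$ fails if and only if $M$ has a halting computation.

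The overall strategy is to let the chase of $\Sigma$ build out a space-time grid simulating $M$'s computation. Linear TGDs serve as the backbone: from a seed atom they produce an infinite chain of ``tape-cell'' atoms along one axis and a chain of ``configuration'' atoms along the other, creating a grid-like chase. Frontier-$1$ TGDs then enforce local transition consistency by firing from a single cell and generating existentials that carry the state update to neighbouring cells. The CQ query $Q$ pattern-matches on a fragment containing the accepting state. The CQ views are chosen to expose the seed and certain projections of the grid structure, but to \emph{hide} precisely which branch of the chase was traversed, so that a view image is consistent with many base instances. A counterexample $(\inst,\inst')$ to monotonic determinacy then takes $\inst$ to be a canonical chase realisation in which $Q$ holds (because the machine accepts), and $\inst'$ to be a ``deceptive'' extension whose view image subsumes $\views(\inst)$ but whose chase avoids the accepting pattern; such an $\inst'$ can be built iff one can deflect the simulated computation, which in turn is possible iff $M$ halts.

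The main obstacle is that each rule class individually has decidable query answering, so the undecidability must emerge from the \emph{interaction} of the two rule classes in the pipeline of Figure~\ref{alg:query-mondet-constraints}. The technical heart of the argument is designing gadgets in which Linear TGDs lay out cell coordinates while Frontier-$1$ TGDs merge information across independently generated existentials, producing genuine grid structure (rather than a tree) only through their combined chase. An equally delicate point is calibrating the CQ views so that the forward leg of the process (chase then view) yields view images leaving just enough freedom for the backward leg ($\backview_\views$ then second chase) to build an alternative model iff $M$ does not halt; too much information exposed and determinacy holds trivially, too little and it fails trivially. Using Proposition~\ref{prop:pipeline} we then translate ``$M$ halts'' into ``the non-deterministic process outputs \textbf{false}''.

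For the second statement, with MDL queries and only Linear TGDs, we replace the Frontier-$1$ rules by recursion inside the query. Since Linear TGDs alone can already force infinite chains in their chase, and MDL can traverse such chains (as in Example~\ref{ex:fc}), we migrate the transition-consistency checks from the rules into an MDL goal predicate that walks the chase spine and detects the acceptance pattern. The rest of the construction, including the design of the CQ views and the correspondence between halting runs and counterexamples to monotonic determinacy, proceeds as in the first case.
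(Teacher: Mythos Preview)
Your proposal misses the central mechanism of the reduction. You write that the combined chase with Linear and Frontier-$1$ rules ``merge[s] information across independently generated existentials, producing genuine grid structure (rather than a tree)''. This cannot work: both Linear and Frontier-$1$ TGDs are frontier-guarded, so the chase of \emph{any} combination of them over any finite instance has bounded treewidth. No grid can be manufactured inside $\chase_\Sigma(\canondb(Q))$, and therefore the simulation cannot live there.

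The actual trick in the paper is that the grid is produced not by the chase but by the \emph{view-inversion step} $\backview_\views$ in the process of Figure~\ref{alg:query-mondet-constraints}. The Linear TGDs build two independent infinite axes out of the seed $Q$, together with a single element $z_0$ that projects (via $\horproj,\vertproj$) onto every axis element. A single CQ view $S(x,y)=\exists z\,\horproj(x,z)\wedge\vertproj(y,z)$ then holds for \emph{every} pair $(x,y)$ of axis points, because $z_0$ witnesses all of them. When one inverts this view in Step~\ref{alg:query-determinacy-constraints:view-acc-inverse}, a fresh witness $z_{x,y}$ is created for each pair, and \emph{that} is the grid. Only in the second chase do the (full) Frontier-$1$ rules simulate the automaton over this grid; a final Linear rule re-creates the $Q$-pattern when the target state is reached. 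Hence the correspondence is: the automaton reaches the target state iff $Q$ \emph{is} monotonically determined (the opposite polarity from what you stated, though that is a minor point). Your sketch for the second statement, moving the transition checks into MDL, is the right idea, but it only makes sense once the view-inversion grid construction is in place.
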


The idea is that we can code a cellular automaton
in a monotonic determinacy problem.

If we allow UCQ views  -- adding disjunction, but no recursion -- we can even get undecidability
for rules in the intersection of  the two well-behaved rule classes:
\begin{theorem} \label{thm:undecfrontonecqucq}
The problem of monotonic determinacy is undecidable when $\Sigma$ ranges over UIDs, which are linear and frontier-$1$, $Q$ over Boolean
atomic UCQs, and $\views$ over UCQ views.
\end{theorem}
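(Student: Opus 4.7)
The plan is to reduce from an undecidable problem (for instance, a suitable tiling problem or the non-halting of a deterministic Turing machine on empty input), adapting the cellular-automaton-style encoding used for Theorem~\ref{thm:undecfrontonepluslinearcqboth}. The key shift is that, since the rules are now restricted to UIDs (a proper subclass of both Linear and Frontier-$1$ TGDs), the case analysis that was previously performed by mixed rule classes must instead live inside the UCQ views. Concretely, for each possible local transition of the simulated device, an appropriate UCQ view will contain a matching disjunct, so that when the pipeline of Figure~\ref{alg:query-mondet-constraints} processes a view fact $V(\vec c)$ in step~\ref{alg:query-determinacy-constraints:view-acc-inverse}, it must commit to one disjunct as witness. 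By Proposition~\ref{prop:pipeline}, monotonic determinacy universally quantifies over all such commitments, which corresponds exactly to the universal quantifier over computation traces needed to encode non-halting.

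More concretely, I would fix a base schema with unary predicates marking cells by their intended (state, symbol) labels, together with binary horizontal and vertical adjacency relations. The UIDs in $\Sigma$ enforce only that each marked cell has horizontal and vertical successors, and that certain labels propagate to adjacent cells in benign ways. The Boolean atomic UCQ $Q$ would be a disjunction of $0$-ary ``alarm'' propositions: one per kind of local consistency violation, plus one that fires when the machine reaches an accepting state. Thus $Q$ holds in every chased instance $\chase_\Sigma(\canondb(Q_n))$ arising from an unfolding that represents an accepting computation, while the adversarial pipeline attempts to build a base-schema instance $\inst'$ whose view image covers $\views(\chase_\Sigma(\canondb(Q_n)))$ but in which no alarm is triggered. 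The reduction then amounts to arranging things so such an $\inst'$ exists if and only if the simulated machine does not accept.

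The main obstacle is that UIDs generate tree-like chases, with no way to force geometric coincidences such as the horizontal successor of a vertical successor equalling the vertical successor of a horizontal successor. The standard workaround is to let the adversary freely identify elements on the $\inst'$ side (such identifications can only enlarge $\views(\inst')$ and make it easier to avoid $Q$) while designing the alarm disjuncts of $Q$ to fire on exactly the identifications that are inconsistent with a valid computation. The delicate combinatorial step is then to line up the alarm atoms with the disjuncts of the UCQ views so that the reduction goes through in both directions; this is bookkeeping of essentially the kind already needed for Theorem~\ref{thm:undecfrontonepluslinearcqboth}, now performed with the disjunctive power residing in the views rather than being split across the rule classes.
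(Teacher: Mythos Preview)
Your high-level diagnosis is right --- with only UIDs the chase is tree-like, so the grid must be manufactured by the disjunctive structure of the views --- but the concrete mechanism you sketch does not do this. The paper's construction works as follows. The UCQ $Q$ has a dedicated \emph{start disjunct} $Q_{\qstart}$ (essentially $\init(x) \wedge \origin(x)$); chasing its canonical database with the UIDs produces two infinite \emph{axes}, an $\xsucc$-chain and a $\ysucc$-chain, not a grid. The grid comes from a single UCQ view $S(x,y)$ whose first disjunct is $A_1(x) \wedge A_2(y)$: applied to the chased $Q_{\qstart}$, this fires on \emph{every pair} of axis elements, producing view facts $S(a_i,b_j)$ for all $i,j$. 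The remaining disjuncts of $S$ each say ``there is a grid point $z$ with $\horproj(x,z)$, $\vertproj(y,z)$ and tile $T_k(z)$.'' When the pipeline reverse-chases, the adversary must witness each $S(a_i,b_j)$ by some disjunct; choosing the first disjunct reintroduces $A_1$ or $A_2$, which via a UID regenerates $\init \wedge \origin$ and hence makes $Q_{\qstart}$ hold. So to avoid $Q$ the adversary is \emph{forced} to use the tile disjuncts, creating one fresh grid point per coordinate pair, each labelled with a tile. The remaining (verification) disjuncts of $Q$ then check for forbidden adjacent tile pairs, and extra ``special'' atomic views for $\ha$, $\va$, $\xsucc$, $\ysucc$, etc.\ are added so that the pipeline trivially succeeds whenever it starts from a verification disjunct rather than the start disjunct.

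Against this, your proposal has two concrete gaps. First, if $Q$ is a disjunction of $0$-ary alarm propositions, each $\canondb(Q_n)$ is a single nullary fact and UIDs can derive nothing from it; you need a start disjunct that the UIDs can unroll into the axes. Second, ``let the adversary freely identify elements'' does not work: the adversary in step~\ref{alg:query-determinacy-constraints:view-acc-inverse} wants $Q$ to \emph{fail}, so it has no incentive to build a grid unless the view facts already present force every witness to be grid-shaped. That forcing is precisely the Cartesian-product view $S$ described above, and it is the missing idea in your sketch; the disjunction-per-transition scheme you propose puts the nondeterminism in the wrong place and never generates the quadratic family of view facts needed to pin down a two-dimensional structure.
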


Here the idea is to code a tiling problem (or a non-deterministic Turing Machine). In the process of
Figure \ref{alg:query-mondet-constraints}, chasing the canonical database of one disjunct of the UCQ --
the ``start disjunct'' -- with the rules
will generate two infinite  axes, while one disjunct of a UCQ view will generate the cross product of the
axes: all$(x,y)$ co-ordinate pairs.
 ``Reverse chasing'' with the other disjuncts of the views will generate grid points for each such pair, with
each grid point tagged with a tile predicate.
Another disjunct of the CQ -- a ``verification disjunct'' -- will return true if the tiles violate one of the
forbidden patterns.
Note that  the use of a UCQ will allow us to code non-deterministic computation, while a CQ
view (as in the previous theorem) allows us only to mimic deterministic computation.

With more effort we can get undecidability even with $Q$ a CQ:
\begin{theorem} 
\label{thm:undecuidcqucq}
The problem of monotonic determinacy is undecidable when $\Sigma$ are UIDs, $Q$ is a Boolean
CQ, and $\views$ are UCQ views.
\end{theorem}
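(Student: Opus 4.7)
The plan is to strengthen the reduction behind Theorem \ref{thm:undecfrontonecqucq} by eliminating the disjunction from the query, while keeping $\Sigma$ a set of UIDs and $\views$ a set of UCQ views. In that earlier construction the Boolean atomic UCQ $Q$ had a ``start'' disjunct whose canonical database, chased by the UIDs, generates the two infinite axes of a quadrant, and a ``verification'' disjunct that is produced by the reverse chase exactly when the UCQ views guess a forbidden tile adjacency. To obtain a single Boolean CQ, I would absorb the start disjunct into the choice of the initial instance $\inst$, and engineer the UCQ views so that a single query atom is produced by the reverse chase exactly on guesses that describe invalid tilings.

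Concretely, I would take $Q := T$ for a fresh $0$-ary ``terminator'' predicate, and let $\inst$ contain both the seed atoms that the UIDs chase into the infinite $X$- and $Y$-axes (the same gadget as in Theorem \ref{thm:undecfrontonecqucq}) and the fact $T$, so that $\inst \models Q$ trivially. The UCQ views are reshaped so that one view still computes the cross product of the axes, cell views contribute a disjunct per tile, and each adjacency view contributes one disjunct per permitted tile pair. The key new ingredient is that every tile-related view also carries an extra ``escape'' disjunct whose body contains the atom $T$; by linking the variables of the escape disjunct to the cell tile predicates via shared existential variables, I would arrange that the escape disjunct is the only disjunct whose body is realizable whenever the surrounding tile assignment is locally inconsistent or forbidden. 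A further ``start'' view is added so that the axis seeds and $T$ are re-exported into $\views(\inst)$, ensuring that $\views(\inst)$ admits a reverse-chase unfolding reproducing $\inst$.

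With this setup, monotonic determinacy fails iff some reverse-chase unfolding of $\views(\inst)$ (in the sense of step \ref{alg:query-determinacy-constraints:view-acc-inverse} of Figure \ref{alg:query-mondet-constraints}) yields an instance $\inst'$ with $\inst' \not\models T$, which by construction corresponds to a globally consistent non-forbidden tiling of the quadrant, equivalently an accepting run of a nondeterministic Turing machine on a fixed input; undecidability of the latter then transfers to monotonic determinacy. The main obstacle I anticipate is enforcing global consistency of the independent reverse-chase choices made at the various cell and edge views: with UIDs one cannot directly equate separately generated null elements, so consistency has to be extracted from the interaction between the UID-chased skeleton of $\inst$ and the shared variables in the view bodies. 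The plan is to route all tile information through a single canonical tile-at-cell predicate, and to force every adjacency view to import its endpoints from that predicate via UID chains grounded in the axis skeleton, so that any locally inconsistent guess unavoidably triggers the escape disjunct and produces $T$. Designing this consistency gadget, and verifying that it behaves uniformly under every reverse-chase choice, is the delicate technical core of the argument.
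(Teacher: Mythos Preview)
Your proposal has a structural gap in the very first move. You set $Q := T$ for a nullary atom and plan to ``absorb the start disjunct into the choice of the initial instance $\inst$,'' but monotonic determinacy is a property of $(Q,\views,\Sigma)$ alone: by Proposition~\ref{prop:pipeline}, failure is witnessed precisely by the process of Figure~\ref{alg:query-mondet-constraints}, whose first instance is forced to be $C_n = \chase_\Sigma(\canondb(Q_n))$. For a CQ the only approximation is $Q$ itself, so here $C_n = \chase_\Sigma(\{T\}) = \{T\}$ --- no UID body can match a nullary fact, and the chase is idle. The view image $\views(\{T\})$ then contains at most a handful of nullary view facts (those produced by disjuncts whose body is $T$), so the inner loop over $\backview_{\views}(\viewinst_n)$ ranges over boundedly many tiny structures. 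There is no room in that view image to encode an infinite grid of tile choices, and hence no way to extract a tiling from a failure of the process. Your forward direction (tiling $\Rightarrow$ explicit counterexample $\inst,\inst'$) may be arrangeable, but the converse collapses; the consistency obstacle you flag at the end is real, but it never even comes into play because the pipeline stalls before any grid is built.

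The paper keeps the axis seeds inside $Q$ so that the first chase really builds the two axes, and handles the ``a single CQ must detect a disjunction of violations'' problem by a pigeonhole trick rather than by escape disjuncts. The query $Q$ is a conjunction of the seed atoms, a clique over a dedicated binary relation $\mathit{Net}$ on variables $v_0$ and $\{v_p\}_{p\in\mathcal F}$ (one per forbidden pair $p$), and one sub-CQ $Q^p_{\mathrm{bad}}(v_p)$ per forbidden pair. The single UCQ view has a faithful disjunct (essentially a copy of $Q$) and one unfaithful disjunct per tile colour; each unfaithful disjunct creates a fresh grid point and, at every element $v^*_p$, deposits a ``slot'' carrying all atoms of the schema on one element except a $\mathit{Net}$ self-loop. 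In any unfaithful test each $Q^p_{\mathrm{bad}}$ can map trivially into a slot, but the $\mathit{Net}$-clique together with the absence of $\mathit{Net}$ self-loops forces $v_0$ and the $v_p$ to map injectively; since $v_0$ must occupy a slot, some $v_p$ is pushed onto an actual grid point, and its $Q^p_{\mathrm{bad}}$ then maps homomorphically iff pattern $p$ really occurs in the tiling. This clique-plus-slot device is the missing idea that lets a single CQ simulate the disjunction over forbidden patterns.
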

Similarly to the proof of \cref{thm:undecfrontonecqucq} we code a tiling problem.
The main new challenge is  we do not have multiple disjuncts in $Q$ to test
for different violations.  So now \emph{conjuncts} in our CQ will represent different violations of a correct
tiling. We arrange that if there is one violation of correctness of a tiling, there are 
``degenerate ways'' to map the remaining conjuncts of the CQ.
Details are in \appwrap{the appendix}.

Thus far  we have focused on FGTGDs, where the chase may not terminate.
For full TGDs  (no existentials in the head) the chase terminates. It follows that monotonic determinacy is finitely controllable.
Some of the rewritability results easily carry over from the case without rules. For example, for the case of CQ views, we can 
infer that monotonically determined Datalog queries are Datalog rewritable, using the inverse rules algorithm \cite{inverserules}.
But we can show that \emph{even for full TGDs, some of the decidable cases  from Fig. \ref{fig:nocondecide}
become undecidable}. In particular:

\begin{theorem}\label{thm:mdl-cq-fulltgd}
The problem of monotonic determinacy is undecidable when $Q$ ranges over MDL, $\views$ over unary atomic views, and $\Sigma$ over full TGDs.
For such $Q, \views, \Sigma$ the problems of Datalog rewritability, CQ rewritability, and the variants of these problems over finite instances are
also undecidable.
\end{theorem}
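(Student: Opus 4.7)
The plan is to reduce from a classical undecidable problem about Datalog---specifically, containment $P_1 \subseteq P_2$ between Datalog programs with Boolean goals $G_1, G_2$. The key enabling observation is that full TGDs are essentially Datalog rules once we treat intensional predicates as part of the extensional schema, so the rules of $P_1, P_2$ can be loaded wholesale into $\Sigma$.

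Given $P_1, P_2$ over an EDB schema $\sigma_0$, I would build a schema $\sigma'$ extending $\sigma_0$ with all IDBs of $P_1, P_2$ (now treated as EDBs), a fresh unary view-marker $M$, and a universe predicate $U$. The set $\Sigma$ consists of the $P_1, P_2$ rules as full TGDs, universe-gathering rules $R(\vec x) \to U(x_i)$ for each relation, and carefully designed linking rules tying the marker $M$ to a full $P_2$-derivation of $G_2$. The view is the single unary atomic $V(x) := M(x)$, and $Q$ is a trivial MDL query that simply reports $G_1$. The main claim is that with the right linking rules, $Q$ is monotonically determined over $\views$ w.r.t.\ $\Sigma$ iff $P_1 \subseteq P_2$: in the forward direction, asserting $M$ in a $\Sigma$-model forces the underlying EDB derivation of $G_2$, which, together with $P_1 \subseteq P_2$, forces $G_1$; in the converse direction, a witness EDB-configuration to $P_1 \not\subseteq P_2$ is extended to a pair $\inst, \inst'$ directly violating monotonic determinacy.

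For the additional undecidability results, the construction is engineered so that when monotonic determinacy does hold, the only possible rewriting is a trivial CQ of the form $\exists x\, V(x)$, making CQ-rewritability, Datalog-rewritability, and monotonic determinacy coincide on the reduction's instances. For the finite variants, full TGDs have terminating chase on any finite input and all key witnesses in the reduction are finite, so the finite and unrestricted notions agree, yielding undecidability of the finite variants with no additional machinery.

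The hard part will be the ``gratuitous facts'' phenomenon intrinsic to TGD semantics: a $\Sigma$-model may freely assert $G_1$, $G_2$, or $M$ without any supporting derivation, and naive linking rules would allow monotonic determinacy to fail even when $P_1 \subseteq P_2$ (by picking an $\inst$ with a spuriously asserted $G_1$ and a near-empty $\inst'$). The technical core of the proof lies in designing the linking rules so that the presence of any $M$-fact effectively forces the underlying EDB configuration computing $G_2$ via $P_2$ (and, symmetrically, so that $G_1$ cannot be free-floating), while staying within full-TGD syntax and keeping $Q$ a plain MDL query over a single unary atomic view.
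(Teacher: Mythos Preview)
Your proposal has a genuine gap, and it is precisely the issue you flag as ``the hard part'' but leave unresolved. Full TGDs are Horn: they can force facts to hold when other facts hold, but they cannot prevent a $\Sigma$-model from containing $G_1$, $G_2$, or $M$ gratuitously. No choice of full-TGD ``linking rules'' can enforce ``$G_1$ is not free-floating'' across all $\Sigma$-models. More concretely, trace your construction through the process of Figure~\ref{alg:query-mondet-constraints}: with $Q$ the trivial query reporting the $0$-ary goal $G_1$, the only CQ approximation has canonical database $\{G_1\}$; chasing a single $0$-ary fact with full TGDs creates no domain elements at all; hence the image of the unary view $V(x)=M(x)$ is empty; the back-chase is empty; the second chase is empty; and $Q$ fails there. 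Monotonic determinacy therefore fails for \emph{every} pair $(P_1,P_2)$, and the reduction carries no information. The fix is not just ``use the process characterization instead of arbitrary models''---it is that $Q$ itself must generate, via its approximations, the computational substrate on which $\Sigma$ can act.

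The paper's proof does exactly this. It reduces from a Turing-machine halting problem rather than Datalog containment, and $Q$ is not trivial: it is the MDL reachability query (a $\mathtt{Succ}$-path from a $\mathtt{First}$-node to a $\mathtt{Last}$-node), whose CQ approximations $Q_i$ are paths of every length $i$. The full TGDs use each such path simultaneously as the space axis and the time axis of a deterministic TM simulation: they lay down the initial tape of length $i$, compute successive configurations cell by cell, and, if the machine is still in a non-halting state at the last time step, fire a rule making $\mathtt{First}$ hold at the $\mathtt{Last}$-element. The two unary atomic views expose only $\mathtt{First}$ and $\mathtt{Last}$; since every TGD body contains a binary atom, the second chase on the (purely unary) back-chased view image is vacuous, and $Q$ holds there iff some element carries both markers. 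Thus monotonic determinacy holds iff the machine never halts. Because chases are least models, the ``gratuitous facts'' problem simply does not arise. For the remaining claims, the construction guarantees that the view image is always (up to isomorphism) one fixed finite instance, so monotonic determinacy, CQ-rewritability, and Datalog-rewritability coincide on the reduction's outputs; finite controllability is immediate since full TGDs have terminating chase.
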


We proceed here by constructing a family of $Q, \Sigma, \views$ where chasing the approximations of $Q$ allows us to build an
arbitrarily large 
 deterministic computation graph. The views will just indicate whether this computation is accepting.
This will allow us to encode a  deterministic machine acceptance problem as monotonic determinacy.   Because $\views$ are so restricted (unary atomic!), monotonic determinacy coincides with CQ rewritability for this family.


\begin{figure}
\centering

\scalebox{0.7}{

\begin{tabular}{|
>{\columncolor[HTML]{EFEFEF}}c |ccc|}
\hline
\cellcolor[HTML]{C0C0C0}$Q \backslash \views$            & \cellcolor[HTML]{EFEFEF}CQ                                                                                                                                                                                                                            & \cellcolor[HTML]{EFEFEF}\begin{tabular}[c]{@{}c@{}}MDL,\\ FGDL\end{tabular}                                                                                                                                 & \cellcolor[HTML]{EFEFEF}\begin{tabular}[c]{@{}c@{}}UCQ,\\ DL\end{tabular} \\ \hline
\begin{tabular}[c]{@{}c@{}}CQ,\\ UCQ\end{tabular}        & \multicolumn{1}{c|}{\begin{tabular}[c]{@{}c@{}}Fr-1 $\cup$ Lin - \textbf{Und.}, Thm. \ref{thm:undecfrontonepluslinearcqboth}\\ \textit{Lin - Dec.}, \textit{Thm. \ref{thm:linearcqcq}}\end{tabular}}                                                  & \multicolumn{1}{c|}{FGTGD - Dec., Thm. \ref{thm:datalogqfgdlviewsdecide}}                                                                                                                                   &                                                                           \\ \cline{2-3}
MDL                                                      & \multicolumn{1}{c|}{\begin{tabular}[c]{@{}c@{}}Lin - \textbf{Und.}, Thm. \ref{thm:undecfrontonepluslinearcqboth}\\ full\,TGD - \textbf{Und.}, Thm. \ref{thm:mdl-cq-fulltgd}\\ \textit{Fr-1 - Dec., Thm. \ref{thm:mdlqcqviewslindecide}}\end{tabular}} & \multicolumn{1}{c|}{\begin{tabular}[c]{@{}c@{}}\phantom{a}\\ full TGD - \textbf{Und.}, Thm. \ref{thm:mdl-cq-fulltgd}\\ \textit{FGTGD -} \textit{Dec., Thm. \ref{thm:datalogqfgdlviewsdecide}}\end{tabular}} &                                                                           \\ \cline{2-2}
FGDL                                                     & \multicolumn{1}{c|}{Lin - \textbf{Und.}, Thm. \ref{thm:undecfrontonepluslinearcqboth}}                                                                                                                                                                & \multicolumn{1}{c|}{}                                                                                                                                                                                       &                                                                           \\ \cline{2-3}
\begin{tabular}[c]{@{}c@{}}\phantom{a}\\ DL\end{tabular} & \multicolumn{3}{c|}{UID - \textbf{Und.}, Thm. \ref{thm:undecuidcqucq}}                                                                                                                                                                                                                                                                                                                                                                                                                                                                          \\ \hline
\end{tabular}

}

\caption{Decidability with Rules} \label{fig:decidecon} 
\end{figure}
\begin{figure}
\centering

\scalebox{0.7}{

\begin{tabular}{|c|cccc|}
\hline
\rowcolor[HTML]{EFEFEF} 
\cellcolor[HTML]{C0C0C0}$Q \backslash \views$                                    & CQ                                                                    & \begin{tabular}[c]{@{}c@{}}MDL, \\ FGDL\end{tabular}                                  & UCQ                                                                                     & DL                                       \\ \hline
\cellcolor[HTML]{EFEFEF}CQ                                                       & \multicolumn{1}{c|}{\textcolor{blue}{CQ, \dag}}                       & \multicolumn{1}{c|}{}                                                                 & \multicolumn{1}{c|}{}                                                                   & CQ Thm. \ref{thm:datalogviewsandqquery}  \\ \cline{2-2} \cline{5-5} 
\cellcolor[HTML]{EFEFEF}UCQ                                                      & \multicolumn{1}{c|}{\textcolor{blue}{\makecell{UCQ,\\ Thm 2.8 \dag}}} & \multicolumn{1}{c|}{}                                                                 & \multicolumn{1}{c|}{\multirow{-2}{*}{\textcolor{blue}{\makecell{UCQ,\\ Thm 2.8 \dag}}}} & UCQ Thm. \ref{thm:datalogviewsandqquery} \\ \cline{2-2} \cline{4-5} 
\cellcolor[HTML]{EFEFEF}\begin{tabular}[c]{@{}c@{}}MDL, \\ FGDL\end{tabular}     & \multicolumn{1}{c|}{\textcolor{red}{DL}, Thm. \ref{thm:fgdlcqviews}}  & \multicolumn{1}{c|}{\multirow{-3}{*}{DL, Thm. \ref{thm:fgviewsplusfgdlquery}}}        & \multicolumn{2}{c|}{}                                                                                                              \\ \cline{2-3}
\cellcolor[HTML]{EFEFEF}\begin{tabular}[c]{@{}c@{}}\phantom{a}\\ DL\end{tabular} & \multicolumn{1}{c|}{Open}                                             & \multicolumn{1}{c|}{\textbf{PosLFP}, Thm. \ref{thm:fgviewsplusdatalogqueryrewriting}} & \multicolumn{2}{c|}{\multirow{-2}{*}{n.n. DL}}                                                                                     \\ \hline
\end{tabular}

}

\caption{Rewritability with Rules; \dag \protect\cite{thebook}} \label{fig:rewritecon} 
\end{figure}

\vspace{-5pt}
\section{Conclusions} \label{sec:conc}
We  investigated reasoning problems involving the interaction of views, queries, and background knowledge. We developed 
tools  which can be
applied to get positive results about monotonic determinacy.
And we show that even when you combine
simple queries, views, and rules, static analysis problems involving all of them can become undecidable.

The goal in this paper was \emph{not} to discuss all we know about monotonic determinacy with rules.
Many results follow easily from prior work.
But for completeness,
Figures \ref{fig:decidecon} and \ref{fig:rewritecon} summarize our knowledge.
The results apply to $\fgtgd$s, except where
we  indicate  restrictions (e.g. LIN for Linear rules).
In the tables {\bf bold font} highlights a significant difference from the case without rules.
In Figure \ref{fig:rewritecon}  \textcolor{blue}{Blue} indicates that results from prior work \cite{thebook}
can be used out of the box.
\textcolor{red}{Red}  indicates use of certain answer-rewriting approaches,  while \textcolor{black}{Black} indicates
an approach based on bounding treewidth. For readability, we omit some results for full existential rules in the table.
As shown in the table, a number of cases are left open, both for decidability and rewritability. 

\section*{Acknowledgements}
Piotr Ostropolski-Nalewaja was supported by the European Research Council (ERC) Consolidator Grant 771779 (DeciGUT).
\phantom{\cite{this-paper-full-version}} 
\bibliographystyle{kr}
\bibliography{det}

\newpage
\onecolumn
\appendix
\section{Additional details for codes,  automata,  Datalog, and the characterization of monotonic determinacy, Section \ref{sec:prelims}} \label{sec:appprelims}
\subsection{Basics of tree codes, tree automata, and Monadic Second Order Logic}
\label{subsec:tree codes}


A key component of our arguments involves
coding  bounded treewidth structures by labelled trees with a fixed signature -- that is, the notion of
tree code, referred to in the body of the paper. 
We now review the formal details, as well as the notation that we will use throughout the appendix.
We emphasize that the encoding we are using is quite standard: see, for example, the appendix of \cite{visible}.

Fix  a number $k > 0$ 
and fix a relational schema
$\sigma$. We define a schema $\codes{\sigma}{k}$ of unary predicates
such that it contains the following predicates:
\begin{enumerate}
 \item For every two numbers $l,l' \in \{1,\dots,k\}$ a predicate
$T^=_{l,l'}$.
 \item For every $m$-ary relation symbol $R$ from $\sigma$ and $m$-tuple
$\ntuple = (l_1,\dots,l_m) \in \{1,\dots,k\}^m$ a predicate
$T^R_\ntuple$. In the case where $m = 0$ this means that we add just one
predicate $T^R_\varepsilon$ for the empty tuple $\varepsilon = {()}$.
 \item For every partial injective map $g$ from $\{1,\dots,k\}$ to
$\{1,\dots,k\}$ a predicate $T_g$.
\end{enumerate}

Informally, the predicates allow a single vertex of a tree to encode the $\sigma$ predicates holding on up to $k$ elements of
a structure. A number  in $1 \ldots k$ occurring in a predicate of a vertex is a  \emph{local name} of the vertex. 
The predicates $T^=_{l,l'}$ tell which pairs of local names code the same element. The predicates
$T^R_\ntuple$ indicate which tuples of elements coded in a vertex  satisfy relation $R$.
The predicates $T_g$ describe how the elements coded in  a vertex relate to the elements coded in its parent.

A \emph{finite tree} is a finite directed graph with no cycles, in which every element has at most one predecessor and
exactly one element, denoted the root of the tree, has no predecessor.
An \emph{$\omega$-tree} is as above, but where the set of vertices is countably infinite.
When we refer to a \emph{tree} we mean either a finite tree or an $\omega$-tree.

A $\codes{\sigma}{k}$-tree $\tree$ is a  tree whose
vertices are labelled with predicates from $\codes{\sigma}{k}$. We
assume that there is an $r$ such that every node that is not a leaf has exactly $r$ children
which are ordered. If we want to emphasize the number of children,  we call $\tree$ a
tree with \emph{branching width $r$}.

A $\codes{\sigma}{k}$-tree $\tree$ is \emph{coherent} if
\begin{enumerate}
 \item For every vertex $v$ of $\tree$ the assignment of the predicates
$T^=_{l,l'}$ to $v$ satisfies the axioms of equality. Precisely,
this means that for all $l,l',l'' \in \{1,\dots,k\}$
\begin{enumerate}
 \item the predicate $T^=_{l,l}$ is true of $v$,
 \item if $T^=_{l,l'}$ is true of $v$ then $T^=_{l',l}$ is true of
$v$, and
 \item if $T^=_{l,l'}$ and $T^=_{l',l''}$ are both true of $v$ then
$T^=_{l,l''}$ is also true of $v$.
\end{enumerate}
 \item For every vertex $v$ of $\tree$ the assignment of predicates is
closed under substitution of equal local names. This means that if
$T_{l,l'}$ is true at $v$ and $T^R_\ntuple$ is true at $v$ for some
$m$-ary relation symbol $R$ and $M$ tuple $\ntuple = (l_1,\dots,l_k)$
such that $l_i = l$ then $T^T_\mtuple$ is also true $v$, where $\mtuple
= (h_1,\dots,h_m)$ is such that $h_i = l'$ and $h_j = l_j$ for all $j
\neq i$.
 \item If $v$ is a child of $w$, $T_g$ is true at $v$ and $T^=_{l,l'}$
is true at $w$ then $l \in \dom(g)$ iff $l' \in \dom(g)$. Moreover, if
$l \in \dom(g)$ and $l' \in \dom(g)$ then $T^=_{g(l),g(l')}$ is true at
$v$.
 \item For every nullary predicate $P$ in $\sigma$ and vertices $v$ and
$w$ of $\tree$ we have that $T^P_\varepsilon$ is true of $v$ iff
$T^P_\varepsilon$ is true of $w$.
 \item For every vertex $v$ of $\tree$ there is exactly one partial
injection $g$ such that $T_g$ is true of $v$.
\end{enumerate}

Coherent $\codes{\sigma}{k}$ trees are also called \emph{tree codes}.
The \emph{decoding} of a tree code $\tree$ is  the structure
$\decode{\tree}$ that is defined as follows: 
\begin{itemize}
\item The domain of
$\decode{\tree}$ consists of all the pairs $(v,l)$, where $v$ is a
vertex in $\tree$ and $l \in \{1,\dots,k\}$, quotiented by the smallest
equivalence relation $\equiv$ such that
\begin{itemize}
 \item if $T^=_{l,l'}$ is true at $v$ then $(v,l) \equiv (v,l')$, and
 \item if $w$ is the parent of $v$, $T_g$ is true at $v$ and $l \in
\dom(g)$ then $(w,l) \equiv (v,g(l))$.
\end{itemize}
We write $[v,l]$ for the equivalence class of the pair $(v,l)$ under
$\equiv$. 
\item A tuple $(c_1,\dots,c_m)$ of equivalence classes is put in the
interpretation of the $m$-ary predicate $R$ from $\sigma$ iff there are
$l_1,\dots,l_m$ and a vertex $v$ such that $(v,l_i) \in c_i$ holds for
all $i = 1,\dots,m$.
\end{itemize}

We call a tree code $\tree$ \emph{active} if there is at least one
element in the active domain of $\decode{\tree}$.

We stress again that this is the standard definition. In the standard setting, we would
assume also that \emph{tree codes, and the instances they define via decoding, are finite}.
But the definitions also make sense for infinite trees (say, with
countable depth). The following proposition connecting codings and decodings is routine.

\begin{proposition}
 For every   $\sigma$-structure $\struct$ of adjusted treewidth $k - 1$ and
number $r \geq 2$ there is a tree code $\tree$ with branching width $r$ such
that $\decode{\tree}$ is isomorphic to $\struct$. 
\end{proposition}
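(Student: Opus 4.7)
The plan is to convert a tree decomposition of $\struct$ of adjusted width $k-1$ into a coherent $\codes{\sigma}{k}$-tree of branching width $r$ whose decoding reproduces $\struct$.

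First, I would fix a tree decomposition $(\tau,\lambda)$ witnessing $\atw(\struct) \leq k-1$, so each bag $\lambda(v)$ is a tuple of distinct elements of length at most $k-1 \leq k$. I would then normalize $\tau$ to have branching width exactly $r$: at each non-leaf vertex with fewer than $r$ children, pad by inserting dummy child vertices carrying the same bag as their parent; at each vertex with more than $r$ children, split into a chain of auxiliary vertices each with exactly $r$ children, again copying the parent's bag along the chain. Both operations preserve the defining properties of tree decompositions because dummy nodes only duplicate an existing bag along a locally connected subtree.

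Second, I would assign to every vertex $v$ in the normalized tree an injective local naming $\mathrm{name}_v \colon \lambda(v) \to \{1,\dots,k\}$, which exists since $|\lambda(v)| \leq k$. Whenever possible I would choose these names so that a shared element keeps the same local name across a parent-child pair; this is convenient but not essential, because disagreements will be absorbed into the $T_g$ predicates. The labelling of $v$ is then defined by: (i) $T^=_{l,l'}$ is set exactly when $l=l'$ or when both indices are mapped to the same element by $\mathrm{name}_v$ (the latter being vacuous for injective $\mathrm{name}_v$, so one may insist $l = l'$); (ii) for each child $v$ of $w$, $T_g$ is set with $g$ the partial injection defined by $g(l)=l'$ iff $\mathrm{name}_v^{-1}(l) = \mathrm{name}_w^{-1}(l')$ is a common element of both bags; (iii) for every atom $R(\vec c)$ of $\struct$, choose (once and for all) a single vertex $v$ with $\vec c \subseteq \lambda(v)$ guaranteed by the tree decomposition, and mark $T^R_{\mathrm{name}_v(\vec c)}$ true at $v$; (iv) for each nullary relation $P$ holding in $\struct$, mark $T^P_\varepsilon$ true at every vertex.

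Third, I would verify the coherence clauses one by one. The equality axioms at each vertex, the closure under substitution of equal local names, the uniqueness of the partial injection $g$, and the global constancy of the nullary predicates are immediate from the construction. For the condition connecting $T_g$ at $v$ with $T^=_{l,l'}$ at its parent $w$, compatibility reduces to the observation that if two indices $l,l'$ at $w$ refer to the same element then the corresponding indices under $g$ at $v$ again refer to the same element, which holds by definition of $g$ in terms of the underlying elements. Finally, I would check that the decoding is isomorphic to $\struct$: the quotient $[v,l] \mapsto \mathrm{name}_v^{-1}(l)$ is a well-defined bijection precisely because the equivalence relation $\equiv$ merges exactly those pairs corresponding to the same element, and this in turn follows from the connectivity condition of the tree decomposition -- an element $c$ lying in bags along some subtree gets merged through the $T_g$'s along the path. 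The atoms then match by construction of the $T^R_\ntuple$ labels.

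The main obstacle will be the bookkeeping for the tree-decomposition connectivity condition: I must argue that if $c \in \lambda(v_1) \cap \lambda(v_2)$ for two vertices on a common path, then the local indices $\mathrm{name}_{v_1}(c)$ and $\mathrm{name}_{v_2}(c)$ are identified via the chain of $T_g$'s along that path, including through the dummy vertices inserted by the branching-normalization step. This is a straightforward induction on the length of the path once the $T_g$'s on dummy vertices are defined to be the appropriate identity-like injections, but it is the step where one must be careful not to break the invariant when renaming or padding.
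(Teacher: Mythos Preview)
The paper does not actually prove this proposition; it simply declares the connection between codings and decodings ``routine'' and moves on. Your outline is the standard construction one would give, and there is nothing to compare against.

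There is, however, one genuine gap in your proposal. Since the structure has adjusted treewidth $k-1$, every bag has at most $k-1$ elements, while the tree code has $k$ local names. Under your labelling you make $\mathrm{name}_v$ injective and set $T^=_{l,l'}$ only on the diagonal, so at every vertex there is at least one name $l$ with no preimage under $\mathrm{name}_v$. By the paper's decoding, the domain of $\decode{\tree}$ is \emph{all} pairs $(v,l)$ modulo $\equiv$, and your $g$'s never touch these unused names; hence each such $(v,l)$ becomes its own equivalence class, producing spurious isolated elements not present in $\struct$. The decoding is then not isomorphic to $\struct$.

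The fix is standard but requires the care you reserved for the connectivity bookkeeping: pick a single element $c_0$ of $\struct$, add it to every bag (this preserves the tree-decomposition axioms since $c_0$ now lives in all bags, a connected set), and assign every otherwise-unused name to $c_0$ via the $T^=$ predicates. You must then extend each $g$ so that all $c_0$-names at the parent map injectively to $c_0$-names at the child, and verify that coherence condition~3 (the clause relating $T^=$ at the parent to $\dom(g)$) still holds. This is where the extra name afforded by the gap between $k-1$ and $k$ is actually used. Once this is done, your bijection $[v,l] \mapsto \mathrm{name}_v^{-1}(l)$ (now with $\mathrm{name}_v$ surjective onto the augmented bag) goes through.
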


\myparagraph{Tree automata and MSO} We deal with two kinds of tree automata in this work. 
Finite tree automata are associated with a label alphabet $\Sigma$ and a branching factor $r$. They 
are given by a finite set of states $Q$ and an initial state $q_0 \in Q$, a transition relation that
is  a subset of  $Q^r \times \Sigma \times Q$, and a subset of $Q$, the final
states. The notion of an accepting  run of such  an automaton on a tree with branching $r$ is the usual one. We  annotate
each node of the tree with a state, such that the root gets an initial state and leaves get a final state.
The states assigned in the run to a parent and child must be consistent with the transition relation. An automaton
accepts a tree if it has an accepting run, and the language of an automaton is the set of trees it accepts.

Recall that \emph{MSO} is \emph{Monadic Second Order Logic} over trees. We have first and second order variables, a binary relation symbol for the parent
child relationship, and unary relations for each alphabet symbol. Atomic formulas include $S(x)$ for $S$ a second order variable, and
the usual atomic formula involving the parent-child relation and the unary predicates. We build up using Boolean operations, first order
quantifiers, and second order quantifiers. The notion of a tree satisfying an MSO sentence is the standard one, and the language
of an MSO sentence is the set of trees satisfying it.

We will use the well-known fact \cite{thomas}:

\begin{proposition} Tree automata and MSO  express the same set of languages of finite trees.
\end{proposition}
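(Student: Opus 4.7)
The plan is to prove the two directions of the equivalence separately, essentially following the classical Thatcher--Wright argument for finite trees. First I would fix a ranked alphabet $\Sigma$ and branching factor $r$, and work with finite $\Sigma$-labelled trees of branching $r$ as the common semantic domain for both formalisms.

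For the easy direction (tree automata $\Rightarrow$ MSO), I would take a tree automaton $\mathcal{A} = (Q, q_0, \delta, F)$ and build an MSO sentence $\phi_{\mathcal{A}}$ that existentially quantifies a family of monadic second-order variables $(X_q)_{q \in Q}$, one for each state, and asserts that these variables encode an accepting run. Concretely, $\phi_{\mathcal{A}}$ would say: (i) the $X_q$ partition the set of vertices, (ii) the root vertex lies in $X_{q_0}$, (iii) every leaf vertex lies in some $X_{q_f}$ with $q_f \in F$, and (iv) for every internal vertex $v$ with children $v_1, \ldots, v_r$, the tuple of membership indices is consistent with $\delta$, which is expressible as a finite disjunction over transitions in $Q^r \times \Sigma \times Q$. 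Each clause is a straightforward first-order statement using the parent-child relation and the alphabet predicates, so the whole sentence is in MSO.

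The harder direction (MSO $\Rightarrow$ tree automata) proceeds by induction on the structure of MSO formulas, after first moving to an equivalent ``relational'' presentation in which free first-order variables are reinterpreted as second-order variables constrained to be singletons. For a formula $\phi(X_1, \ldots, X_k)$ with $k$ free set variables, the input alphabet is extended to $\Sigma \times \{0,1\}^k$ so that the extra $k$ bits at each vertex encode a candidate valuation of the $X_i$; the goal is to construct an automaton $\mathcal{A}_\phi$ over this extended alphabet recognizing exactly those extended trees whose bit-annotations encode a satisfying valuation. Atomic formulas are handled by small explicit automata (a two-state automaton checks singleton membership, parent-child relations, or an alphabet predicate at the marked vertex). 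The Boolean connectives and existential set quantifier are then handled by the closure of recognizable tree languages under union, intersection, complement, and projection (forgetting one bit of the extended alphabet).

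The main obstacle will be closure under complement, since the automata are nondeterministic. I would handle this by the standard subset construction for bottom-up tree automata: given $\mathcal{A}$, build a deterministic bottom-up tree automaton $\mathcal{A}^{\det}$ whose states are subsets of $Q$ and whose transition at a vertex labelled $a$ with child-state-sets $S_1, \ldots, S_r$ is $\{\, q \mid (q_1,\ldots,q_r,a,q) \in \delta \text{ for some } q_i \in S_i \,\}$, with accepting set $\{\, S \subseteq Q \mid S \cap F \neq \emptyset \,\}$. A straightforward induction on tree height shows $L(\mathcal{A}^{\det}) = L(\mathcal{A})$, and complementation is then obtained by flipping the final-state set. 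With this in hand the inductive construction goes through, and combining both directions yields the equivalence.
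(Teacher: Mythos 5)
Your proposal is correct: it is the classical Thatcher--Wright argument (run-encoding via existentially quantified set variables for the easy direction; induction on formula structure over a bit-extended alphabet, with closure under Boolean operations and projection, for the converse). The paper does not prove this proposition at all --- it states it as a known fact and cites Thomas's handbook chapter --- so your argument is precisely the standard proof that citation stands for. One point worth noting: the paper's automata are presented with an initial state at the root and final states at the leaves, and deterministic \emph{top-down} tree automata are strictly weaker than nondeterministic ones, so complementation must go through the frontier-to-root (bottom-up) subset construction; you do exactly this, so the step is sound, but it is the one place where a careless reader of the paper's automaton definition could go wrong.
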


These are the \emph{regular tree languages}.

We also  make use of automata over \emph{infinite trees}, such as  \emph{B\"uchi  tree automata}. The latter are specified
in the same way as finite tree automata. They accept an $r$-branching tree whose depth is infinite. An accepting run
associates each node of the tree with a state. The root must be labelled with the initial state, and the transition function
must be respected as with finite tree automata.  But now the accepting
states must occur infinitely often down every branch.   Consider  \emph{Weak Monadic Second Order Logic} (WMSO), 
where second order quantification is only over finite subsets of the vertices.
It is known \cite{weakbuchi} that a WMSO collection of trees is recognizable  by a B\"uchi  tree automaton. Indeed the  WMSO definable sets
are exactly those such that both the set and its complement are recognizable by B\"uchi automata.

\myparagraph{More on automata and tree codes}
Naturally, we can  run tree automata on tree codes. For this purpose we make the
assumption that the tree codes have a constant branching width of $r$
for all nodes that are not leafs and that the children of any parent
node are ordered. We also think of $\codes{\sigma}{k}$-trees as trees
over the alphabet $\Sigma$ that contains letters for all 
sets of  predicates in $\codes{\sigma}{k}$. If $\tree$ is coherent then
any such set contains only one predicate of the from $T_g$. In this
case we write $\code^g_B$ for the letter $\{T_g\} \cup B$, where $B$ is
a set of predicates of the form $T^=_{l,l'}$ or $T^R_\ntuple$. We call
$B$ a \emph{bag}.

Given a subsignature $\sigma_V$ of $\sigma$ we can define for every
$\codes{\sigma}{k}$-tree $\tree$ its restriction
$\restr{\tree}{\sigma_V}$ to be a $\codes{(\sigma_V)}{k}$-tree in which
from every letter we drop all predicates $T^R_\ntuple$ such that $R$ is
not in $\sigma_V$. It is easy to see that this is equal to the restriction of
$\decode{\tree}$, which is  defined using the full signature $\sigma$, to a smaller label set.

When we restrict to codes that are finite, we will run finite tree automata.
We will use standard results about closure of regular tree languages under \emph{projection}.
That is,  for every $\codes{\sigma}{k}$-tree automaton $\aut$ there is a
$\codes{(\sigma_V)}{k}$-tree automaton $\restr{\aut}{\sigma_V}$ such
that for all $\codes{(\sigma_V)}{k}$-trees $\tree$
\[
 \restr{\aut}{\sigma_V} \mbox{ accepts } \tree \iff \mbox{ there exists some
$\codes{\sigma}{k}$-tree $\tree^+$ accepted by $\aut$ such that }
\restr{\tree^+}{\sigma_V} = \tree.
\]

\subsection{More details on approximating Datalog with CQs}
In the body of the paper we referred to CQ approximations of a Datalog query. This is  a well-known notion that originates
in \cite{chaudhurivardi}. We give the formalization that we use in this work, which is taken from \cite{uspods20journal}, see in particular
Appendix C.

We will deal with Head-Unconstrained (HU) Datalog queries, which do not allow repeated variables in the head, except
in the goal predicate, which cannot occur in the body of rules.  Every Datalog query can be rewritten
to an HU Datalog query \cite[Proposition 3.1]{uspods20journal}.  HU Datalog is easier to deal with, since we can always unify
an intensional fact in a rule body with any rule head using the same predicate.

\begin{definition}[CQ approximation]
A \emph{CQ approximation tree} for a HU Datalog query $(\Pi, \goal)$ will be a tree with each node 
labelled with an atom, with that atom using only variables. Non-leaf nodes will always be 
labelled with atoms over intensional relations of $\Pi$, while leaf nodes will be labelled with 
extensional atoms of the input schema. Approximation trees will never have the root as a leaf.
 We require that the root label of a CQ approximation 
tree does not contain repeated variables, unless it involves the $\goal$ predicate.
For example,  using $\goal(x,y,x,z)$ as a root label
is allowed, but $A(x,y,x,z)$ for $A$ a different predicate is not. Each non-leaf node $n$ will also be associated with a rule of $\Pi$, 
denoted $\ruleof(n)$.

The idea of the repeated variable restriction is: we do not need  repeated variables in head in intermediate predicates, because
we can push the repetition into the rule body atoms that will unify with the heads. The only reason we need repeated variables
in heads is if the final query we want to perform involves repetition.

We define the depth $n$ CQ approximation trees for a Datalog query, rooted at an arbitrary intensional fact $A$ with no repeated
variables, by induction on $n$.
The base case is a tree consisting of a single root node 
labelled with an intensional atom $A$ having all variables distinct, whole children are all extensional atoms, with the children matching the rule bodies
of a rule  for $A$.

For the inductive case, consider the case of a non-goal rule of $\Pi$, $\rho \datalogarrow U(\vec x) \datalogarrow \beta_1, \ldots \beta_k$ and
let $\beta_{n_1} \ldots \beta_{n_j}$ be intensional atoms in the rule body.
Let  $\mathbb{T} = T_{n_1} \ldots T_{n_j}$
be a tuple of CQ approximation trees such that the relation of $\beta_{n_i}$ 
is the same as the relation of the root label of $T_{n_i}$, for every $i$. 
Then the tree $T$ built in the following way is a CQ approximation tree:
\begin{itemize}
    \item Set $U(\vec x)$ to be the label of the root node $n$ of $T$ and set $\ruleof(n) = \rho$.
    \item For each $n_i$ Let $h_{n_i}$ be a homomorphism taking the atom $A_{n_i}$ in the root label of $T_{n_i}$ to $\beta_{n_i}$. Such
a homomorphism always exists if the relations match, since the variables in the intensional atom $A_{n_i}$ are distinct.
    \item Let $T'_{n_i}$ be obtained from $T_{n_i}$ by replacing $y$ in every label of $T_{n_i}$ 
    with $h_{n_i}(y)$ for every variable $y$ of the root label of $T_{n_i}$, and by replacing other variables of $T_{n_i}$
with fresh ones.
     \item  For the extensional atoms $\beta_i$ in the rule body, we let $T_i$ be a tree consisting only of the atom.
    \item The nodes of $T$ will be the root node defined above along with the nodes of $T'_i$.  We set 
the children of the root node to be the root of every $T'_i$. The edge relations and labels outside of the
root are those inherited from  $T'_i$.
\end{itemize}
\end{definition}

Every CQ approximation tree rooted at the goal predicate of $(\Pi, \goal)$ defines a CQ, by conjoining the leaf atoms. This is a 
\emph{CQ approximation of the Datalog query}. It is well-known and straightforward to see that:

\begin{proposition} A Datalog query is the union of its approximations.
\end{proposition}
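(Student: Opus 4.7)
The plan is to prove the two containments separately, treating this as a standard soundness-and-completeness argument for Datalog via proof trees.

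For soundness, that the union of the approximations is contained in the Datalog query, I would argue by induction on the height of a CQ approximation tree $T$ that for every instance $\inst$ and every homomorphism $h$ from the conjunction of leaf atoms of $T$ into $\inst$, the atom $h(A)$ belongs to $\fpeval{\Pi}{\inst}$, where $A$ is the root label of $T$. In the base case, the leaves are exactly the extensional body atoms of the rule $\ruleof(\text{root})$, so $h$ is a trigger for that rule at $\inst$ and the head atom $h(A)$ is immediately derived. In the inductive step, if the root uses rule $\rho:\ U(\vec x) \datalogarrow \beta_1,\dots,\beta_k$, then the intensional body atoms $\beta_{n_i}$ correspond to child subtrees $T'_{n_i}$; by construction the leaf atoms of $T$ are the disjoint union of the leaf atoms of the $T'_{n_i}$ and the extensional $\beta_j$. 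The inductive hypothesis applied to each $T'_{n_i}$ with the restriction of $h$ gives $h(\beta_{n_i}) \in \fpeval{\Pi}{\inst}$, while the extensional $h(\beta_j)$ lie in $\inst$ directly, so $\rho$ fires at valuation $h$ and yields $h(U(\vec x)) \in \fpeval{\Pi}{\inst}$. Specializing to an approximation tree whose root label is $\goal(\vec x)$ gives $\doutput{Q_n}{\inst} \subseteq \doutput{Q}{\inst}$.

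For completeness, I would use the standard proof-tree characterization of Datalog semantics: whenever $\goal(\vec c) \in \fpeval{\Pi}{\inst}$ there is a finite proof tree whose root is $\goal(\vec c)$, whose internal nodes are ground intensional atoms tagged with the rule that produced them, and whose leaves are extensional facts of $\inst$. From such a proof tree one extracts a CQ approximation tree of matching shape by replacing each ground atom with an atom over fresh variables (following the construction in the definition: at each internal node one renames the variables of the rule head to fresh distinct names and propagates the induced substitution to the children, matching each intensional body atom against the root of the corresponding subtree). The ground proof tree itself then supplies a homomorphism from the canonical database of the resulting approximation $Q_n$ into $\inst$ sending the head variables of the root to $\vec c$, witnessing $\vec c \in \doutput{Q_n}{\inst}$.

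The main bookkeeping point, and the only place where the argument is not entirely mechanical, is the interaction between the HU assumption and the constraint that root labels of CQ approximation trees carry distinct variables except at the goal. The HU assumption guarantees that no rule head other than the goal introduces repeated variables, so in the completeness direction the fresh renaming at each internal node produces a valid root label for its subtree, and in the soundness direction the homomorphisms from the subtree roots into the body atoms of the parent rule are always well-defined. With this observation in place both inductions are routine.
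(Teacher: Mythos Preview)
Your proposal is correct and is precisely the standard argument one would give. The paper does not actually supply a proof of this proposition at all: it simply asserts that the result is ``well-known and straightforward to see'' and moves on. So there is no paper proof to compare against; your soundness-by-induction-on-tree-height plus completeness-via-proof-tree-extraction is exactly the folklore argument the paper is implicitly invoking, and your remark about the HU assumption handling the repeated-variable bookkeeping is the one nontrivial point worth recording.
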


We now claim that approximations of a Datalog
program can be captured with tree automata.
The following is  a variation of Proposition 6.9 of \cite{uspods20journal}, proven in Appendix C. But the idea goes back
to Chaudhuri and Vardi, see \cite{chaudhurivardi}:

\begin{proposition} \label{prop:boundedtreewidthapprox}
 If $P$ is a Datalog query and $k$ is a bound on the size of the
bodies of rules in $P$ then every CQ approximation of $P$ has a cannonical tree decomposition, which witnesses that it has
tree-width $k$. And there is a finite tree automaton that accepts exactly these tree decompositions.
\end{proposition}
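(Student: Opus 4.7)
The plan is to derive both claims directly from the inductive definition of CQ approximation trees. For the tree decomposition, I would take the approximation tree $T$ itself as the underlying tree, and define the bag at each node $n$ to be the set of variables occurring in the (substituted) atoms attached to $n$: specifically, the variables of $\ruleof(n)$'s body at non-leaves, and the variables of the single extensional atom at a leaf. The cover condition is immediate, since each atom of the CQ approximation arises at a leaf whose bag contains all of its variables. For connectivity, the construction forces any variable shared between a parent bag and a descendant bag to flow through the unifying substitution $h_{n_i}$ that identifies the root label of a child subtree with a body atom $\beta_{n_i}$ of the parent; by induction on subtree height, such a variable appears exactly along a contiguous subpath of $T$.

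The width bound is then routine: each non-leaf bag contains only variables from a single rule body of size at most $k$, and each leaf bag contains only variables from a single atom. So the number of elements in any bag is bounded by $k$ times the maximum arity in the schema of $P$, which is a constant depending on $P$. Encoding variables as local names $\sset{1,\dots,k'}$ for the appropriate $k'$ gives a tree code over $\codes{\sigma}{k'}$ in the sense of Appendix~\ref{subsec:tree codes}.

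For the tree automaton, I would construct $A$ running over $\codes{\sigma}{k'}$ codes whose state at a node records: (i) the rule $\rho$ of $P$ associated with that node (with a distinguished ``leaf'' state recording an EDB atom), (ii) an injection assigning each variable of the body of $\rho$ to a local name, and (iii) enough information to tell which body atom of the parent's rule is being expanded by the current subtree. The transitions enforce that the bag predicates $T^R_{\tuple{n}}$ at the node describe precisely the body atoms of $\rho$ under the chosen injection, that the partial injection $g$ recorded by $T_g$ matches the shared variables between the current rule's head and the parent's selected body atom, and that the children of the node bijectively witness the intensional body atoms of $\rho$ (with extensional body atoms accounted for at the current node). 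The root transition demands that the rule chosen has head $\goal(\vec x)$ and that the bag assignment matches $\goal$.

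The main obstacle is the bookkeeping around variable renaming: the automaton must verify that the child's root-atom variables line up, under $T_g$, with exactly the body-atom slot of the parent that the child is expanding, while never leaking any other variables. This is what forces state component (iii); without it the automaton would be unable to tell which of several body atoms of the same predicate in the parent's rule each child is discharging. Since both $\ruleof$ and the injection into local names range over finite sets determined by $P$ and $k'$, the state space is finite. Soundness and completeness of the automaton reduce to a straightforward induction: accepting runs of $A$ are in bijection with CQ approximation trees of $P$ together with their canonical tree decompositions.
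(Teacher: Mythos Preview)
Your approach is correct and is essentially the standard construction that the paper defers to: the paper gives no self-contained proof here, simply citing Proposition~6.9 of \cite{uspods20journal} and the earlier work of Chaudhuri and Vardi \cite{chaudhurivardi}. The tree decomposition you describe---using the approximation tree itself, with the bag at a non-leaf being the variables of the substituted rule body---is precisely the canonical one, and your automaton design with state components (i)--(iii) is the natural implementation. One small point: in the approximation-tree definition used in the paper, extensional body atoms \emph{do} become leaf children (one-node subtrees) rather than being ``accounted for at the current node'', so your automaton's transitions should check that the children bijectively cover \emph{all} body atoms of $\rho$, with leaves matching extensional atoms; this is only a phrasing adjustment and does not affect the soundness of your plan.
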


Note that we do \emph{not} require that $Q$ is in FGDL.  The approximations are well-behaved for any Datalog query. This result will be used as a component of
showing that certain examples have the BTVIP -- e.g. in   Theorem  \ref{thm:datalogqfgdlviewsdecide}. It will also be used in some of our rewriting results via
the forward-backward method.

Of course, we will generally
need $Q$ to be in Datalog to apply this result to obtain decidability, since without that we cannot effectively check whether $\neg Q'$  holds on a tree, which is part of being
a counterexample to monotonic determinacy.

\subsection{Tree-like structure for the chase of instances, and the chase of approximations} \label{app:chaseapproxregular}

In the previous subsection, we argued that the tree codes of approximations of a Datalog program form a regular set of trees.
This shows that the first step in the process of Figure \ref{alg:query-mondet-constraints} is always treelike and regular.
Recall that the BTVIP property states that the first $3$ steps of the process of the figure can be captured with a regular tree language, in the sense
that we can always find some representation that is treelike and regular.

We will make use of the fact that for $\fgtgd$s, the chase can be given a tree-like structure, a result that is well-known since the discovery
of the $Datalog^{\pm}$ family \cite{datalogpmj} and of $\fgtgd$s \cite{bagetwalking}. If we have a finite instance $\inst$  and $\fgtgd$s $\Sigma$,
the chase can be given the structure of a tree with the root containing all facts of $\inst$. We will ensure that
for every trigger for $\fgtgd$ $\tau$, it \emph{fires locally in the tree} -- its  image consists of facts in a single node $n$. The facts generated by that chase step
will be in a child $c$ of $n$, and $c$ will inherit all facts from $n$ that are guarded in $c$ -- that is, facts of $n$ whose are arguments
are contained in some fact of $c$.   In order to ensure that triggers always fire locally,
acts will be propagated from children to parents  but only facts that  are guarded in the parent. 
In particular, we have the following two properties:
\begin{itemize}
\item Bounded treewidth of the chase. Although the chase can be infinite,
the tree structure described above forms a tree decomposition in the usual sense, with the width
bounded by the size of $\inst$ and  the maximal size of a rule head in $\Sigma$.
\item  Guarded interfaces. For each subtree of the root, the elements of $\inst$ that appear in the subtree form a guarded
set in $\inst$.
\end{itemize}

In the case that $\inst$ is a CQ approximation tree $Q_n$ of a Datalog query $Q$, it already  comes with a tree decomposition whose width
is independent of $n$, by Proposition \ref{prop:boundedtreewidthapprox}. When we chase such an approximation
with $\fgtgd$s, we will do so preserving this tree structure, rather than putting all of the elements of $Q_n$ in the root note.
When  we make the first non-full chase steps, we will be creating children of nodes that may be in the interior of the
approximation tree: that is, we grow the chase tree  off of each node in the approximation.
Although the chase generates new facts on the domain of the canonical database of $Q_n$, by the guarded interface property above,
each such fact will be guarded by a fact in the canonical
database of $Q_n$. In particular, the following result,  will be used later in the appendix, see for example the proof of 
 Theorem  \ref{thm:datalogqfgdlviewsdecide}.

\begin{proposition} \label{prop:chaseapprox} For any Datalog query, there is a uniform bound on the treewidth of $\chase_\Sigma(Q_n)$.
We can effectively form an automaton on infinite trees that  accepts the canonical tree structures associated with chasing
the approximation trees of $Q_n$ in the way described above, or a WMSO sentence that describes  these trees.
\end{proposition}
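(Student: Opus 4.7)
The plan is to combine two bounded-treewidth results that are already in the paper: Proposition \ref{prop:boundedtreewidthapprox}, which gives that every CQ approximation $Q_n$ comes with a canonical tree decomposition of width bounded by $k_Q$, the maximum body size of rules in $Q$; and the standard result that for $\fgtgd$s the chase of any instance has a tree decomposition whose width is bounded in terms of the width of the seed instance and the maximum head size $k_\Sigma$ of $\Sigma$. The key observation for uniformity is that every trigger for a rule $\tau \in \Sigma$ maps the frontier of $\tau$ into a set of elements guarded by some body atom of $\tau$; that atom is a fact of $\chase_\Sigma(Q_n)$, so by the tree-decomposition property there is a single bag containing its elements. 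We can therefore attach a fresh child bag at that existing bag, containing the frontier image together with the new nulls introduced by the chase step, and iterate. The resulting decomposition has width at most $\max(k_Q, k_\Sigma)$, independent of $n$, which gives the uniform treewidth bound.

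Next, I would make the canonical tree structure explicit: start from the canonical tree code of $Q_n$ given by Proposition \ref{prop:boundedtreewidthapprox}; then at each node $v$, for every rule $\tau \in \Sigma$ and every way to map the frontier of $\tau$ into the bag of $v$, attach a subtree that continues the chase in the style of the $\fgtgd$ chase tree. Inside each such subtree the structure is the well-known regular chase tree for $\fgtgd$s, whose nodes are labelled by guarded bags of facts and whose children are determined solely by the local guarded type of the parent bag, with only finitely many possibilities up to renaming of nulls. Putting the two layers together, the combined object is a tree code whose branching and labels are governed, inside the $Q_n$-part, by the approximation automaton of Proposition \ref{prop:boundedtreewidthapprox}, and, inside the chase-extensions, by a finite-state chase automaton. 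Stitching the two together via a two-mode product automaton gives an automaton that accepts exactly the canonical tree codes of $\chase_\Sigma(Q_n)$ as $n$ ranges over $\mathbb{N}$. Since individual chase subtrees may be infinite while the acceptance condition is a purely local/safety property, a B\"uchi automaton with all states accepting suffices, and the same language is definable in WMSO by a straightforward translation.

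The main obstacle I expect is the bookkeeping at the interface between the approximation layer and the chase layer: one must verify that the construction is \emph{complete}, in the sense that every trigger of $\Sigma$ on the canonical database of $Q_n$ ends up handled by the chase subtree rooted at some approximation node. This is precisely where the frontier-guardedness of $\Sigma$ matters: because every trigger's frontier lies inside a single body atom of $\tau$, and every such atom is labelled at some approximation node of $Q_n$, the trigger can be assigned to that node. A secondary point is uniformity of the automaton in the presence of null-renaming: one handles this by labelling chase-subtree states by isomorphism types of guarded bags rather than by concrete nulls, which yields a finite alphabet and finite state space and hence a finite automaton, as required for the forward-mapping applications in Section~\ref{sec:decide}.
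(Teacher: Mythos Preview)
Your proposal is correct and follows essentially the same approach as the paper: the paper's argument (which is given as the discussion immediately preceding the proposition rather than a separate proof) likewise combines Proposition~\ref{prop:boundedtreewidthapprox} with the standard tree-like chase for $\fgtgd$s, growing chase subtrees off the nodes of the approximation tree and using frontier-guardedness to ensure each trigger can be localized to a single bag. If anything, you supply more detail on the automaton/WMSO side than the paper does; the only point the paper makes explicit that you leave implicit is the propagation of guarded facts between adjacent bags, which is what guarantees that triggers whose bodies mix original and newly-generated facts can still be fired locally.
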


\subsection{Proof of Proposition \ref{prop:pipeline}: the process of Figure \ref{alg:query-mondet-constraints} is correct}

Recall the process of Figure \ref{alg:query-mondet-constraints}:

\begin{figure}[h!]
\hspace{2mm}$\mondet(Q,\views, \Sigma)\textbf{:}\hfill$
\begin{algorithmic}[1]
    \For{$Q_n$ approximation of  $Q$} \Comment unfold the query
    \State $C_n \defeq \chase_{\Sigma}(\canondb(Q_n))$                                            \Comment Chase an unfolding
     \State $\viewinst_n \defeq \views(C_n)$                       \Comment Apply views 
        \For{$Q'_{m,n} \in \backview_{\views}(\viewinst_n)$}             
\Comment Guess a witness for each view fact
         \State $C'_{m,n} \defeq \chase_{\Sigma}(Q'_{m,n})$
\Comment Chase again
        \State{\kw{If} ~ {$C'_{m,n} \not \models Q$}\textbf{ return} \textbf{false}}        \Comment Check if $Q$ holds    
         \EndFor
      \EndFor
            \State \textbf{return} \textbf{true}
\end{algorithmic}
\caption{Process  for checking monotonic determinacy.}

\end{figure}

Many of the results  in the submission rest on Proposition \ref{prop:pipeline} in the body, stating that this process
is correct for monotonic determinacy with rules  over all instances. We recall the statement
of the proposition:

\medskip

$Q$ is monotonically determined over $\views$ w.r.t. $\Sigma$ if and only if the process
of Figure \ref{alg:query-mondet-constraints}
 returns true.

\medskip

We now overview the proof. \rebut{We stress that this it follows the same argument as Lemma 5.4 in \cite{uspods20journal}}. There are two extra steps involved in the process,
involving chasing, and these are reflected in the proof.

\rebut{In one direction, suppose that the process in the figure fails (returns false).  
Let $C_n$ and $C'_{m,n}$ as above witness this. Note that $\views(C_n) \subseteq \views(C'_{m,n})$ as $\views(Q'_{m,n}) = \views(Q_n)$ and $Q'_{m,n} \subseteq C'_{m,n}$. 
Moreover $C_m \models Q$ and $C'_{m,n} \not\models Q$. Thus $C_n$ and $C'_{m,n}$ serve as a counterexample to monotonic determinacy.}


\rebut{Before we prove the other direction, note the following propositions:}
\begin{proposition}\label{prop:there-is-a-subset-in-v-back-v}
For any instance $\inst$  there exists an instance $\inst' \in \backview_{\views}(\views(\inst))$ such that $\inst'$ homomorphically maps to $\inst$.
\end{proposition}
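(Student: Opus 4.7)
The plan is to build $\inst'$ witness by witness from $\views(\inst)$, making a specific choice of approximation for each view fact, and then to stitch together the individual witnesses into a single homomorphism to $\inst$.

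First, recall how $\backview_{\views}(\views(\inst))$ is generated: for each view fact $V_i(\vec c) \in \views(\inst)$ one guesses a CQ approximation $Q'_{V_i}$ of the view definition $Q_{V_i}$ and substitutes $\vec c$ for the free variables of $Q'_{V_i}$, introducing fresh labeled nulls for the existentially quantified variables; the union of all these per-fact canonical databases is one element of $\backview_{\views}(\views(\inst))$. The key point is that for each $V_i(\vec c) \in \views(\inst)$ we have $\inst \models Q_{V_i}(\vec c)$, hence (since the view definition is a UCQ or Datalog query, which is the union of its CQ approximations) there exists a CQ approximation $Q'_{V_i,\vec c}$ of $Q_{V_i}$ and a homomorphism $h_{V_i,\vec c}$ from $\canondb(Q'_{V_i,\vec c})$ into $\inst$ that sends the free variables to the corresponding entries of $\vec c$.

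Now form $\inst'$ by making exactly these choices in the generating process: for each $V_i(\vec c) \in \views(\inst)$, pick the approximation $Q'_{V_i,\vec c}$ above and substitute $\vec c$ into its free variables, using fresh labeled nulls for its existential variables (disjoint across different view facts). By construction $\inst' \in \backview_{\views}(\views(\inst))$.

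To define the homomorphism $h\colon \inst' \to \inst$, set $h(c)=c$ on every element of $\adom(\inst)$ that appears as a free-variable image in some witness, and set $h$ to agree with $h_{V_i,\vec c}$ on the labeled nulls introduced for that particular witness. This is well-defined because the existentials introduced for distinct view facts are pairwise disjoint (so there is no conflict there), and the only shared elements across witnesses are the constants $\vec c \subseteq \adom(\inst)$, on which every $h_{V_i,\vec c}$ is the identity by choice. Each fact of $\inst'$ comes from some single witness $\canondb(Q'_{V_i,\vec c})$ (with substitutions), and $h_{V_i,\vec c}$ maps it into $\inst$; hence $h$ is a homomorphism. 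The argument is essentially a bookkeeping exercise; the only thing to be careful about is that the individual homomorphisms agree on the shared constants, which they do by the free-variable convention, so there is no real obstacle.
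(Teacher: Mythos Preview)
Your argument is correct and is exactly the natural one; the paper does not spell out a proof of this proposition at all, treating it as an immediate observation from the definition of $\backview_{\views}$ and the fact that a Datalog view is the union of its CQ approximations. There is nothing to compare: your witness-by-witness construction with fresh existentials and gluing on the shared constants is precisely the intended content.
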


\begin{proposition}\label{prop:v-back-v-preserves-homs}
Let $\inst$ and $\inst'$ be two instances such that there is a homomorphism from $\inst$ to $\inst'$. Then for every instance $B \in \backview_{\views}(\views(\inst'))$ there exists an instance $A \in \backview_{\views}(\views(\inst))$ such that $A$ homomorphically maps to $B$.
\end{proposition}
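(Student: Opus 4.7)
The plan is to construct $A$ directly by copying, for each view fact of $\views(\inst)$, the witness choice used by $B$ for the corresponding (pushed-forward) view fact of $\views(\inst')$, and then to read off a homomorphism $\phi : A \to B$ from the fact that two instantiations of the same approximation are structurally identical up to renaming.

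The first step is to exploit monotonicity of the view definitions. Each $Q_V$ is in CQ, UCQ, or Datalog, all of which define monotone queries that preserve homomorphisms. Thus the given homomorphism $g : \inst \to \inst'$ induces a fact-to-fact map at the view level: whenever $V(\vec c) \in \views(\inst)$, we also have $V(g(\vec c)) \in \views(\inst')$, with any CQ approximation witnessing $V(\vec c)$ in $\inst$ witnessing $V(g(\vec c))$ in $\inst'$ when composed with $g$.

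The second step is to unpack $\backview_{\views}(\views(\inst'))$. An element $B$ of this set is specified by a function $f'$ that assigns to each $V(\vec d) \in \views(\inst')$ a CQ approximation $P_{V,\vec d}$ of $Q_V$; then $B$ is the union, over all such $V(\vec d)$, of the canonical database of $P_{V,\vec d}$ with free variables instantiated to $\vec d$ and existentially quantified variables replaced by fresh nulls, taken disjoint across distinct view facts. To define $A$, I would, for each $V(\vec c) \in \views(\inst)$, pick the same approximation $P_{V,g(\vec c)}$ used by $f'$ for $V(g(\vec c))$, instantiate its free variables to $\vec c$, and introduce a fresh batch of nulls for its existential variables. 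The union of these canonical databases is $A$, which by construction lies in $\backview_{\views}(\views(\inst))$.

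The third step is to define $\phi : A \to B$ witness-by-witness. On the constants $\vec c$ appearing in the $V(\vec c)$-witness of $A$, set $\phi(\vec c) \defeq g(\vec c)$; on the fresh nulls of that witness, map each to the correspondingly placed fresh null of the $V(g(\vec c))$-witness in $B$. Because both witnesses come from the \emph{same} approximation $P_{V,g(\vec c)}$, every atom of the $A$-witness is carried by $\phi$ to a matching atom of the $B$-witness, so $\phi$ preserves atoms.

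The main obstacle is only bookkeeping: ensuring that $\phi$ is globally well-defined when $g$ is not injective, so that several view facts of $\views(\inst)$ are collapsed onto a single view fact of $\views(\inst')$. In that case several witnesses of $A$ will target the same witness of $B$, which is unproblematic precisely because $\backview_{\views}$ uses pairwise disjoint fresh nulls across distinct witnesses in $A$, so the null-maps defined per witness do not conflict.
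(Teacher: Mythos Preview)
Your proposal is correct. The paper states this proposition without proof, treating it as a routine observation needed for the correctness of the monotonic-determinacy process; your direct witness-copying construction is exactly the natural argument and there is nothing to compare it against.
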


Beginning the argument for the other direction, suppose that monotonic determinacy fails with witness instances $\inst_1$ and $\inst_2$.
\rebut{That is, $\inst_1 \models Q$, $\inst_1$ and $\inst_2$ satisfy $\Sigma$, the view image $\views(\inst_2)$ of $\inst_2$ contains all facts
in the view image $\views(\inst_1)$ of $\inst_1$, but $\inst_2$ does not satisfy $Q$. We shall show, that the above procedure returns false. Let $Q_n$ be an unfolding of $Q$ that  holds in $\inst_1$. Such a $Q_n$ exists because $\inst_1 \models Q$. 
We let $C_n$ and $\viewinst_n$ be as in Figure \ref{alg:query-mondet-constraints}. 
Note that there exists a homomorphism $h$ from $C_n$ to $\inst_1$. Consider $\views(h(C_n))$ and  $\backview_{\views}(\views(h(C_n)))$. 
From \cref{prop:there-is-a-subset-in-v-back-v} there exists an instance  $W$ of  $\backview_{\views}(\views(h(C_n)))$ that homomorphically maps to $\inst_2$ as $\views(h(C_n)) \subseteq \views(\inst_1) \subseteq \views(\inst_2)$. 
Finally note, from \cref{prop:v-back-v-preserves-homs}, that there exists an instance, call in  $Q'_{m,n}$, that is in
 $\backview_{\views}(\viewinst_n)$ and is mappable to $W$.  Note that $C'_{m,n} = \chase_{\Sigma}(Q'_{m,n})$ maps to $\inst_2$, thus $C'_{m,n} \not\models Q$. Therefore, the procedure returns false.}



\section{Additional material related to tools for positive results, Section \ref{sec:tools}}
\subsection{More details concerning the forward-backward method}

\myparagraph{Forward mapping results}
We recall the results on forward mapping, beginning with Proposition
 \ref{prop:kpipeline}: 

\medskip

 For each $Q$ in FGDL, $\views$ in FGDL or FO, $\Sigma$ TGDs, and each $k$ there is a tree automaton that accepts  all finite $k$ tree codes of countereexample
to monotonic determinacy of $Q, \views, \Sigma$. There is a nondeterministic B\"uchi automaton over infinite trees that holds exactly  when there is
an arbitrary (possibly infinite) $k$ tree code of such an instance.

\medskip

The argument works for a larger logic -- \emph{Guarded Second Order Logic (GSO)}.
A tuple is said to be \emph{guarded} in an instance if  there is a fact $F(c_1\ldots c_n)$ in the instance such that every component of the tuple is one of the $c_i$. A set of $k$-tuples (a $k$-ary relation) is said to be \emph{guarded} in an instance if every tuple in it is guarded.
Guarded Second Order logic is defined, following \cite{gho}, as a semantic restriction of SO requiring that the second order quantifiers range over guarded relations. For example, when the base schema has only a binary relation $R(x,y)$, GSO allows the usual MSO quantification and also quantification over sets of edges: thus GSO
can express the existence of paths. The following is a variant of
 \emph{Courcelle's theorem} \cite{courcellestheorem}:

\begin{theorem} \label{thm:courcelle}
 For every guarded second-order sentence $\varphi$ over the signature
$\sigma$ and $k \in \omega$ there is a tree automaton $\aut_\varphi$ over $\codes{\sigma}{k}$
 such that for all finite $\codes{\sigma}{k}$-trees $\tree$ we
have that $\aut_\varphi$ accepts $\tree$ iff $\tree$ is coherent and
$\decode{\tree} \models \varphi$.
\end{theorem}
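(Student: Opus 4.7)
The plan is to reduce the statement to the classical Rabin--Thomas theorem (MSO over finite labelled trees is recognizable by a finite tree automaton) by translating the guarded second-order sentence $\varphi$ over $\sigma$-structures into an MSO sentence $\widehat\varphi$ over $\codes{\sigma}{k}$-trees, in such a way that for coherent codes $\tree$ we have $\decode{\tree} \models \varphi$ iff $\tree \models \widehat\varphi$. I would then intersect the automaton for $\widehat\varphi$ with a small automaton verifying coherence, which is easy because each clause of the coherence definition is either local to one vertex or involves only a vertex and its parent, so it is expressible by a conjunction of local transition conditions.

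For the translation itself, I would first handle first-order quantification and atomic formulas. An element of $\decode{\tree}$ is an equivalence class $[v,l]$ for a vertex $v$ and $l \in \{1,\ldots,k\}$; since $l$ comes from a fixed finite set, a first-order variable of $\varphi$ can be modelled as a pair consisting of a first-order tree-vertex variable and one of finitely many ``slot'' constants, the latter being captured by a disjunction in $\widehat\varphi$. The equivalence relation $\equiv$ on such pairs is the reflexive-transitive closure of a simple local relation between a vertex and its parent (given by the predicates $T^=_{l,l'}$ and $T_g$), and transitive closure is definable in MSO on trees. Hence atomic formulas $R(x_1,\ldots,x_m)$ translate to the existence of a vertex carrying the appropriate $T^R_{\vec n}$ predicate whose slots are $\equiv$-equivalent to the chosen representatives of the $x_i$.

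The crux is the translation of guarded second-order quantification. Here I exploit the fact that every tuple in a guarded relation $X$ over $\decode{\tree}$ is contained in some fact of $\decode{\tree}$, and by the connectedness condition of the tree decomposition encoded by coherence there is at least one vertex of $\tree$ whose bag contains all components of that tuple. Therefore $X$ can be represented by a tuple of monadic predicates on the tree nodes, one predicate for each possible slot-pattern $(l_1,\ldots,l_m) \in \{1,\ldots,k\}^m$, indicating that the representatives in those slots at that vertex form a tuple of $X$. The guarded second-order quantifier $\exists X\,\psi$ then becomes the MSO quantification over this finite tuple of monadic predicates, with a consistency conjunct stating that if two vertices represent the same guarded tuple (checkable via $\equiv$) then their labellings agree. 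I would do the analogous translation for universal guarded quantifiers.

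Once this syntactic translation is in place, the final step is routine: $\widehat\varphi$ is an MSO sentence over the finite $\codes{\sigma}{k}$-alphabet, so by the Rabin/Thomas theorem there is an equivalent finite tree automaton, and intersecting it with the coherence automaton yields the desired $\aut_\varphi$. I expect the main obstacle to be the guarded second-order case: one must argue carefully that the ``vertex plus slot-pattern'' encoding really captures every guarded relation on $\decode{\tree}$, invoking both the connectedness axiom of the tree decomposition and the substitution-closure part of coherence to ensure consistency across different vertices whose bags overlap on the same element.
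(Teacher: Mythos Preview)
Your proposal is correct and follows the standard proof strategy for this variant of Courcelle's theorem. However, the paper does not actually prove this theorem: immediately after stating it, the paper writes ``Proofs can be found many places. For example, see Theorem 10 of \cite{readingcourse} for an argument that matches the phrasing above,'' and the subsequent \texttt{proof} environment belongs to Proposition~\ref{prop:kpipeline}, not to this theorem.

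So there is no paper proof to compare against; the theorem is treated as a known black box. That said, your sketch is exactly the argument one would expect from the cited sources: translate GSO over $\sigma$-structures to MSO over $\codes{\sigma}{k}$-trees by representing first-order elements as (vertex, local-name) pairs, representing guarded relations as finite tuples of monadic predicates indexed by slot patterns $\{1,\dots,k\}^m$ (which works precisely because every guarded tuple is contained in some bag), and then appealing to the equivalence of MSO and tree automata. The one point worth tightening is the consistency conjunct for the encoded guarded relation: rather than requiring agreement across \emph{all} pairs of vertices whose bags share the tuple (which is fine but heavier than needed), it suffices to enforce agreement along parent--child edges via the $T_g$ predicates, since the connectedness condition in the decomposition then propagates consistency globally. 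Otherwise your outline is sound and complete.
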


Proofs can be found many places. For example, see Theorem 10 of \cite{readingcourse} for an argument that matches the phrasing  above.

\begin{proof}
The first part of the proposition follows immediately from Theorem \ref{thm:courcelle}. The statement
that an instance is a counterexample involves statements that:
\begin{itemize}
\item The rules are satisfied on both instances. Since the rules are TGDs, they are FO, hence in GSO.
\item The views predicates match their definitions.
 FGDL and FO are both subsets of GSO, and GSO is closed under Boolean operations and universal quantification. Thus
this is expressible in GSO.
\item The query holds in one instance and does not hold in another: this again is in GSO if $Q$ is in FGDL.
\item The view facts in one instance are contained in the view facts of the other instance: this is a simple universal
containment, hence in FO, hence in GSO.
\end{itemize}
Thus being a counterexample can be expressed in GSO and Theorem \ref{thm:courcelle} applies to it.

The second part of the theorem can be shown either by redoing the proof of Courcelle's theorem for infinite trees,
or by showing definability in Weak Monadic Second Order Logic, and using the result mentioned earlier in the appendix
that WMSO definability implies recognizabiilty by a nondeterministic B\"uchi Automata \cite{weakbuchi}.
\end{proof}

We now  recall Proposition \ref{prop:ksuffices}:

\medskip

If $(Q, \views, \Sigma)$ has the BTVIP, $\views$ in Datalog, then there is $k$, computable
from $(Q, \views, \Sigma)$, such that whenever monotonic determinacy
fails, there is e some counterexample of treewidth $k$.

\medskip

Note that the BTVIP is only defined when $Q$ is in Datalog and $\Sigma$ are TGDs, since
we need to talk about the chase of approximations.

\begin{proof}
We make use of the correctness of the process in 
Figure \ref{alg:query-mondet-constraints}.
By the BTVIP assumption there is $k$ such that for each approximation fo $Q$, 
there is a  finite chase of the view images with treewidth $k$. Let $A_0$ be this set of instances.

We then consider the instances $A_1$ obtained by adding on fresh witnesses for each view fact. Since we can
use new trees for these witnesses, the treewidth of  $A_1$ is also bounded by the maximum of $k$ and
the size of each rule body.

We now let $A_2$ be the result of chasing these instances
chasing the view facts of these instances with the backward views. Since $\Sigma$ are $\fgtgd$s, we can
perform the usual tree-like chase on it --  see, for example,  \cite{treelikechase} or Section \ref{app:chaseapproxregular} of this submission
 for a description.
Note that when we chase  an $\inst_2 \in A_2$ we are  adding on facts over values in $\inst_2$, but only facts
that are already guarded in $\inst_2$.  So we do not break the running intersection property of the existing decomposition of $\inst_2$.
In addition our chasing adds on new structure, possibly infinite. But the structure consists of components of bounded treewidth,
 interfacing with $\inst_2$ in a guarded set \cite{datalogpmj,bbc}. Thus we can simply union the tree decomposition
of the components  with the decomposition of $\inst_2$.
\end{proof}

\myparagraph{Backward mapping results}
We now recall Theorem \ref{thm:auttodlog} from the body, which deals again with the case of finite trees. It says
that if we start with a standard finite tree automaton,  we can go backward to  Datalog:

\medskip

 Let $\sigma$ be a relational signature, $k \in \omega$ and $\aut$ an
automaton for the tree signature $\codes{\sigma}{k}$ of $\sigma$. Then
there is a Datalog program $E_\aut$ such that for every
$\sigma$-structure $\struct$:  $\struct \models E_\aut$
iff there is a finite  active tree code $\tree$ over $\sigma$ with a
homomorphism from $\decode{\tree}$ to $\struct$ such that $\aut$ accepts
$\tree$.

\medskip

The  proof of this will be more involved, but we emphasize that it is essentially the
same argument as in Proposition 7.1 of \cite{uspods20journal}.

\begin{definition} \label{def:dlogfromaut}
 Given the $\codes{\sigma}{k}$-tree automaton $\aut$ we define a Datalog
program $E_\aut$ as follows: The extensional predicates are all the
predicates from $\sigma$. As intensional predicates we have the $0$-ary
goal predicate $\goal$, the $1$-ary predicate $\adom$, for every
bag $L$ the $k$-ary predicate $\local_L$ and $k$-ary
predicates $P_{q,g}$ for every state $q$ of $\aut$ and partial injective
function $g : \{1,\dots,k\} \to \{1,\dots,k\}$. The program contains the
following rules:
For every $n$-ary extensional predicate $R$ in $\sigma$ it contains
the rule
\[
 \adom(x_i) \datalogarrow R(x_1,\dots,x_n), \quad \mbox{where } i \in
\{x_1,\dots,x_n\}.
\]
For every accepting state $q$ of $\aut$ and partial injective $g$ it
contains the rule
\[
 \goal \datalogarrow P_{q,g}(x_1,\dots,x_k).
\]
For every bag $L$ we have the rule
\[
 \local_L(x_1,\dots,x_n) \datalogarrow \bigwedge_{i = 1}^{n} \adom(x_i)
\land \bigwedge_{T^=_{l_0,l_1} \in L} x_{l_0} = x_{l_1} \land
\bigwedge_{T^R_{\ntuple} \in L} R(x_\mtuple),
\]
where $x_\mtuple = x_{l_1},\dots,x_{l_m}$ for $\mtuple =
(l_1,\dots,l_m)$. For every transition of the form $\code^g_L
\rightarrow q$ in $\aut$ the program $E_\aut$ contains the rule
\[
 P_{q,g}(x_1,\dots,x_k) \datalogarrow \local_L(x_1,\dots,x_k).
\]
For every transition of the form $q_1,\dots,q_r,\code^g_L \rightarrow q$
in $\aut$ and for all partial injections $g^1,\dots,g^r$ it contains the
rule
\begin{equation} \label{eqn:inductive_rule}
 P_{q,g}(x_1,\dots,x_k) \datalogarrow  \bigwedge_{j = 1}^r \left[
P_{q_j,g^j}(x_1^j,\dots,x_k^j) \land \bigwedge_{l \in \dom(g^j)} x_l =
x^j_{g^j(l)}   \right] \land \local_L(x_1,\dots,x_k).
\end{equation}
\end{definition}

We refer to the   Datalog query $E_A$ as the \emph{backward mapping Datalog query}, and
the rules as \emph{backward mapped rules}.

\begin{lemma} \label{lem:datalog_to_automaton}
 If $\struct \models E_A$ then there is an active tree code $\tree$ with
a homomorphism from $\decode{\tree}$ to $\struct$ such that $A$ accepts
$\tree$.
\end{lemma}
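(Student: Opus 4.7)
The plan is to extract a tree code directly from the Datalog derivation witnessing $\struct \models E_A$. Since $\goal$ must be derivable, the rule $\goal \datalogarrow P_{q,g}(x_1,\dots,x_k)$ fires for some accepting state $q$, some partial injection $g$, and some tuple $\vec{a} = a_1,\dots,a_k$ of elements of $\struct$. I would then consider the proof tree $\tau$ for the atom $P_{q,g}(\vec{a})$: each node $u$ of $\tau$ is the conclusion of either a base rule (of the form $P_{q,g}(\vec{x}) \datalogarrow \local_L(\vec{x})$, corresponding to a leaf transition $\code^g_L \to q$) or an inductive rule of the form \eqref{eqn:inductive_rule}, which furthermore forces $\local_L(\vec{x})$ to hold at $u$ and provides $r$ child derivations $P_{q_j,g^j}(\vec{x}^j)$ with the equality constraints $x_l = x^j_{g^j(l)}$ for $l \in \dom(g^j)$.

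Next, I would build the candidate tree code $\tree$ by mirroring $\tau$: each node $u$ of $\tau$, labeled in the derivation by the pair $(L,g)$ and the tuple $\vec{x}^u \in \struct^k$, becomes a node of $\tree$ carrying the letter $\code^{g}_L$. Simultaneously I define a candidate homomorphism $h$ on the raw pairs $(u,l)$ by $h(u,l) \defeq x^u_l$. The core verification is that $h$ descends to the quotient $\decode{\tree}$. For the intra-node identifications: if $T^=_{l,l'} \in L$, then the body $\local_L(\vec{x}^u)$ used in the derivation contains $x_l = x_{l'}$, so $x^u_l = x^u_{l'}$. For the inter-node identifications: if $v$ is the child of $u$ with tag $g^j$ and $l \in \dom(g^j)$, the Datalog rule for this inductive step enforces $x^u_l = x^v_{g^j(l)}$. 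Hence $h$ is well-defined on $\decode{\tree}$. Fact preservation is automatic: for every $T^R_{\vec{l}} \in L$ at node $u$, the rule defining $\local_L$ ensures $R(x^u_{\vec{l}})$ holds in $\struct$, so $h$ maps each fact of $\decode{\tree}$ to a fact of $\struct$.

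Then I would check that $\tree$ is a coherent tree code. The equality axioms, substitution closure, and $T_g$-compatibility are guaranteed because we emit only bags $L$ and partial injections $g$ that actually label transitions of $A$, together with the identities enforced by the Datalog derivation itself; uniqueness of the $T_g$-predicate at each node holds since a single rule application is responsible for each node. Activity of $\tree$ follows because $\goal$ being derived ultimately requires some $\adom$-fact, which in turn is produced by some extensional atom in $\struct$, guaranteeing the decoded active domain is nonempty. Finally, the map sending each node $u$ to its derivation state $q^u$ is a valid run of $A$ on $\tree$: the state at the root is accepting (coming from the $\goal$-rule), and each parent-children relation corresponds exactly to a transition of $A$ used in a rule instance of form \eqref{eqn:inductive_rule} or a base rule at leaves.

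The main obstacle I expect is the bookkeeping around the equivalence relation $\equiv$ defining $\decode{\tree}$: one must ensure that the candidate homomorphism $h(u,l) = x^u_l$ is simultaneously compatible with the within-node collapses induced by $T^=_{l,l'}$ and with the cross-node identifications induced by $T_g$, and that these are precisely the equalities forced by the Datalog program. Everything else (coherence conditions, fact preservation, validity of the run) is a matter of matching the shape of the rules in Definition~\ref{def:dlogfromaut} against the clauses in the definition of tree code and of acceptance.
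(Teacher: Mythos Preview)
Your proposal is correct and takes essentially the same approach as the paper: both extract the tree code from the Datalog derivation tree of $P_{q,g}(\vec a)$, label each derivation node with the $\code^g_L$ coming from the rule instance used there, and define the homomorphism by $h(u,l) = x^u_l$, then verify that this respects the intra-node equalities (from the $x_{l_0}=x_{l_1}$ conjuncts in $\local_L$) and the parent--child identifications (from the $x_l = x^j_{g^j(l)}$ conjuncts in the inductive rule). The paper merely packages this as an explicit induction on the depth of the derivation rather than as a global construction followed by verification, but the content is identical.
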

\begin{proof}
 Assume that $\struct \models E_A$. We define the tree $\tree$ by
induction on the evaluation tree that captures that witnesses $\struct
\models E_A$. We show that every intensional predicate of the form
$P_{q,g}(x_1,\dots,x_k)$ that holds of some tuple $(s_1,\dots,s_k)$ of
elements of $\struct$ there is a $\codes{\sigma}{k}$-tree $\tree$
together with a run of $\aut$ that ends at the root $v$ of $\tree$ in
state $q$ and a homomorphism $h$ from $\decode{\tree}$ to $\struct$ such
that $h([v,l]) = s_i$ for all $l \in \{1,\dots,k\}$. This yields the
desired result, because $\goal$ can only be true if
$P_{q,g}(x_1,\dots,x_k)$ for an accepting state $q$ is true of some
elements in $\struct$.

In the base case $P_{q,g}(x_1,\dots,x_k)$ holds of the tuple
$(s_1,\dots,s_k)$ because of the rule $P_{q,g}(x_1,\dots,x_k) \datalogarrow
\local_L(x_1,\dots,x_k)$ for some leaf transition $\code^g_L \rightarrow
q$. In this case we define $\tree$ to consist of just one leaf $v$ that
is labelled with $\code^g_L$. Because of the transition $\code^g_L
\rightarrow q$ there is a run of $\aut$ on $\tree$ that ends in $q$.
This definition of $\tree$ means that the structure $\decode{\tree}$ has
as its domain the elements $[v,1],\dots,[v,k]$ and a predicate $R$ from
$\sigma$ is defined to be true of $([v,l_1],\dots,[v,l_m])$ iff
$T^L_\ntuple \in L$ for $\ntuple = (l_1,\dots,l_m)$. We define the
homomorphism $h$ such that $h([v,l]) = s_l$ for all $l \in
\{1,\dots,k\}$. This is well-defined as a function because whenever
$[v,l] = [v,l']$ then this is because there is the predicate
$T^=_{l,l'}$ in the label $\code^g_L$ of $v$ and hence $s_l = s_{l'}$
holds, because $\local_L(x_1,\dots,x_k)$ is true of $s_1,\dots,s_k$. The
function $h$ preserves all the facts in $\decode{\tree}$ because the
last conjunct of $\local_L$ requires the facts from $L$ to hold at
$(s_1,\dots,s_k)$ in $\struct$.

In the inductive case $P_{q,g}(x_1,\dots,x_k)$ holds of the tuple
$(s_1,\dots,s_k)$ because of the rule from \eqref{eqn:inductive_rule},
which is in the program $E_A$ because of the transition
$q_1,\dots,q_r,\code^g_L \rightarrow q$ in $\aut$. We then have that for
every $j \in \{1,\dots,r\}$ there is some partial injection $g^j$ such
that the predicate $P_{q_j,g^j}(x^j_1,\dots,x^j_k)$ holds of some
elements $s^j_1,\dots,s^j_k$ in $\struct$. By the inductive hypothesis
we obtain for every $j$ a tree code $\tree_j$ and a homomorphism $h_j$
from $\decode{\tree_j}$ to $\struct$ such that $\aut$ has a run in
$\tree_j$ ending with $q_j$ and $h_j([v^j,l]) = s^j_l$ for all $l \in
\{1,\dots,k\}$. We define $\tree$ such that it consists of the disjoint
union of the trees $\tree_j$ for all $j \in \{1,\dots,r\}$ together with
a new root $v$ that is labelled with $\code^g_L$ and has all the roots
$v_j$ of the $\tree_j$ as its children. Note that with this construction
we have that $[v,l] = [v_j,g^j(l)]$ holds in $\decode{\tree}$ for all $j
\in \{1,\dots,r\}$ and $l \in \dom(g^j)$. There is a run of $\aut$ in
$\tree$ ending at $q$ because we can combine all the runs in the
$\tree_j$ ending in $q_j$ and use the transition
$q_1,\dots,q_r,\code^g_L \rightarrow q$. The homomorphism $h$ from
$\decode{\tree}$ to $\struct$ is defined such that $h([v,l]) = s_l$ for
all $l \in \{1,\dots,k\}$ and for $w \in \tree_j$ we set $h([w,l]) =
h_j([w,l])$. To show that this is well-defined one uses the observation
that whenever some $[w,l]$ for $w \in \tree_j$ is equal to either some
$[v,l']$ or some $[w',l']$ for $w' \in \tree_{j'}$ with $j \neq j'$ then
$[w,l] = [v_j,l'']$ for some local name $l''$ such that $l'' = g^j(l')$
for some $l' \in \dom(g^j)$. Then one exploits the equality $x_{l'} =
x^j_{g^j(l')}$ in \eqref{eqn:inductive_rule}. For all equality that are
enforced on local names at $v$ we use the corresponding conjunction in
the body of $\local_L(x_1,\dots,x_k)$, which holds of $s_1,\dots,s_k$
Our definition yields a homomorphism for the facts in the bag of $v$
because of the corresponding conjunct in the body of
$\local_l(x_1,\dots,x_k)$.
\end{proof}

\begin{definition}
 Let $\tree$ be an active tree code and fix an element $b$ in the active
domain of $\decode{\tree}$. For every $v$ in the tree $\tree$ we define
a $k$-tuple $a^v = (a_1,\dots,a_k)$ of elements of $\decode{\tree}$ such
that $a_l = [v,l]$, if $[v,l]$ is in the active domain of
$\decode{\tree}$, and $a_l = b$ otherwise.
\end{definition}

\begin{lemma} \label{lem:local}
 For every node $v$ in an active tree code $\tree$ and bag $L$ the
predicate $\local_L(x_1,\dots,x_k)$ is true of $a^v$ in
$\decode{\tree}$.
\end{lemma}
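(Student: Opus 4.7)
The plan is to unfold the definition of $\local_L$, which is a conjunction of three kinds of atoms, and verify each one separately at the tuple $a^v$ in $\decode{\tree}$, using the way the quotient structure $\decode{\tree}$ and the tuple $a^v$ are built. Throughout, I read the statement with $L$ understood to be the bag actually labelling $v$ in $\tree$, since otherwise the conclusion is nonsense.

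First I would handle the conjuncts $\adom(a_l)$ for $l \in \{1,\dots,k\}$. By Definition~\ref{def:dlogfromaut}, $\adom(x_i)$ is derivable as soon as $x_i$ appears as an argument in some extensional fact. If $[v,l]$ lies in the active domain of $\decode{\tree}$, then $a_l = [v,l]$ sits inside some atom of $\decode{\tree}$, so $\adom(a_l)$ holds. If $[v,l]$ is not in the active domain, then $a_l = b$, and $b$ was explicitly chosen from the active domain of $\decode{\tree}$, so again $\adom(a_l)$ holds.

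Next I would discharge the equality conjuncts $x_{l_0} = x_{l_1}$ coming from $T^=_{l_0,l_1} \in L$. By construction of $\decode{\tree}$, the equivalence relation $\equiv$ forces $(v,l_0) \equiv (v,l_1)$, so the equivalence classes $[v,l_0]$ and $[v,l_1]$ are literally the same element. Hence either both are in the active domain, giving $a_{l_0}=[v,l_0]=[v,l_1]=a_{l_1}$, or neither is, and then $a_{l_0} = b = a_{l_1}$. Either way the equality atom is satisfied.

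Finally I would handle the relational conjuncts $R(x_\mtuple)$ coming from $T^R_\ntuple \in L$ with $\ntuple=(l_1,\dots,l_m)$. The decoding rule puts the tuple $([v,l_1],\dots,[v,l_m])$ into the interpretation of $R$, so each $[v,l_i]$ is in the active domain and $a_{l_i} = [v,l_i]$. Thus $R(a_{l_1},\dots,a_{l_m})$ holds in $\decode{\tree}$, as required. The only subtlety, and the one place I would be careful, is confirming that a local name appearing in some $T^R_\ntuple \in L$ or in some $T^=_{l_0,l_1} \in L$ can never fall into the ``padded by $b$'' case — the first is automatic from the decoding, and the second was handled above by observing that equivalent local names share their active-or-inactive status.
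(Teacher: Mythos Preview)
Your proof is correct and follows essentially the same approach as the paper's, just spelled out in more detail: the paper simply asserts that the equality and relational conjuncts ``follow immediately from the definition of $\decode{\tree}$'' whereas you unpack each case. Your observation that the statement only makes sense when $L$ is the bag actually labelling $v$ is correct and is indeed implicit in how the lemma is applied (and proved) in the paper.
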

\begin{proof}
 We show that the body of
\[
 \local_L(x_1,\dots,x_n) \datalogarrow \bigwedge_{i = 1}^{n} \adom(x_i)
\land \bigwedge_{T^=_{l_0,l_1} \in L} x_{l_0} = x_{l_1} \land
\bigwedge_{T^R_{\ntuple} \in L} R(x_\ntuple),
\]
is true of $a^v = (a_0,\dots,a_k)$. By the statement of the lemma it is
guaranteed that every $a_l$ for $l \in \{1,\dots,k\}$ is in the active
domain of $\decode{\tree}$. Therefore, the first conjunct of the body
holds. That the other conjuncts about equality and the atomic relation
symbols hold follows immediately from the definition of
$\decode{\tree}$.
\end{proof}

\begin{lemma} \label{lem:dlogontree}
 If the active tree code $\tree$ is accepted by $\aut$ then
$\decode{\tree} \models E_\aut$.
\end{lemma}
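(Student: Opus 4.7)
The plan is to prove the converse of Lemma \ref{lem:datalog_to_automaton} by a bottom-up induction that follows the accepting run of $\aut$ on $\tree$ from the leaves upward. At each vertex $v$ assigned state $q$ and labelled $\code^g_L$, I would show that the intensional predicate $P_{q,g}$ holds of the tuple $a^v$ (as defined just before Lemma \ref{lem:local}) in $\decode{\tree}$. Once this claim reaches the root, the rule $\goal \datalogarrow P_{q,g}(x_1,\dots,x_k)$ for accepting $q$ immediately derives $\goal$, giving $\decode{\tree} \models E_\aut$.

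For the base case, let $v$ be a leaf labelled $\code^g_L$ whose run uses the leaf transition $\code^g_L \rightarrow q$. The corresponding rule in $E_\aut$ is $P_{q,g}(x_1,\dots,x_k) \datalogarrow \local_L(x_1,\dots,x_k)$, and by Lemma \ref{lem:local} the body $\local_L(x_1,\dots,x_k)$ is true of $a^v$ in $\decode{\tree}$, so $P_{q,g}(a^v)$ is derivable.

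For the inductive step at an internal node $v$ with children $v_1,\dots,v_r$ assigned states $q_1,\dots,q_r$ and with partial injections $g^1,\dots,g^r$ attached to the children, where $v$ itself is labelled $\code^g_L$ and gets state $q$ via the transition $q_1,\dots,q_r,\code^g_L \rightarrow q$, I would instantiate the rule in \eqref{eqn:inductive_rule} by setting $x_l \mapsto a^v_l$ and $x^j_l \mapsto a^{v_j}_l$. The inductive hypothesis supplies $P_{q_j,g^j}(a^{v_j})$ for every $j$, and Lemma \ref{lem:local} supplies $\local_L(a^v)$, so only the equational conjuncts $x_l = x^j_{g^j(l)}$ for $l \in \dom(g^j)$ remain.

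The only mildly delicate step is verifying these equalities, because the components of $a^v$ and $a^{v_j}$ are defined case-wise: an entry is the corresponding equivalence class if it lies in the active domain of $\decode{\tree}$, and the default element $b$ otherwise. Here the coherence conditions on tree codes force $(v,l) \equiv (v_j,g^j(l))$ for every $l \in \dom(g^j)$, so the two equivalence classes $[v,l]$ and $[v_j,g^j(l)]$ coincide. Therefore either both entries are active-domain elements (and hence literally equal), or neither is and both default to $b$; in either case $a^v_l = a^{v_j}_{g^j(l)}$. The rule \eqref{eqn:inductive_rule} thus fires and yields $P_{q,g}(a^v)$, completing the induction and, at the root, the lemma.
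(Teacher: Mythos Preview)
Your proposal is correct and follows essentially the same bottom-up induction along the accepting run that the paper uses, invoking Lemma~\ref{lem:local} for the $\local_L$ atoms and the decoding equivalence $(v,l)\equiv(v_j,g^j(l))$ for the equational conjuncts. The only cosmetic difference is that the paper states the induction hypothesis as ``$P_{q,g}(a^v)$ holds for \emph{every} partial injection $g$'' and then specializes to the child's own $g^j$, whereas you carry only the specific $g$ from the node's label; since the rule \eqref{eqn:inductive_rule} exists for every choice of $g^1,\dots,g^r$, your tighter hypothesis is exactly what is needed and the argument goes through unchanged.
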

\begin{proof}
 To prove that $\decode{\tree} \models E_\aut$, we show with an induction
on the depth of a run that if there is a run of $\aut$ ending at some
node $v$ of $\tree$ in a state $q$ of the automaton then for every
partial injection $g$ the predicate $P_{q,g}(x_1,\dots,x_k)$ is true of
$a^v$ in $\decode{\tree}$. Because of the rule $\goal \datalogarrow
P_{q,g}(x_1,\dots,x_k)$, for accepting $q$, this then means that if
there is an accepting run of $\aut$ at the root of $\tree$ then $E_A$ is
true in $\decode{\tree}$.

In the base case of the induction we have a run that ends at a leaf $v$
of $\tree$ and corresponds to a transition of the form $\code^g_L
\rightarrow q$ in $\aut$. It then follows immediately that
$P_{q,g}(x_1,\dots,x_k)$ is true of $a^v$ because by Lemma~\ref{lem:local}
the predicate $\local_L(x_1,\dots,x_k)$ is true of $a_v$ and the program
$E_\aut$ contains the rule $P_{q,g}(x_1,\dots,x_k) \datalogarrow
\local_L(x_1,\dots,x_k)$.

In the inductive case we have a run ending in a state $q$ and at a node
$v$ of $\tree$ that is not a leaf. We need to establish that for all
partial injections $g$ the predicate $P_{g,q}(x_1,\dots,x_k)$ is true of
$a^v$ in $\decode{\tree}$. Consider the last transition of the run that
ends at $v$, which must be of the form $q_1,\dots,q_r,\code^g_L
\rightarrow q$. Thus, for each of the children $v_1,\dots,v_r$ of $v$ in
$\tree$ we have that there are accepting runs of $\aut$ at $v_j$ ending
in state $q_j$. By the induction hypothesis we have that for each $j$
and any partial injection $g'$ that the predicate
$P_{q_j,g'}(x^j_1,\dots,x^j_k)$ is true of $a^{v_j}$ in
$\decode{\tree}$. In particular this holds for the partial injection
$g^j$ that is at the label $\code^{g^j}_{L_j}$ of the child $v_j$,
meaning that the predicate $P_{q_j,g'}(x^j_1,\dots,x^j_k)$ is true of
$a^{v_j}$. Moreover, by the construction of $\decode{\tree}$ we have
that $[v,l] = [v_j,g^j(l)]$ for all $l \in \dom(g^j)$. Therefore, if
$[v,l]$ is in the active domain of $\decode{\tree}$ then so is
$[v^j,g^j(l)]$ because it is the same element. It follows that in this
case the equality $x_l = x^j_{g^j(l)}$ holds. If on the other hand
$[v,l]$ is not in the active domain of $\decode{\tree}$ then
$[v_j,g^j(l)]$ is also not in the active domain and the equality $x_l =
x^j_{g^j(l)}$ holds because both variables are interpreted as $b$. Our
reasoning so far shows that for every $j$ the first conjunct in the body
of \eqref{eqn:inductive_rule} holds. For the second conjunct,
$\local_L(x_1,\dots,x_k)$, we can
use Lemma~\ref{lem:local}.
\end{proof}

\begin{proof}[Proof of Theorem ~\ref{thm:auttodlog}]
Let $E_\aut$ be defined from $\aut$ as in Definition~\ref{def:dlogfromaut}. 

The direction from left to right is proven in Lemma~\ref{lem:datalog_to_automaton}.

For the other direction assume that there is a tree code $\tree$ such
that $\aut$ accepts $\tree$ and there is a homomorphism from
$\decode{\tree}$ to $\struct$. By Lemma~\ref{lem:dlogontree} it follows
that $\decode{\tree} \models E_\aut$. Because Datalog is preserved under
homomorphisms it follows that $\struct \models E_\aut$.
\end{proof}

\subsection{Proofs of Certain Answer Rewriting results} \label{subsec:app-certain}

\myparagraph{A note on finite and infinite}
We note  that in the notion of certain answer rewriting, we are quantifying twice over instances or finite instances.
In the usual definition of rewriting, we say
$R$ is a certain answer rewriting of $Q$ w.r.t. $\Sigma$ if:

\medskip

for all finite instances $\inst$, $R(\inst)$ holds if and only if for every $\inst'$ extending $\inst$ with
$\inst'$ satisfying $\Sigma$, $\inst'$ satisfies $Q$.

\medskip

The second quantification over instances $\inst'$ cannot in general be exchanged with a quantification over finite
instances: this  exchange is valid when the rules are finitely controllable.

But in the first quantification, the distinction between finite instances and infinite instances will usually not be important
to us:

\begin{proposition}
If $\Sigma$ is a set of existential rules, and $Q$ and $R$ are in Datalog, ``finite instances'' can be exchanged with ``instances'' without
impacting the definition.
\end{proposition}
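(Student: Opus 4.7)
The forward direction (``all instances'' implies ``finite instances'') is immediate, since restricting the outer universal quantifier to finite $\inst$ preserves the biconditional, so the plan is to prove the converse. Assuming that for every finite $\inst$ we have $R(\inst) \Leftrightarrow \tilde{Q}(\inst)$, where $\tilde{Q}(\inst)$ abbreviates ``every $\inst' \supseteq \inst$ with $\inst' \models \Sigma$ satisfies $\inst' \models Q$,'' I want to lift this equivalence to arbitrary (possibly infinite) $\inst$. The strategy is to show that \emph{both} sides are compact, in the sense that each holds of $\inst$ iff it holds of some finite subinstance $\inst_0 \subseteq \inst$, and then use the hypothesis to chain them together on the finite side.

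For the Datalog query $R$, compactness is standard: $R(\inst)$ holds iff some CQ approximation $R_n$ of $R$ has a homomorphism from its canonical database into $\inst$, and such a homomorphism uses only finitely many facts of $\inst$, which form the required $\inst_0$. For $\tilde{Q}$, I first note monotonicity: enlarging $\inst$ only shrinks the admissible extensions, so $\tilde{Q}(\inst_0) \Rightarrow \tilde{Q}(\inst)$ whenever $\inst_0 \subseteq \inst$. For the other direction, I use $\chase_\Sigma(\inst)$ as a universal model, defined even for infinite $\inst$ by the standard iterated construction; each generated fact has a finitely-branching, well-founded derivation tree rooted at $\inst$-facts, so any finite set of chase facts traces back to a finite $\inst_0 \subseteq \inst$. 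Since $Q$ is Datalog and hence preserved under homomorphisms, $\tilde{Q}(\inst)$ is equivalent to $\chase_\Sigma(\inst) \models Q$, which is witnessed by a homomorphism from some CQ approximation $Q_m$ into a finite $F \subseteq \chase_\Sigma(\inst)$. Taking $\inst_0$ to be the set of $\inst$-ancestors of $F$, the same $F$ sits inside $\chase_\Sigma(\inst_0)$, so $\chase_\Sigma(\inst_0) \models Q$, and hence $\tilde{Q}(\inst_0)$ follows from the universal property.

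Combining the two compactness statements with the hypothesis gives the chain
\[
R(\inst) \Leftrightarrow \exists \text{ finite } \inst_0 \subseteq \inst,\ R(\inst_0) \Leftrightarrow \exists \text{ finite } \inst_0 \subseteq \inst,\ \tilde{Q}(\inst_0) \Leftrightarrow \tilde{Q}(\inst),
\]
which finishes the argument. The one delicate point I expect to have to spell out carefully is the well-foundedness and finite-branching of chase derivations starting from an infinite $\inst$, which is what licenses the traceback from a finite set of chase facts to a finite subinstance of $\inst$. This is routine for standard chase variants (each chase step has a finite body, yielding a finitely-branching derivation tree, and stages are well-ordered by ordinals), but it must be stated explicitly to avoid confusion with transfinite derivations; the rest is bookkeeping.
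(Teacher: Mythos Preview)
Your proof is correct and follows essentially the same approach as the paper: both arguments use compactness of the Datalog query $R$, the chase characterization of $\tilde{Q}$ over (possibly infinite) instances, and a finite-traceback argument to pass from a match of $Q$ in $\chase_\Sigma(\inst)$ to a finite subinstance $\inst_0$. The paper presents this as a case split by contradiction rather than your direct compactness chain, but the ingredients and the delicate point (that the chase characterization extends to infinite $\inst$ and that finitely many chase facts depend on finitely many input facts) are identical.
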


\begin{proof}
Clearly if we have the equivalence for all instances, we have it for finite instances. Conversely, suppose we have
an infinite $\inst$ where it fails. One possibility for failure is that $R$ holds on $\inst$ but the right hand implication does not hold.
But then the is some finite $\inst_0$ subinstance of $\inst$ where $R$ holds, and the right hand implication follows,.
The other possibility for failure is that the implication on the right holds for some
infinite instance $\inst$, but $R$ fails to hold on $\inst$.  But then $\chase_\Sigma(\inst)$ satisfies $Q$, since the chase
characterization still holds for Datalog $Q$ and infinite instances. Some approximation $Q_n$ of $Q$ witnesses this, and
only a finite set of facts $\inst_0$ from $\inst$ where needed to generate the match of $Q_n$ in the chase.
Thus $\inst_0$  is a counterexample of the statement for finite instances.
\end{proof}

We will make use of this propositions in many of the arguments that follow, since we will apply the certain answer rewritings
within pipelines of transformations that produce infinite instances.

We now turn to proving the results on certain answer rewriting from Section \ref{sec:tools} in the body of the paper.

Recall the statement of Theorem \ref{thm:rewritefrontieronemdl}:

\medskip

CQs have MDL $\certainanswerrewriting$s over   frontier-one TGDs.

\medskip

\begin{proof}[Sketch]
The rewriting algorithm follows the approach  in Theorem 5.5 and 5.6 of \cite{bbc}, where the same result is shown for guarded and frontier-guarded
TGDs, with the resulting rewriting being in frontier-guarded  Datalog.

Let $\Sigma$ be the rules. We can assume that our Boolean query  $Q$ consists of the atomic $0$-ary predicate $\goal$: if it is not, we
add a new TGD to $\Sigma$ of the form $Q \rightarrow \goal$

For simplicity we deal with the case where there are no constants in the query or in the rules.
Let us refer to a fact over our initial instance $\inst$ as a ``ground fact''.
We can see, arguing from the chase, that for any $\inst$,
ground facts that are inferred by $\Sigma$ are ``unary'': that is,
they have only one element in them, like $U(c)$ or $R(c,c)$.

Say that an instance $\inst$ is \emph{fact saturated with respect to $\Sigma$} if every  ground fact 
that is derived from $\inst$ and $\Sigma$ is already in $\inst$.

In deriving a new ground fact, from $\Sigma$, we may make use of an unbounded number of
facts in $\inst$: think of a graph $G(x,y)$ and $\Sigma$ with reachability rules, 
of the form s $Reaches(x) \wedge G(x,y) \rightarrow Reaches(y)$. To derive $Reach(a)$ we may use many $G$ facts.
 However, to achieve fact saturation with respect to frontier-guarded rules,
it suffices to ensure saturation for small sets. In the above example, once we know that the facts on each pair of nodes
cannot generate any new facts, we know we are fact-saturated.
This is capture in
the ``bounded base lemma'', Lemma 5.8 of \cite{bbc}

\begin{lemma} \cite{bbc}
Letting $k$ be the maximal number of variables in a TGD of $\Sigma$,
and let $A$ be an instance such that for each subset $X$ of the active domain of $A$ with
$|X|  \leq k$, the induced substructure $A \upharpoonright$ is fact-saturated with respect to $\Sigma$
Then $A$ is fact-saturated with respect to $\Sigma$.
\end{lemma}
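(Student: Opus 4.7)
The plan is a proof by contrapositive: assuming $A$ is not fact-saturated with respect to $\Sigma$, I would produce some $X \subseteq \adom(A)$ with $|X| \leq k$ for which $A \upharpoonright X$ is not fact-saturated. Fix a ground fact $F = R(\vec c)$ with $A \wedge \Sigma \models F$ but $F \notin A$; since every predicate arity is bounded by the rule width $k$, the tuple $\vec c$ has at most $k$ entries.

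The first step is to invoke the standard tree-like structure of the chase $\chase_\Sigma(A)$ for frontier-guarded TGDs (see e.g.\ the Datalog$^{\pm}$ literature, \cite{datalogpmj}): $\chase_\Sigma(A)$ admits a tree decomposition $\TD$ with (i) bag-size at most $k$; (ii) root bags covering $A$; (iii) each non-root bag arising from a rule firing and consisting of the frontier image of that firing together with the freshly introduced nulls; and (iv) every fact, and every body-image of a trigger, contained in a single bag. Property (iv) is precisely what the frontier-guarded assumption delivers, since each firing's frontier lives in a single guard atom of the body. Because $F$ is ground, it fits in some bag $B$; set $X := B \cap \adom(A)$, so $\vec c \subseteq X$ and $|X| \leq k$. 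By the running-intersection property of $\TD$, every ground element appearing in a descendant of $B$ already lies in $X$.

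The remaining step is to show $F \in \chase_\Sigma(A \upharpoonright X)$, which combined with $F \notin A \upharpoonright X$ yields the desired contradiction. The idea is to replay, inside $\chase_\Sigma(A \upharpoonright X)$, the rule firings that produced the descendants of $B$ in $\chase_\Sigma(A)$: each such firing's body-image lives in a single bag of the subtree rooted at $B$, and by the previous observation its ground arguments all lie in $X$, so it can be mimicked once the sub-chase is seeded by $X$. The main obstacle is that $B$ need not be a root bag of $\TD$, so some of the nulls of $B$ are inherited from ancestors of $B$ and are not present in $\chase_\Sigma(A \upharpoonright X)$. I would resolve this by choosing $F$ to have a minimal-length chase derivation: in the minimal case the rule firing producing $F$ can be taken to fire from a root bag, since otherwise one could shift the derivation upward along $\TD$, contradicting minimality. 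An equivalent route is to prove by induction on derivation length the strengthened statement that every chase-derivable ground fact $F'$ lies in $\chase_\Sigma(A \upharpoonright X')$ for some $X' \subseteq \adom(A)$ of size at most $k$ containing the elements of $F'$; the inductive step uses the frontier-guard to control the growth of $X'$.
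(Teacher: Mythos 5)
The paper does not prove this lemma itself --- it is imported as Lemma~5.8 of \cite{bbc} --- so your argument has to stand on its own, and there is a genuine gap at exactly the point you flag as "the main obstacle." Your overall plan (contrapositive, tree-like chase, localize the derivation of a new ground fact to a small set, replay it in the induced substructure) is the right one, but both of your proposed ways of closing the gap fail. First, the claim that a minimal-length derivation of $F$ can be shifted so that the firing producing $F$ happens in a root bag is false: take $\Sigma = \{\, U(x) \to \exists y\, (E(x,y)\wedge V(y)),\ E(x,y)\wedge V(y)\to W(x)\,\}$ and $A=\{U(a)\}$; the ground fact $W(a)$ is only ever produced by a trigger whose body contains the null $y$, i.e.\ inside a non-root bag, and nothing can be shifted upward because the root contains no $E$- or $V$-facts. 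Second, the "strengthened statement" you propose to prove by induction --- that every chase-derivable ground fact lies in $\chase_\Sigma(A\upharpoonright X')$ for some $X'$ with $|X'|\le k$ --- is false as an unconditional claim: for the frontier-guarded rule $E(x,y)\wedge P(y)\to P(x)$ (so $k=2$) and $A$ a long $E$-chain with $P$ holding only at its far end, $P(a_1)$ is derivable from $A$ but from no $2$-element induced substructure. (That $A$ violates the lemma's hypothesis, so the lemma survives; but your induction never invokes the hypothesis, so it is trying to prove something false, and, as you yourself observe, the inductive step would have to merge facts certified by different small sets whose union exceeds $k$.)

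The repair is to make the induction run along the chase order and to use the saturation hypothesis at every step. Consider the \emph{first} new ground fact $F$ produced in a tree-like chase of $A$. At that moment every ground fact present in the root node is already in $A$, so the trigger that spawns the subtree in which $F$ is created has its body entirely inside $A$, over a set $X_0$ of at most $k$ elements. By frontier-guardedness only the (guarded) frontier of that trigger ever enters the subtree, and every fact subsequently propagated into it from the root is a fact of $A$ over those frontier elements, hence lies in $A\upharpoonright X_0$; the whole subtree, and with it $F$, can therefore be replayed in $\chase_\Sigma(A\upharpoonright X_0)$. Since the elements of $F$ lie in $X_0$ and $F\notin A$, this contradicts the assumed saturation of $A\upharpoonright X_0$. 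Two smaller slips to fix along the way: your decomposition properties (i) and (ii) are incompatible when $|\adom(A)|>k$ (the root bag holds all of $A$; only the non-root bags are small, so $X:=B\cap\adom(A)$ is not bounded by $k$ when $B$ is the root), and the firings you need to replay are those \emph{leading from the root of $B$'s subtree down to} $B$, not those producing the descendants of $B$.
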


We now create a Datalog program $P_\Sigma$ that infers all ground facts.  We include:
\begin{itemize}
\item intensional predicates $U_A$ for each atom with a  single free variable $x$,
along with ``copy rules'' of the
form $U_A(x) \datalogarrow A(x)$.  Similarly for every $0$-ary predicate in the signature.
\item all  Datalog rules over the signature with the extensional predicates and the intensional predicates above,
with at most $k$ atoms in the body and a unary or $0$-ary predicate $U_A$ in the head, such that the rules follow from $\Sigma$ and the copy
rules.  We can decide whether  such a rule is derived using a decision procedure for the appropriate guarded logic, for example
the guarded negation fragment of first order logic (GNFO).
\item $U_\goal$ as the goal predicate
\end{itemize}

Clearly, the program only derives $U_A(x)$ when $A(x)$ is derivable from $\Sigma$.
To show the converse, it suffice to show that for any instance $\inst$ after running the Datalog program until a fixpoint is reached, and
applying the ``reverse copy rule'' $U_A(x) \rightarrow A(x)$, the resulting instance
is fact-saturated with respect to $\Sigma$.
By the lemma, it suffices to show that for each induced subinstance $A_k={a_1 \ldots a_k}$
with domain of size $k$, $A_k$  is fact-saturated with respect to $\Sigma$ and the reverse rules. 
So fix $A_k$ and a fact $F$ that is derived from $A_k$
using $\Sigma$ and the reverse rules. Clearly, the reverse rules are only useful to derive $F$ at the end, since we can replace
any use of $A$ by $U_A$ in a derivation. 
 But $A_k \rightarrow F$ is a consequence of $\Sigma$. Thus $A_k \rightarrow U_F$ will be in $P_\Sigma$, and
thus $F$ will be derived from $P_\Sigma$ and the reverse rules.
We conclude $A_k$ is fact-saturated, and we are done.
\end{proof}

Recall the statement of Proposition  \ref{prop:fgdlcertain}:

\medskip

 For any set $\Sigma$ be a set of frontier-guarded  TGDs and any Boolean $\fgdatalog$
query $Q$ there is a Boolean $\fgdatalog$ query that is a certain answer rewriting of $Q$ $\wrt$ $\Sigma$.

\medskip

\begin{proof}
As mentioned earlier, the corresponding assertion for a CQ query is in \cite{bbc}.
We reduce to this case, by moving the rules of $\fgdatalog$ $Q$ into additional rules of $\Sigma$, arriving at 
set of rules $\Sigma'$,
and letting $Q'$ be just $\goal$.  Note that since the rules of $Q$ are frontier-guarded, the rules $\Sigma'$
are all frontier-guarded TGDs. We claim that for any instance $\inst$,

\begin{align*}
\inst \wedge \Sigma \models Q \leftrightarrow \inst \wedge \Sigma' \models \goal
\end{align*}
Where $\inst$ refers to the conjunction of facts in the instance.

The equivalence is a standard observation about entailment in Datalog and entailment in Horn clause logic.
Note that the right hand entailment asserts that  for any $\inst$, and an arbitrary superinstance of $\inst$ satisfying $\Sigma$ and
satisfying the rules of $Q$ as Horn clauses, $\goal$ holds.
The left hand entailment asserts that for every instance $\inst$ satisfying $\Sigma$, the instance $\inst'$
that is the least fixed point of the rules satisfies $\goal$.
Thus for the right to left direction, we just note that  $\inst'$ is a superinstance of a structure satisfying $\Sigma$ which satisfies the rules,
and thus the right hand entailment implies to conclude $\goal$.
For the other direction, suppose we have  a counterexample to the right hand entailment, consisting of $\inst$ satisfying $\Sigma$
and a superinstance $\inst^*$ satisfying the rules of $Q$, where $\neg \goal$ holds in $\inst^*$. Since $\inst^*$ is a fixpoint, setting
$\inst'$ to be the least fixed point of $\inst$ under the rules. Since we have only removed facts, $\neg \goal$ still holds.
But then the left is contradicted.
\end{proof}

\myeat{
\medskip

Recall the statement of Proposition  \ref{prop:mdltoextconn} 

\medskip

Suppose that $\Sigma= \Sigma_1 \cup \Sigma_2$
where $\Sigma_2$ consists of source to target full TGDs with the body of each TGD being a connected
CQm and $\Sigma_1$ linear source-to-target TGDs with target disjoint from sources of $\Sigma_2$.
Then MDL queries have EC-Datalog $\certainanswerrewriting$s over $\Sigma$.


The result is motivated by the case where $\Sigma_2$ are definition-to-view rules that populate
a view predicate with with a connected CQ view definition, while $\Sigma_1$ are rules that go in the other direction, from
a view definition to its definition, in a copy of the original  signature.

\begin{proof}
We first rewrite MDL $Q$ with respect to $\Sigma_1$
into a $\fgdl$ query $Q'$. This can be done using the standard
inverse-rules algorithm, which produces something in MDL, as shown in \cite{uspods20journal}. Then we rewrite $Q'$ with respect to
$\Sigma_{\mathsf{forward}}$ into a Datalog query $Q''$ by substituting the EDB predicates in $Q'$ by the corresponding rule
bodies.
From connectedness of the bodies of $\Sigma_2$, it is clear that $Q''$ is extensionally-connected.
\end{proof}
}

\subsection{Proof of Theorem \ref{thm:fcecdatalog}: Finite controllability of EC Datalog under Frontier One TGDs}

In this section we present the proof of Theorem \ref{thm:fcecdatalog}:

\medskip

 For the class of frontier-one TGDs $\Sigma$ and EC-Datalog queries $Q = (\Pi, \goal )$ entailment
is finitely controllable.

\medskip

For the remainder of this section, fix the database instance $\db$, the set of frontier-one TGDs $\rs$, and an EC Datalog query $\dl$. Let $\ch$ denote the chase of $\db$ by $\rs$, let $n$ denote the maximal size of any rule in both $\rs$ and $\dl$, and let $N = 4 \cdot n^2$.\footnote{The constant $4$ is chosen here for convenience and is not in any way considered to be a \emph{tight} value.}

In the upcoming section, we will introduce a few minor assumptions. Near the conclusion of the proof, we will discuss how to remove  these assumptions.

\begin{enumerate}
	\item The database instance $\db$ is a single unary atom. \label{assume:singleun}
	\item The EDB relational symbols are at most binary. \label{assume:binary}
	\item The rules of $\rs$ have exactly one frontier variable. \label{assume:uniquefront}
	\item No atoms of the shape $E(x,x)$ appear in $\db, \rs,$ and  $\dl$ \label{assume:noself}
	\item Heads of rules of $\rs$ are trees. \label{assume:treerule}
\end{enumerate}

\textbf{In this section, we shall disregard the direction of binary predicates whenever discussing graph-theoretic notions, such as distances, trees, or cycles.}

\begin{proposition}\label{prop:fc-chase-tree}
The chase of $\db$ by $\rs$ forms a regular tree.
\end{proposition}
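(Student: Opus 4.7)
The plan is to establish two claims in sequence: (i) the chase $\ch$ of $\db$ by $\rs$ has the shape of a rooted tree with root the unique element of $\db$, and (ii) $\ch$ has only finitely many subtrees up to isomorphism, which is precisely regularity.

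For (i), I would induct on the number of chase steps. The base case is immediate from Assumption \ref{assume:singleun}. For the inductive step, a rule $\tau = \lambda \to \exists \vec z\, \rho$ fires via a trigger mapping the unique frontier variable $x$ (Assumption \ref{assume:uniquefront}) to an element $u$; all fresh nulls $\vec z$ connect to the prior instance only through $u$. By Assumption \ref{assume:treerule} the head $\rho$ is itself a tree, Assumption \ref{assume:noself} blocks self-loops at $u$, and because $x$ is the only head variable shared with the body, no new fact can arise strictly between two pre-existing elements. Hence the new structure hangs off $u$ as a subtree, preserving the overall tree shape.

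For (ii), I would construct a finite ``type automaton'' whose unfolding is $\ch$. Define the \emph{local type} of an element $u$ to record (a) the unary EDB atoms holding at $u$, (b) the binary EDB atoms on the edge from $u$ to its parent (empty at the root), and (c) a finite global flag listing which unary/disconnected body conditions are witnessed somewhere in $\ch$. Because the EDB signature and the flag set are finite, only finitely many local types can arise. The key lemma is that two elements with the same local type generate isomorphic subtrees in $\ch$. I would prove it coinductively: define a transition $\delta$ that, given a type $t$, produces the multiset of child types obtained by exhaustively firing every rule whose frontier can map to a type-$t$ element; by Assumption \ref{assume:binary}, every body atom involving a non-frontier variable either places that variable at a neighbor of $u$ in the tree (handled by parts (a)--(b)) or imposes a unary/disconnected condition (handled by part (c)). Thus $\delta$ is well defined on types, and $\ch$ is the unfolding of the finite graph $(\mathrm{Types}, \delta)$, witnessing regularity.

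The main obstacle I foresee is the simultaneous fixed point needed for the global flag (c): which conditions are witnessed depends on the set of types appearing in $\ch$, which in turn depends on the flag. I would resolve this by computing the flag together with the reachable type set as the least fixed point of a monotone operator over the finite powerset of candidate types, and then verifying by a straightforward invariant that the resulting type-graph unfolds to the actual chase produced by the oblivious procedure. Once this is done, regularity of $\ch$ follows immediately from the finiteness of the type space.
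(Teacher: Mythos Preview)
The paper disposes of this proposition in one line (``From Assumptions~1 to~5''), so your part~(i) is in the right spirit and is correct. Part~(ii), however, has a real gap. Your claim that ``every body atom involving a non-frontier variable either places that variable at a neighbor of $u$ \ldots\ or imposes a unary/disconnected condition'' is false: Assumptions~1--5 constrain rule \emph{heads} to be trees but say nothing about the shape of rule bodies, so a body can force matches at arbitrary bounded distance from the frontier. Concretely, take the rules $A(x)\to\exists y\,(R(x,y)\wedge A(y))$ and $R(w,z)\wedge R(z,x)\to\exists y\,S(x,y)$ and start from $A(a)$. The elements $b_1$ and $b_2$ at depths~$1$ and~$2$ both carry unary type $\{A\}$ and an incoming $R$-edge from the parent, hence identical local type in your sense; yet the second rule fires at $b_2$ (witnessed by $R(a,b_1)\wedge R(b_1,b_2)$) and never at $b_1$. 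So $b_1$ has one child while $b_2$ has two, your transition $\delta$ is not well defined on these types, and the coinductive lemma fails.

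The repair is to let the type record enough of the surrounding structure to determine which bodies match --- e.g.\ the set of unary CQs of size at most $n$ satisfied at the element, or the isomorphism type of a bounded-radius neighbourhood. This is precisely the information the paper itself exploits later via the ``$N$-perspective'' when constructing the finite model $M$. Alternatively, you can sidestep the explicit type computation and invoke the fact that the tree-like chase under $\fgtgd$s is captured by a tree automaton (Proposition~\ref{prop:chaseapprox}), from which regularity follows directly.
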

\begin{proof}
From Assumptions 1 to 5.
\end{proof}

\begin{definition}[Unabridged]
Given an instance $\inst$ and a cycle $C$ in $\inst$ we say that $C$ is {\em unabridged} if for every pair of elements 
$t,t'$ of $C$ the shortest path between them in $\inst$ goes through $C$. 
\end{definition}
Note that in the above we treat instances as undirected graphs.

\begin{definition}[Trimming]
Consider a tree $T$ with a node $v$. The process of removing all the children of node $v$ is referred to as {\em trimming} at node $v$. Note, that trimming at leaves is allowed, but it has no effect.
\end{definition}

\begin{definition}[Unfolding tree]
For a Datalog program $P$ that lacks repeated variables in its head, the \emph{unfolding tree} is defined as any tree derived from a CQ-approximation tree $T$ of $P$ through the process of trimming at one or more nodes of $T$. 

Then, an \emph{unfolding} of $P$ is a conjunction of labels from any unfolding tree of $P$. Note that the resulting CQ may include IDB predicates. 
\end{definition}

\begin{definition}[Succession]
We say that an unfolding tree $T$ is a \emph{direct successor} of an unfolding $T'$ if $T'$ can be obtained from $T$ by trimming it at a node with only leaves as its children. We define \emph{succession} as the transitive closure of direct succession. These notions naturally extend to unfoldings.
\end{definition}

\begin{proposition}\label{prop:fc-edb-idb-distance-preserved}
Let $U$ be an unfolding of $\dl$. Let $x$ and $y$ be distinct variables of $U$.
Let $p$ be a path in $U$ using binary EDB predicates,  
and suppose $p$ connects $x$ and $y$. Then
\begin{align*}\tag{$\clubsuit$} \label{eq:appfcecd-prop1}
|p| \leq n \cdot dist_{U}(x,y),\;
\end{align*}
\end{proposition}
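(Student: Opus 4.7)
The plan is to prove the bound by induction on $d := dist_U(x, y)$, taking $p$ without loss of generality to be a shortest EDB path between $x$ and $y$ in $U$; the substantive content of the claim is that whenever any EDB path from $x$ to $y$ exists in $U$, there is one of length at most $n \cdot d$.

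For the base case $d = 1$, the variables $x$ and $y$ co-occur in some atom $A$ of $U$. If $A$ is a binary EDB atom, then $p$ may be chosen of length $1 \leq n$ and we are done. The substantive subcase is when $A = I(\vec w)$ is an IDB atom, which can only arise as the label of a trimmed leaf $\ell$ of the unfolding tree $T$. Here I invoke the EC hypothesis on the rule defining $I$ together with the assumed existence of $p$ in $U$ to locate an ancestor $\ell^{\star}$ of $\ell$ in $T$ whose rule's EC-witness atoms are still present as non-trimmed leaves of $T$, hence lie in $U$. Since every rule has at most $n$ body atoms, the EC-witnessed EDB path among the head variables at $\ell^{\star}$ has length at most $n$, yielding an EDB path of length at most $n$ between $x$ and $y$ in $U$.

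For the inductive step $d > 1$, I pick a shortest Gaifman path $x = v_0, v_1, \ldots, v_d = y$ in $U$. Each consecutive pair $(v_i, v_{i+1})$ satisfies $dist_U(v_i, v_{i+1}) = 1$, so the base case supplies an EDB subpath of length at most $n$ between them. Concatenating these $d$ subpaths yields an EDB path from $x$ to $y$ in $U$ of length at most $n \cdot d = n \cdot dist_U(x,y)$, as required.

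The main obstacle lies in the base case when $x$ and $y$ happen to be existential body variables (rather than head variables) of the trimmed rule; the EC property applied to the defining rule of $I$ then does not directly furnish a connection between $x$ and $y$. The resolution is to climb the unfolding tree until we reach a common ancestor rule whose head variables are EDB-connected to both $x$ and to $y$ and to apply EC at that ancestor, controlling the climb by the maximum rule size $n$. A delicate point is that the EDB atoms supplied by EC at the chosen ancestor must be non-trimmed leaves of $T$, so as to actually appear in $U$; the hypothesis that an EDB path $p$ from $x$ to $y$ already exists in $U$ is precisely what forces enough untrimmed EDB structure to be available for the construction.
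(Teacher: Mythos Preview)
Your approach (induction on $d = dist_U(x,y)$) is genuinely different from the paper's, but it has a real gap that breaks the argument.

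The fatal problem is in the inductive step. You pick a shortest Gaifman path $x = v_0, \ldots, v_d = y$ and try to apply the base case to each consecutive pair. But your own reformulation of the base case is conditional on an EDB path from $v_i$ to $v_{i+1}$ existing, and nothing guarantees this. Concretely, take the EC rule $J(a) \leftarrow R(a,b),\, I_1(b,c),\, I_2(c,d),\, R(d,e),\, R(e,a)$ (single head variable $a$, so EC is trivial), together with rules $I_1(u,v)\leftarrow R(u,v)$ and $I_2(u,v)\leftarrow R(u,v)$, and trim at the $I_1$ and $I_2$ nodes. The Gaifman graph of the resulting unfolding is the $5$-cycle $a\text{--}b\text{--}c\text{--}d\text{--}e\text{--}a$. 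Here $b$ and $d$ are EDB-connected (via $b\text{--}a\text{--}e\text{--}d$) and $dist_U(b,d)=2$, but the \emph{unique} shortest Gaifman path is $b,c,d$, and $c$ lies in no EDB atom whatsoever. So your base case cannot be invoked for $(b,c)$, and the concatenation never gets off the ground.

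The base case itself is also shaky. First, IDB atoms label all internal nodes of the unfolding tree, not only trimmed leaves (an unfolding is the conjunction of \emph{all} labels); the internal-node case is easy via EC on the rule expanded below, but you did not treat it. Second, for a trimmed leaf $I(x,y)$, your ``climbing'' argument does not do what you need: EC at an ancestor connects that ancestor's \emph{head} variables, while $x,y$ may be existential body variables at the parent that never propagate upward, so no ancestor's EC property speaks about them.

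The paper avoids all this by inducting on the succession order of unfoldings rather than on $d$: one writes $U = U' \wedge \alpha$ where $\alpha$ is a single rule body attached at a leaf IDB of $U'$. The interface between $U'$ and $\alpha$ is precisely the head-variable tuple of the expanded IDB, and EC guarantees these are EDB-connected \emph{inside} $\alpha$. Any path in $U$ can then be decomposed at this interface, and one combines the inductive hypothesis on $U'$ with the trivial bound ($\heartsuit$) that a single rule body has at most $n$ atoms. This structural decomposition is exactly what your distance-based induction lacks.
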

\begin{proof}
First, note the following:
\begin{itemize}
	\item[$(\diamondsuit)$] For any three variables $x$, $y$, and $z$, and two paths $p$ (from $x$ to $y$) and $p'$ (from $y$ to $z$), if both $x, y, p$ and $y, z, p'$ satisfy \cref{eq:appfcecd-prop1}, then $x, y$, and the concatenation of $p$ and $p'$ also satisfy \cref{eq:appfcecd-prop1}.
	\item[$(\heartsuit)$] Bodies of rules of $\dl$ satisfy \cref{eq:appfcecd-prop1}.
\end{itemize}

We will prove \cref{prop:fc-edb-idb-distance-preserved} by a simple inductive argument.

(base) Let $U$ be a single-atom unfolding - it trivially satisfies \cref{eq:appfcecd-prop1}.

(ind. step) Let $U'$ be any direct predecessor of $U$ and let $U \wedge \alpha = U'$ where $\alpha$ is a conjunction of atoms. Let $\vec{x}$ be the tuple of variables shared between $U$ and $U'$. Note that $\alpha$ satisfies \cref{eq:appfcecd-prop1} from $(\heartsuit)$ and $U'$ satisfies \cref{eq:appfcecd-prop1} by ind. hypothesis.

1) If $\vec{x}$ is empty then $U$ trivially satisfies \cref{eq:appfcecd-prop1}.

2) If $\vec{x}$ consists of one or two variables $x$ then $U$ satisfies \cref{eq:appfcecd-prop1} from $(\diamondsuit)$

\end{proof}

\begin{lemma}\label{lem:fc-no-unabridged-cycles}
Every unfolding of $\dl$ contains no unabridged cycles of length greater than or equal to $N$ when restricted to EDB relations.
\end{lemma}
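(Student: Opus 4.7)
The plan is to argue by contradiction: suppose some unfolding $U$ of $\dl$ has an unabridged EDB cycle $C$ of length $k \geq N = 4n^{2}$, and derive a contradiction from Proposition~\ref{prop:fc-edb-idb-distance-preserved}, the rule-chunk structure of $U$, and the EC hypothesis. First I pick two diametrically opposite cycle vertices $x,y \in C$, so each cycle-arc from $x$ to $y$ has length $\lfloor k/2 \rfloor \geq 2n^{2}$. By unabridgedness this is also the EDB Gaifman distance from $x$ to $y$ in $U$; Proposition~\ref{prop:fc-edb-idb-distance-preserved} then forces the full Gaifman distance $\mathrm{dist}_U(x,y)$ in $U$ to be at least $k/(2n) \geq 2n$.

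The main structural fact I would exploit is that $U$ decomposes into rule-expansion \emph{chunks} organized in a tree $Q$: each chunk is one rule body together with its head IDB atom, and adjacent chunks are glued along the variable set of their shared interface IDB atom. By the EC hypothesis the head variables of each chunk are pairwise EDB-connected inside the chunk at distance $\leq n$. Unabridgedness of $C$ forces each of its two arcs from $x$ to $y$ to trace the unique tree-path in $Q$ from the chunk $K_x$ of $x$ to the chunk $K_y$ of $y$ without off-path detours (any detour could be excised to shorten the arc); and simplicity of $C$ forces that inside every chunk on this tree-path the two arcs trace vertex-disjoint EDB subpaths, entering and exiting through distinct interface heads. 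Chunk-size bounds each such subpath by $n$, giving $k \leq 2nt$ with $t$ the length of the chunk-tree-path, so $t \geq 2n$.

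To close the contradiction I would use EC once more: inside each interior chunk the EC hypothesis produces a short EDB path (of length $\leq n$) between the two arc-entry heads used by the two cycle-arcs, which together with an appropriate cycle-arc portion yields a chord of $C$. For $C$ to remain unabridged, this chord cannot shortcut the cycle, which forces the matching cycle-arc segment to be of length $\leq n$ as well; iterating this chord-shortening observation across all interior chunks, combined with the earlier lower bound on $\mathrm{dist}_U(x,y)$, collapses the total cycle length $k$ to a polynomial in $n$ strictly smaller than $N = 4n^{2}$, contradicting $k \geq N$. The main obstacle, and what I expect to be the most delicate step, is verifying that the EC-chord produced inside each chunk is vertex-disjoint from the cycle-arc portions it is supposed to shortcut, so that it supplies a genuine chord; this requires careful use of the tree-like locality of chunks, the vertex-disjointness of the two cycle-arcs away from $x$ and $y$, and the precise form of the EC hypothesis on each rule body.
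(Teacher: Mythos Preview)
Your plan diverges substantially from the paper's, and the closing step is not only delicate for the reason you flag; it is also unclear that it closes at all. From each interior chord you can conclude that \emph{one} of the two $C$-arcs between the chord's endpoints has length $\leq n$, i.e., those endpoints lie near $x$ or near $y$ along $C$. But converting these local facts into a global bound $k < N$ still requires that the interface points are visited in order along each arc and that each arc crosses every interface exactly once. Your ``excise the detour'' justification does not directly give this: excising a segment of an arc produces a shorter path, not a path between two $C$-vertices that avoids $C$, so unabridgedness of $C$ is not immediately violated. How the lower bound $\mathrm{dist}_U(x,y)\geq 2n$ is meant to combine with the chord bounds to finish is also left vague.

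The paper avoids all of this with a minimal-counterexample argument. It takes $U$ to be the \emph{earliest} unfolding (in the succession order) containing a long unabridged EDB cycle, so any direct predecessor $U'$ with $U = U' \wedge \alpha$, $|\alpha|\leq n$, has none. Then $C$ must use an EDB atom of $\alpha$, hence passes through two distinct variables $x,y$ shared between $U'$ and $\alpha$. The decisive gain is that these $x,y$ are \emph{close} in $U$ by construction: $\mathrm{dist}_U(x,y)\leq |\alpha|\leq n$. The arc $p$ of $C$ lying in $U'$ has $|p|\geq N-|\alpha|$, and one application of Proposition~\ref{prop:fc-edb-idb-distance-preserved} yields $4n^2 - n \leq |p| \leq n\cdot\mathrm{dist}_U(x,y)\leq n^2$, a contradiction. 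The contrast is entirely in the choice of $x,y$: you choose them far apart on $C$ and must then relate cycle geometry to the full chunk-tree structure; the paper chooses them as the frontier variables of the single last rule expansion, so they are automatically close in $U$ and the proposition finishes in one line.
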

\begin{proof}
We will use \cref{prop:fc-edb-idb-distance-preserved} to prove the lemma. Assuming, for the sake of contradiction, that the statement above does not hold, let $U$ be the earliest counterexample --- meaning no unfolding that is a predecessor of $U$ is a counterexample.

Let $U'$ be any direct predecessor of $U$ and let $U' \wedge \alpha = U$ where $\alpha$ is a conjunction of atoms.
Note, as $U'$ is not a counterexample, it contains no large unabridged cycle over EDB relations. However, $U$ does contain such a cycle $C$. Thus, there exist two distinct variables $x$ and $y$ shared between $U'$ and $\alpha$ such that $C$ passes through $x$ and $y$ in $U$. Let $p$ be the path over EDB relations connecting $x$ and $y$ in $U'$. Note that $|p| \geq N - |\alpha|$ and $dist_{U}(x,y) \leq |\alpha| \leq n$. Together with \cref{prop:fc-edb-idb-distance-preserved} we get a contradiction: $4 \cdot n^2 - n \leq |p| \leq n \cdot dist_{U}(x,y) \leq n^2$.
\end{proof}

In order to prove Theorem \ref{thm:fcecdatalog} we shall show the following:
\begin{lemma}\label{lem:fc-main}
If $\ch$ does not entail $\dl$ then there exists a finite model $M$ of $\rs$ that does not entail $\dl$ as well.
\end{lemma}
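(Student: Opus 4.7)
The plan is to use the quotient construction $M = \ch/{\sim}$ outlined in the sketch, where $u \sim v$ iff $u$ and $v$ have the same \emph{type}, i.e.\ the same depth modulo $N$ and the same $N$-perspective up to isomorphism, and then verify the three required properties: $M$ is finite and contains $\db$, $M \models \rs$, and $M \not\models \dl$. Finiteness follows because, by \cref{prop:fc-chase-tree}, $\ch$ is a regular tree, so only finitely many $N$-perspectives occur up to isomorphism, giving finitely many types. The inclusion $\db \subseteq M$ is immediate since the quotient map is a homomorphism and preserves the unique unary atom of $\db$.

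For $M \models \rs$ I would establish a small lifting principle: any conjunctive match in $M$ of diameter at most $n$ centred at an element $a$ can be lifted to a match in $\ch$ at some preimage $a_0$ of $a$. The key point is that $n \le N/4$, so the $N$-perspective stored in the type of $a$ contains enough of the tree structure around any preimage to realise the body. Given a frontier-one rule $\phi(x,\bar y) \to \exists \bar z\, \psi(x, \bar z)$ and a body match in $M$ sending $x$ to $a$, we obtain a preimage $a_0 \in \ch$ where $\ch \models \exists \bar y\, \phi(a_0, \bar y)$; since $\ch$ is a chase and the rule is frontier-one, the chase has already supplied witnesses for $\psi$ at $a_0$, and these project through the quotient map to witnesses for $\psi$ at $a$ in $M$.

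The main obstacle is showing $M \not\models \dl$. Suppose for contradiction that $M \models \dl$. Then some unfolding $U$ of $\dl$ admits a homomorphism $h \colon U \to M$. By \cref{lem:fc-no-unabridged-cycles}, the EDB-reduct of $U$ has no unabridged cycle of length at least $N$, and by extensional connectedness of $\dl$, the variables of $U$ form a connected graph via EDB atoms. I would lift $h$ to $\tilde h \colon U \to \ch$ by processing the variables of $U$ in a BFS order along EDB edges: pick an arbitrary preimage of the image of a starting variable, and then at each step use the $N$-perspective stored in the type of the current frontier preimage to choose a coherent preimage for the next variable. Whenever an EDB cycle in $U$ closes during this process, the cycle has length less than $N$ by \cref{lem:fc-no-unabridged-cycles}, so it lies entirely within a single $N$-perspective already consulted, and the lift can be closed consistently. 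This produces $\ch \models U$, hence $\ch \models \dl$, contradicting the hypothesis.

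The delicate step is the cycle-closing argument in the last paragraph: one must verify that having the same type at two endpoints of a short EDB path in $U$ forces the two candidate preimages in $\ch$ to agree after transport through the $N$-perspective, which is exactly where the choice $N = 4n^2$ and \cref{lem:fc-no-unabridged-cycles} do the real work.
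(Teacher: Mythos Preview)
Your overall strategy coincides with the paper's: the same quotient construction $M = \ch/{\sim}$, the same neighbourhood-lifting argument for $M \models \rs$, and the same contradiction scheme for $M \not\models \dl$ (lift a witnessing unfolding $U \to M$ back to $\ch$, using \cref{lem:fc-no-unabridged-cycles}). Where you diverge is in how the lift is established. The paper does not attempt a direct BFS lift of $h$; instead it proves an intermediate structural fact about $M$, namely \cref{prop:fc-on-unabridged-cycles}: $M$ contains no unabridged cycle of length smaller than $N$, and hence any instance that maps to $M$ but not to $\ch$ must contain an unabridged cycle of length at least $N$. Applying this to $U$ and invoking \cref{lem:fc-no-unabridged-cycles} gives the contradiction in one line.

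Your BFS argument, by contrast, has a genuine gap precisely where you flag it as ``delicate.'' The sentence ``the cycle has length less than $N$ by \cref{lem:fc-no-unabridged-cycles}'' is not justified: that lemma bounds only \emph{unabridged} cycles in $U$, whereas the fundamental cycle that closes during a BFS traversal need not be unabridged. A long cycle in $U$ may well be bridged by shortcuts elsewhere, so \cref{lem:fc-no-unabridged-cycles} is silent about its length, and your ``lies entirely within a single $N$-perspective'' step does not go through for such a cycle. Repairing this essentially forces you to analyse short cycles in $M$ rather than in $U$, which is exactly the paper's route via \cref{prop:fc-on-unabridged-cycles}. So the proposal is on the right track, but the cycle-closing step would need to be replaced by (or reduced to) the paper's structural claim about $M$.
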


The proof of the above lemma is the heart of the argument, and will require some machinery.

\begin{definition}[Ancestor]
Given an infinite tree $\mathcal{T}$, a natural number $m$, and a node $u$ of $\mathcal{T}$, we define the \emph{$m$-ancestor} of $u$ as the ancestor of $u$ at a distance of $m$, if it exists; otherwise, the $m$-ancestor of $u$ is the root of $T.$.
\end{definition}

\begin{definition}[Perspective]
Given a natural number $m$ and a node $u$ of an infinite tree $\mathcal{T}$, we define the \emph{$m$-perspective} of $u$ as the pair $\pair{T', u}$ where $T'$ is the subtree of $T$ that is rooted at the $m$-parent of $u$. We consider $m$-perspectives up to isomorphism.
\end{definition}

Let $type(u)$, for a term $u$ in $\ch$, consist of two values:

\begin{itemize}
	\item The depth of $u$ in $\ch$ modulo $N$.
	\item The $N$-perspective of $u$. Note that \cref{prop:fc-chase-tree} indicates there is only a finite number of such, keeping in mind that we count only up to isomorphism.
\end{itemize}
Define $M$ as a structure that is a quotient of $\ch$ using the ``is of the same type'' relation, where $type$ is defined as above.

\begin{proposition}\label{prop:m-is-a-model}
$M$ is a model of $\rs$.
\end{proposition}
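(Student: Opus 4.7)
The plan is to show that for every frontier-one TGD $\tau = \forall x\vec{z}\,\lambda(x,\vec{z}) \to \exists \vec{y}\,\rho(x,\vec{y})$ in $\rs$ and every trigger $h : \lambda \to M$, the head $\rho$ is satisfied at $u' := h(x)$ in $M$. The strategy is to lift $h$ to a trigger $h' : \lambda \to \ch$ with $h'(x) = u$ for some representative $u \in \pi^{-1}(u')$, where $\pi : \ch \to M$ denotes the quotient map. Since $\pi$ is a surjective homomorphism by construction and $\ch$ already saturates $\rs$, the chase produces head witnesses $\vec{v}$ in $\ch$ for $\rho(u,\vec{y})$, and $\pi(\vec{v})$ serve as head witnesses for $\rho(u', \vec{y})$ in $M$.

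The crux is the lifting step, where the definition of $type(u)$ becomes essential. The body $\lambda$ has at most $n$ atoms, so every variable of $\lambda$ is connected to $x$ by body paths of length at most $n$, and thus the trigger $h$ is determined by the structure of $M$ inside a radius-$n$ ball around $u'$. Meanwhile $type(u)$ records the $N$-perspective of $u$ for $N = 4n^2$, which pins down up to isomorphism the entire depth-$N$ subtree of $\ch$ rooted at the $N$-ancestor of $u$, together with all labels and the orientations of binary atoms (these orientations being determined in $\ch$ by which rule of $\rs$ created each atom). I plan to build $h'$ by walking along a spanning tree of the Gaifman graph of $\lambda$ starting from $x \mapsto u$: for each atom $R(v_1,\ldots,v_k)$ to be lifted, the quotient definition furnishes a chase-witness whose $\pi$-image realizes the atom, and sameness of type between that witness and the committed representative of some shared variable permits transporting the witness via the $N$-perspective isomorphism into the ball around $u$, aligning it with the already-committed representatives.

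The main obstacle is consistency when $\lambda$ contains shared variables or cycles, where independent transports could in principle propose different chase-representatives for the same body variable. I plan to resolve this by exploiting that an $N$-perspective isomorphism preserves the full atomic structure of the tree within its radius, so transports performed from different starting atoms agree on shared variables once committed, and any body atom closing a cycle between committed representatives in $M$ is automatically realized in $\ch$ between those same representatives. The generous margin $N - n \gg 0$ ensures the entire construction fits inside a single $N$-perspective. Moreover, the depth-mod-$N$ component of $type$ severely restricts short cycles in the Gaifman graph of $M$, by a parity analysis in the spirit of \cref{lem:fc-no-unabridged-cycles}: any Gaifman cycle in $M$ must exhibit a net-zero depth change modulo $N$, which, combined with the fixed orientation assignment inherited from $\rs$, rules out the spurious cyclic body matches that would fail to lift. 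With the consistent lift $h'$ in hand, the final push-down through $\pi$ is immediate.
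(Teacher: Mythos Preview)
Your high-level strategy---lift the trigger to $\ch$, invoke $\ch \models \rs$, and push the head witnesses down via the quotient map---is exactly the paper's argument, which compresses the lifting step into the assertion that ``$N$-distance neighbourhoods before and after identification are homomorphically equivalent, so unary CQs of size $\le n$ are preserved.''

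Your explicit atom-by-atom construction, however, has a gap. Each transport step assumes that the committed representative $h'(z_1)$ and the fresh chase-witness endpoint $a_1$ (chosen with $[a_1]=h(z_1)$) have the same type, so that an $N$-perspective isomorphism between them exists. But the \emph{previous} transport need not preserve types. Concretely, suppose the first spanning-tree atom is lifted via $\psi:(T'_{u'},u')\to(T'_u,u)$ and goes ``upward,'' committing $h'(z_1)=\psi(\mathrm{parent}(u'))=\mathrm{parent}(u)$. The $N$-perspective of $\mathrm{parent}(u)$ is rooted one level \emph{above} the $N$-ancestor of $u$, i.e.\ outside the domain of $\psi$, so $type(\mathrm{parent}(u))$ can differ from $type(\mathrm{parent}(u'))$; an easy instance is when the $N$-ancestors of $u$ and $u'$ have non-isomorphic siblings. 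Once $[h'(z_1)]\ne h(z_1)$, the next witness---chosen with $[a_1]=h(z_1)$---has a different type from the committed $h'(z_1)$, and no perspective isomorphism is available. Your downstream claim that ``any body atom closing a cycle is automatically realized in $\ch$ between those same representatives'' then has no footing: the committed representatives need not $\pi$-project back to $h$ at all. The paper avoids this by not committing to your construction; it simply asserts the neighbourhood equivalence. Making that rigorous is where the real work lies, and it is more naturally done via the absence of short cycles in $M$ (in the spirit of the proof of \cref{prop:fc-on-unabridged-cycles}): if every size-$\le n$ match in $M$ has a tree-shaped image, one can lift the image rather than lift $h$.
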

\begin{proof}
Note that $N$-distance neighborhoods of elements of $\ch$ before and after identification are homomorphically equivalent to each other. Thus for any element of $\ch$ the satisfaction of unary CQs of size smaller than $N$ is identical before and after taking the quotient. Thus the result follows from the fact that $\rs$ is frontier-one, as trigger activation relies on satisfaction of unary CQs of size no greater than $n$. 
\end{proof}

\begin{proposition}\label{prop:fc-on-unabridged-cycles}
Any instance that can be mapped to $M$ but not to $\ch$ contains an unabridged cycle of length at least $N$.	
\end{proposition}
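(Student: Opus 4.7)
The plan is to prove the contrapositive: assuming $I$ admits a homomorphism $h\colon I \to M$ and contains no unabridged cycle of length at least $N$, I will construct a lift $h'\colon I \to \ch$ that composes with the quotient $\ch \to M$ to give $h$. Reducing to a single connected component of $I$, I fix a root $r \in I$, a preimage $r' \in \ch$ of $h(r)$, and a BFS spanning tree $T$ rooted at $r$; then I propagate a candidate lift $h'$ along $T$, picking at each tree edge $(x,y)$ a $\ch$-neighbor of $h'(x)$ projecting to $h(y)$. Such a neighbor exists because same-type elements of $\ch$ have isomorphic $N$-perspectives and hence matching local branching structure.

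The tentative map $h'$ is a homomorphism iff, for every non-tree edge $e = (x,y)$ of $I$, the pair $(h'(x), h'(y))$ is an edge of $\ch$. For such an $e$, let $C_e$ be the fundamental cycle of $e$ in $T$, mapping by $h$ to a closed walk $\gamma_e$ in $M$. The crux is a \emph{lifting lemma}: every closed walk in $M$ of length strictly less than $N$ can be lifted, from any chosen preimage of its base point, to a closed walk in $\ch$. Granted this lemma, an obstruction cycle $C_e$ must satisfy $|C_e| \geq N$; and a chord-removal argument then upgrades it to an unabridged obstruction of length at least $N$. Indeed, if $C_e$ has a chord splitting it into two strictly shorter subcycles $C_1, C_2$, then at least one of them must itself fail to lift, for otherwise their two closed lifts agree at the shared chord endpoint and glue to a closed lift of $\gamma_e$; we replace $C_e$ by that shorter obstruction and iterate. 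Since cycle length strictly decreases under each replacement, the procedure terminates in an unabridged obstruction cycle of $I$ of length at least $N$, contradicting the hypothesis.

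The lifting lemma is the main obstacle, and I expect to prove it by exploiting both components of the $type$ invariant. The depth-mod-$N$ component endows every edge of $M$ with a well-defined direction (child vs.\ parent), since the endpoint types fix $d_b - d_a \pmod N$ to lie in $\{-1,+1\}$ whenever $N \geq 3$. Consequently, for a closed walk of length $\ell < N$ in $M$ the signed depth increment is simultaneously $\equiv 0 \pmod N$ and bounded in $[-\ell,\ell]$, hence equals $0$; in particular $M$ admits no short odd closed walks. The $N$-perspective component then allows the lifted walk to be closed: parent-steps in $\ch$ are forced, whereas each child-step has alternative realizations, and because the whole lifted walk stays inside a ball of radius less than $N$ around $r'$---a region contained in a single $N$-perspective subtree where the structural correspondence between $\ch$ and $M$ is an isomorphism on local branching---the child-steps can be chosen to trace back through the subtree so that the walk returns to $r'$. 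Once this lemma is established, the chord-removal reduction and the gluing of subcycle lifts along a shared vertex are straightforward combinatorics and require no further input beyond the explicit description of $type$.
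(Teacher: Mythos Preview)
Your approach has a genuine gap in the link between the lifting lemma and the specific lift $h'$ you construct. You call a non-tree edge $e=(x,y)$ an obstruction when $(h'(x),h'(y))$ is not an edge of $\ch$, and then invoke the lifting lemma to conclude $|C_e|\geq N$. But the lifting lemma only asserts that \emph{some} closed lift of $\gamma_e$ exists; it does not say that \emph{every} lift closes, and in particular it says nothing about the lift you already fixed along the tree path $P_{xy}$. In the tree $\ch$, down-steps have multiple realizations, so two lifts of the same closed walk from the same basepoint can end at different vertices (e.g.\ if the walk goes up to a node with two same-type children, the return down-step can pick either). Hence $|C_e|<N$ does not rule out $e$ being an obstruction for your particular $h'$, and the deduction ``obstruction $\Rightarrow |C_e|\geq N$'' fails. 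The same defect bites in the chord-removal step: the two subcycles $C_1,C_2$ share \emph{two} vertices (the chord endpoints), and gluing their closed lifts requires agreement at both, which the lifting lemma does not provide.

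There is also a prior gap: propagating $h'$ along the spanning tree need not succeed on up-steps. The $N$-perspective of $u$ determines the types of $u$'s children (their $N$-ancestors lie inside $u$'s perspective subtree), so down-steps always lift; but it does \emph{not} determine the type of $u$'s parent, whose own $N$-perspective is rooted one level above and includes information outside $u$'s perspective. Thus two same-type elements of $\ch$ can have parents of different types, and an edge $type(u)-m$ in $M$ witnessed by an up-step elsewhere may be unrealizable from $u$. The paper avoids both issues by taking a different route: rather than constructing a lift of $I$, it proves a structural fact about $M$ itself---$M$ has no unabridged cycle of length below $N$---by lifting a putative short cycle of $M$ to a sequence of simple paths in $\ch$, reducing to a single path whose endpoints have equal type, and deriving a contradiction from the depth-mod-$N$ and $N$-perspective invariants. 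The passage from this claim about $M$ to the statement about arbitrary $I$ is then handled separately (and tersely), but the key work is done inside $M$, not via a spanning-tree lift of $I$.
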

\begin{proof}

\textit{Claim: $M$ does not contain an unabridged cycle of length smaller than $N$.}

Assume, towards contradiction, that $M$ contains an unabridged cycle of length smaller than $N$. Let $C$ be the shortest such cycle with a length of $m$, and let $\mathcal{P}$ be a sequence of simple paths in $\ch$ that demonstrates the existence of $C$. Note that each vertex appears only once among the paths in $\mathcal{P}$.

Assume that $\mathcal{P}$ contains a single element, $P$. Let $u$ and $u'$ be, respectively, the first and last vertices of $P$. Note that, as $P$ is simple and shorter than $N$, and the depth of $u$ and $u'$ modulo $N$ is equal, there exists another vertex, $v$, in $P$ that is the common ancestor of both $u$ and $u$'. Let $w$ and $w'$ be children of $v$ on paths toward $u$ and $u'$. Observe that $w$ and $w'$ have equal types, and thus are identified in $M$. Therefore, $C$ is not a simple cycle, leading to a contradiction.

Assume that $\mathcal{P}$ contains more than one path, and let $P$ and $P'$ be its two first elements. We will show that there exists $\mathcal{P}'$ containing one less path than $\mathcal{P}$ which is witnessing the existence of another unabridged cycle of length $m$. Let $u$ be the last element of $P$, and let $u'$ be the first element of $P'$. As $u$ and $u'$ are identified, one can find a path equivalent to $P'$ in the $N$-perspective of $u$ that can be used instead. Thus, using a simple inductive reasoning, one can reduce this case to the above.

Finally, note that any instance that contains no cycle and can be mapped to $M$ can also be mapped to $\ch$. Using the above claim, one can conclude the proposition.
\end{proof}

Now, we are ready to prove \cref{lem:fc-main}. Assume, for the sake of contradiction, that $M$ entails $\dl$, and let $U$ be an unfolding of $\dl$ that serves as a witness. From \cref{prop:fc-on-unabridged-cycles}, we know that $U|_{EDB}$ contains an unabridged cycle of length at least $N$. From \cref{lem:fc-no-unabridged-cycles} we get a contradiction.

\myparagraph{On lifting assumptions}
The assumptions can be lifted using the following pre-processing steps.

\smallskip
\noindent
\textbf{1. The database instance $\db$ is a single unary atom:}\\
The database can be saturated $\db$ under Datalog rules first. Then, each constant can be placed into a unique unary atom. With some trivial but necessary adjustments to $\rs$, one can consider each database constant separately.

\smallskip
\noindent
\textbf{2. The EDB relational symbols are at most binary:}\\
This can be achieved through a simple reification. 
For example, the atom $R(x, y, z)$ can be transformed into the following conjunction:
$$R_1(w, x)\wedge R_2(w, y)\wedge R_3(w, z).$$

\smallskip
\noindent
\textbf{3. The rules of $\rs$ have exactly one frontier term:}\\
Trees built over atoms resulting from the application of frontierless rules can be considered separately, as in step 1.

\smallskip
\noindent
\textbf{4. No atoms of the shape $E(x,x)$ appear in $\db, \rs$, and  $\dl$:}\\
Similarly to step 2, one can replace $E(x, x)$ with $E_{loop}(x)$ in every atom of the input. Again, a trivial but necessary surgery of the ruleset and the Datalog query might be required.

\smallskip
\noindent
\textbf{5. Heads of rules of $\rs$ are trees:}\\
This can be ensured during reification in step 2.

\subsection{Proof of Theorem \ref{thm:compact}: compactness of entailment for views and rules} \label{subsec:app-compact}

Recall the statement of Theorem \ref{thm:compact} [Compactness for view and rule entailment]:

\medskip

Let $\Sigma$ consist of frontier-guarded TGDs, $\views$ a set of views defined
by Datalog queries, and $Q$ a $\fgdl$ query.

For every $\vinst$ 
such that $\vinst$ is $Q$-entailing with respect to $\Sigma$, $\views$, there is
$\vinst_0$ a finite subinstance of $\vinst$  that is $Q$-entailing with respect to $\Sigma$, $\views$.

\medskip

We write $\inst \equiv_k \inst'$ if Duplicator has a winning
strategy in the EF-game over $k$ rounds for guarded second order logic.
In every round of this game Spoiler can choose to either pick
an element or a guarded set in one of the instances. A guarded set is a
subset $R$ of $D^n$ for some $n \geq 1$, where $D$ is the domain of the
instance and every tuple in $R$ is already in the interpretation of one
of the relations symbols. Duplicator has to reply with an choice of the
same type in the other instance. After $k$ rounds Duplicator wins if
the pairs of elements that have been picked in the rounds where Spoiler
choice to play an element are a partial isomorphism. We clarify the 
notion of partial isomorphism we use. F
Assume that in the match the pairs
$(a_1,a_1'),\dots,(a_h,a_h')$ of elements have been played. We  write
$c^\inst$ for the interpretation of some constant $c \in \Lang$ in
 $\inst$. We require that the relation
\[
 \{(a_1,a_1'),\dots,(a_h,a_h')\} \cup \{(c^\inst,c^{\inst'}) \mid
\mbox{ for some constant } c \in \Lang\}
\]
is a partial isomorphism in the usual sense. We also need that the partial
isomorphism preserves the guarded relations that have been chosen in any
of the rounds, where Spoiler chose a guarded subset of $D^n$.

A basic result is that $\inst \equiv_k \inst'$ iff $\inst$ and
$\inst'$ cannot be distinguished by formulas of quantifier depth at
most $k$. Here, we do not distinguish between first and second order
quantifiers when computing the quantifier depth of a formula. If the
language is finite then one can show with an induction on $k$ that,
up-to logical equivalence, there are only finitely many formulas of
quantifier depth $k$. Thus, $\equiv_k$ has a finite index.

Given two instances $\inst$ and $\inst'$ for $\Lang$ their
\emph{disjoint sum} $\inst + \inst'$ is defined
as follows: Its domain is the disjoint union of the domains of $\inst$
and $\inst'$ quotiented by the smallest equivalence relation
$\sim$ that is generated by all pairs $a_1 \sim a_2$ such that for some
constant $c$ of $\Lang$ the element $a_1$ is the interpretation of $c$
in $\inst$, and $a_2$ is the interpretation of $c$ in $\inst'$. A
tuple $(u_1,\dots,u_n)$ is in the interpretation of an $n$-ary relation
symbol $P$ if there representatives $a_1,\dots,a_n$ such that $a_i$ is
in the equivalence class $u_i$ for all $i$, and the elements
$a_1,\dots,a_n$ either all belong to the domain of $\inst$ and are in
the interpretation of $P$ in $\inst$ or they all belong to the domain
of $\inst'$ and are in the interpretation of $P$ in
$\inst'$.

The following is a useful observation about guarded sets in the disjoint
sum:
\begin{lemma} \label{lem:guardedinsum}
 For every guarded relation $R$ in $\inst + \inst'$ there are
guarded relations $A$ in $\inst$ and $B$ in $\inst'$ such that
$R = A \uplus B$.
\end{lemma}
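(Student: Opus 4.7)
The plan is to exploit the key structural property of the disjoint sum: by its definition, a tuple is in the interpretation of a relation symbol $P$ in $\inst + \inst'$ only if it has a system of representatives that lies entirely in $\dom(\inst)$ or entirely in $\dom(\inst')$. This ``single-side'' property is precisely what powers the splitting.

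First, I would fix the guarded relation $R \subseteq D^n$, where $D$ is the domain of $\inst + \inst'$, and for each tuple $\vec u \in R$ pick a relation symbol $P_{\vec u}$ whose interpretation contains $\vec u$. Unfolding the definition of interpretation in $\inst + \inst'$, each $\vec u$ admits representatives $(a_1,\dots,a_n)$ all lying in the domain of one summand and satisfying $P_{\vec u}$ there. Partition $R$ into $R_\L$, those tuples admitting such representatives in $\inst$, and $R_\R \defeq R \setminus R_\L$, which then necessarily admit representatives in $\inst'$. Ambiguous tuples, namely those whose components are equivalence classes consisting entirely of paired constant interpretations, get placed in $R_\L$ by this convention; this is the only bookkeeping choice needed and it is harmless since constants are preserved across the quotient.

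Next, define $A \subseteq \dom(\inst)^n$ by lifting each $\vec u \in R_\L$ to a chosen witnessing tuple of representatives in $\inst$, and similarly define $B \subseteq \dom(\inst')^n$ from $R_\R$. By construction every tuple of $A$ is already in the interpretation of $P_{\vec u}$ in $\inst$, so $A$ is a guarded relation of $\inst$; symmetrically $B$ is guarded in $\inst'$. Passing through the quotient map sends $A$ onto $R_\L$ and $B$ onto $R_\R$, and since the images are the disjoint pieces of $R$, we obtain $R = A \uplus B$ as required.

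The one subtlety, which I expect to be the only mildly delicate point, is the convention for tuples that could be represented on either side. Because such tuples are built entirely from images of shared constants, they are unaffected by the choice of side, and our fixed assignment to $R_\L$ preserves disjointness of the union while still producing a faithful lift. Everything else is a direct unfolding of the definition of $\inst + \inst'$, so once the partitioning convention is stated, the verification that $A$ and $B$ are guarded and that $R = A \uplus B$ is immediate.
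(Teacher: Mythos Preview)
Your proposal is correct and is exactly the natural argument; the paper states this lemma as a ``useful observation'' without proof, and your unfolding of the definition of $\inst + \inst'$---using that every fact in the sum is witnessed by representatives lying entirely on one side---is the intended one-line justification. The only thing worth trimming is the discussion of the ambiguous (all-constants) tuples: any fixed convention works, and you could simply say ``assign each tuple to a side on which it has witnessing representatives'' without singling out the constant case.
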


The following ``composition lemma'', also straightforward, says that $\equiv_k$ behaves well under disjoint union:
\begin{lemma} \label{lem:replacement}
 If $\inst \equiv_k \inst'$ and $\jinst \equiv_k
\jinst'$ then $(\inst + \jinst) \equiv_k (\inst' +
\jinst')$.
\end{lemma}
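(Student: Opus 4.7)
The plan is to prove the lemma by composing Duplicator's winning strategies from the two component games into a single winning strategy for the game on the disjoint sums. Fix winning strategies $\sigma_\inst$ for Duplicator in the $k$-round game witnessing $\inst \equiv_k \inst'$ and $\sigma_\jinst$ witnessing $\jinst \equiv_k \jinst'$. We will maintain, throughout a play on $(\inst + \jinst)$ versus $(\inst' + \jinst')$, the invariant that the moves played so far naturally split into a partial play on the $\inst$/$\inst'$ sides (consistent with $\sigma_\inst$) and a partial play on the $\jinst$/$\jinst'$ sides (consistent with $\sigma_\jinst$), with constants in the shared signature handled consistently on both.

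For element moves the response is direct: when Spoiler picks an element $u$ in, say, $\inst + \jinst$, pick any representative $a$ of the equivalence class $u$. If $a$ lies in $\inst$, consult $\sigma_\inst$ to get a response $a'$ in $\inst'$ and reply with its equivalence class in $\inst' + \jinst'$; symmetrically if $a$ lies in $\jinst$. The case where $u$ is the interpretation of a constant $c$ of $\Lang$ requires a small check: the equivalence class then contains $c^\inst$ and $c^\jinst$, and the response must be $c^{\inst'+\jinst'}$; since both $\sigma_\inst$ and $\sigma_\jinst$ necessarily map $c^\inst \mapsto c^{\inst'}$ and $c^\jinst \mapsto c^{\jinst'}$ (partial isomorphisms preserve constants), the two possible choices both yield the same equivalence class, so the response is well-defined. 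For guarded-set moves we invoke Lemma~\ref{lem:guardedinsum}: a guarded $R$ in $\inst+\jinst$ splits as $A \uplus B$ with $A$ guarded in $\inst$ and $B$ guarded in $\jinst$. Have $\sigma_\inst$ respond to $A$ with some guarded $A'$ in $\inst'$, have $\sigma_\jinst$ respond to $B$ with some guarded $B'$ in $\jinst'$, and reply with $A' \uplus B'$, which is guarded in $\inst'+\jinst'$ by the same lemma read backwards.

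After $k$ rounds, we must check that the chosen pairs form a partial isomorphism in the required strengthened sense (preserving atomic facts, preserving the guarded relations chosen in guarded rounds, and extending the constant identification). Atomic preservation reduces, via the description of tuples and relations in the disjoint sum, to atomic preservation on each of the two component plays: a tuple of equivalence classes holds an atom in $\inst+\jinst$ iff it has a representative tuple lying wholly in one of the two sides and satisfying the atom there, and by construction the response has a representative tuple wholly in the corresponding side on the other disjoint sum. Preservation of the chosen guarded relations uses the split $R = A \uplus B$ and its response $A' \uplus B'$ together with the preservation guarantees of each component strategy. The constant-compatibility clause holds because $\sigma_\inst$ and $\sigma_\jinst$ each satisfy it, and the quotient in the disjoint sum glues the two constant interpretations together consistently.

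The main obstacle I expect is the bookkeeping around constants and the $\sim$-quotient — in particular making sure that an element move whose chosen representative sits on one side, while an earlier move's representative for the same class sat on the other side, does not introduce an inconsistency. This is resolved by observing that the only way this can occur is through the constant-identification clauses, which both component strategies already respect, so any ambiguity collapses to a single well-defined response. Once this is handled, the rest of the verification is a routine unpacking of the definition of $\inst+\jinst$ combined with the two component winning conditions.
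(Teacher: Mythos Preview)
Your proposal is correct and is precisely the standard EF-game composition argument one would expect here. The paper itself does not give a proof of this lemma --- it simply labels it a ``straightforward'' composition lemma --- so your write-up is in fact more detailed than what appears there; the strategy-splitting via Lemma~\ref{lem:guardedinsum} for guarded-set moves and the constant-class bookkeeping are exactly the right ingredients.
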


We now show that we can without loss of generality consider UCQ views rather than Datalog views:

\begin{lemma} \label{lem:datalog to ucq}
 Let $\view = \{\view_1,\dots,\view_n\}$ be a set of Datalog views and
$\varphi$ a GSO-sentence. Then there is a set $\view' =
\{\view'_1,\dots,\view'_n\}$ of UCQ views for the same view schema as
$\view$ such that for every
instance $\viewinst$ over the view schema
\begin{equation} \label{e:backwards equivalent}
 \inst \models \varphi \mbox{ holds for all } \inst \in
\viewback(\viewinst) \iff 
 \inst \models \varphi \mbox{ holds for all } \inst \in
\someviewback{\view'}(\viewinst).
\end{equation}
\end{lemma}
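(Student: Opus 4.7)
The plan is to leverage the finite index of $\equiv_k$ for $k$ the quantifier rank of $\varphi$, together with the composition lemma (Lemma~\ref{lem:replacement}), to show that only finitely many CQ approximations per Datalog view are needed to capture $\varphi$-entailment over $\viewback$. Fix $k$ as the quantifier rank of $\varphi$, and expand the language with a fresh constant for every element in $\adom(\viewinst)$. Then every $\inst \in \viewback(\viewinst)$ decomposes as the generalized disjoint sum of witness structures $W_{i,\vec c}$, one per view fact $V_i(\vec c) \in \viewinst$, where each $W_{i,\vec c}$ is the canonical database of some CQ approximation $Q_{V_i,m}(\vec x)$ of $V_i$'s Datalog definition with the free variables $\vec x$ instantiated to the constants $\vec c$. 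Distinct witnesses are identified precisely along the shared constants from $\adom(\viewinst)$, so this matches the disjoint-sum construction preceding Lemma~\ref{lem:replacement}.

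For each view $V_i$, consider the family of all CQ approximations $Q_{V_i,m}(\vec x)$ of its Datalog definition, viewed as structures over the free variables $\vec x$ (treated as constants in the expanded language). Because $\equiv_k$ has finite index in this finite expansion of $\Lang$, there are only finitely many $\equiv_k$-classes among these approximations. Pick one representative $Q_{V_i,m_j}(\vec x)$ per class and define the UCQ view $V'_i \defeq \bigvee_j Q_{V_i,m_j}(\vec x)$. This yields the desired set $\view' = \{V'_1,\dots,V'_n\}$.

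For the equivalence \eqref{e:backwards equivalent}, the $(\Leftarrow)$ direction is immediate: every $\inst' \in \someviewback{\view'}(\viewinst)$ also lies in $\viewback(\viewinst)$, since each disjunct of each $V'_i$ is one of the CQ approximations of $V_i$. For the $(\Rightarrow)$ direction, given any $\inst \in \viewback(\viewinst)$, replace each witness $W_{i,\vec c}$ (coming from an approximation $Q_{V_i,m}$) by the canonical database $W'_{i,\vec c}$ built from the chosen representative of the $\equiv_k$-class of $Q_{V_i,m}(\vec x)$. The resulting instance $\inst'$ lies in $\someviewback{\view'}(\viewinst)$, and (the generalization of) Lemma~\ref{lem:replacement} applied across all witnesses gives $\inst \equiv_k \inst'$. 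Since $\varphi$ has quantifier rank $k$, we have $\inst \models \varphi$ iff $\inst' \models \varphi$, which yields the $(\Rightarrow)$ direction.

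The main obstacle is the last step: extending Lemma~\ref{lem:replacement} from a pair of instances to the generalized disjoint sum of a possibly infinite family of witness components, one for each view fact in $\viewinst$. This is a standard composition result for fixed quantifier rank: a winning Duplicator strategy in the $k$-round GSO Ehrenfeucht--Fra\"iss\'e game on $\sum_{i,\vec c} W_{i,\vec c}$ versus $\sum_{i,\vec c} W'_{i,\vec c}$ can be assembled from the individual winning strategies witnessing $W_{i,\vec c} \equiv_k W'_{i,\vec c}$, using Lemma~\ref{lem:guardedinsum} to decompose each guarded-set move of Spoiler into its components along the sum and then delegating to the appropriate local strategy. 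Since only finitely many components are ever touched in $k$ rounds, no transfinite bookkeeping is needed.
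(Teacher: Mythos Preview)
Your proposal is correct and follows essentially the same strategy as the paper: define each $\view'_i$ as the disjunction of representatives of the $\equiv_k$-classes of CQ approximations of $\view_i$ (with the answer variables treated as constants), and then argue that replacing each witness by its representative preserves $\equiv_k$ via composition.

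The one substantive difference is in how the replacement is carried out. The paper adds constants only for the answer variables of the single view fact currently being processed (the signature $\schema_i$), and replaces witnesses one fact at a time by induction, invoking the binary Lemma~\ref{lem:replacement} at each step and then dropping the extra constants. You instead add constants for all of $\adom(\viewinst)$ up front and replace all witnesses simultaneously, appealing to a generalized composition over a possibly infinite family of summands. Both are valid; your version handles infinite $\viewinst$ more transparently (the paper's induction is phrased over a finite enumeration $F_1,\dots,F_h$ and tacitly needs a limit step), while the paper's version keeps the signature expansion finite at every stage, which makes the finite-index claim for $\equiv_k$ unambiguous. In your write-up, be careful to separate the two expansions: the one used to define $\view'$ must involve only the answer-variable constants of a single $\view_i$ (so that it is finite and independent of $\viewinst$), whereas the expansion by all of $\adom(\viewinst)$ is purely for the per-$\viewinst$ composition analysis.
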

\begin{proof}
 Let $k$ be the quantifier depth of $Q$, thought of as a GSO formula.
For every view $V_i$ with answer variables $x_1,\dots,x_m$ we let
$\schema_i$ be the same as the schema $\schema$ of $\varphi$, but with added
constants $c_1,\dots,c_m$. Every approximation $\approxInst$ of the
Datalog program for $V_i$ can then be considered as an instance for the
schema $\Lang_i$ in which the constants $c_1,\dots,c_n$ denote answer
variables $x_1,\dots,x_m$ at the root of $\approxInst$. We partition the
set of all approximations of $V_j$ into its $\equiv_k$-equivalence
classes for the schema $\Lang_i$. Because the index of $\equiv_k$ is
finite we can choose finitely many representatives
$\rinst_{i,1},\dots,\rinst_{i,l_i}$ of all the $\equiv_k$-equivalence
classes of approximants of $V_i$. Recall that the \emph{canonical CQ} of an instance  is
the CQ whose variables are the elements of $\inst$, with atoms for each fact of the instance.
We define the UCQ view $\view'_i$ to
be the UCQ that is the disjunction of all the canonical CQs of the
instances $\rinst_{i,1},\dots,\rinst_{i,l_i}$, where the constants
$c_1,\dots,c_m$ become the answer variables.

The right-to-left direction of \eqref{e:backwards equivalent} holds
because every $\inst \in \someviewback{\view'}(\viewinst)$ also exists
in $\viewback(\viewinst)$. The reason is that every CQ component
of the UCQ view $\view'_i$ is also a CQ approximation  of the Datalog
view $\view_i$.

For the direction from left to right we show that for every $\inst \in
\viewback(\viewinst)$ there is an instance $\inst' \in
\someviewback{\view'}(\viewinst)$ such that $\inst \equiv_k \inst'$
holds for the schema $\schema$. 
For a fact $F=V_i(\vec c)$ over the view schema where the views are defined using  Datalog query $Q_V$, we
talk about an \emph{approximation} of the fact, meaning $Q_i(\vec c)$ for $Q_i$ a CQ approximation of $Q_V$.
The idea is to replace the  approximation
of any view fact in $\inst$ with the corresponding representative, and
then use Lemma~\ref{lem:replacement} to argue that this replacement
preserves equivalence. To make this precise we use an induction over an
enumeration $F_1,\dots,F_h$ of all the facts in $\viewinst$. We show by
induction over $j = 0,1,\dots,h$ that there is an $\inst_j$ such that 
\begin{enumerate}
 \item $\inst_j \equiv_k \inst$,
 \item $\inst_j$ contains approximations of the facts $F_1,\dots,F_j$ of
$\viewinst$ according to the views in $\view'$, and
 \item $\inst_j$ contains approximations of the facts $F_{j + 1}, \dots,
F_h$ from $\viewinst$ according to the views in $\view$.
 \item $\inst_j$ does not contain any facts that are not required by
the previous two items.
\end{enumerate}
From the second and third item it is clear that $\inst_h \in
\someviewback{\view'}(\viewinst)$ and thus we can then take $\inst' =
\inst_h$. In the base case of the induction we let $\inst_0 = \inst$.

In the inductive step we assume that we are given the instance
$\inst_j$, satisfying the three items above. We need to construct
$\inst_{j + 1}$, that satisfies the same items with $j + 1$ in place of
$j$. Thus, consider the fact $F_{j + 1}$ in $\viewinst$. It corresponds
to one of the views $\view_i$ in $\view$ and by the third item $\inst_j$
contains an approximation $\approxInst$ of the Datalog program $\view_i$
as the approximation  of $F_{j + 1}$. Consider $\inst_j$ as an instance over
the extended schema $\schema_i$, where the constants $c_1,\dots,c_m$
denote the elements at the root of $\approxInst$ that correspond to the
answer variables in $\view_i$ and to the constants in the fact $F_{j +
1}$. It is clear that we can then view $\inst_j$ as a sum $\approxInst +
\inst^-$ for the schema that contains these additional constants. Here,
$\inst^-$ contains all the facts of $\inst_j$ that are not in the
approximation  $\approxInst$ of $F_{j + 1}$. Let $\approxInst_{i,u}$ be the
representative of the $\equiv_k$ equivalence class of $\approxInst$ that
was chose above for the definition of the view $\view'_i$. We let
$\inst_{j + 1} = \approxInst_{i,u} + \inst^-$. Thus $\inst_{j + 1}$ is
just like $\inst_j$, but the fact $F_{j + 1}$ has been unfolded to
$\approxInst_{i,u}$ instead of $\approxInst$. By the choice of
$\approxInst_{i,u}$ we have that $\approxInst_{i,u} \equiv_k
\approxInst$ and thus it follows by Lemma~\ref{lem:replacement} that
\[
 \inst_{j + 1} = (\approxInst_{i,u} + \inst^-) \equiv_k  (\approxInst +
\inst^-) = \inst_j \equiv_k \inst.
\]
Note that here by Lemma~\ref{lem:replacement} the first equivalence
$\equiv_k$ holds for the schema $\schema_i$ of the view fact
$F_{j + 1}$. But since $\schema_i$ extends $\schema$ with constants, the
equivalence also holds if we consider the instances as instances for the
reduced schema $\schema$ without these constants.
\end{proof}

We now finish the proof of Theorem \ref{thm:compact} by handling the case of UCQ views:

\begin{lemma} \label{lem:easykonig}
 Assume that $\view$ consists of UCQ views.
 Let $Q$ be some Datalog query and consider a countably infinite
instance $\viewinst$ over view schema such that $\inst \models Q$ for all $\inst
\in \viewback(\viewinst)$. Then there is a finite $\viewinst^- \subseteq
\viewinst$ such that $\inst \models Q$ for all $\inst \in
\viewback(\viewinst^-)$.
\end{lemma}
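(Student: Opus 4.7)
The plan is to argue by contrapositive via a standard König's lemma / compactness argument, using two facts: each UCQ view has only finitely many disjuncts, and Datalog queries are determined by finitely many facts. First I would fix an enumeration $F_1, F_2, \ldots$ of the view facts of $\viewinst$. For each $F_i = V_{j_i}(\vec{c}_i)$, the associated UCQ view $V_{j_i}$ has finitely many CQ disjuncts $D_i^1, \ldots, D_i^{m_i}$. A \emph{partial choice function} of depth $n$ is a function $\sigma$ sending each $i \leq n$ to some disjunct $D_i^{\sigma(i)}$; it determines a finite instance $\inst_\sigma$ by taking the union of the canonical databases of $D_i^{\sigma(i)}[\vec{c}_i]$ for $i \leq n$. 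Note that the total choice functions are in bijection with the elements of $\viewback(\viewinst)$.

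Next I would assume toward contradiction that no finite $\viewinst^- \subseteq \viewinst$ works. Then for every $n$ there exists a partial choice $\sigma$ of depth $n$ such that $\inst_\sigma \not\models Q$; call such a $\sigma$ \emph{bad}. Build a tree $T$ whose nodes at depth $n$ are the bad partial choices of depth $n$, ordered by restriction. Because $Q$ is monotone and each $\inst_\sigma$ is contained in each of its extensions, any prefix of a bad partial choice is itself bad, so $T$ is closed under restriction. The tree is finitely branching since each view has finitely many disjuncts, and it has nodes at every depth by assumption. König's lemma then yields an infinite branch, i.e.\ a total choice function $\sigma^*$, all of whose finite prefixes are bad. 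Let $\inst^* := \bigcup_{n} \inst_{\sigma^* \restriction \{1,\dots,n\}} \in \viewback(\viewinst)$.

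The final step is to derive a contradiction using the compactness of Datalog entailment. By hypothesis $\inst^* \models Q$, so some CQ approximation $Q_k$ of the Datalog query $Q$ has a match in $\inst^*$. Such a match uses only finitely many atoms, all of which lie within $\inst_{\sigma^* \restriction \{1,\dots,N\}}$ for $N$ large enough. By monotonicity, $\inst_{\sigma^* \restriction \{1,\dots,N\}} \models Q$, contradicting the fact that every prefix of $\sigma^*$ is bad. Hence the original assumption fails, and some finite $\viewinst^-$ works.

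I do not expect a serious obstacle here: the argument is purely combinatorial and relies only on finiteness of UCQ disjunct sets and on Datalog being a union of its CQ approximations, both of which are given. The only point that requires slight care is verifying that the instance associated with the limit choice $\sigma^*$ really coincides with the monotone union of the finite-prefix instances, and that this union lies in $\viewback(\viewinst)$; both follow directly from how $\viewback$ is defined for UCQ views as a product of independent per-fact choices.
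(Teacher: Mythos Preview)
Your proposal is correct and follows essentially the same route as the paper: both enumerate the view facts, build a finitely branching tree whose nodes record the disjunct choices made so far, restrict to the subtree of ``bad'' (non-$Q$-satisfying) partial choices, apply K\"onig's Lemma to extract an infinite branch, and derive a contradiction from the fact that a Datalog query holding on the union must already hold on some finite prefix. The only cosmetic difference is that the paper labels tree nodes by pairs $(n,\inst)$ rather than by choice functions, and explicitly verifies the two identities $\viewback(\viewinst_n)=\{\inst\mid (n,\inst)\in T\}$ and $\viewback(\viewinst)=\{\inst_\beta\mid \beta\text{ infinite branch}\}$, which you handle in your final paragraph.
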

\begin{proof}
 Let $F_0,F_1,\dots$ be an enumeration of all the atomic facts in
$\viewinst$. For every $n \in \omega$ define the finite subinstance
$\viewinst_n \subseteq \viewinst$ to consist of just the facts
$F_0,F_1,\dots,F_{n - 1}$. We claim that there is some $n$ such that
$\inst \models Q$ holds for all $\inst \in
\viewback(\viewinst_n)$. Thus, we can then set $\viewinst^- =
\viewinst_n$.

Define an infinite, but finitely branching, tree $T$, whose vertices are
pairs $(n,\inst)$ such that $n \in \omega$ and $\inst$ is a finite
instance over the original schema. The root is $(0,\emptyset)$, and for
every node $(n,\inst)$ that is already in the tree we add children as
follows: Consider the fact $F_n = \view_i (c_1,\dots,c_m)$ in $\inst$.
For every CQ $U$ in UCQ defining the view $\view_i$ we add a child
$(n,\inst')$ to the tree, where $\inst'$ is the extensions of $\inst$
with the canonical database of $U$, identifying the constants
$c_1,\dots,c_m$ in $\inst$ with the answer variables of $U$.

It follows from our definition of the tree $T$ that 
\[
 \viewback(\viewinst_n) = \{\inst \mid (n,\inst) \in T\}.
\]
Moreover, it also follows from the definitions that
\[
 \viewback(\viewinst) = \left\{\inst_\beta \mid
\beta \mbox{ infinite branch in } T \right\},
\]
where $\inst_\beta = \bigcup \{\inst \mid \beta_n = (n,\inst) \mbox{ for
some } n\}$.

To prove the claim assume for a contradiction that for every $n$ there
is some $\inst \in \viewback(\viewinst_n)$ such that $\inst \not \models
Q$. Consider the subtree $T'$ of $T$ consisting only of nodes
$(n,\inst)$ such that $\inst \not \models Q$. This is a connected
subtree because if $(n,\inst_p)$ is a parent of $(n+1,\inst_c)$ then
$\inst_p \subseteq \inst_c$ and thus $\inst_c \not \models Q$ implies
$\inst_p \not \models Q$. By our assumption that for every $n$ there is
some $\inst \in \viewback(\viewinst_n)$ such that $\inst \not \models Q$
it follows that $T'$ is infinite. It follows from K\"{o}nig's Lemma that
$T'$ contains an infinite branch $\beta$. Since $\inst_\beta \in
\viewback(\viewinst)$ we have that $\inst_\beta \models Q$. Because $Q$
is a Datalog query there are finitely many facts in $\inst_\beta$ that
are sufficient for $Q$ to hold in $\inst_\beta$. This entails that
$\inst \models Q$ holds already for a finite $\inst$ such that $\beta_n
= (n,\inst)$ for some $n$. This contradicts our definition of $T'$.
\end{proof}

Because every $\fgdl$  query is expressible in GSO we can use
Lemma~\ref{lem:datalog to ucq} to strengthen the result from the previous
lemma.

\begin{corollary} \label{cor:konig trick}
 Assume that $\view$ consists of Datalog views and that $Q$ is some
$\fgdl$ query. Consider a countably infinite instance
$\viewinst$ such that $\inst \models Q$ for all $\inst \in
\viewback(\viewinst)$. Then there is a finite $\viewinst^- \subseteq
\viewinst$ such that $\inst \models Q$ for all $\inst \in
\viewback(\viewinst^-)$.
\end{corollary}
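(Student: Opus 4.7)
The plan is to derive the corollary by chaining Lemma~\ref{lem:datalog to ucq} with Lemma~\ref{lem:easykonig}, using the fact that every $\fgdl$ query is expressible in GSO. Concretely, I would first apply Lemma~\ref{lem:datalog to ucq} with $\varphi := Q$ to replace the Datalog views $\view$ by a set of UCQ views $\view'$ over the same view schema. By the conclusion of that lemma, for every view instance $\jinst$ we have
\[
 \inst \models Q \text{ for all } \inst \in \viewback(\jinst)
\iff
 \inst \models Q \text{ for all } \inst \in \someviewback{\view'}(\jinst).
\]
Applying this equivalence to our given countably infinite $\viewinst$ converts the hypothesis of the corollary into the corresponding entailment statement for the UCQ views $\view'$.

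Next, because $Q$ is in $\fgdl$ and hence a Datalog query, I would invoke Lemma~\ref{lem:easykonig} with the UCQ views $\view'$ and query $Q$. This yields a finite $\viewinst^- \subseteq \viewinst$ such that $\inst \models Q$ for every $\inst \in \someviewback{\view'}(\viewinst^-)$. Finally, I would apply the equivalence from Lemma~\ref{lem:datalog to ucq} once more, now instantiated at $\jinst := \viewinst^-$, to transfer the UCQ-version entailment back into $\inst \models Q$ for every $\inst \in \viewback(\viewinst^-)$. This is exactly the claimed conclusion.

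The main thing to verify is the implicit hypothesis that the $\fgdl$ query $Q$ is expressible in GSO, since Lemma~\ref{lem:datalog to ucq} is stated only for GSO sentences $\varphi$. For arbitrary Datalog this would be problematic, but for frontier-guarded Datalog the IDB relations inferred by the program are all guarded tuples, so their least fixpoint interpretations can be encoded by guarded second-order quantification using the standard ``intersection of prefixed points'' definition of $\mathsf{LFP}$; the frontier-guarded shape of each rule body ensures that the required second-order quantifiers range over guarded sets of tuples, keeping us inside GSO. Apart from this verification the proof is a purely formal composition of the two lemmas, with no new combinatorial content.
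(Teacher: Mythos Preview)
Your proposal is correct and follows essentially the same approach as the paper: the paper simply remarks that since every $\fgdl$ query is expressible in GSO, one can use Lemma~\ref{lem:datalog to ucq} to reduce to the UCQ-view setting and then invoke Lemma~\ref{lem:easykonig}. Your write-up makes explicit the two applications of the equivalence from Lemma~\ref{lem:datalog to ucq} (at $\viewinst$ and at $\viewinst^-$) and the reason $\fgdl$ lands in GSO, but the underlying argument is identical.
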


\subsection{Proofs related to Theorem \ref{thm:autpre}: automata for entailment with views and rules over bounded treewidth structures} \label{subsec:app-certainanswertreewidth}

Recall the statement of Theorem
 \ref{thm:autpre} [Automata for entailing witnesses] 

\medskip

Let $\Sigma$ consist of frontier-guarded TGDs, $\views$ a set of views defined
by $\fgdl$ Datalog queries, and $Q$ a $\fgdl$ query.
Let $k$ be a number.

There is a tree automaton $T_{Q, \views, \Sigma, k}$ that runs over $k$-tree codes in the view signature
such that for every finite $\vinst_0$ of treewidth $k$,

$T_{Q, \views, \Sigma, k}$ accepts a code of $\vinst_0$ if and only if $\vinst_0 \models_{\views, \Sigma} Q$.

\medskip

Note that here we refer to an ordinary tree automaton over finite trees.

\begin{proof}
By Lemma~\ref{lem:datalog to ucq} it suffices to consider the case where
the views in $\view$ are UCQs.

Recall that GSO refers to Guarded Second Order Logic, where we have first order quantification and also quantification over guarded
sets of tuples.
We are going to show a stronger statement:

\medskip

For any $\varphi$ in GSO, we can define a MSO formula $\psi$ for the schema
of $\codes{(\viewschema)}{k}$-trees such that for every
$\codes{(\viewschema)}{k}$-tree $\tree$ we have that $\tree \models
\psi$ iff $\tree$ is a tree code and $\inst \models
\varphi$ for all $\inst \in \viewback(\decode{\tree})$. 

\medskip

To obtain the proof of Theorem  \ref{thm:autpre}, we apply the above  to the GSO sentence
 $\varphi=\Sigma \rightarrow Q$. We get an MSO  sentence $\tau$ that checks whether
all instances obtained by backward chasing the views, $\Sigma \rightarrow Q$ holds.
That is:

\begin{equation} \label{e:target formula}
 \tree \models \beta \iff \inst \models \varphi \mbox{ for all } \inst
\in \viewback(\decode{\tree}), 
\end{equation}

 It is clear that there is a MSO formula $\tau$
(or an automaton) that expresses that $\tree$ is a tree code. 
We then just convert $\psi = \tau \land \beta$ to a tree automaton, which we can do since over finite
trees, finite tree automata capture MSO properties.

We now turn to proving the stronger statement. Given an instance of the view schema, we refer to
an instance  of the base schema in $\viewback(\decode{\tree}$ as a \emph{realization}.
We call an instance of the
The main difficulty is to express the quantification over all realizations
in the right hand side of \eqref{e:target formula} with monadic
quantifiers over the treecode $\tree$. The proof strategy is to extend the
bags that are encoded at the nodes of $\tree$ with additional local
names that hold of the realizations of any of the view facts in the bag. To
this aim we define a notion of extended tree codes, which are tree codes
for the schema $\schema' = \viewschema \uplus \schema$ that are supposed
to represent the view instance $\decode{\tree}$, together with one of
its realizations $\inst \in \viewback(\decode{\tree})$. We provide more
details in the paragraphs below.

For every UCQ $\view_i$ in $\view$ let $a_i$ denote its arity. For every
UCQ $\view_i$ in $\view$ let $R_i$ be the finite set of all canonical
instances of the CQs in the UCQ $\view_i$. Each of the instances $\rinst
\in R_i$ can be seen as an instance for the schema $\schema_i$, which
extends the schema $\schema$ of $\varphi$ with additional constants
$c_1,\dots,c_{a_i}$, that mark the positions of the answer variables of
$\view_i$.

Let $b$ be  the maximal size of any of the $\rinst \in R_i$ for any
$i$ and set $k' = k + r \cdot (k + 1)^m \cdot b$, where $r$ is the
number of views in $\view$ and $m$ is the maximal arity of any of the
views in $\view$. It is clear that for every view $\view_i$ in $\view$,
any realization instance $\rinst \in R_i$,  and any tuple $\ntuple =
(l_1,\dots,l_{a_i})$, there is an injective function
$e^\rinst_{i,\ntuple}$ from the domain of $\rinst$ to $\{1,\dots,k'\}$
such that $e^\rinst_{i,\ntuple}(c_j) = l_i$, meaning that the element
corresponding to the $j^{th}$ answer variable in $\rinst$ is mapped to the
$j$-th element in the tuple $\ntuple$. The number $k'$ is chosen large
enough to ensure that the realizations of any view
facts overlap only at their answer variables. Formally, this means that
for all views $\view_i$ and $\view_{i'}$ in $\view$, $\ntuple =
(l_1,\dots,l_{a_i})$ and $\ntuple' = (l'_1,\dots,l'_{a_{i'}})$ such that
either $\view_i \neq \view_{i'}$ or $\ntuple \neq \ntuple'$, and
moreover, for all $\rinst \in R_i$, $\rinst' \in R_{i'}$, with elements
$a$ from $\rinst$ and $a'$ from $\rinst'$ such that $a$ is not answer
variable of $\rinst$ and $a'$ is not an answer variable of $\rinst'$, it
holds that $e^\rinst_{i,\ntuple}(a) \neq e^{\rinst'}_{i',\ntuple'}(a')$.

Define an \emph{extended tree code} to be a
$\codes{(\schema')}{k'}$-tree such that:
\begin{enumerate}
 \item \label{i:tree code} The tree $\tree$ is a
$\codes{(\schema')}{k'}$-tree code 
 \item \label{i:extension is local} For every $T_g$ that is true at a
vertex $v$ of $\tree$ the partial injection $g$ is such that its domain
and range are both contained in $\{1,\dots,k\}$.
 \item \label{i:no equality in extension} If $T^{=}_{l,l'}$ is true at a vertex $v$ of $\tree$ then $l,l'
\leq k$.
 \item \label{i:no view facts in extension} If $T^{\view_i}_\ntuple$ is
true at a vertex $v$ of $\tree$ for some view symbol $\view_i$ from
$\viewschema$ then $\ntuple \in \{1,\dots,k\}^{a_i}$, that is, $\ntuple$
contains only local names from $\{1,\dots,k\}$.
 \item \label{i:unfoldings are there} For every vertex $v$ of $\tree$
there is a function $\upsilon_v$ that maps any pair $(\view_i,\ntuple)$,
where $T^{\view_i}_\ntuple$ is true at a vertex $v$, to an element
$\upsilon_v(\view_i,\ntuple) \in R_i$ such that
\begin{enumerate}
 \item for every fact $R(b_1,\dots,b_h)$ in $\rinst =
\upsilon_v(\view_i,\ntuple)$ such that $T^{\view_i}_\ntuple$ is true at
$v$ the predicate $T^R_\mtuple$ is true at $v$ for $\mtuple =
(e^\rinst_{i,\ntuple}(b_1),\dots,e^\rinst_{i,\ntuple}(b_h))$, and
 \item if $T^R_\mtuple$ is true at $v$ then there is some
$T^{\view_i}_\ntuple$ that is true at $v$ and a fact $R(b_1,\dots,b_h)$
in $\rinst = \upsilon_v(\view_i,\ntuple)$ such that $\mtuple =
(e^\rinst_{i,\ntuple}(b_1),\dots,e^\rinst_{i,\ntuple}(b_h))$.
\end{enumerate}
\end{enumerate}
One can check that there is an MSO formula $\eta$ such that $\tree
\models \eta$ iff $\tree$ is an extended tree code.


Every $\codes{(\schema')}{k'}$-tree $\tree$ projects to a
$\codes{\schema}{k'}$-tree $\restr{\tree}{\schema,k'}$ if we
just forget all the predicates of the form $T^{\view_i}_\ntuple$ for
some view relation $\view_i$ in $\view$. Moreover, it is obvious that if
$\tree$ is a $\codes{(\schema')}{k'}$-tree code then its projection
$\restr{\tree}{\schema,k'}$ is a $\codes{\schema}{k'}$-tree
code.

Similarly, every $\codes{(\schema')}{k'}$-tree $\tree$ projects to a
$\codes{(\viewschema)}{k}$-tree $\restr{\tree}{\viewschema,k}$. We just
forget all the predicates that mention any local name $l > k$ or that
are of the form $T^R_\ntuple$ for some relation symbols $R$ from
$\schema$. Because of items \ref{i:tree code}, \ref{i:extension is
local} and \ref{i:no equality in extension} above we have that if
$\tree$ is an extended tree code then $\restr{\tree}{\viewschema,k}$ is
a $\codes{\viewschema}{k}$-tree code.

Extended tree codes capture the backwards mapping of views. This is
made precise by the following two properties of extended tree codes:
\begin{description}
 \item[Soundness:] For every extended tree code $\tree$ there is an $\inst \in
\viewback(\decode{\restr{\tree}{\viewschema,k}})$ such that $\inst \cong
\decode{\restr{\tree}{\schema,k'}}$.
 \item[Completeness:] For every $\codes{(\viewschema)}{k}$-tree code $\tree'$ and
$\inst \in \viewback(\decode{\tree'})$ there is an extended tree code
$\tree$ such that $\restr{\tree}{\viewschema,k} = \tree'$ and
$\decode{\restr{\tree}{\schema,k'}} \cong \inst$.
\end{description}
Using items \ref{i:no view facts in extension} and \ref{i:unfoldings are
there} from the definition of extended tree codes, one can check that
these properties hold.

We  observe that in MSO we can evaluate GSO formulas over the
models that are encoded over tree codes.  Thus, there is an MSO formula
$\varphi'$ such that for all $\codes{\schema}{k'}$-tree codes $\tree$
\[
 \tree \models \varphi' \iff \decode{\tree} \models \varphi .
\]
See the argument for the ``Courcelle-like results'' - Proposition \ref{prop:kpipeline} and Theorem \ref{thm:courcelle} -- elsewhere in the appendix.

 Also note that the only difference between
$\codes{\schema}{k'}$ and $\codes{(\schema')}{k'}$ is that the latter
contains additional predicate symbols. Hence, the formula $\varphi'$
can also be evaluated over extended tree codes $\tree$ for which we have
that
\[
 \tree \models \varphi' \iff \decode{\restr{\tree}{\schema,k'}} \models
\varphi .
\]

Define the MSO formula $\beta$ as follows:
\[
 \beta \equiv \forall Q_1\dots\forall Q_s (\eta \rightarrow \varphi'),
\]
where $Q_1,\dots,Q_s$ are all the predicate symbols that are in
$\codes{\schema}{k'}$. The idea on a $\codes{(\viewschema)}{k}$ tree is to
consider all extensions to a $\codes{(\schema')}{k'}$-tree, which need to
extended tree codes because of $\eta$, and then make sure that they all
satisfy $\varphi$. By the soundness and completeness properties of extended
tree codes it follows that this definition of $\beta$ satisfies
\eqref{e:target formula}.
\end{proof}

\section{Additional material, related to applying the tools from Section \ref{sec:tools}
to obtain  decidability and rewritability results, Section \ref{sec:decide}}
\subsection{More details on the proof of Theorems \ref{thm:datalogqfgdlviewsdecide} and \ref{thm:mdlqcqviewslindecide}: decidability of monotonic determinacy via the BTVIP}

Recall the statement of Theorem  \ref{thm:datalogqfgdlviewsdecide}:

\medskip

 Let $Q$ range over $\fgdl$ queries, $\Sigma$ over $\fgtgd$s, and $\views$ over FGDL views.
Then monotonic determinacy is decidable.


\medskip

\begin{proof}
We give more details here. Note that similar arguments for the final applications of this theorem
-- to CQ views and FGDL views -- are given in Theorems 9.1 and 9.2
of \cite{uspods20journal}. The main difference here relates to the fact that infinite structures are involved.

The heart of the argument is to show that we have the BTVIP. The fact that the approximations are treelike is Proposition \ref{prop:boundedtreewidthapprox}.
The fact that  the chase with $\fgtgd$s gives a low treewidth structure is well-known: see, for example Proposition
\ref{prop:chaseapprox} in Section \ref{app:chaseapproxregular},
or
\cite{treelikechase}. As mentioned earlier,  the idea goes back to 
\cite{datalogpmj}. Applying the FGDL views simply adds facts compatible with the same tree decomposition, since the views are guarded.

Proposition \ref{prop:ksuffices}  tells us that we can compute a $k'$
such that  it suffices to look for a $k'$ treelike counterexample -- possibly infinite -- to monotonic
determinacy. 

Proposition \ref{prop:kpipeline} tells us that we can compute a B\"uchi Automaton (or a WMSO sentence) that represents these counterexamples.
So it suffices to check whether this sentence is satisfiable, which we can do by decidability of nonemptyiness of these automata, or appealing
to decidability of satisfiability of Weak MSO over trees, which follow from
Rabin's theorem: see, for example, Theorem 11.3 of  \cite{thomas}.
\end{proof}

The argument for Theorem \ref{thm:mdlqcqviewslindecide} is similar, except the confirmation of the BTVIP requires the analysis of CQ views
as in Lemma 6.5 and Theorem 8.2 of \cite{uspods20journal}.

\subsection{Proof of Theorem \ref{thm:linearcqcq}: decidability of monotone determinacy for CQ views and
queries, for  linear TGDs}

Recall the statement of Theorem \ref{thm:linearcqcq}:

\medskip

Suppose $\Sigma$ ranges over linear TGDs,
Then monotonic determinacy of UCQ  $Q$ over CQ views $\views$ relative
to $\Sigma$ is decidable.

\medskip

\begin{proof}
We make use of the certain answer rewriting approach outlined in the body of the paper. That is, we will
use certain answer rewriting  to eliminate steps moving backward in the process of Figure  \ref{alg:query-mondet-constraints}.

We can produce a UCQ $R_1$ which is a certain answer rewriting of $Q$ with respect to $\Sigma$.

Consider a variation of the process of Figure \ref{alg:query-mondet-constraints}, where the chase  steps in
line $4$ produce facts in a primed signature, and the further steps after that are also in the primed signature.
This change of signature does not impact the correctness of the procedure.
Now consider the steps where we apply the views and then ``chase backwards'' with the view-to-definition rules.
We can consolidate these  into
a single step that chases with rules of the form:
\[
\phi_v(\vec x, \vec y) \rightarrow \exists \vec y ~ \phi'_v(\vec, \vec y)
\]
where $\phi_q$ is the view definition of a view $v(\vec x)$.
These are source-to-target rules, which are always first-order rewritable. That is,
we can get a UCQ $R_2$ that is certain answer rewriting of $R_1$ for these rules.

Now monotonic determinacy, the success of the whole process,  is equivalent to 
the entailment
\[
Q \models_\Sigma R_2
\]
This is an entailment of UCQs with linear TGDs, decidable in PSPACE.
\end{proof}

\subsection{More details on decidability and monotonic determinacy in the finite}
In Section \ref{sec:decide} in the body we mentioned that in  the three decidable cases
for monotone determinacy, we have decidability in the finite.

The \emph{second decidable case} was about Fr-1 TGDs, MDL queries and CQ views.
For UCQ queries and CQ views, finite controllability was proven in
Theorem \ref{thm:finconfone} in the paper. 
We can use the same ideas to handle the MDL case:  In order to extend to MDL queries, we need to use the trick of ``moving MDL rules into
Fr-1 TGDs'' that is applied in  the proof of Theorem \ref{thm:rewritefrontieronemdl}.

The \emph{third decidable case} had to do with \emph{CQ views and queries, linear TGDs}. We state the result here:

\begin{proposition} \label{prop:fclinear} Suppose $\Sigma$ consists of Linear TGDs.
Then
monotonic determinacy of a CQ query $Q$ over CQ views $\views$ with respect to $\Sigma$ is finitely controllable.
\end{proposition}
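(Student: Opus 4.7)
The plan is to mirror the argument for Theorem \ref{thm:finconfone}, substituting tools appropriate to the linear-TGD setting. Since $Q$ is a CQ, by Theorem \ref{thm:bbcrewriting} I would obtain a UCQ $R_1$ that is a certain-answer rewriting of $Q$ with respect to $\Sigma$. Because the views are CQs, the backward mapping rules $\backview_\views$ are source-to-target existential rules whose heads are the view definitions, and standard CQ-unfolding gives a UCQ $R_2$ over the view schema such that $\viewinst \models R_2$ iff $\chase_{\backview_\views}(\viewinst) \models R_1$. Substituting each view symbol in $R_2$ by its defining CQ produces a UCQ $R_2'$ over the base schema. Via Proposition \ref{prop:pipeline} together with the CA-rewriting property, monotonic determinacy of $Q$ over $\views$ with respect to $\Sigma$, restricted to any class of instances, becomes equivalent to the logical entailment $Q \wedge \Sigma \models R_2'$ over that same class.

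Next, I would invoke finite controllability of UCQ entailment under linear TGDs, a classical result (linear TGDs sit inside the class of guarded TGDs, which are known to be finitely controllable for UCQ answering). This yields: $Q \wedge \Sigma \models R_2'$ holds over all finite instances if and only if it holds over all instances. Combining the two equivalences, finite monotonic determinacy coincides with unrestricted monotonic determinacy. Concretely, given a finite failure of monotonic determinacy one first reads off a failure of $Q \wedge \Sigma \models R_2'$ in the finite, lifts it via finite controllability to a failure over all instances, then translates back via the pipeline to an arbitrary counterexample; the reverse direction is symmetric.

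The main obstacle is ensuring the certain-answer rewriting correspondence is valid simultaneously in the finite and in the unrestricted settings: the rewriting $R_1$ must characterise certain-answer entailment under both finite and arbitrary extensions of the input, and similarly for the ``outer'' entailment $Q \wedge \Sigma \models R_2'$. This is precisely what finite controllability of UCQ answering under linear TGDs delivers; once one checks that each step of the two-level reduction preserves equivalence between the finite and unrestricted versions, the rest of the argument is a routine chaining of equivalences rather than a fresh model-construction, which is why we avoid the two-fold invocation of a finite-model-building theorem needed in the proof of Theorem \ref{thm:finconfone}.
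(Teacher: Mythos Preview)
Your approach is correct and takes a genuinely different route from the paper. The paper does not reduce to an abstract entailment and then appeal to finite controllability; instead it works directly with \emph{finite $k$-universal models} for linear TGDs (the result from \cite{rosati}): it picks a $k\cdot w$-universal finite model $\inst_1$ of $\canondb(Q)\wedge\Sigma$, applies the view-to-definition rules to get $\inst_2$, and then picks a $q$-universal finite model $\inst_3$ of $\inst_2\wedge\Sigma'$, arguing that if $(\inst_1,\inst_3)$ is not a finite counterexample then monotonic determinacy holds outright. Your route via certain-answer rewritings is more modular and reuses the decidability machinery of Theorem~\ref{thm:linearcqcq} directly; the paper's route is more constructive and yields an explicit canonical finite witness pair.

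Two small points of caution. First, your appeal to Proposition~\ref{prop:pipeline} ``restricted to any class of instances'' is an overstatement: Proposition~\ref{prop:pipeline} characterises only the unrestricted notion. The finite equivalence $(\text{finite monotonic determinacy}) \Leftrightarrow (Q\wedge\Sigma\models_{\text{fin}} R_2')$ must be argued directly, and the $\Rightarrow$ direction already requires finite controllability of UCQ answering so that $R_1$ captures \emph{finite} certain answers (otherwise you cannot extract a finite $\inst_2$ from $\chase_{\backview_\views}(\views(\inst_1))\not\models R_1$). You do flag this in your last paragraph, but it should be stated as a step rather than a side remark. Second, your closing claim that you ``avoid the two-fold invocation'' is not quite accurate: you invoke finite controllability once for $R_1$ and once for the outer entailment $Q\wedge\Sigma\models R_2'$, exactly mirroring the two applications of Theorem~\ref{thm:fcecdatalog} in Theorem~\ref{thm:finconfone} and the two $k$-universal-model constructions in the paper's proof. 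The difference is purely presentational --- you chain abstract equivalences rather than build explicit models --- which is fine, but not a saving in the number of appeals to the underlying finite-model property.
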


We give the proof of the proposition. It also relies on finite controllability results for certain answers, but in this case we use an old result.
 Recall that a  \emph{$k$-universal model} for an instance $\inst$ and set of dependencies
$\Sigma$ is a superinstance $\inst_k$
of $\inst$ that satisfies $\Sigma$
with the property that a CQ of size at most $k$ holds in $\inst_k$ if and only
if it is entailed by $\inst$ and $\Sigma$.

For any TGDs, The chase is a $k$-universal model for any $k$, but in general it is infinite.
However for linear TGDs it is known that we can do better:

\begin{theorem} \cite{rosati}
Linear TGDs have finite $k$-universal models for each instance $\inst$ and
each $k$.
\end{theorem}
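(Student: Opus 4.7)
The plan is to construct $\inst_k$ by a bounded-depth folding of the chase $\chase_\Sigma(\inst)$. For linear TGDs, each chase step has a single body atom, so the chase naturally carries a forest structure on top of $\inst$: every fact produced beyond $\inst$ has a unique trigger atom among the previously produced facts, and the fresh nulls introduced at that step interact only with the inherited values from the trigger. One can regard each null of $\chase_\Sigma(\inst) \setminus \adom(\inst)$ as a node in a tree whose root is an atom touching $\inst$, with the depth of a node equal to the number of chase steps separating it from the root.

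The core idea is to define a finite \emph{local type} that captures everything relevant about an element's future in the chase. Because a linear TGD triggers on a single atom, the only information needed to continue the chase from an element $a$ is the atom that ``guards'' $a$ at its introduction step: the predicate, the position(s) of $a$ in it, and which other positions hold fresh nulls versus inherited values. Up to isomorphism this yields a finite type alphabet $\mathcal{T}$. Along every root-to-leaf branch of length exceeding $|\mathcal{T}|$, two elements must share the same type; identifying a deeper one with a shallower one of the same type (and identifying their subtrees accordingly) collapses the branch to finite length while still respecting every TGD in $\Sigma$, since a trigger at the identified node fires exactly as at its partner.

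To obtain a \emph{$k$}-universal model rather than an arbitrary finite model of $\Sigma$, the depth at which folding is permitted must be chosen as a function of $k$. I would allow folding only at depths at least $D(k)$ where $D(k)$ is chosen larger than the diameter that any CQ of size $k$ can traverse in the chase (roughly $k$ times the arity bound of $\Sigma$). Elements above depth $D(k)$ are kept as in the chase; below that, type-based identifications eliminate infinite chains. Combined with the finiteness of $\inst$, $\mathcal{T}$, and $D(k)$, this produces a finite superinstance $\inst_k$ of $\inst$ satisfying $\Sigma$.

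The main obstacle is verifying $k$-universality of $\inst_k$. One direction is routine: the folding provides a surjective homomorphism $\chase_\Sigma(\inst) \to \inst_k$, so every CQ answered positively on $\chase_\Sigma(\inst)$ is also answered positively on $\inst_k$, and hence every CQ entailed by $\inst \wedge \Sigma$ holds in $\inst_k$. The converse is the delicate point: one must show that any homomorphism $h : Q \to \inst_k$ with $|Q| \leq k$ lifts to a homomorphism $\tilde{h}: Q \to \chase_\Sigma(\inst)$. The argument is that $h(Q)$ is connected and of bounded chase-diameter, so by starting the lift at some pre-image of $h$'s image and tracking the image as it moves through folded nodes, one can always ``unroll'' a traversed cycle by descending further down the branch in the unfolded chase; since $D(k)$ exceeds the total number of moves available to $Q$, the lift never runs out of depth. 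Handling the case where $h$ winds around the same folded cycle many times — forcing the lift to descend far below $D(k)$ — is the subtle part and drives the choice of $D(k)$.
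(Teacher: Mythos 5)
You should first note that the paper does not prove this statement at all: it is imported verbatim from Rosati's work, so there is no in-paper argument to match. Judged on its own terms, your construction has a genuine gap, and in fact fails on the simplest example. Your ``local type'' records only the isomorphism type of the atom that introduces an element, and you allow identifying a deep node with \emph{any} shallower node of the same type on its branch. Take $\Sigma = \{R(x,y) \rightarrow \exists z\, R(y,z)\}$ and $\inst = \{R(a,b)\}$ (this is Example~\ref{ex:fc} of the paper): every null in $\chase_\Sigma(\inst)$ has the same type, so your folding may identify a node with its parent or grandparent, creating $R(u,u)$ or a short $R$-cycle. Then $\exists x\, R(x,x)$, a CQ of size $1$, holds in $\inst_k$ but is not entailed by $\inst \wedge \Sigma$, so $\inst_k$ is not even $1$-universal. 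Restricting folding to depth at least $D(k)$ does not help, because the two identified nodes can still be adjacent; what matters is the \emph{distance between the identified pair} (it must exceed $k$ so that no CQ of size $\le k$ can wrap around the created cycle), not their absolute depth. Your lifting sketch also cannot repair this: a cyclic CQ mapped around a back edge has no lift into the forest-shaped chase, and ``descending further'' is irrelevant since the chase is infinite downward — depth exhaustion was never the obstacle.

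The second, subtler defect is that even with long back edges your type is too coarse in the \emph{upward} direction. After identification, the surviving representative $u$ acquires a second incoming atom coming from the deep node's parent, whose ancestors within distance $k$ may look nothing like $u$'s own ancestors; a CQ of size $\le k$ can then combine features of both upward contexts at a single element, yielding a match with no counterpart in the chase. This is exactly why the known arguments use much heavier types: the paper's own finite-controllability proof for frontier-one rules (Theorem~\ref{thm:fcecdatalog}) quotients by depth modulo $N$ \emph{together with} the full $N$-perspective (the subtree rooted at the $N$-ancestor), and even then the absence of spurious matches requires the dedicated analysis of unabridged cycles; Rosati's construction for inclusion dependencies is likewise substantially more involved. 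The parts of your argument that do work — the forest structure of the linear chase, the surjective homomorphism from the chase giving soundness, and the fact that single-atom bodies make any such quotient a model of $\Sigma$ — are the easy half; the half you delegate to ``unrolling'' is precisely where the theorem's real content lies, and as formulated it is false.
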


We can easily use this to prove Proposition \ref{prop:fclinear}.

\begin{proof}
Compute a UCQ $Q^*$  that is a rewriting of $Q'$ (the primed version
of $Q$) with respect to $\Sigma'$.
Such a $Q^*$ exists by the BDDP.
Let $q$ be the size of $Q$,
 $k$ be the maximal number of atoms in $Q^*$, and let $w$ be the maximal number
of atoms in a view.

Choose a $k \cdot w$-universal model $\inst_1$ for the canonical database of $Q$
and $\Sigma$.
Let $\inst_2$ be the result of applying $\Sigma_{\views}$ to $\inst_1$, where
$\Sigma_{\views}$ are the source-to-target TGDs corresponding to $\views$, going
from the unprimed relations to the primed ones.
Let $\inst_3$ be a $q$-universal model for $\inst_2$ and $\Sigma$.

We claim that if the pair $\inst_1, \inst_3$ is
not a finite counterexample to monotonic determinacy of $Q$ over $\views$, then $Q$ is monotonically-determined
over $\views$. This would imply monotonic determinacy is finitely controllable.

Clearly $\inst_1, \inst_3$ form a counterexample to
monotonic determinacy
as long as $\inst_3$ does not satisfy $Q'$, so suppose $\inst_3$ satisfies $Q'$.
We know by $q$-universality of $\inst_3$ over $\Sigma'$ and $\inst_2$ that
$Q'$ must be entailed by $\Sigma'$ over $\inst_2$.
Hence the rewriting $Q^*$ must hold in $\inst_2$.
Thus $Q^*$ must be entailed by $\Sigma_\views$ over $\inst_1$, by universality of the usual
chase. Letting $Q''$ be a rewriting of $Q^*$ with respect to $\Sigma_\views$,
we know $Q''$ holds in $\inst_1$.
The maximal size of disjuncts in $Q''$ is at most $k \cdot w$, and thus
by $k \cdot w$-universality of $\inst_1$,
$Q''$ is entailed by $\Sigma$ over the canonical database of $Q$.

We argue  that $Q'$ is entailed by $\Sigma, \Sigma_{\views}, \Sigma'$
over canonical database of $Q$, which  would imply monotonic determinacy.
This follows using the properties of rewritings and the chase.
Since $Q''$ is entailed by $\Sigma$ it will hold in $\inst^\infty_1$, the chase
of the canonical database under $\Sigma$. Since $Q''$ is a rewriting of $Q^*$,
$Q^*$ will hold in $\inst^\infty_2$, the chase of $\inst^\infty_1$ by $\Sigma_{\views}$.
And since $Q^*$ is a rewriting of $Q'$, we infer $Q'$ must hold in the chase of $\inst^\infty_2$
by $\Sigma'$. Thus the ``pipeline'' described in
 Figure \ref{alg:query-mondet-constraints} returns true, and using Proposition \ref{prop:pipeline} we conclude
that $Q$ is monotonically determined over $\views$ relative to the rules.
\end{proof}

Thie leaves the \emph{first decidable case}. Recall that this case is \emph{``everything is guarded''}:  the views and queries are both guarded, and rules are $\fgtgd$s.
Here  we do not claim finite  controllability of monotonic determinacy. Instead we argue for decidability
of monotone determinacy in the finite directly.

\begin{theorem} \label{thm:finconcqcq}
Let $Q$ be a  UCQ query, $\views$ a set of CQ views, and $\Sigma$  a set of linear TGDs
If $Q$ is monotonically determined by $\views$ with respect to $\Sigma$ over finite structures, then the same holds over all structures.
\end{theorem}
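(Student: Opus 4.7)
The plan is to mirror the argument of Theorem~\ref{thm:finconfone}, with Theorem~\ref{thm:bbcrewriting} playing the role of Theorem~\ref{thm:rewritefrontieronemdl} and with finite controllability of UCQ entailment under linear TGDs --- obtained from Rosati's finite $k$-universal models \cite{rosati} --- playing the role of Theorem~\ref{thm:fcecdatalog}.

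First I would apply Theorem~\ref{thm:bbcrewriting} to obtain a UCQ $R_1$ that is a certain-answer rewriting of $Q$ w.r.t.\ $\Sigma$. The backward view rules, which for each CQ view $V(\vec x) := \phi_V(\vec x, \vec y)$ take the form $V(\vec x) \to \exists \vec y\, \phi_V(\vec x, \vec y)$, have a single body atom and are therefore linear. A second application of Theorem~\ref{thm:bbcrewriting} then yields a UCQ $R_2$ over the view schema that is a certain-answer rewriting of $R_1$ w.r.t.\ these backward rules. Let $R_2^*$ denote the UCQ in the base schema obtained by substituting the CQ definition of each view for every view atom in $R_2$.

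Next, using Proposition~\ref{prop:pipeline} and the determinism of $\backview_\views$ for CQ views, I would show that $Q$ is monotonically determined over $\views$ w.r.t.\ $\Sigma$, whether in the finite or unrestrictedly, if and only if $Q \wedge \Sigma \models R_2^*$ over the corresponding class of instances. The delicate direction is the finite case from monotonic determinacy to entailment of $R_2^*$: given a finite $\inst_1 \models Q \wedge \Sigma \wedge \neg R_2^*$, the back-chase of the finite instance $\views(\inst_1)$ produces a finite $\inst_0$ with $\inst_0 \not\models R_1$, which can then be extended to a finite model $\inst_2$ of $\Sigma$ falsifying $Q$ by taking a finite $k$-universal model of $\inst_0$ with $k = |Q|$; the pair $(\inst_1, \inst_2)$ witnesses failure of finite monotonic determinacy.

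The core remaining step is finite controllability of UCQ entailment under linear TGDs, which splits into the CQ cases $Q_i \wedge \Sigma \models R_2^*$ for each disjunct $Q_i$ of $Q$. For each of these I would take a finite $k$-universal model $U_i$ of $\canondb(Q_i)$ under $\Sigma$ with $k$ the maximum disjunct size of $R_2^*$; by the universality property, $R_2^*$ holds in $U_i$ iff it is entailed by $Q_i \wedge \Sigma$, so finite and unrestricted entailment coincide. The main obstacle will be managing the reduction cleanly so that both invocations of Rosati's theorem --- one for extending the back-chase to a finite model of $\Sigma$ without introducing $Q$, and one inside the finite-controllability argument for the entailment --- yield the required finite instances, but both amount to direct applications of the finite $k$-universal model property of linear TGDs.
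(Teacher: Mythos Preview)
Your approach is correct and differs from the paper's own proof. The paper expresses failure of monotonic determinacy as satisfiability of $Q \wedge \Sigma \wedge \Sigma_{\views} \wedge \Sigma' \wedge \neg Q'$ over two disjoint copies of the base schema (with $\Sigma_{\views}$ the view-containment axioms), observes that each conjunct lies in the guarded negation fragment, and then appeals to finite-model results for guarded logics from \cite{gnfj}. Your argument is instead the natural extension to UCQ queries of the paper's proof of the closely related Proposition~\ref{prop:fclinear}: reduce monotonic determinacy to a single UCQ entailment $Q \wedge \Sigma \models R_2^*$ via two rounds of certain-answer rewriting (Theorem~\ref{thm:bbcrewriting}), and then invoke Rosati's finite $k$-universal models twice --- once to build the second finite counterexample instance from the back-chase, and once to establish finite controllability of the entailment itself. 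Your route is more elementary and self-contained, never leaving UCQs and chase-based reasoning, and it handles arbitrary CQ views directly; by contrast, casting $\Sigma_{\views}$ in GNF requires the answer variables of each view to sit under a common guard atom, so the paper's argument tacitly assumes guarded views. The guarded-logic route buys modularity: once the sentence is in GNF, decidability and finite-model behavior come in one stroke, and the argument transfers to any combination of queries, views, and rules expressible there.
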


We provide the proof of this theorem.
The \emph{Guarded Negation Fixpoint Logic} (GNFP) is a fragment of LFP with the restrictions:
\begin{itemize}
\item  Only existential quantification
\item Free variables in a negated formula are guarded. negation is of the form:
 $R(\vec x) \wedge \neg \phi(\vec x)$, where $R$ is an input relation, not a free second order variable
that gets bound by a fixpoint.
\item Fixpoints are guarded. To take a least fixpoint over predicate $U$ of a formula $\phi(U, \vec x)$,
$\vec x$ must be guarded by an input relation.
\end{itemize}
If we disallow fixpoints, we get the \emph{Guarded Negation Fragment} GNF.

Monotone determinacy of $Q$ over $\views$ with respect to $\Sigma$ can be rephrased as the satisfiability of:
$Q \wedge \Sigma \wedge  \Sigma_{\views} \wedge \Sigma' \wedge \neg Q'$
where $Q'$ is a copy of the query on a signature where each predicate $R$ is replaced by a primed copy $R'$, and similarly
for the rules $\Sigma$. $\Sigma_{\views}$ contains axioms:
\begin{align*}
 \forall \vec x~ \phi_V(\vec x) \rightarrow \phi'_V(\vec x)
\end{align*}
where $\phi_V$ is the definition of view $V$, and $\phi'_V$ is a copy in a primed signature.

$Q$ and $\neg Q'$  are in GNF, hence in GNFP.  In the latter case, this is  because there are no free free variables, recalling
that $Q$ is a Boolean CQ.
$\Sigma$ and $\Sigma'$ are  likewise in GNF. $Q$ is clearly in GNFP, since the rules are assumed guarded.
Similarly $\Sigma_{\views}$ can be expressed in GNFP, since the views are guarded by a base relation.

Now we can use a result from \cite{gnfj} that  entailment for GNFP is decidable in the finite.

\subsection{Proof of \cref{thm:finconfone}}
Recall the statement of \cref{thm:finconfone}:

\medskip

Let $Q$ be a CQ  query, $\views$ a set of  CQ views, and $\Sigma$ is a set of $\frone$ TGDs.
If $Q$ is monotonically determined by $\views$ with respect to $\Sigma$ over finite structures, then the same holds over all structures.

\medskip

\begin{proof}
Fix $Q$, $\views$, and $\Sigma$. Let $\sigma$ be the signature of $Q$ and $\Sigma$. Let $\sigma'$ denote primed the copy of $\sigma$. Below we shall use $\cdot'$ to denote copies of queries, rules, and instances over $\sigma$ to their respective versions in the primed signature $\sigma'$.

We show that the general and finite cases of monotonic determinacy coincide. To this end assume that a there exists a counterexample to monotonic determinacy. We shall show that there exists a finite counterexample. Note that the other direction is trivial.

As stated above, we let $Q'$ be the query $Q$ copied over to to $\sigma'$, the copy of the base signature, and similarly let $\Sigma'$ denote the copy of the rules on a primed signature.
 Using our results on certain-answer rewriting,  Theorem \ref{thm:rewritefrontieronemdl} ,
we can get an MDL certain answer rewriting $R'$ of $Q'$ with respect to $\Sigma'$. Note that $R'$ works over the copied
signature $\sigma'$. Note that since the views are CQs, we can treat view definitions and their ``inverses'' 
(view-to-definition rules) as TGDs.
One can rewrite each rule body of $R'$ against these rules, ignoring the intensional predicates. Let $\Theta$ denote the set of such TGDs, 
We let $R$ denote the resulting Datalog program, with the notation indicating that it is in the unprimed signature $\sigma$.  Note that in the process of rewriting $R'$ we did not change the heads of $R'$, meaning that $R$ is also MDL. 

We shall state the properties of $R$ and $R'$ that follow from their respective definitions. We have that: 
\begin{itemize}
\item $\spadesuit$) for any instance $\inst$ over $\sigma$ if $\inst \models R$ then $\inst \wedge \Theta \models R'$; 
\item $\diamondsuit$) for any instance $\inst'$ over $\sigma'$ if $\inst' \models R'$ then $\inst' \wedge \Sigma' \models Q$. 
\end{itemize}
Consequently we have  $\heartsuit$) $\inst \models R$ implies $\inst \wedge \Theta \wedge \Sigma \models Q$. 

As there exists a counterexample to monotonic determinacy, from \cref{prop:pipeline} we know that there exists one produced by the process of \cref{alg:query-mondet-constraints}. Note that when we apply this process in the case of CQ query and views, it is deterministic. 
Let $C_n$ 
be as in \cref{alg:query-mondet-constraints}. Note, that by the definition of $R$ and $C_n$ we have $C_n \models Q \wedge \Sigma \wedge \neg R$, otherwise, from ($\heartsuit$) the process of \cref{alg:query-mondet-constraints} would return true. From \cref{thm:fcecdatalog}, we obtain a finite model of $Q$ and $\Sigma$ that does not entail $R$. Let $C_f$ denote it. Note that $\backview_{\views}(\views(C_f))$ is a singleton, again since the views $\views$ are CQs. Take the instance and copy it  to the
$\sigma'$ signature, letting $M'_f$ denote the result.
Thus $M'_f$ is finite and is isomorphic to $\chase_\Theta(C_f)$. Therefore, from $(\spadesuit)$, we have that $M'_f$ does not entail $R'$. From $(\diamondsuit)$, we have that $M'_f$ and $\Sigma'$ do not entail $Q$. From a second application of \cref{thm:fcecdatalog} we obtain a finite model of $M'_f$ and $\Sigma'$ that does not entail $Q$. 
Let $C'_f$ denote that model. Finally observe that $\views(C_f) \subseteq \views(C'_f)$, $C_f \models Q$, and $C'_f \not \models Q'$. Therefore, as both $C_f$ and $C'_f$ are finite $C_f \cup C'_f \cup \views(C_f)$ form a finite counterexample to monotonic determinacy.
\end{proof}

\subsection{More details on Theorem \ref{thm:fgviewsplusdatalogqueryrewriting}: rewritability via the forward-backward method}

Recall Theorem \ref{thm:fgviewsplusdatalogqueryrewriting}:

\medskip

Let $\Sigma$ be a set of $\fgtgd$s, $Q$ a Boolean Datalog query and $\views$ a set of $\fgdatalog$ views.
Then if $Q$  is monotonically
determined by $\views$ $\wrt$ $\Sigma$ there is a
$\poslfp$ rewriting of $Q$ in terms of  $\views$ w.r.t $\Sigma$.

\medskip

\begin{proof}
For this proof we  provide a  sketch of the argument.

We proceed by first using Proposition \ref{prop:boundedtreewidthapprox}
to get a tree automaton $A_0$ that represents approximations of $Q$. 
We proceed by first using Proposition \ref{prop:boundedtreewidthapprox}
to get a tree automaton $A_0$ that represents approximations. We can also use definability in WMSO to argue that
the set of instances that contain the approximations, satisfy the rules $\Sigma$, and have their view instances correct are recognized
by a non-deterministic B\"uchi automata.

\begin{example} \label{ex:forward}
Consider a base schema with binary predicate $R, S$ and unary predicates $W$ and $Z$. Consider a Datalog query $Q$ that asserts that there is an $R$
path  from a $W$ node to a $U$ node:

\begin{align*}
\goal &\datalogarrow W(x), I(x) \\
I(x) &\datalogarrow Z(x) \\
I(x) &\datalogarrow R(x,y), I(y)
\end{align*}
The canonical expansions are just words that begin with W and traverse $R$ edges until reaching $Z$.

Consider $\Sigma$ consisting of the rules:

\begin{align*}
Z(x) \rightarrow \exists y ~ S(x,y) \\
S(x,y) \rightarrow \exists z ~ S(y,z)
\end{align*}

Thus the chase with $\Sigma$ will append to the final $Z$ node an infinite chain of $S$ edges. Note that the adjusted treewidth is $2$.

Consider $\views$ that copy the $S$ and $R$ edges, omitting the unary predicates. For brevity we call the view predicates $R$ and $S$.

The  B\"uchi tree automaton that we want will accept exactly infinite chains consisting of an initial chain of $R$ edges
and then an infinite chain of $S$ edges.  It will have
states $q_R, q_S$, with $q_R$ initial and $q_S$ accepting (i.e. the B\"uchi acceptance condition is $F=\{q_S\}$).  
The label alphabet will be sets of facts over two elements.

The transitions will be:

\begin{itemize}
\item $t_1$: if we have a parent with label including $R(p,q)$ and the parent is in state $q_R$, then in the unique child  we can transition
to $q_R$
\item $t_2$: if we have a parent with label including $S(p,q)$ and the parent is in state $q_R$, then in the unique child  we can transition
to $q_S$
\item  $t_3$: if we have a parent with label including $S(p,q)$ and the parent is in state $q_S$, then in the unique child  we can transition
to $q_S$
\end{itemize}
\end{example}

The transition function of the automaton is transformed into Datalog rules similar to
the backward mapping rules defined for finite
tree automata in the algorithm of Theorem \ref{thm:auttodlog} mentioned in the body. Recall that this refers back to
Section 7 of \cite{uspods20journal}, with the algorithm given in the beginning of that section and the correctness proven in Proposition 7.1 \cite{uspods20journal}. We now review the construction, which is quite intuitive, and then describe how we modify it.
The algorithm translates each transition in the automaton into a Datalog rule. Ignoring some additional parameters needed
for bookkeeping, in this translation  each state $q$ of the automaton over tree codes
with $k$ local names will translate into an intensional predicate $P_q(x_1 \ldots x_k)$ on the instance that is coded. Each transition between
parent state  $q$ and child states $q_1, q_2$ in the automaton will correspond to a rule  with $P_q$ in the head and $P_{q_1}, P_{q_2}$ in the rule body.

The distinction from  Theorem \ref{thm:auttodlog}, which dealt with standard tree automaton,
is that we have to deal with the acceptance conditions of the B\"uchi automaton. So now we create not a Datalog query, but a $\poslfp$ sentence.
We change the finitely many Datalog rule bodies associated to any intensional predicate $I_q$ into a disjunction of CQs.
So now we have a vector consisting of intensional predicates and associated UCQs.

\begin{example} Let us continue the examination of Example \ref{ex:forward}.
The backward mapping for the example will have binary intensional predicates for each state, $I_{q_R}(p,c)$, $I_{q_S}(p,c)$, and
extensional predicates for the binary predicates $R$ and $S$, recalling that these represent view predicates.

If we were doing the backward mapping for finite tree automata, we  would produce a Datalog program with rules
\begin{align*}
I_{q_R}(p,c) &\datalogarrow R(p,c) \wedge  I_{q_R}(c,c') \\
I_{q_R}(p,c)  &\datalogarrow  S(p,c)   \wedge I_{q_S}(c,c') 
\end{align*}
corresponding to transitions $t_1, t_2$ above, and similarly for $I_{q_S}$ and $t_3$.

In our construction, we will proceed as above for the predicates corresponding
to non-accepting states  -- $I_{q_R}(x,y)$ above. But we will proceed differently for the predicates corresponding to
accepting states, like  $I_{q_S}(x,y)$ above.
\end{example}

We  illustrate the construction of the $\poslfp$ formula for the case when  $A$ is \emph{shallow},  in the sense that when a parent
is in an accepting state, it can only transition to children that are also in accepting states.
The example above has this property:   the intuition for this example is that the accepting states 
correspond to tree code vertices generated by view images of the chase, a suffix, while non-accepting states will correspond
to treecode vertices that come from unfolding the approximation, a prefix.

Instead of creating Datalog rules as in the usual backward mapping, we  will create corresponding fixpoint formulas, where the non-accepting states are 
treated using least fixpoints,
as in traditional Datalog, while the accepting states are treating using  greatest fixpoints. 
We can consider these as vectorized fixpoints, binding a collection of predicates
at once. It is known by the ``Beki\'c principle'' \cite{bekic} that such vectorized fixpoints can be replaced by sequences of usual fixpoints.
The predicates corresponding
to accepting   states (informally, those corresponding to nodes generated in the chase, or witnesses to views), are put into an inner vector of  $\nu$ operators.  
Predicates corresponding to non-accepting states -- informally, those corresponding  to the CQ approximation -- 
are bound with a $\mu$ operator. This ordering here  -- accepting states within the scope of  non-accepting state --
corresponds in our case to the fact that we first choose  the approximation, then chase it and take the view image.

We explain the construction by continuing the example, leaving some details
 to the full version.

\begin{example} \label{ex:backwardposlpf}
Let us continue the examination of Example \ref{ex:forward}.
The backward mapping for the example will have binary intensional predicates for each state, $I_{q_R}(p,c)$, $I_{q_S}(p,c)$, and
extensional predicates for the binary predicates $R$ and $S$, recalling that these represent view predicates.

We produce the system of fixpoints: 
\begin{align*}
 \mu_{I_{q_R}}: ~  I_{q_R}(p,c) := ((\exists c'  ~ R(p,c) \wedge  I_{q_R}(c,c')) \vee \exists c' ~ ( R(p,c)   \wedge I_{q_S}(c,c') )) \\
\nu_{I_{q_S}}: ~ I_{q_S}(p,c) :=  (S(p,c) \wedge \exists c'  ~ I_{q_S}(c,c'))
\end{align*}

This asserts that we do an outer least fixpoint for $I_{q_R}$ and in each iteration do a greatest fixpoint on
$I_{q,S}$.
And then we add the quantification $\exists p c ~ I_{q_R}(p,c)$.

That is, our $\poslfp$ formula states that:
\begin{itemize}
\item there is an $ I_{q_R}$ pair
\item $I_{q_R}$ pairs are labelled correctly and have $R$ paths to an  $I_{q_S}$ pair.
\item  every $I_{q_s}$ pair is $S$-labelled  correctly  and links to another  $I_{q_s}$ pair
\end{itemize}

Thus the formula asserts exactly the structure we want within the view instance.
\end{example}

Now we explain how this example generalizes.
Recall that for $\fgtgd$s, the instance can be given a tree structure, where each tree vertex is associated with a collection of facts, as in a tree decomposition.  When a  $\fgtgd$ $\tau$ is fired, the trigger image of the body
is contained in some vertex $v$ of the tree. For a non-full TGD, we create a new child of $v$ which includes the facts in the head
along with any facts from $v$ that are guarded by the facts in the head. For a full TGD we create new facts in the same node, and
we may propagate a fact back from a child to its parent, if the fact is guarded in the parent.
We say that an $\fgtgd$ $\tau$ is \emph{speedily witnessed} in a tree code if whenever the body of the TGD matches via a homomorphism $h$ into a tree vertex $v$,
the head facts are found in either in $v$ or in the child created when the the chase step for $h$ is fired.
We say a tree code $t'$ extends another tree codes $t$ if there is an injective mapping $f$ from the nodes of $t$ to the nodes of $t'$ such that the labels of node $v$ are a subset of the labels of the nodes of $f(v)$.

\begin{claim} \label{clm:speed} For $\fgtgd$s $\Sigma$ with $k$ the maximum number of elements in a frontier, and any finite $k$ tree code  $t$ with width the maximum number of elements in a rule , there is a $k$ tree code $t'$ that extends $t$
where each rule in $\Sigma$ is speedily witnessed.
\end{claim}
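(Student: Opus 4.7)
The plan is to construct $t'$ by iteratively extending $t$ in a chase-like fashion, adding a fresh child whenever a rule trigger is not yet speedily witnessed. Enumerate the triples $(\tau, v, h)$ such that $\tau = \lambda(\vec x) \to \exists \vec y.\, \rho(\vec x, \vec y)$ is a rule of $\Sigma$, $v$ is a vertex in the tree constructed so far, and $h$ is a homomorphism from $\lambda$ into the bag $\lambda(v)$. Process the triples in a fair order. For each triple, first check whether a speedy witness for the head already exists --- either $h$ extends to a homomorphism of $\rho$ into $\lambda(v)$, or there is a child $v'$ of $v$ whose bag witnesses $\rho(h(\vec x), \vec e)$ for some tuple $\vec e$. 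If not, add a new child $v'$ of $v$ whose bag consists of $h(\vec x)$ together with a tuple $\vec e$ of fresh labelled nulls standing for the existentials $\vec y$, carrying the atoms of $\rho(h(\vec x), \vec e)$ plus the partial injection identifying the overlap with $\lambda(v)$.

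The main thing to verify is that the resulting labelled tree $t'$ is a coherent $k$ tree code that extends $t$ and speedily witnesses every rule in $\Sigma$. Extension is automatic, since we never alter or remove labels already present in $t$. Coherence and the running intersection property follow directly: the fresh nulls $\vec e$ appear only in the new child $v'$; the elements of $h(\vec x)$ that are shared with $v$ occur in both bags and are linked by the partial injection label of $v'$, so no element recurs in disconnected subtrees. For the width bound, note that since $\Sigma$ is frontier-guarded, there is a body atom containing all of $\vec x$; hence $|h(\vec x)|$ is bounded by the arity of that atom, and the total size of the new child bag is at most the number of variables occurring in $\tau$, which by assumption fits under the declared width bound.

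The chief obstacle is that the process does not terminate in general --- each newly created child can itself support additional triggers that must be witnessed, and no uniform chase depth bound is available for $\fgtgd$s. This is tolerable rather than fatal, however, because tree codes are permitted to be $\omega$-trees and the claim does not require $t'$ to be finite. A fair enumeration of triples guarantees that every trigger appearing anywhere in the limit tree is eventually processed at some stage, so the resulting $t'$ speedily witnesses every rule $\tau \in \Sigma$.
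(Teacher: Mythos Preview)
Your construction is correct for the claim as stated and follows the same tree-like chase idea as the paper: process local triggers in a fair order, and for each unwitnessed one spawn a fresh child carrying the head atoms; infinite $t'$ is permitted, so non-termination is harmless.

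The one substantive difference is that the paper's version additionally \emph{propagates} any newly created fact from a child back to its parent whenever that fact is guarded in the parent. This step is not needed to establish speedy witnessing as literally defined --- that notion only quantifies over triggers whose body image already lies inside a single bag, and your construction handles exactly those. What propagation buys is that \emph{every} trigger in the decoding becomes local to some bag, so that speedy witnessing implies the decoding of $t'$ satisfies $\Sigma$. Without propagation, your $t'$ can have cross-bag triggers that are never witnessed, and its decoding need not be a model of $\Sigma$. Since the immediately following claim (that one can obtain a hybrid envelope tree whose decoding is a universal model of $\canondb(Q_n)\wedge\Sigma$) relies on exactly this, the paper bundles propagation into the present construction. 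Your argument proves the claim, but if you intend to use $t'$ downstream you will need to add the propagation step.
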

\begin{proof}
This can be proven using standard results about the treelike chase for $\fgtgd$s, see for example \cite{treelikechase}: when we need a new speedy witness we create a child. In order to ensure that witness triggers for $\Sigma$ always have frontier
within a single node, we always propagate a new fact from child to parent if it is guarded in the parent. But since all such facts are guarded in the parent, this will not violate the bound $k$.
\end{proof}


The \emph{hybrid envelope trees} for a Datalog query
$Q$, views $\views$,  and $\fgtgd$s $\Sigma$ are the tree codes $t'$ for the combined base and view signature, such that:
\begin{itemize}
\item $t'$   extends an approximation tree corresponding to the canonical database of some approximation $Q_n$
\item every TGD  is speedily witnessed in $t'$
\item for each $\vec d$ coded in a single  node of $t'$, for each view definition $\phi(\vec x)$,
the corresponding  view fact $V_\phi(\vec d)$ is present in the tree node
\end{itemize}

Using the proof of Claim \ref{clm:speed} above, we can easily show:

\medskip

\begin{claim} \label{clm:universal}
For FGDL views $\views$ and a Datalog query $Q$, and any
number $n$, there is a hybrid envelope tree $t$ whose decoding is a universal model for $\canondb(Q_n) \wedge \Sigma$.
\end{claim}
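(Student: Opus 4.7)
The plan is to construct the hybrid envelope tree in three stages: start from the canonical approximation tree of $Q_n$, extend it by a treelike chase with $\Sigma$ so that every trigger is speedily witnessed, and then decorate each bag with all FGDL view facts whose guard lives in that bag. By \cref{prop:boundedtreewidthapprox}, the canonical database of $Q_n$ already comes equipped with a bounded-width tree decomposition. I would then iteratively apply the treelike chase step for $\fgtgd$s as reviewed in \cref{app:chaseapproxregular}: whenever a $\fgtgd$ trigger is detected inside a bag $v$, create a fresh child of $v$ whose bag contains the head atoms (plus any elements of $v$ that are guarded by those heads) for non-full rules, or enlarge $v$ itself and propagate guarded facts upward for full rules. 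This is essentially the construction underlying \cref{clm:speed}, and it yields a tree code $t_{\mathrm{ch}}$ whose decoding is, by a standard chase argument, a universal model of $\canondb(Q_n) \wedge \Sigma$.

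Next I would turn $t_{\mathrm{ch}}$ into a hybrid envelope tree by inserting view facts. Fix any view $V$ defined by an $\fgdl$ query $Q_V$, and consider any derivation of $V(\vec d)$ from the decoding of $t_{\mathrm{ch}}$. Because $Q_V$ is frontier-guarded, every intensional atom produced by the Datalog rules of $Q_V$ — and in particular the answer atom $V(\vec d)$ — has its arguments contained in the arguments of some EDB atom of the decoding. That EDB atom appears in some bag $v$ of $t_{\mathrm{ch}}$, so $\vec d$ is jointly coded at $v$ and I can extend the label of $v$ with the new fact $V(\vec d)$ without violating coherence, running intersection, or the width bound. Doing this for every derivable view fact simultaneously produces the enriched tree code $t$ over the combined base and view signature.

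It remains to verify that $t$ satisfies each of the three conditions in the definition of a hybrid envelope tree and that its decoding is universal. The first two conditions — extension of the approximation tree and speedy witnessing of $\Sigma$ — are immediate from stages one and two, since adding labels does not affect them. The third condition holds by construction of stage three. For universality, one checks that the decoding of $t$ equals the decoding of $t_{\mathrm{ch}}$ expanded with the full $\fgdl$-evaluation of $\views$; since the base-schema reduct is a universal model of $\canondb(Q_n) \wedge \Sigma$, and the view facts are computed faithfully from it, the combined structure is universal for $\canondb(Q_n) \wedge \Sigma$ in the joint signature that includes the view predicates.

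The main obstacle is the view-decoration stage: one must argue that attaching view facts to a single bag is consistent. This hinges precisely on the frontier-guardedness of the view definitions, which guarantees that each derived view tuple is guarded by some EDB atom and hence coded at a single vertex. Without this, a derived view fact could have its arguments spread across several bags, forcing width to grow with the depth of the derivation and breaking the bounded-width invariant needed downstream.
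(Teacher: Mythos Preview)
Your proposal is correct and follows essentially the same approach as the paper: perform the treelike chase to satisfy the first two items of the hybrid envelope tree definition (exactly as in \cref{clm:speed}), then evaluate the FGDL views and place each derived view fact in the bag containing its guard. The paper's proof is a terse three sentences that say precisely this; your version simply unpacks the one-line remark ``since the views are FGDL we do not break the tree decomposition'' into the explicit argument that frontier-guardedness forces the answer tuple of every derived view fact to sit inside a single EDB atom and hence a single bag.
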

\begin{proof}
We perform the treelike chase to obtain a tree satisfying the first two items: that is, we repeatedly extend as in
Claim \ref{clm:speed}. And then we simply evaluate each view
definition to add the corresponding facts. Since the  views are FGDL we do not break the tree decompositon when
we do this.
\end{proof}


A B\"uchi automaton is \emph{shallow} if for every accept state $q$ and every transition whose parent state is
$q$, the associate child states are also accepting.  That is, in an accepting run, accepting states form a suffix of the tree.

\begin{lemma} \label{lem:shallow} For Datalog $Q$, FGDL $\views$, and $\fgtgd$s $\Sigma$, there is a shallow automaton  that accepts exactly the hybrid envelop trees for $Q$, $\views$, $\Sigma$.
\end{lemma}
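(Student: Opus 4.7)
The plan is to construct the shallow automaton as a product of several components running on $k$-tree codes over the combined base/view signature, where $k$ is the uniform width bound provided by Claim~\ref{clm:speed} together with Proposition~\ref{prop:boundedtreewidthapprox}. The automaton naturally splits into two phases: a non-accepting ``approximation phase'' and an accepting ``extension phase''. Shallowness will follow by construction, because once the automaton enters the extension phase it cannot return.

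First I would invoke Proposition~\ref{prop:boundedtreewidthapprox} to obtain a finite tree automaton $A_0$ recognizing canonical tree decompositions of CQ approximations of $Q$, keeping all of its states non-accepting. At any node where $A_0$ would have accepted (i.e.\ the boundary of the approximation) I non-deterministically permit a transition into a new phase $A_1$, whose states are all accepting and whose transitions lead only back into $A_1$. The transition relation of $A_1$ locally verifies that each new child bag is the speedy-witness bag of some chase step for a $\fgtgd$ $\tau \in \Sigma$ applied to a trigger localized in the parent, together with guarded propagation of parent facts. On top of this two-phase skeleton, I take the product with two further regular constraints: (i) speedy witnessing for every trigger of every $\tau \in \Sigma$ across adjacent bags, which is a local condition on pairs of adjacent bags and hence regular; and (ii) view completeness, stating that for each tuple $\vec d$ coded in a bag and each FGDL view definition $\phi$ satisfied at $\vec d$ in the decoded instance, the corresponding view fact $V_\phi(\vec d)$ is present at that bag. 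Since $\fgdatalog$ is expressible in GSO and the trees have width at most $k$, view completeness is recognizable by Theorem~\ref{thm:courcelle}.

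The main obstacle will be keeping the view-completeness component compatible with shallowness: view satisfaction is a global condition on the decoded instance, but its automaton states must be fibered independently of the phase component so that the suffix-structure of accepting states survives the product. I plan to handle this by running the view-completeness component orthogonally to the phase component, so that the accepting flag of a product state is determined solely by whether the phase component is in $A_1$; the view component then merely contributes a conjunctive side-constraint rather than disrupting the shallow phase structure. Soundness and completeness with respect to hybrid envelope trees are then routine: a hybrid envelope tree decomposes canonically into an approximation-tree prefix (tracked by $A_0$) and a chase/view-witness suffix (tracked by $A_1$), and conversely any run of the constructed automaton yields, via the decoded structure, a tree satisfying the three clauses of the hybrid envelope definition.
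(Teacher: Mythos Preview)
Your overall strategy—building a product of component automata for the three hybrid-envelope conditions and arranging for shallowness via a two-phase structure—matches the paper's. The approximation and speedy-witnessing components are essentially as in the paper (modulo the small point that chase children can branch off \emph{interior} approximation nodes, so the $A_0\to A_1$ switch must be per-child rather than at a single ``boundary'').

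The genuine gap is in your treatment of view completeness. You obtain a tree automaton for it via Courcelle's theorem and then propose to drop its acceptance condition, letting the phase component alone determine acceptance in the product. That does make the product shallow, but it also changes the language: you now accept any tree on which the Courcelle automaton has \emph{some} infinite run, which for an arbitrary B\"uchi automaton is strictly weaker than satisfying its acceptance condition. You assert that the view component ``merely contributes a conjunctive side-constraint'', but you never justify why its B\"uchi condition can be discarded. What is needed is the observation that view completeness for $\fgdatalog$ views is a safety property—any violation is witnessed by a single finite FGDL derivation together with a missing view fact at one bag—so one may take an automaton with all states accepting. Without that argument, your product could accept trees that are not hybrid envelope trees.

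The paper avoids this detour: rather than invoking Courcelle, it builds the view-completeness automaton directly by guessing, at each node, the intermediate intensional relations of the FGDL views on the bag's elements, and checking local closure under the view rules together with presence of the required view facts. These checks are purely local, so every state is accepting and the automaton is trivially shallow; the product of shallow automata is then shallow under the all-components-accepting convention. Your route can be repaired by inserting the safety argument above, but the direct IDB-guessing construction is both shorter and makes shallowness immediate.
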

\begin{proof}
The first item in the definition can be enforced with a shallow
 B\"uchi automaton easily: we have non-accept states for the intermediate intentionsal predicates and then accept states
when we reach extensional facts.
The second item  can be enforced with an automaton having  a single accept state that is initial and a sink non-accepting state
The third item  can be enforced with a shallow automaton that guesses the intermediate intensional relations holding
at each node.

The automaton we want then is the product, and the product of shallow automaton is shallow.
\end{proof}

We  note that the projection of a shallow automaton is shallow. Let $A_V$ be the automaton obtained
from Lemma \ref{lem:shallow} by projecting away everything but the view facts.

We  let $\phi_{\views,\Sigma, Q}$ be  the $\poslfp$ formula  obtained by performing the backward mapping
of the shallow automaton $A_V$ above.
Our main claim is the following:

\begin{theorem}$\phi_{\views,\Sigma, Q}$ is a rewriting of $Q$ over $\views$ with respect to $\Sigma$.
\end{theorem}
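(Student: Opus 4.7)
The plan is to prove that for every $\inst \models \Sigma$ we have $\inst \models Q$ if and only if $\views(\inst) \models \phi_{\views,\Sigma,Q}$, exploiting the monotonic-determinacy hypothesis together with Proposition~\ref{prop:pipeline}. The argument has three pieces: a correctness lemma for the $\poslfp$ backward mapping, a completeness direction relying on Claim~\ref{clm:universal}, and a soundness direction relying on the pipeline characterization.

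First I would state and prove a correctness lemma for the $\poslfp$ mapping, extending Theorem~\ref{thm:auttodlog} from ordinary tree automata to shallow B\"uchi automata: for any structure $M$, $\phi_A$ holds on $M$ iff there is a (possibly infinite) tree $T$ accepted by the shallow B\"uchi automaton $A$ together with a homomorphism from $\decode{T}$ to $M$. The proof mimics the inductive argument of Theorem~\ref{thm:auttodlog}, using the same predicates $P_{q,g}$ and local bag rules, but with $P_{q,g}$ for non-accepting $q$ bound by a least fixpoint and $P_{q,g}$ for accepting $q$ bound by a greatest fixpoint nested inside. Shallowness is what makes this stratification sound: once an accepting state is reached, only accepting states can appear below, so the greatest fixpoint correctly witnesses an infinite accepting branch, while the outer least fixpoint still captures the finite ``prefix'' portion exactly as in the Datalog case.

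For completeness, suppose $\inst \models Q \wedge \Sigma$. Then some approximation $Q_n$ of $Q$ maps homomorphically into $\inst$. By Claim~\ref{clm:universal}, there is a hybrid envelope tree $T$ (for $Q_n, \views, \Sigma$) whose decoding is a universal model of $\canondb(Q_n) \wedge \Sigma$, so $\decode{T}$ maps homomorphically into $\inst$. Restricting to the view signature, we obtain a homomorphism from $\decode{T_V}$ into $\views(\inst)$, where $T_V$ is the projection of $T$. By Lemma~\ref{lem:shallow} and the fact that projections of shallow automata are shallow, $T_V$ is accepted by $A_V$. The correctness lemma then yields $\views(\inst) \models \phi_{\views,\Sigma,Q}$.

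For soundness, suppose $\views(\inst) \models \phi_{\views,\Sigma,Q}$. By the correctness lemma, there is a tree $T_V$ accepted by $A_V$ with a homomorphism $h$ from $\decode{T_V}$ into $\views(\inst)$. Since $A_V$ is the projection of the shallow automaton for hybrid envelope trees, $T_V$ is the view-signature projection of some hybrid envelope tree $T$ for some approximation $Q_n$, and $\decode{T_V} = \views(\decode{T})$ (since views are FGDL, all relevant view tuples sit within a single bag). For each view fact in $\decode{T_V}$, the homomorphism $h$ produces a corresponding view fact in $\views(\inst)$, which in turn can be witnessed by a tuple in $\inst$; piecing these witnesses together assembles an unfolding $Q'_{m,n} \in \backview_\views(\views(C_n))$, where $C_n = \chase_\Sigma(\canondb(Q_n))$, together with a homomorphism $Q'_{m,n} \to \inst$. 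Since $\inst \models \Sigma$, this homomorphism extends to one from $\chase_\Sigma(Q'_{m,n})$ into $\inst$. By Proposition~\ref{prop:pipeline} applied to the assumed monotonic determinacy, $\chase_\Sigma(Q'_{m,n}) \models Q$, and since $Q$ is preserved under homomorphisms we conclude $\inst \models Q$.

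The main obstacle is the correctness lemma for the $\poslfp$ backward mapping: unlike Theorem~\ref{thm:auttodlog}, both the tree being constructed and the homomorphism may be infinite, so the usual inductive construction on the Datalog proof tree must be replaced by a coinductive construction for the $\nu$-block, justified by the greatest-fixpoint semantics. A secondary subtle point is arguing that $\decode{T_V}$ really coincides with $\views(\decode{T})$ under the FGDL assumption on views, so that the soundness step can transfer view facts along $h$ into witness tuples in $\inst$.
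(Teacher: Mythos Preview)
Your proposal takes essentially the same route as the paper: both directions go through a hybrid envelope tree accepted by the shallow automaton $A_V$, with completeness using Claim~\ref{clm:universal} to produce a universal-model tree mapping into $\inst$, and soundness lifting an accepted view-tree back to a hybrid envelope tree and then invoking monotonic determinacy. Your explicit correctness lemma for the $\poslfp$ backward mapping (the shallow-B\"uchi analogue of Theorem~\ref{thm:auttodlog}) is a clean abstraction of what the paper carries out inline in each direction.

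One small slip to watch in your soundness step: the lifted hybrid envelope tree $T$ need not decode to the chase $C_n$ itself---the automaton of Lemma~\ref{lem:shallow} accepts any tree meeting the three envelope conditions, so $\decode{T}$ is only \emph{some} model of $\canondb(Q_n)\wedge\Sigma$. Hence the witnesses you collect assemble an element of $\backview_\views(\views(\decode{T}))$, not of $\backview_\views(\views(C_n))$ as written. The fix is immediate: compose with the chase homomorphism $C_n\to\decode{T}$ to route $\views(C_n)$ through $\decode{T_V}$ into $\views(\inst)$, then choose witnesses for those images; that yields the $Q'_{m,n}$ needed for Proposition~\ref{prop:pipeline}. (The paper's terse ``using monotonic determinacy'' hides exactly the same detour through the pipeline.) A related minor point: the third envelope condition is only a one-way implication, so in general $\decode{T_V}\supseteq\views(\decode{T})$ rather than equality; this does not hurt, since you only need the restriction of $h$ to $\views(\decode{T})$.
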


\begin{proof}
In one direction suppose $Q$ holds in $\inst$ and let $\vinst$ be the view image of $\inst$.
There is some approximaton $Q_n$ with a homomorphism $h$ 
sending $Q_n$ into $\inst$. We apply Claim \ref{clm:universal}  to get
a hybrid envelope tree $t$ embedding $Q_n$ that is a universal model for $\canondb(Q_n) \wedge \Sigma$. We let $t'$ be the corresponding projected tree onto view facts. Then $h$
extends to a homomorphism $h_t$ from  the decoding of $t$ into $\inst$. By definition of the automaton $A_V$, $t'$ is accepted
by $A_V$. Since $A_V$ is shallow, there is a run $r$ of $A_V$ on $t'$ in which the domain of
accepting states form a suffix. Let $p'$ be the corresponding prefix.
We first give  a strategy for unfolding the outer least fixpoint of
$\phi_{\views,\Sigma, Q}$: we unfold until we reach the boundary of $h_t(p')$. At that point, we can unfold the inner fixpoints infinitely
often.  This witnesses that $\vinst$ satisfies $\phi_{\views,\Sigma, Q}$.

In the other direction, suppose $\phi_{\views,\Sigma, Q}$ holds in $\jinst$.

There is an unfolding of the outer fixpoint that witnesses
this, and a finite  tree  $p'$ where nodes are labelled with
the inner fixpoint, such that there is a homomorphism from $\decode{p'}$ into $\jinst$. 
 Repeatedly expanding the inner fixpoint at each node of $p'$, we will
obtain a tree with $p'$ as a prefix, thus yielding an infinite tree $t'$ accepted by $A_V$ (filled out with self loops or dummy nodes), along with a 
homomorphism $h'$ of $\decode{t'}$ into $\jinst$. 
Since $A_V$ is a projection of the automaton obtained from Lemma~\ref{lem:shallow}, there is an expansion of $t'$
to a tree $t_0$ with codes for base facts that embeds the canonical tree for
$Q_n$ into $\inst$, where the decoding of $t_0$  satisfies $\Sigma$, and where the view facts 
are contained in those given by the view definitions for  $\decode{t'}$. Let $\inst_0$ be the decoding of $t_0$. 


$\inst_0$ satisfies $Q$ since it contains a homomorphic image  of $Q_n$.
Using monotonic determinacy, we can see that $Q$ holds in $\inst$ as well.
\end{proof}

\end{proof}

\subsection{Proof of Theorem \ref{thm:noncompact}: Non-rewritability in Datalog} \label{app:norewrite}


Recall that if we have frontier-guarded TGDs, an FGDL query $Q$,  and Datalog  views $\views$, with $Q$ monotonically determined over
$\views$ with respect to $\Sigma$, then there is a rewriting of $Q$ over $\views$ in Datalog. This
is Theorem \ref{thm:fgviewsplusfgdlquery}. On the other hand, if $Q$ is in Datalog, our result only give a rewriting
in $\poslfp$. As mentioned in the body, n the case of frontier-guarded TGDs, or even linear TGDs,
we do not know if $\poslfp$ is needed.

We now show that there are TGDs with the BTVIP, where we cannot obtain rewritability in Datalog, we really need $\poslfp$.
This stated in Theorem \ref{thm:noncompact} in the body:

\medskip

 There are rules $\Sigma$, a Datalog query $Q$ such that the BTVIP holds, and $\fgdatalog$ views
with $Q$ monotonically determined by $\views$ w.r.t. $\Sigma$, where there is no Datalog rewriting.

\medskip 

Here we emphasize that, as in our default for the paper, we deal with all instances: so the example will
be monotonically determined over all instances, and \emph{there is no Datalog rewriting over all instances}.
For the example we give, there will be a Datalog rewriting over finite instances.
This is a case where the argument is specific to the setting of unrestricted instances. 

Recall,  the \emph{compactness of entailment} property discussed in the context of our positive results
on rewriting.
It deals with the situation where we have an infinite instance of the view schema, and it  is $Q$-entailing with respect  to 
$\Sigma, \views$ -- that is, it implies, using the view definitions and $\Sigma$, that $Q$ holds.
Then  there is a finite subinstance if the view image that is $Q$-entailing with respect to $\Sigma,\views$  well.
It is easy to see that this is a necessary condition to have a Datalog rewriting.
Theorem \ref{thm:compact} proved compactness of entailment in the setting where views and queries are frontier-guarded and the rules
are tame.
We give an example, where the rules still have the BTVIP, although they are not $\fgtgd$s, where  this property is  violated.
And using this failure, we will easily conclude that Datalog-rewritability fails.

\begin{proposition} 
 There is a set of TGDs $\Sigma$, a Boolean CQ
$Q$ a set of $\fgdl$ views $\views$ with the BTVIP such that $Q$ is
monotonically determined by $\views$ relative to $\Sigma$ but $Q$ has no
Datalog rewriting over the views in $\views$.
\end{proposition}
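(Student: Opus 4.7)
The plan is to construct a concrete triple $(\Sigma, Q, \views)$ for which the BTVIP and monotonic determinacy are easy to verify, and for which ``compactness of entailment'' (the property established for well-behaved classes in Theorem~\ref{thm:compact}) \emph{fails}; then a short monotonicity-plus-Datalog-compactness argument rules out any Datalog rewriting. The main obstacle is the explicit design of the example, since we need to simultaneously arrange that (i) some infinite view image forces $Q$, (ii) every finite sub-view-instance can be ``spoofed'' by some $\Sigma$-realization avoiding $Q$, and yet (iii) monotonic determinacy over genuine $\Sigma$-models is preserved.

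For the example, I would take TGDs $\Sigma$ containing a chain-generating rule of the shape $R(x,y) \rightarrow \exists z\, R(y,z)$ together with auxiliary rules that propagate a marker predicate along the chain, a Boolean CQ $Q$ that looks for a short pattern which under $\Sigma$ is created precisely along an infinite marked chain, and $\fgdl$ views exposing the chain edges (and propagated markers) in a way that keeps the predicate in $Q$ indirectly visible. Monotonic determinacy would then follow from the pipeline of Proposition~\ref{prop:pipeline}: every CQ-unfolding of $Q$ chased under $\Sigma$ produces a view image that, together with $\backview_{\views}$ and $\Sigma$, semantically commits to a match of $Q$ in any larger $\Sigma$-model whose view image extends it. The BTVIP is immediate, since CQ-approximations of $Q$ have bounded size, their chase under guarded-style $\Sigma$ is tree-like of bounded width (as in Proposition~\ref{prop:chaseapprox}), and the $\fgdl$ views preserve this tree structure.

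The crux is then the non-rewritability step. Suppose for contradiction that $R$ is a Datalog rewriting. Pick a model $\inst \models \Sigma$ with $Q(\inst) = \true$ whose view image $\vinst := \views(\inst)$ is infinite; then $R(\vinst) = \true$. Since every Datalog query is monotone and its truth on any instance depends on only finitely many input facts, there is a finite $\vinst_0 \subseteq \vinst$ with $R(\vinst_0) = \true$. By the failure of compactness of entailment built into the example, $\vinst_0$ admits a sound realization $\inst' \models \Sigma$ with $\vinst_0 \subseteq \views(\inst')$ and $Q(\inst') = \false$. Then on the one hand $R(\views(\inst')) = Q(\inst') = \false$, while on the other hand monotonicity of $R$ forces $R(\views(\inst')) \supseteq R(\vinst_0) = \true$, a contradiction. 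Thus no Datalog rewriting exists. Note that this argument relies only on monotonicity and on the fact that Datalog queries have finite support; the entire difficulty is concentrated in producing the example witnessing the failure of compactness of entailment, whereas in the frontier-guarded setting Theorem~\ref{thm:compact} rules out exactly this kind of example.
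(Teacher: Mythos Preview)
Your high-level strategy and the final non-rewritability argument are correct and match the paper's proof exactly: one exhibits a view image $\vinst$ on which $Q$ is entailed but no finite subinstance entails $Q$, and then the finite-support property of Datalog together with monotonicity yields a contradiction with any putative Datalog rewriting.

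However, there is a genuine gap in your construction sketch, and it is not merely a matter of missing detail. You propose to use ``guarded-style $\Sigma$'' and to derive the BTVIP via Proposition~\ref{prop:chaseapprox}. But this is self-defeating: if $\Sigma$ consists of frontier-guarded TGDs, then since your $Q$ is a CQ (hence in $\fgdl$) and your views are in $\fgdl$ (hence in Datalog), Theorem~\ref{thm:compact} applies and guarantees that compactness of entailment \emph{holds}. So no example satisfying your outline can exist. The paper's construction necessarily uses a rule that is \emph{not} frontier-guarded, namely
\[
R(r,r') \land L(l,l') \land A(r,l) \rightarrow A(r',l'),
\]
whose frontier $\{r',l'\}$ is not covered by any body atom. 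The BTVIP is then established not by general guardedness machinery but by the ad hoc observation that on $\chase_\Sigma(\canondb(Q))$ the non-guarded rules never fire (because no $L$-atom is present), so the chase remains tree-like for that specific seed. This decoupling of ``BTVIP holds'' from ``rules are guarded'' is precisely what makes the example possible, and it is the idea your proposal is missing.
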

In fact, the argument will show that \emph{there is no rewriting as an infinite disjunction of CQs}.
\begin{proof}
Let the schema $\schema$ contain the nullary symbol $\goal$, the unary
relation symbol $U$ and the binary relation symbols $A$, $L$
and $R$.
Define the following set of rules for $\schema$
\begin{align*}
 \Sigma = \{ 
 & \goal \land A(x,x) \rightarrow \exists y ~ R(x,y), \\
 & \goal \land R(x,y) \rightarrow \exists z ~ R(y,z), \\
 & R(r,r') \land L(l,l') \land A(r,l) \rightarrow A(r',l') \\
 & A(r,l) \land U(l) \land L(x,y) \rightarrow \goal \land A(l,l)
\}.
\end{align*}

Let the view schema contain binary relations $V_A$ and $v_R$
along with a unary predicate $\reach_U$. The views $\views$ are defined
such that $V_A$ and $V_R$ have copy views, meaning the trivial view
definitions $V_A(x,y) \rightarrow A(x,y)$ and $V_R(x,y) \rightarrow R(x,y)$.
For the unary predicate $\reach_U$ we have the following Datalog view, which
returns the nodes that can reach a  $U$ node via an $L$-path:
\begin{align*}
 \reach_U(x) & \datalogarrow U(x) \\
 \reach_U(x) & \datalogarrow L(x,y) \land \reach_U(y)
\end{align*}

Define the Boolean CQ  $Q = \exists x ~ \goal \land A(x,x) \land U(x)$.

\begin{claim} $Q$ is monotonically determined by the views in $\views$ relative to rules $\Sigma$.
\end{claim}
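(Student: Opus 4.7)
The plan is to prove monotonic determinacy by a direct structural argument. Suppose $\inst \models \Sigma \wedge Q$ and $\inst' \models \Sigma$ with $\views(\inst) \subseteq \views(\inst')$. Pick a witness $x_0$ for $Q$ in $\inst$: so $\goal$ holds in $\inst$ and $A(x_0,x_0), U(x_0) \in \inst$. Because $\inst \models \Sigma$, rules 1 and 2 force an infinite $R$-chain $x_0, y_1, y_2, \dots$ in $\inst$, and therefore $\views(\inst)$ contains $V_A(x_0, x_0)$, the entire $V_R$-chain $V_R(x_0,y_1), V_R(y_1,y_2), \dots$, and $\reach_U(x_0)$. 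By view containment these all appear in $\views(\inst')$, and hence $\inst'$ itself contains $A(x_0, x_0)$ together with that same $R$-chain, and $\reach_U(x_0)$ holds in $\inst'$.

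The main step exploits $\reach_U(x_0)$ in $\inst'$. Unfolding the Datalog definition of $\reach_U$ gives an $L$-path $x_0 = l_0, l_1, \dots, l_n$ in $\inst'$ terminating at some $l_n$ with $U(l_n) \in \inst'$. Running this $L$-path in parallel to the inherited $R$-chain, I apply rule 3 inductively: from $A(x_0, x_0) = A(y_0, l_0)$ together with $R(x_0, y_1)$ and $L(x_0, l_1)$ we derive $A(y_1, l_1)$, and iterating yields $A(y_i, l_i)$ for $i \le n$. Once $A(y_n, l_n)$ is in $\inst'$, rule 4 fires using $U(l_n)$ together with some $L$-fact of $\inst'$ (available whenever the $L$-path has positive length), producing $\goal \in \inst'$ and $A(l_n, l_n) \in \inst'$. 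Combined with $U(l_n)$, this witnesses $Q(\inst')$.

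The delicate case is the degenerate one in which $\reach_U(x_0)$ is witnessed via the base clause $\reach_U(x) \datalogarrow U(x)$ alone, so that $U(x_0) \in \inst'$ and no $L$-edge need leave $x_0$; the rule-3 propagation then does not start, and firing rule 4 from $A(x_0, x_0) \wedge U(x_0)$ still demands some $L$-fact in $\inst'$. My plan is to handle this corner by exploiting that $\views(\inst)$ carries strictly more structure than just $V_A(x_0,x_0), \reach_U(x_0)$ and the $V_R$-chain: rule 3 itself fires in $\inst$ along the $R$-chain whenever $\inst$'s own witness for $\reach_U(x_0)$ is non-trivial, producing further $V_A$- and $\reach_U$-facts that any realization $\inst'$ must in turn realize, which in particular forces an $L$-fact to be present and thus enables rule 4; and if $\inst$ itself uses only the base case of $\reach_U$, a symmetric direct appeal to rule 4 in $\inst'$ closes the argument. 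Reconciling this interaction between the $L(x,y)$ trigger of rule 4 and the view witnesses is the main technical obstacle.
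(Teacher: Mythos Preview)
Your direct approach is logically equivalent to the paper's (which runs the process of Figure~\ref{alg:query-mondet-constraints}), and the propagation argument via rule~3 followed by rule~4 is the same in both. The crucial difference is that you explicitly isolate the degenerate case where $\reach_U(x_0)$ is witnessed by the base clause alone; the paper's proof silently assumes an $L$-edge is available when rule~4 is invoked.

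You are right to flag this case, and your proposed fix does not work --- in fact the claim as stated is \emph{false}. Take $\inst = \{\goal,\, A(x_0,x_0),\, U(x_0)\} \cup \{R(y_i,y_{i+1}) : i \ge 0\}$ with $y_0 = x_0$, and let $\inst'$ be $\inst$ with $\goal$ removed. Both satisfy $\Sigma$ (in $\inst'$ every rule body is vacuously unsatisfied since rules~1--2 require $\goal$ and rules~3--4 require an $L$-fact; in $\inst$ the $R$-chain witnesses rules~1--2 and rules~3--4 are again vacuous), and since neither instance contains an $L$-fact their view images coincide: $\{V_A(x_0,x_0),\, \reach_U(x_0)\} \cup \{V_R(y_i,y_{i+1}) : i \ge 0\}$. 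Yet $\inst \models Q$ while $\inst' \not\models Q$. Your plan to extract ``more structure'' from $\views(\inst)$ cannot help here, because $\inst$ itself has no $L$-facts, so rule~3 never fires in $\inst$ and the view image is exactly the minimal one you already listed. The repair is to the construction rather than the argument: dropping the stray conjunct $L(x,y)$ from the body of the fourth rule makes rule~4 fire from $A(x_0,x_0) \wedge U(x_0)$ in the degenerate case, and one checks that the later non-rewritability argument is unaffected (the instances $\inst_n$ used there satisfy the modified rule~4 vacuously, as no $U$-node is the second coordinate of an $A$-fact).
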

\begin{proof}
Consider the process for checking monotonic determinacy with rules. We start with the canonical
database of $Q$, which has the goal predicate, along with a single element $x_0$ satisfying $U$, with a self-loop
$A(x_0, x_0$).
Then we consider the result of chasing  this instance with the rules. Chasing
 with the first two rules we create an infinite $R$ chain rooted at $x_0$, say
$x_0, x_1 \ldots$.  The two last rules do not fire,
since there are no $L$ atoms present.

Now applying the views, the copy of $R$ reveals the $R$ path within the view image. The copy of $A$ contains only
 the self-loop on $x_0$. Finally, we know that $\reach_U(x_0)$ holds. As $x_0$ can reach
a $U$ node  
via a path of $L$ edges. That is, $\reach_U(x_0)$ is generated from a degenerate, empty $R$-path.

Note that even though the rules are not frontier-guarded the BTVIP holds. Indeed,  on the approximations
of $Q$, the non-guarded rules never fire.

Thus in the ``reverse views'' step, we have instances $\inst_n$ each containing the $A$ self-loop on $x_0$, the
$R$-facts, and a
path $x_0=v_0 \ldots v_n$ of size $n$ connected by  $L$ edges, leading from $x_0$ to a $U$ node $v_n$.

Consider now chasing the instance $\inst_n$ with the rules.
The first two rules do not fire initially, since $\goal$ does not hold.
But now the second-to-last rule does fire, propagating  $A$-facts $A(x_i, v_i)$.
When we get $A(x_n, v_n)$, we deduce whole $Q$ using the final rule.
Thus $Q$ holds in every ``test'', and the algorithm returns true.
Therefore we have monotonic determinacy.
\end{proof}

We now claim that there is no Datalog rewriting of $Q$ over $\view$ that works over
finite and infinite $\Sigma$-instances.

Assume for a contradiction that a Datalog rewriting $\Pi$ of
$Q$ exists. We first argue that $\Pi$ must be true in the following infinite
instance of the view schema
\begin{align*}
 \viewinst_\infty =
 \{ & \reach_U(x_0), \\
& V_R(x_0,r_1),V_R(r_1,r_2),\ldots,V_R(r_n,r_{n + 1}), \ldots, \\
& V_A(x_0,x_0), V_A(r_1,l_1),V_A(r_2,l_2),\dots,V_A(r_n,l_n),\dots\}.
\end{align*}
The reason is that $Q$ is true in the following instance that satisfies $\Sigma$:
\begin{align*}
 \inst_\infty =
 \{ & \goal, U(x_0), \\
& R(x_0,r_1),R(r_1,r_2),\dots,R(r_n,r_{n + 1}), \dots, \\
& A(x_0,x_0),A(r_1,l_1),A(r_2,l_2),\dots,A(r_n,l_n),\dots\}.
\end{align*}
and one easily checks that the view image of these two instances are the same.
Because $\Pi$ is a rewriting of $Q$ it follows that $\Pi$ is true in
$\viewinst_\infty$.

Because $\Pi$ is in Datalog, if it
is true in some instance then it is already true in a finite subinstance
of that instance. Thus, there exists an $n$ such that $\Pi$ is true in the instance:
\begin{align*}
 \viewinst_n =
 \{ & \reach_U(x_0), \\
& V_R(x_0,r_1),V_R(r_1,r_2),\dots,V_R(r_{n - 1},r_n), \\
& V_A(x_0,x_0), V_A(r_1,l_1),V_A(r_2,l_2),\ldots,V_A(r_n,l_n)\}.
\end{align*}
But then consider the following instance:
\begin{align*}
 \inst_n =
 \{ & U(l_{n + 1}), \\
& R(x_0,r_1),R(r_1,r_2),\dots,R(r_{n - 1},r_n), \\
& A(x_0,x_0),A(r_1,l_1),A(r_2,l_2),\dots,A(r_n,l_n), \\
& L(x_0,l_1),L(l_1,l_2),\dots,L(l_{n - 1},l_n),L(l_n,l_{n + 1})\}.
\end{align*}
One can easily check that $\viewinst_n \subseteq \views(\inst_n)$. To see that $\inst_n$
satisfies $\Sigma$, observe that the absence of the $\goal$ fact means that  the first two rules
hold vacuously. The third rule  can easily be seen to hold.
And since the unique node satisfying $U$,  $l_{n + 1}$, is not the target of an $A$-edge, the last rule
holds vacuously.  

Note that $Q$ is
false in $\inst_n$, because $\goal$ is not true there. But then $\Pi$ can
not be a rewriting of $Q$, because $\viewinst_n \subseteq \views(\inst_n)$,
$\Pi$ is true in $\viewinst_n$, but $Q$ is false in $\inst_n$.
\end{proof}

\subsection{More detail on Proof of Theorem  \ref{thm:datalogviewsandqquery}: better rewriting using Compactness of Entailment
and Automata for Entailing Witnesses}
Recall the   forward-backward method for generating rewritings for monotonically determined queries over views
in the presence of rules. It is available
when $(Q,\Sigma, \views)$ has the BTVIP. We  form a tree automaton for the chase
of the view images of unfoldings, and then change them into a relational query. If we have the FBTVIP, we can
make due with a finite tree automata, and generate a Datalog query. If we have only the BTVIP -- as is the case when
the chase does not terminate -- then we need a tree automaton over infinite trees to capture the view images, and thus
get only a $\poslfp$ rewriting, as in  Theorem  \ref{thm:fgviewsplusdatalogqueryrewriting}.
But if we can argue for compactness of entailment, we can use
a finite tree automata to capture a sufficiently large portion of the view image, and use the backward mapping to get
a Datalog rewriting again. This is what happens in Theorem  \ref{thm:fgviewsplusfgdlquery}.

We can use compactness of entailment in a more straightforward way to prove
the Datalog rewriting result of Theorem \ref{thm:datalogviewsandqquery}. We restate the result here:

\medskip

Suppose $Q$  a Boolean UCQ, $\views$ a set of Datalog
views, and
 $Q$ is monotonically determined by $\views$ \wrt $\Sigma$. Then
there is a UCQ rewriting of $Q$ over $\views$ relative to $\Sigma$. When $Q$ is a CQ, the rewriting
can be taken to be a CQ as well.

\medskip

Note that when we apply the Datalog views to the chase, we may not get bounded treewidth.

\begin{proof}
For simplicity, we give the argument only when $Q$ is a CQ.
Let $V_0$ be the view image of $\chase_\Sigma(\canondb(Q))$: that is, $V_0$ is the instance formed
in the first few steps of the  process in Figure \ref{alg:query-mondet-constraints}.
Since $Q$ is monotonically determined over $\views$ relative to $\Sigma$, 
 $V_0$ must entail $Q$ using the view-reversing process.
Thus by the compactness result, Theorem \ref{thm:compact}, there is a finite subinstance $V^-_0$ of $V_0$ that entails this.
Changing $V^-_0$ into a CQ gives the desired rewriting.
\end{proof}


\section{Additional material related to the surprising undecidability results, Section \ref{sec:undecide}}
\subsection{Proof of Theorem \ref{thm:undecfrontonepluslinearcqboth}: the undecidability result for CQs and combinations of Linear and Frontier-$1$: } \label{sec:app-undecide}

Recall Theorem \ref{thm:undecfrontonepluslinearcqboth}:

\medskip

The problem of monotonic determinacy is undecidable when $\Sigma$ ranges over combinations of Linear TGDs
and Frontier-$1$ TGDs, $Q$ over Boolean
CQs, and $\views$ over CQ views. If we allow MDL queries, we have undecidability with just Linear TGDs.

\medskip

\begin{proof}
We first focus on the first part of the theorem. The second will follow by replacing the use of Frontier-one TGDs
by MDL rules.

We prove undecidability of monotonic determinacy by reduction from 
\emph{state reachability problem for deterministic one-dimensional cellular automata.}
A cellular automaton is a pair $\A = (\Q, \Delta)$ where 
$\Q = \{T_0, T_1, \dots, T_n\}$ is a set of cell states with 
$T_0$ a blank state, and $\Delta$ is a set of transitions
of the form $T_i T_j T_k \to T_l$ and $T_j T_k \to T_l$. All transitions
should be deterministic. This means that for each tuple of natural numbers $(i, j, k)$ there is at most one
$l$ with $T_iT_jT_k \to T_l \in \Delta$ and for each pair of numbers $(j, k)$ there is at most one
$l$ with $T_jT_k \to T_l \in \Delta$.  A \emph{tape state} is an infinite sequence 
$\sigma_0\sigma_1 \sigma_2 \dots$ of elements of $\Q$ 
such that  there is $i \in \mathbb{N}$ such for all $j \ge i$ $\sigma_j = T_0$.
We say that a tape state $\varrho = \varrho_0\varrho_1\varrho_2 \dots$ \emph{$\A$-follows}
a tape state $\sigma = \sigma_0\sigma_1\sigma_2 \dots$ if $\Delta$ contains
transitions $\sigma_0\sigma_1 \to \varrho_0$ and
$\sigma_{i-1} \sigma_{i} \sigma_{i+1} \to \varrho_{i}$ for $i \ge 1$. 
We say that $T_n$ is reachable in $\A$ if there is a tape state $\varrho$ containing $T_n$
such that the pair $(T_0T_0T_0T_0\dots, \rho)$ is in the transitive closure of the $\A$-follows relation.  
Since a deterministic $1$-dimensional cellular automata can model an arbitrary deterministic Turing machine, the
problem of determining given $\A$ and $T_n$ whether  $T_n$ is reachable in $\A$, is undecidable.

Given a cellular automaton $\A$, we define a Frontier-Guarded Datalog query
$Q$ and a set of CQ views $\views$ such that $Q$ is monotonically determined by $\views$ iff
$T_n$ is reachable in $\A$.

Our schema will use a relational representation of the two-dimensional grid shown in Figure~\ref{figure:procedure}. We
have a vertical axis $y$ which uses $\ysucc$ for the successor relation and a horizontal
axis $x$ which uses $\xsucc$ for the successor relation. Then we have the ``grid points'' in
the quadrant between the axes which are linked via $\vertproj$ and $\horproj$ relations
to their projections on the axes. Unary predicates $\zerox$ and $\zeroy$
mark the origins of the axes. We will also have  ternary relations $\guard$ and $\oguard$.

The following CQs will be useful in the sequel:
\begin{align*}
\horadj(z_1, z_2, y, x_1, x_2) = \vertproj(y, z_1)\wedge \vertproj(y, z_2) \wedge  \horproj(x_1, z_1)\wedge
\horproj(x_2, z_2)\wedge \xsucc(x_1, x_2)
\end{align*}

This says that $z_1$ and $z_2$ have the same $y$-projection, while the $x$-projection of $z_2$
is next to the $x$-projection of $z_1$.
The CQ $\vertadj$ is defined similarly.

It should be clear that $\rightof(z_1, z_2) = \exists y ~ x_1 ~ x_2 \horadj(z_1, z_2, y, x_1, x_2)$ holds
for grid points $z_1$ and $z_2$ iff $z_2$ is the right neighbour of $z_1$. Similarly we can define
$\vertadj(z_1, z_2, y_1, y_2, x)$ checking vertical adjacency and $\downto(z_1, z_2)$ by
existentially quantifying $\vertadj$
indicates that $z_2$ is the downward neighbor of $z_1$.
Note that we can define the  axes and the origin:
\begin{itemize}
\item $\bottomedge(z) = \exists y  ~ \vertproj(y, z) \wedge \zeroy(y)$
\item  $\axisy(z) = \exists x ~ \horproj(x, z) \wedge \zerox(x)$
\item  $\atorigin(z) = \bottomedge(z) \wedge \axisy(z)$
\end{itemize}

Consider the CQ Query: 
\[
Q = \exists z_0 ~ x_0 ~ x_1 ~ y_0 ~ y_1 ~ \guard(z_0, x_0, x_1) \wedge \xzero(x_0) \wedge \oguard(z_0, y_0, y_1) \wedge \yzero(y_0)
\]
Along with the CQ views:
\[
S(x, y) = \horproj(x, z) \wedge \vertproj(y, z)
\]

and the views $V_{\xzero} , V_{\yzero}, V_{\xsucc},V_{\ysucc}$.

Our rules $\Sigma$ consist of several groups:

\begin{itemize}
\item[(I)] Linear TGDs that ``build a grid'':
\begin{itemize}
\item  $\guard(z_0, x, x') \rightarrow \exists x'' ~ \guard(z_0, x', x'')$
\item  $\guard(z_0, x, x') \rightarrow \horproj(x, z_0) \wedge \xsucc(x, x')$
\item  $\oguard(z_0, y, y') \rightarrow \exists y'' ~ \oguard(z_0, y', y'')$
\item $\oguard(z_0, y, y') \rightarrow \vertproj(y,z_0) \wedge \ysucc(y, y')$
\end{itemize}

When chasing $Q$ with these rules we will  get an unbounded $\xsucc$ chain of elements,
with ``grid point'' $z_0$ projecting to all of them, and also a $\ysucc$ chain of
elements  with $z_0$ projecting to all of them.

\item[(II)] Full frontier-one TGDs that simulate the run of the automaton:

\begin{itemize}
\item $\atorigin(x) \rightarrow A(x)$
\item $A(x) \rightarrow T_0(x)$
\item $ \axisy(y) \wedge T_j(y) \wedge \rightof(y,z) \wedge T_k(z) \wedge \downto(y',y) \rightarrow T_l(y')$,

for every transition  $T_j T_k \to T_l \in \Delta$,
\item $T_i(x) \wedge \rightof(x,y) \wedge T_j(y) \wedge \rightof(y,z) \wedge T_k(z) \wedge \downto(y',y) \rightarrow T_l(y')$,

for every transition  $T_i T_j T_k \to T_l  \in \Delta$,
\end{itemize}

\item[(III)] Linear TGDs that mark acceptance:

\[
T_n(v) \rightarrow \exists x_0 ~ x_1 ~ y_0 ~ y_1 ~ z_0 ~ \guard(z_0, x_0, x_1) \wedge \xzero(x_0) \wedge \oguard(z_0, y_0, y_1) \wedge \yzero(y_0)
\]

for each final state $T_n$
\end{itemize}

\begin{claim} $T_n$ is reachable in $\A$ iff $Q$ is monotonically determined over $\views$ with respect to $\Sigma$.
\end{claim}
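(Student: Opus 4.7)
The plan is to apply Proposition~\ref{prop:pipeline} and analyze the process of Figure~\ref{alg:query-mondet-constraints}. Because $Q$ is a single CQ and every view in $\views$ is a CQ, both the approximation loop in line~1 and the reverse-chase in line~4 are uniquely determined, so monotonic determinacy holds iff $Q$ holds in the single instance $C'$ produced at line~5. The key structural observation is that the view schema deliberately hides the predicates $\guard$ and $\oguard$, so all relevant computation happening after the reverse chase takes place on the fresh grid witnesses rather than on the original canonical-database elements.

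First I would compute $C_0 = \chase_\Sigma(\canondb(Q))$. Chasing with the linear TGDs of group~(I) unfolds the initial $\guard/\oguard$ chains rooted at $z_0$ into infinite $\xsucc$- and $\ysucc$-chains $x_0, x_1, \dots$ and $y_0, y_1, \dots$, while simultaneously adding $\horproj(x_i, z_0)$ and $\vertproj(y_j, z_0)$ for every $i, j \ge 0$. No rule of groups~(II) or~(III) fires in $C_0$ because no $T_k$ fact is present. The view image $\views(C_0)$ then contains the copy facts for $\xzero, \yzero, \xsucc, \ysucc$ on the axes, and, crucially, the CQ view $S(x,y) = \exists z.\, \horproj(x,z) \wedge \vertproj(y,z)$ is witnessed by the common point $z_0$ for every pair $(x_i, y_j)$, giving the full Cartesian product $\{S(x_i, y_j) : i, j \ge 0\}$.

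Next I would analyze the unique $Q' \in \backview_\views(\views(C_0))$. For each $S(x_i, y_j)$ the reverse chase introduces a distinct fresh null $z_{i,j}$ together with $\horproj(x_i, z_{i,j})$ and $\vertproj(y_j, z_{i,j})$; the copy views restore $\xzero, \yzero, \xsucc, \ysucc$ on the axis elements. No $\guard$ or $\oguard$ fact appears in $Q'$, since neither predicate is exposed by any view. In $C' = \chase_\Sigma(Q')$ the rules of group~(I) are therefore inert, so $C'$ consists of a genuine grid on the fresh points $z_{i,j}$ together with whatever is added by groups~(II) and~(III). Under this encoding the macros $\axisy, \atorigin, \rightof, \downto$ identify exactly the intended neighbours: $\rightof(z_{i,j}, z_{i+1,j})$ and $\downto(z_{i,j+1}, z_{i,j})$ hold precisely when the corresponding axis coordinates are $\xsucc$- or $\ysucc$-successors, and $\atorigin(z_{0,0})$ is the unique origin.

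The heart of the argument is a simulation lemma: by induction on the number of applied $\A$-transitions one shows that $T_m(z_{i,j}) \in C'$ iff $T_m$ occurs at the $i$-th cell of the $j$-th tape state reachable from the all-blank tape in $\A$. The forward direction uses that the group~(II) rules mimic the $3$-cell interior and $2$-cell boundary transitions of $\A$ along $\horadj$ and $\vertadj$; the converse uses determinism of $\A$, together with the fact that every $T_l(y')$ fact in $C'$ is derived by exactly one such rule. Once this correspondence is established, the group~(III) rule fires in $C'$ iff $T_n$ is eventually derived, and when it fires it generates fresh witnesses satisfying every atom of $Q$; otherwise $C'$ contains no $\guard$ or $\oguard$ fact and $Q$ fails. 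Hence the pipeline returns true iff $T_n$ is reachable in $\A$, proving the biconditional. The expected obstacle is the careful bookkeeping in the simulation lemma, in particular seeding and stabilising the initial row and the left boundary so that the expanding light cone of derived $T_m$ facts matches the $\A$-evolution step by step; once that is pinned down, the rest is direct. For the second sentence of the theorem, the same construction works after transferring group~(II), whose rules are full frontier-one TGDs with unary heads $T_l(y')$, into an MDL program inside the query $Q$ that computes $T_n$-reachability via unary IDBs, so that only linear TGDs remain in $\Sigma$.
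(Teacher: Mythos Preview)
Your approach matches the paper's proof (which is given tersely via a figure and a one-paragraph description of the pipeline). One inaccuracy worth flagging: the seed rules of group~(II) \emph{do} fire in $C_0$, since $z_0$ satisfies $\atorigin$ (it has $\horproj(x_0,z_0)\wedge\xzero(x_0)$ and $\vertproj(y_0,z_0)\wedge\yzero(y_0)$), so $A(z_0)$ and $T_0(z_0)$ are derived; because $z_0$ is $\rightof$- and $\downto$-adjacent to itself, further transition rules can fire with all grid-point variables collapsed onto $z_0$, and group~(III) may even fire. This is harmless for your argument: none of $A$, $T_i$, $\guard$, $\oguard$ is exposed by any view, so the only possible effect on $\views(C_0)$ is the addition of isomorphic, disconnected copies of the same axis/$S$ structure, which $\backview_{\views}$ and the second chase treat identically. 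Your description of $Q'$ and $C'$ and the simulation lemma on the fresh grid points $z_{i,j}$ are the correct ingredients.

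For the second sentence of the theorem, the paper's construction differs from what you sketch: it transfers \emph{both} groups~(II) and~(III) into an MDL component $Q_{MDL}$ (with $\goal \datalogarrow T_n(x)$ replacing group~(III)) and sets the new query to $Q \lor Q_{MDL}$; keeping the original CQ $Q$ as a disjunct is essential, since it is the only approximation that triggers grid generation under the remaining linear rules, while approximations of $Q_{MDL}$ are shown to survive the view round-trip so that the process never returns false on them. Your variant of keeping group~(III) in $\Sigma$ is problematic because $T_n$ would have to be simultaneously an IDB of the query and an EDB for $\Sigma$.
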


\begin{figure}
\vbox{
\centerline{
\scalebox{0.5}{\includegraphics{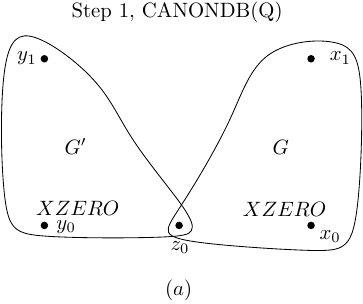}}\hspace*{1cm}
\scalebox{0.5}{\includegraphics{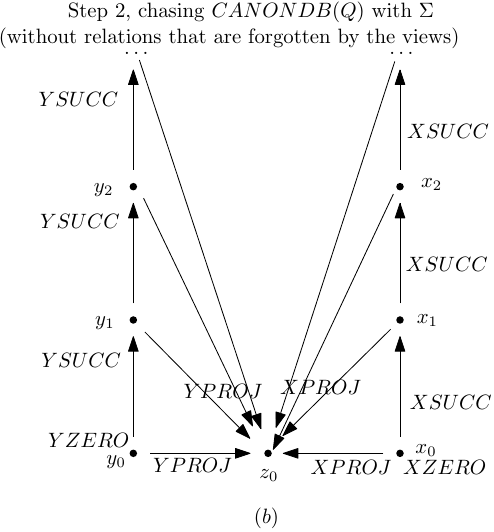}}\hspace*{1cm}
\scalebox{0.5}{\includegraphics{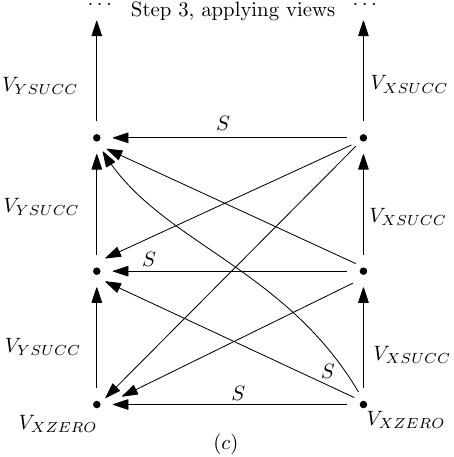}}}\vspace*{8mm}
\centerline{
\scalebox{0.7}{\includegraphics{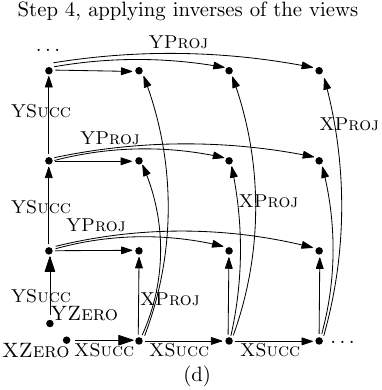}}\hspace*{1cm}%
\scalebox{0.7}{\includegraphics{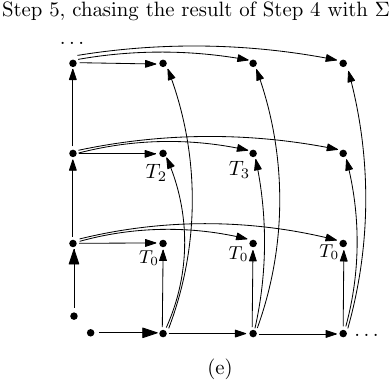}}
}}
\caption{Applying the monotonic determinacy process of Figure~\ref{alg:query-mondet-constraints} to $Q$, $\Sigma$ and views from Theorem~\ref{thm:undecfrontonepluslinearcqboth}.}
\label{figure:procedure}
\end{figure}

\begin{proof}
The idea of the proof is shown graphically in Figure~\ref{figure:procedure}. It shows how the monotonic determinacy checking procedure from
Figure~\ref{alg:query-mondet-constraints} works on $Q$, $\Sigma$ and views constructed in the proof of Theorem~\ref{thm:undecfrontonepluslinearcqboth}.
In particular, (a) shows the canonical database of $Q$, (b) shows the result of chasing it with $\Sigma$ omitting the symbols that are ``forgotten'' by the views, 
(c) shows the result of applying the views to (b), 
(d) shows the result of applying the ``view inverting rules'' (choosing a body for each view fact) to the result of (c), 
yielding a grid-like structure.
Then  (e) shows how chasing (d) with $\Sigma$ simulates the work of the cellular automaton $\A$ with transitions $T_0T_0 \to T_2$ and $T_0T_0T_0 \to T_3$.
It should be clear that due to Part 3 of $\Sigma$, $Q$ holds in the ultimate result of (e) iff $T_n$ is reachable in $\A$. 
\end{proof}

The first part of Theorem \ref{thm:undecfrontonepluslinearcqboth} follows immediately from the claim and the undecidability of 
the state reachability problem for deterministic one-dimensional cellular automata.

For the second part of Theorem \ref{thm:undecfrontonepluslinearcqboth}, we set $\Sigma'$ to consist only of TGDs of group (I). TGDs of groups 
(II) and (III) we convert into a MDL query $Q_{MDL}$ by interpreting TGDs of group (II) as rules of a Datalog program and adding the rule 
$\goal \datalogarrow T_n(x)$. We set $Q' = Q \lor Q_{MDL}$ and argue that

\begin{claim} $T_n$ is reachable in $\A$ iff $Q'$ is monotonically determined over $\views$ with respect to $\Sigma'$.
\end{claim}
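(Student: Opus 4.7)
The plan is to apply Proposition~\ref{prop:pipeline}: $Q'$ is monotonically determined over $\views$ with respect to $\Sigma'$ if and only if the procedure of Figure~\ref{alg:query-mondet-constraints}, run on $(Q', \views, \Sigma')$, returns true. Since $Q' = Q \vee Q_{MDL}$, the CQ approximations of $Q'$ split into two families: the CQ $Q$ itself, and CQ approximations of the MDL program $Q_{MDL}$. I would mirror the analysis used for the first claim of the theorem, arguing that pushing the role of the Group~(III) ``acceptance'' TGDs into the extra Datalog rule $\goal \datalogarrow T_n(x)$ of $Q_{MDL}$ leaves the reduction intact.

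For the $Q$-branch, the pipeline behaves exactly as in the first claim's proof except that Groups~(II) and~(III) no longer fire during the chase. Chasing $\canondb(Q)$ with the Group~(I) TGDs builds the two infinite axes and all $\horproj$/$\vertproj$ edges from $z_0$; applying $\views$ yields the axis-successor, axis-origin, and grid-projection facts $V_\xsucc, V_\ysucc, V_\xzero, V_\yzero$, together with an $S(x_i,y_j)$ fact for every pair of axis elements; the CQ-backward-chase creates a fresh grid point $z'_{i,j}$ per $S$-fact; and the second chase with $\Sigma'$ is inert since no $\guard$ or $\oguard$ atom remains. Evaluating $Q'$ on the resulting infinite grid, the disjunct $Q$ fails (no $\guard$ atom is present), while $Q_{MDL}$, whose Datalog rules are Group~(II) together with $\goal \datalogarrow T_n(x)$, is the very program that simulates $\A$ on the grid and fires $\goal$ exactly when $T_n$ is reached. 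Hence this branch returns \textbf{true} if and only if $T_n$ is reachable in $\A$.

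For the $Q_{MDL}$-branch, let $U$ be any CQ approximation of $Q_{MDL}$. By definition $\canondb(U) \models Q_{MDL}$, and since $\canondb(U)$ contains no $\guard$ or $\oguard$ atoms both chase steps with $\Sigma'$ are inert, so the pipeline output $C'_{m,n}$ is essentially the CQ-backward-chase of $\views(\canondb(U))$. Each grid variable $z$ of $\canondb(U)$ has, by the way the Group~(II) macros $\atorigin$, $\rightof$, and $\downto$ introduce grid variables, a unique pair $(x,y)$ of axis projections, and backward-chase creates a fresh witness $z'_{x,y}$ with precisely the same projections. The assignment $z \mapsto z'_{x,y}$, extended by the identity on axis elements, is a homomorphism $\canondb(U) \to C'_{m,n}$, so by monotonicity of Datalog we still have $C'_{m,n} \models Q_{MDL}$ and hence $C'_{m,n} \models Q'$. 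Thus every iteration of the inner loop on a $Q_{MDL}$-approximation returns \textbf{true}, regardless of $\A$.

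Combining the two branches, the full procedure returns true exactly when $T_n$ is reachable in $\A$, completing the reduction. The main obstacle is the ``unique projections'' observation supporting the homomorphism $z \mapsto z'_{x,y}$: one establishes it by induction on the unfolding tree of a Group~(II) Datalog rule, using the fact that every occurrence of a grid variable in an unfolded body sits inside at least one of the macros $\atorigin$, $\rightof$, $\downto$, each of which constrains its grid argument to a specific $(\horproj,\vertproj)$-projection pair. With this in hand, the $Q_{MDL}$-branch analysis goes through and the claim follows.
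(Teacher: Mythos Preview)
Your overall decomposition into the $Q$-branch and the $Q_{MDL}$-branch matches the paper's sketch, and your analysis of the $Q$-branch is fine. The gap is in the $Q_{MDL}$-branch, specifically in the ``unique projections'' observation that underpins your homomorphism $z \mapsto z'_{x,y}$.

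That observation is false. A grid variable can occur in \emph{several} macro instances within one Group~(II) rule body, and each such occurrence introduces \emph{fresh} existential axis variables. Concretely, in the rule
\[
T_i(g_1)\wedge\rightof(g_1,g_2)\wedge T_j(g_2)\wedge\rightof(g_2,g_3)\wedge T_k(g_3)\wedge\downto(g_4,g_2)\;\rightarrow\;T_l(g_4),
\]
expanding the three macros touching $g_2$ gives atoms $\horproj(c_2,g_2)$, $\horproj(c_1',g_2)$, $\horproj(e,g_2)$ and $\vertproj(b,g_2)$, $\vertproj(b',g_2)$, $\vertproj(d,g_2)$ with $c_2,c_1',e,b,b',d$ all distinct fresh variables. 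Further unfolding the intensional atom $T_j(g_2)$ adds still more projection atoms on $g_2$. So in $\canondb(U)$ a grid variable has, in general, many $(\horproj,\vertproj)$-pairs attached to it, and after the backward chase each pair produces a \emph{different} fresh witness $z'_{a,b}$, none of which carries all the required projection atoms. The identity-on-axes map you describe therefore cannot send $g_2$ anywhere, and the inductive argument you outline in the last paragraph cannot establish uniqueness because the induction step (substituting the head variable into a body that uses it in multiple macros) is exactly where uniqueness breaks.

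The paper's own argument is no more detailed than ``preserved after applying views followed by view-inverting rules'', so it does not supply a ready fix. A correct argument must either (i) exhibit a homomorphism that is \emph{not} the identity on axis variables, collapsing all the distinct existential projections of a given grid variable to a single axis element (and check that the $\xsucc/\ysucc/\zerox/\zeroy$ constraints are jointly satisfiable under that collapse), or (ii) bypass $U$ entirely and argue directly that the backward-chased structure $C'_{m,n}$ contains enough of a grid for $Q_{MDL}$ to re-derive $T_n$. Either route needs more than what you have written; as it stands, the $Q_{MDL}$-branch does not go through.
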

 
 To see this, note that every CQ-approximation of  $Q'$ is either CQ-approximation of $Q$ or CQ-approximation of $Q_{MDL}$ and
 that CQ-approximations of $Q_{MDL}$ do not change after chasing by $\Sigma'$ and are preserved after applying views followed by 
 ``view-inverting rules''. Thus the monotonic determinacy checking procedure from
Figure~\ref{alg:query-mondet-constraints} always succeeds on CQ-approximation of $Q_{MDL}$. On the other hand, arguing similar to Claim 1, 
we can show that this procedure, when executed on $Q$, simulates $\A$, and succeeds if and only if $T_n$ is reachable in $\A$. 
\end{proof}

\subsection{Proof  of Theorem \ref{thm:undecfrontonecqucq}: Undecidability with UCQ views with only UIDs}

Recall the statement of Theorem
\ref{thm:undecfrontonecqucq}

\medskip
The problem of monotonic determinacy is undecidable when $\Sigma$ ranges over Unary
Inclusion Dependencies (which are Frontier-$1$ and linear),
$Q$ over Boolean
UCQs, and $\views$ over UCQ views.

\medskip

We start with a warm-up, in which we use a richer class of rules.

\begin{proposition} \label{prop:undecfrontonecqucq}
The problem of monotonic determinacy is undecidable when $\Sigma$ ranges over Frontier-$1$ TGDs, $Q$ over Boolean
atomic CQs, and $\views$ over UCQ views.
\end{proposition}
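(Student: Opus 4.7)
My plan is to prove undecidability by reducing from the unbounded tiling problem (given a finite set of tiles with horizontal and vertical compatibility relations, does $\mathbb{N}\times\mathbb{N}$ admit a valid tiling?), a $\Pi^0_1$-complete problem. I will design an atomic Boolean query $Q = \goal$, a set $\Sigma$ of Frontier-$1$ TGDs, and a set $\views$ of UCQ views so that $Q$ is monotonically determined by $\views$ with respect to $\Sigma$ iff the given tile set admits no valid tiling.

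The key steps are as follows. First, I use frontier-one rules seeded from $\goal$, such as $\goal \rightarrow \exists x.\,\xzero(x)\wedge \kw{OnX}(x)$ and $\kw{OnX}(x) \rightarrow \exists x'.\,\xsucc(x,x')\wedge\kw{OnX}(x')$ (and analogously for the $y$-axis), to construct two infinite axes in $\chase_\Sigma(\canondb(Q))$. Second, I include additional frontier-one rules so that a single UCQ view $V(x,y)$ with one disjunct per tile, each of the form $\phi_i(x,y) = \kw{OnX}(x) \wedge \kw{OnY}(y) \wedge \exists z.\,\horproj(x,z)\wedge\vertproj(y,z)\wedge T_i(z)$, is satisfied in the forward view image at every axis pair. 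Third, in the reverse-chasing step of Figure~\ref{alg:query-mondet-constraints}, each view fact $V(a,b)$ can be witnessed by any disjunct $\phi_i$, freshly placing a grid point tagged with tile $T_i$ at cell $(a,b)$; the elements of $\backview_{\views}(\viewinst_n)$ are thereby in bijection with tile assignments to the grid. Fourth, $\Sigma$ contains ``violation-detection'' rules of the shape $T_i(z)\wedge T_j(z')\wedge \horproj(x,z)\wedge \horproj(x',z') \wedge \xsucc(x,x') \wedge \vertproj(y,z) \wedge \vertproj(y,z') \rightarrow \goal$ for each forbidden horizontal pair $(T_i,T_j)$, and analogously for vertical pairs; these rules are frontier-one because the head $\goal$ is nullary, so the frontier is empty. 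Combining these pieces with Proposition~\ref{prop:pipeline}, $Q$ is monotonically determined by $\views$ w.r.t.\ $\Sigma$ iff every tile assignment induces a violation and thus derives $\goal$, iff no valid tiling exists.

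The main obstacle is achieving the second step under the strict Frontier-$1$ restriction. The most obvious engineering trick, threading a single ``super grid point'' $z$ carrying all tile labels through projections to every axis element, requires a propagation rule of the form $\horproj(x,z)\wedge\xsucc(x,x')\rightarrow\horproj(x',z)$ whose frontier $\{x',z\}$ has size two and is therefore forbidden. I plan to sidestep this by producing a \emph{fresh} fully-labelled witness per axis element via frontier-one rules such as $\kw{OnX}(x)\rightarrow \exists z.\,\horproj(x,z)\wedge T_1(z)\wedge\cdots\wedge T_k(z)$, and analogously on the $y$-axis, and then rewriting the view disjuncts to use separate existential witnesses on the $x$- and $y$-sides; the violation-detection rules, whose body may contain arbitrarily many atoms as long as the frontier remains empty, can then be written to reconstruct per-cell adjacency by tying these witnesses together via the axis and projection relations. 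With those adjustments in place, the correctness argument proceeds along the same lines as Claim~1 in the proof of Theorem~\ref{thm:undecfrontonepluslinearcqboth}, interpreting the pipeline of Figure~\ref{alg:query-mondet-constraints} as: axes construction, cross-product population of the view image, nondeterministic tile choice, and violation checking.
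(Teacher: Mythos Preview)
Your high-level strategy is exactly the paper's: reduce from tiling, use frontier-one rules to build two infinite axes from the atomic query, let a single UCQ view populate the cross product, have the reverse-chasing step nondeterministically assign a tile per cell, and use rules with empty frontier (nullary head) to detect forbidden pairs and derive the query. All of that is sound, and your identification of the obstacle---that one cannot propagate a shared grid point along an axis with frontier-one rules---is precisely the crux of the construction.

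The gap is in your proposed workaround. Splitting each view disjunct into two separate existential witnesses, one carrying $\horproj(x,z_1)\wedge T_i(z_1)$ and the other $\vertproj(y,z_2)\wedge T_i(z_2)$, loses the coupling between a tile label and its cell. After reverse-chasing, every witness has \emph{only one} projection, so from the instance alone you cannot tell which $y$-row a given $T_i$-labelled, $x$-projected witness was created for. Concretely, take a \emph{valid} tiling in which cell $(a',b)$ has tile $T_k$, but some other cell $(a',b')$ in the same column has $T_j$, and some other cell $(a'',b)$ in the same row has $T_j$. Then the backward instance contains an element with $\horproj(a',\cdot)\wedge T_j$ and an element with $\vertproj(b,\cdot)\wedge T_j$, so any rule body you could write over ``separate witnesses'' to detect the forbidden pair $(T_i,T_j)$ at row $b$ will fire spuriously. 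You therefore get false positives, and the reduction is not correct.

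The paper solves this differently, and the trick is worth internalizing. The view has, in addition to the tile disjuncts $\xproj(x,z)\wedge T_i(z)\wedge\yproj(y,z)$ (with a \emph{single} $z$), one extra ``escape'' disjunct $A_1(x)\wedge A_2(y)$, where $A_1,A_2$ are the axis markers. In the forward direction only this escape disjunct matches, because the chase contains no $\xproj/\yproj$ atoms. In the backward direction, choosing the escape disjunct for any view fact reintroduces an $A_1$ or $A_2$ atom; a dedicated frontier-one rule $A_1(x)\rightarrow\exists y~\init(y)$ (and similarly for $A_2$) then immediately derives $Q=\exists x~\init(x)$. Hence any non-$Q$-satisfying backward instance must use \emph{only} the tile disjuncts, each of which creates one fresh grid point with both projections and a single tile label---exactly the structure your adjacency-violation rules need. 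You do not need to make the tile disjuncts match forward; you only need to make non-tile choices self-defeating backward.

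Once you adopt this ``escape disjunct plus poison rule'' mechanism, the rest of your argument via Proposition~\ref{prop:pipeline} goes through as you outlined.
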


Note that this is the same as our goal theorem, but we allow frontier-$1$ TGDs, rather than only linear frontier-$1$ TGDs.

\begin{proof}
We consider a tiling problem consisting of a set of tiles and horizontal and vertical constraints. 
The goal is to find a tiling of the infinite quarter-plane satisfying the constraints.

We take $Q  = \exists x ~ \init(x) $.
We set $\Sigma$ to be the union of $\Sigma_{\qstart}$ and $\Sigma_{\qverify}$ where $\Sigma_{\qstart}$ is
$$
\begin{array}{rcl}
\init(x) &\rightarrow& A_1(x) \wedge A_2(x) \\[2mm]
A_1(x) &\rightarrow& \exists y ~ \xsucc(x,y) \wedge A_1(y) \\[2mm]
A_2(x) &\rightarrow& \exists y ~ \ysucc(x,y) \wedge A_2(y) \\[2mm]
\end{array}
$$
and $\Sigma_{\qverify}$ consists of the following TGDs:
\begin{enumerate}[{(R}1{)} ]
\item $A_1(x) \rightarrow \exists y ~ \init(y)$ and $A_2(x) \rightarrow \exists y ~ \init(y)$ 
\item $\horadj(z,z') \wedge T_i(z) \wedge T_j(z') \to \exists x ~ \init(x) $ \\[2mm] where
$\horadj(z,z')$ is $\exists x x' y ~ \horproj(z,x) \wedge \vertproj(z,y) \wedge \horproj(z',x') \wedge 
\vertproj(z', y) \wedge \xsucc(x,x')$, and $i,j$ range over pairs violating the horizontal constraint of the tiling;
\item and similar rules for the vertically incompatible tiles.
\end{enumerate}

The set of views $\views$ consists of only the \emph{grid-generating view}, which is a UCQ:
$$
\begin{array}{rcl}
S(x,y) & \datalogarrow & A_1(x) \datalogwedge A_2(y)\\
S(x,y) & \datalogarrow & \xproj(x,z) \datalogwedge T_i(z) \datalogwedge \yproj(y,z) \mbox{ for all } T_i \mbox{ in } Tiles;\\
\end{array}
$$

We claim that a tiling exists exactly when $Q$ is not monotonically determined over $\views$ with respect to $\Sigma$.

In one direction, suppose a tiling exists, and assume that its underlying domain is the set of pairs of positive integers.
 We let $\inst$ be obtained from chasing $\init(x_0)$ with
$\Sigma$, and let $\jnst$ be the view image of $\inst$. We can identify the domain of $\jnst$ with pairs of integers,
and let $\inst'$ be
obtained by taking the tiling and changing its domain to  be the domain of $\jnst$.
We see that $\inst, \inst'$ represent a counterexample to monotonic determinacy.

In the other direction, suppose monotonic determinacy fails. Then some 
iteration of the loop
in Figure \ref{alg:query-mondet-constraints} will fail, where an iteration consists of:
\begin{enumerate}[{(stage }1{)}]
\item taking the canonical database of $Q$ 
\item chasing with $\Sigma$ its canonical database
\item applying the views
\item choosing disjuncts in chasing the inverse of the view rules
\item chasing again with $\Sigma$
\end{enumerate}

The instance formed in the failing iteration $\inst'$ must be based on $\{\init(x_0)\}$. Note that after choosing disjuncts but prior to the final chase step,
$\inst'$ could not have any $A_1$ or $A_2$ atoms, because otherwise $\inst'$ after the final stage would satisfy
$Q$ because of (R1). Therefore $\inst'$, prior to the last step, must be a proper grid with tiles given as $T_i$-atoms. 
As we know that $\inst'$ does not satisfy  $Q$,
it follows that it yields a proper tiling of the quarter-plane.

\end{proof}

With the warm up over, we are now ready for the proof the theorem:

\begin{proof}
We consider a tiling problem consisting of a set of tiles, an initial tile, vertical constraints
and horizontal constraints. The goal is to find an infinite tiling satisfying the constraints.

We set $\Sigma$ to be as follows:
$$
\begin{array}{rcl}
\init(x) &\rightarrow& A_1(x) \wedge A_2(x) \\
A_1(x) &\rightarrow& \exists y ~ \xsucc(x,y) \wedge A_1(y) \\
A_2(x) &\rightarrow& \exists y ~ \ysucc(x,y) \wedge A_2(y) \\
A_1(x) &\rightarrow& \exists y ~ \init(y)\land \origin(y) \\
A_2(x) &\rightarrow& \exists y ~ \init(y)\land \origin(y) 
\end{array}
$$
These can clearly be converted to UIDs.

$Q_{\qverify}$ is the disjunction of the following CQs:
\begin{enumerate}[{(CQ}1{)} ]
\item $\horadj(z,z') \wedge T_i(z) \wedge T_j(z')$ where
$\horadj(z,z')$ is $\exists x x' y ~ \horproj(z,x) \wedge \vertproj(z,y) \wedge \horproj(z',x') \wedge 
\vertproj(z', y) \wedge \xsucc(x,x')$, and $i,j$ range over pairs violating the horizontal constraint of the tiling;
\item similar CQs for the vertically incompatible tiles;
\item $\origin(x) \wedge T_i(x)$ for any $i$ not equal to the (index of the) initial tile;
\end{enumerate}

The set of views $\views_{TP}$ consists of
\begin{itemize}
\item[--] the \emph{grid-generating
view}
$$
\begin{array}{rcl}
S(x,y) & \datalogarrow & A_1(x) \datalogwedge A_2(y)\\
S(x,y) & \datalogarrow & \xproj(x,z) \datalogwedge T_i(z) \datalogwedge \yproj(y,z) \mbox{ for all } T_i \mbox{ in } Tiles;\\
\end{array}
$$
\item[--] the \emph{atomic views} $V_\ysucc$, $V_\xsucc$, $V_{\origin}$, $V_{T_i}$ for EDBs 
$\ysucc,\xsucc, \origin$  and each
$T_i$ in $Tiles$;
\item[--] the following \emph{special} views
$$
\begin{array}{crcl}
(SP3) &  V_{\ha}(z_1, z_2, y,x_1,x_2) & \datalogarrow & \ha (z_1, z_2)\\
(SP4) & V_{\va}(z_1, z_2, y_1,y_2,x) & \datalogarrow & \va (z_1, z_2)\\
\end{array}
$$
\end{itemize}

Now we set $Q_{\qstart} = \init(x)\land \origin(x) $ and $Q = Q_{\qstart} \lor Q_{\qverify}$.

We claim that a tiling exists exactly when $Q$ is not monotonically determined over $\views$ with respect to $\Sigma$.

In one direction, suppose a tiling exists, and assume that its underlying domain is the set of pairs of positive integers.
 We let $\inst$ be obtained from chasing $\init(x_0)$ with
$\Sigma$, and let $V$ be the view image of $\inst$. We can identify the domain of $V$ with pairs of integers,
and let $\inst'$ be
obtained by taking the tiling and changing its domain to  be the domain of $V$.
We see that $\inst, \inst'$ represent a counterexample to monotonic determinacy.

In the other direction, suppose monotonic determinacy fails. Then some iteration of the loop
in Figure \ref{alg:query-mondet-constraints} will fail,
 where an iteration is based on:
\begin{enumerate}[{(stage }1{)}]
\item a choice of one of the disjuncts of $Q$ 
\item chasing with $\Sigma$ its canonical database
\item applying the views
\item choosing disjuncts in chasing the inverse of the view rules
\item chasing again with $\Sigma$
\end{enumerate}
Failing means that $Q$ does not hold after applying these choices.
In the first item, we cannot make a choice of $Q_{\qverify}$, since due to atomic and special views 
all such choices  succeed.

So we assume that the failing iteration  is based on $Q_{\qstart}$. Note that prior to the final chasing stage, the instance formed,
call it $\inst'$, could not have any $A_1$ or $A_2$ atoms, because otherwise $\inst'$ after the final stage would satisfy
$Q_{\qstart}$. Therefore $\inst'$ prior to the final chasing stage must be a proper grid with tiles, and the final chase
oes not introduce any new atoms. As we know that $\inst'$ does not satisfy  $Q_{\qverify}$, it follows
that it yields a proper tiling of the quarter-plane.

\end{proof}

\subsection{Proof of Theorem~\ref{thm:undecuidcqucq}: undecidability with only UIDs, where the query is a CQ and the views
are UCQs}
\newlength\tindent
\setlength{\tindent}{\parindent}
\setlength{\parindent}{0pt}
Recall the statement of Theorem \ref{thm:undecuidcqucq}:

\medskip

The problem of monotonic determinacy is undecidable when $\Sigma$ ranges over Unary
Inclusion Dependencies
$Q$ over Boolean
CQs, and $\views$ over UCQ views.

\medskip

\newcommand{\cols}{\mathcal{T}}
\newcommand{\forb}{\mathcal{F}}
\newcommand{\ver}{\mathtt{vertical}}
\newcommand{\horz}{\mathtt{horizontal}}
\newcommand{\gs}{GridSource}
\newcommand{\badmatch}[1]{\mathtt{BadMatch_{#1}}}
\newcommand{\net}{Net}
\newcommand{\badp}{Q_{\mathtt{bad}}^p}
\newcommand{\badpb}{Q_{\mathtt{bad}}^{p'}}
\newcommand{\ax}{Axis_{x}}
\newcommand{\ay}{Axis_{y}}
\newcommand{\gx}{Grid_{x}}
\newcommand{\gy}{Grid_{y}}
\newcommand{\sig}{\sigma}
\newcommand{\qstrt}{Q_{\mathtt{start}}}
\newcommand{\qnet}{Q_{\mathtt{net}}}
\newcommand{\qfo}{Q_{\mathtt{foundation}}}
\newcommand{\vps}{\vec{v_p}}

\newcommand{\xs}[1]{x^{*}_{#1}}
\newcommand{\ys}[1]{y^{*}_{#1}}
\newcommand{\vs}[1]{v^{*}_{#1}}

\newcommand{\vsps}{\vec{v^*_{p}}}
\newcommand{\qfree}{Q_{\mathtt{free}}}
\newcommand{\qwhole}{Q}

\newcommand{\vti}{V_{T_i}}
\newcommand{\vslot}{V_{\mathtt{slot}}}
\newcommand{\vnet}{V_{\mathtt{net}}}
\newcommand{\vchoice}{V_{\mathtt{chose}_{i}}}
\newcommand{\vchoicei}[1]{V_{\mathtt{chose}_{#1}}}
\newcommand{\vunf}{V_{\mathtt{unfaithful}}}

\newcommand{\gridnode}[2]{\textsc{TileAt}_{#1}^{#2}}
\newcommand{\thirdgrid}[1]{\textsc{Tiling}_{#1}}

\myparagraph{The Undecidable Problem}
We will use a reduction from the problem of coloring a quarter plane. Formally we define the instance of this problem as an pair $\pair{\cols, \forb}$ where $\cols$ is a set of tiles and $\forb$ is a set of forbidden vertical and horizontal pairs, encoded as a subset of $\cols \times \cols \times \set{\horz, \ver}$. We say that tiling admits $p \in \forb$ at coordinates $(n,m)$ if and only if tiles at $\pair{(n,m), (n + 1, m)}$ form $p$ if it is horizontal or tiles at $\pair{(n,m), (n, m + 1)}$ form $p$ if it is vertical. Given an instance  $\pair{\cols, \forb}$ of the tiling problem we ask if there exists a tiling of a plane that admits no forbidden pair, if that is the case we call the tiling {\em valid}.

Let $\inst = \pair{\cols, \forb}$ an instance of the tiling problem.
Having an instance $\inst$ of this problem, we show how to  construct an  instance of the monotonic determinacy
problem.

\myparagraph{Challenge} A key challenge in coding any determinacy problem is that
the query $Q$ plays two roles, highlighted in \cref{alg:query-mondet-constraints}.
On the one hand the query needs to serve as a starting point on which (using the rules)
we can generate something like a  large grid. 
On the other hand, the same query is used at the
end of the process, usually to check for some violation of a correctness property in a tiling
or a Turing Machine run.  We have seen two ways of achieving this ''dual role'' for the query, which we review below.

In the proof of Theorem 
\ref{thm:undecfrontonepluslinearcqboth}, the query is designed to play only the first role.
To allow it to play the second role, we 
use a  rule: when a violation is found a special rule is fired that makes
$Q$ hold.  But this rule is not a UID, so this technique will not be available.

A second technique would be to let $Q$ be a UCQ, with one disjunct playing the ``initialization'' role and the others representing detection of various violations. But this will also
not be available  to us, because we need to use a CQ. 

The main challenge  will be to code verification of a disjunction of the different
tiling violations using a CQ.

\myparagraph{The Reduction}
We now present the reduction.

\noindent
{\bf Signature.}
Our signature $\sig$ will consist of:
\begin{itemize}
    \item Unary relations $\gs$ and $T_i$ for each $i \in \cols$.
    \item Binary relations $\net$, $\ax$, and $\ay$.
    \item Ternary relations $\gx$ and $\gy$.
\end{itemize}

\noindent
{\bf Query construction.}

Define $\qstrt$ as the following 
CQ (see \cref{fig:q-start}):
$$\qstrt(v_0, x_0, x_1, y_0, y_1) = \ax(x_0, x_1) \wedge \ay(y_0, y_1) \wedge  \gs(v_0).$$
\begin{figure}[H]
    \centering
    \includegraphics[height=0.35\textwidth]{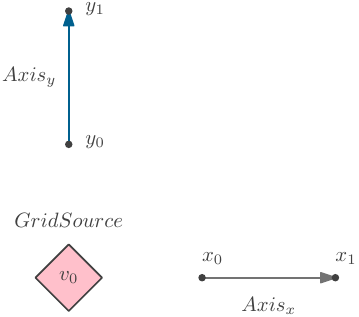}
    \caption{Visualization of $\qstrt$.}
    \label{fig:q-start}
\end{figure}

For $p \in \forb$, $p = \pair{i, j, \ver}$, define $\badp$ as the CQ:
$$\badp(v_p) = \exists{v'_p, x,x',y,y',y''}\; T_i(v_p) \wedge T_j(v'_p) \wedge \gx(v_p, x, x') \wedge \gx(v'_p, x, x') \wedge \gy(v_p, y, y')
\wedge  \gy(v'_p, y', y'').$$
We refer to such CQs as ``vertical $\badp$ queries'': See \cref{fig:bad-vertical}.

For $p \in \forb$, $p = \pair{i, j, \horz}$  define $\badp$ as the CQ:
$$\badp(v_p) = \exists{v'_p, x,x',x'',y,y'}\wedge  T_i(v_p) \wedge T_j(v'_p),\wedge \gx(v_p, x, x') \wedge \gx(v'_p, x', x'') \wedge
 \gy(v_p, y, y') \wedge \gy(v'_p, y, y)'.$$

These are ``horizontal $\badp$'' queries, illustrated in \cref{fig:bad-horizontal}.

Both queries are visualized in the following figures:

\begin{figure}[H]
\begin{minipage}{0.45\textwidth}
    \centering
    \includegraphics[height=0.9\textwidth]{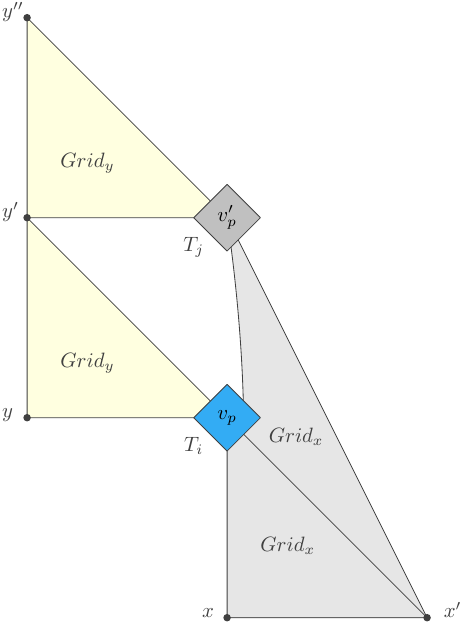}
    \caption{Visualization of a ``vertical $\badp$'' query.  }
    \label{fig:bad-vertical}
\end{minipage}
\hfill
\begin{minipage}{0.45\textwidth}
    \centering
    \vspace{1.175cm}
    \includegraphics[width=0.9\textwidth]{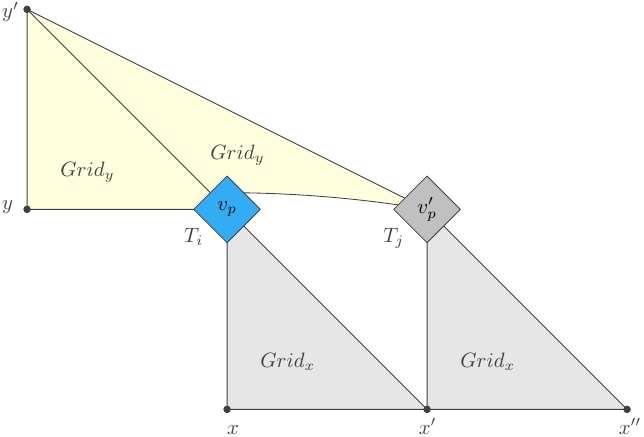}
    \vspace{1.175cm}
    \caption{Visualization of a ``horizontal $\badp$'' query.}
    \label{fig:bad-horizontal}
\end{minipage}
\end{figure}

Let $\vps$ be a vector $\pair{v_{p}}_{p \in \forb}$ of variables. Define $\qnet(v_0, \vps)$ as a $\net$-clique over $\set{v_0} \cup \set{v_p \mid p \in \forb}$ variables. See \cref{fig:q-net} for a visualization.
\begin{figure}[H]
\centering
\begin{minipage}{0.4\textwidth}
    \centering
    \includegraphics[height=0.9\textwidth]{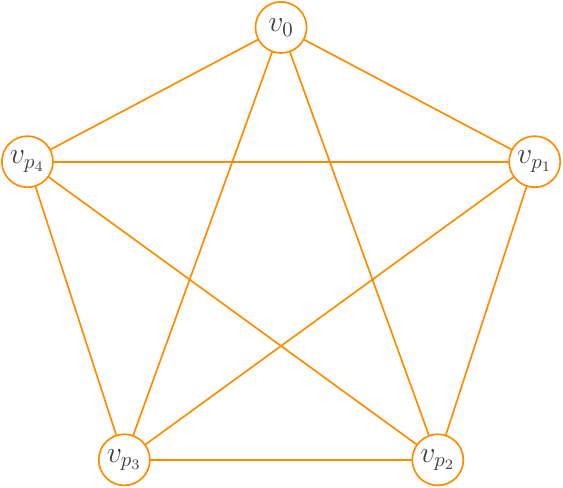}
    \caption{Complete depiction of $\qnet$ for $|\forb| = 4$.}
    \label{fig:q-net}
\end{minipage}
\end{figure}


Finally we define $\qfree(v_0, x_0, x_1, y_0, y_1, \vps)$ as: 
$$\qstrt(v_0, x_0, x_1, y_0, y_1) \wedge \qnet(v_0, \vps) \wedge \bigwedge_{p \in \forb}\badp(v_p).$$ And $\qwhole$ as the Boolean
CQ obtained by fully existentially quantifying $\qfree$. See \cref{fig:q-free} and \cref{fig:q-free-alternate}.

\begin{figure}[H]
\begin{minipage}{0.4\textwidth}
    \centering
    \includegraphics[height=0.9\textwidth]{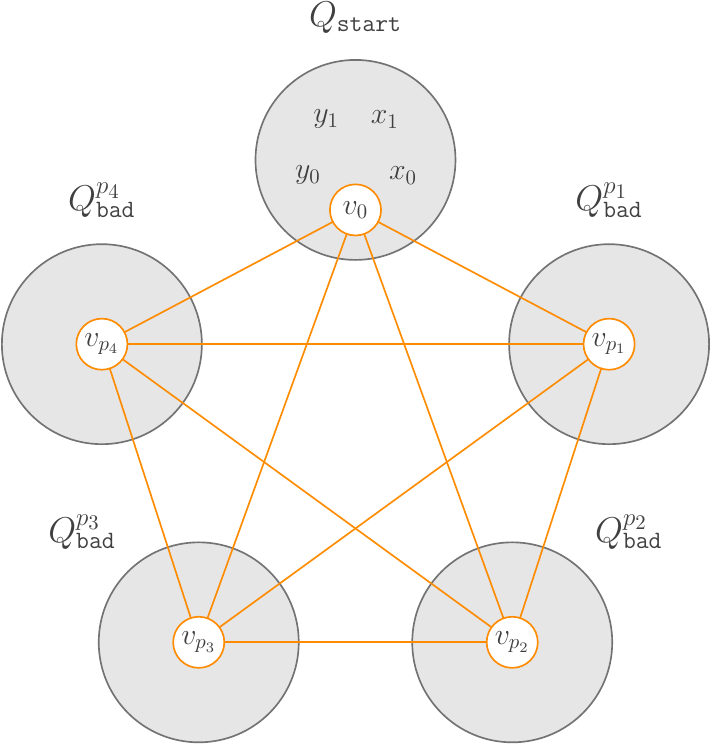}
    \caption{High level visualization of vertical $\qfree$ with free variables and $\net$ atoms exposed.}
    \label{fig:q-free}
\end{minipage}
\hfill
\begin{minipage}{0.4\textwidth}
    \centering
    \includegraphics[height=0.85\textwidth]{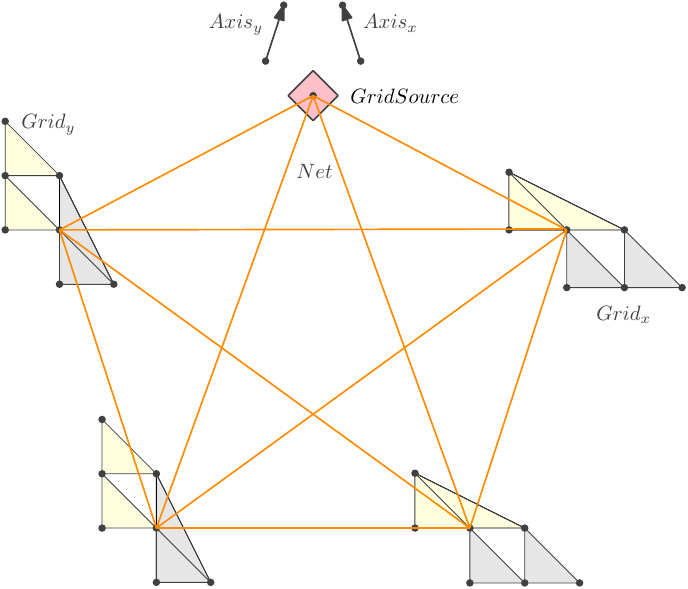}
    \caption{Structural depiction of $\qfree$ with every atom exposed.}
    \label{fig:q-free-alternate}
\end{minipage}
\end{figure}


{\bf Tuple Generating Dependencies.}
We will take our UIDs $\Sigma$ to  be:
$$ \ax(x,x') \rightarrow \exists{x''} \ax(x', x'') \quad\text{and}\quad \ay(y,y') \rightarrow \exists{y''} \ay(y', y'').$$

The structure $\chase_{\Sigma}(\canondb(Q))$
 will then consist of $\canondb(Q)$ with two infinite $\ax$ and $\ay$ chains. For clarity we name the elements of those chains 
$\xs{0}, \xs{1}, \xs{2}, \ldots$ and $\ys{0}, \ys{1}, \ys{2}, \ldots$ respectively. Moreover, to distinguish constants of  $\canondb(Q)$ from variables of $Q$ we will use a star in the upper-right subscript, for example a constant $\vs{0}$ of $\canondb(Q)$ will correspond to variable $v_0$ of $Q$. Finally, we will use $\vsps$ to denote domain elements of the chase that correspond to $\vps$. For further clarification see \cref{fig:m-structure}, which shows
the structure, with the canonical database at the bottom right.

\begin{figure}[H]
\begin{minipage}{0.50\textwidth}
    \centering
    \includegraphics[width=0.8\textwidth]{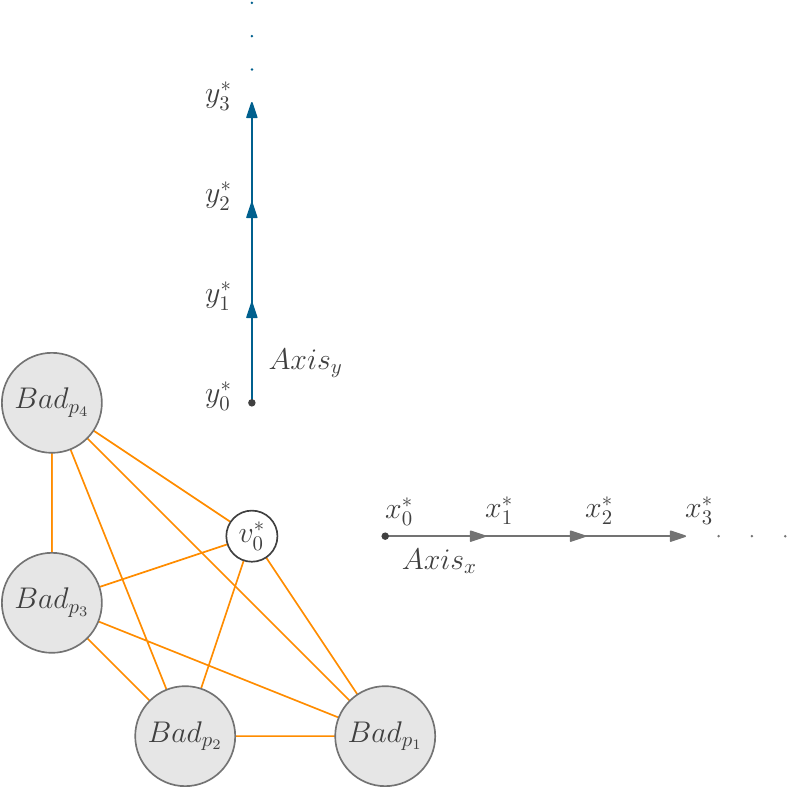}
    \caption{Depiction of the chase $\chase_{\Sigma}(\canondb(Q))$.}
    \label{fig:m-structure}
\end{minipage}
\end{figure}

\medskip

\noindent {\bf View construction.}

We will have a single UCQ view. Intuitively, there is a ``start disjunct'' that will match in the canonical database of the chase to produce many view facts.
Then there will be some ''unfaithful'' disjuncts, representing alternative ways of producing these facts. A choice of these disjuncts will represent  an attempt
to tile the quarter plane. Applying $Q$ at the final stage will correspond to checking correctness.

Define $Slot(x)$ as a conjunction of all atoms over our schema $\sig$ in one free variable $x$, with the exception of $Net(x,x)$.  
We will refer to the structure defined by this query as a {\em slot}. 
Then define $\vslot$ as $\bigwedge_{p \in \forb} Slot(v_p)$. In the end,  $\qfree$ will be a disjunct of our single UCQ view $V$.
Note that $\vslot$ shares free variable 
$v_p$ for $p \in \forb$ with $\qfree$: $v_p$ appears in a $\badp$ conjunct of $\qfree$. 
Query $\vslot$ is depicted in \cref{fig:v-slot}.
\begin{figure}[H]
    \centering
    \includegraphics[width=0.9\textwidth]{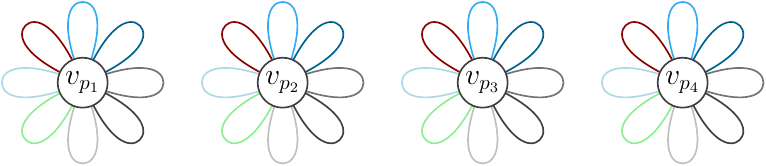}
    \caption{Depiction of $\vslot$. Note free variables shared with $\qfree$.}
    \label{fig:v-slot}
\end{figure}

We define $\vnet$ as $\qnet$. 

Then we define $\vchoice(v_0, x_0, x_1, y_0, y_1, \vps)$ as the  CQ: $$T_i(v_0) \wedge  \gx(v_0, x_0, x_1) \wedge  \gy(v_0, y_0, y_1) \wedge \vnet(v_0, x_0, x_1, y_0, y_1, \vps) \wedge \vslot(\vps).$$
$\vchoice$ is  depicted in \cref{fig:v-chose-i}. 

Again, keeping in mind that  $\qfree$ will eventually be a disjunct of our single UCQ view $V$ note that $\vchoice$ shares free variables $v_0, x_0, x_1, y_0$, and $y_1$ with $\qfree$. Those variables appear in the $\qstrt$ conjunct of $\qfree$. 

\begin{figure}[H]
\centering
\begin{minipage}{0.4\textwidth}
    \centering
    \includegraphics[width=0.9\textwidth]{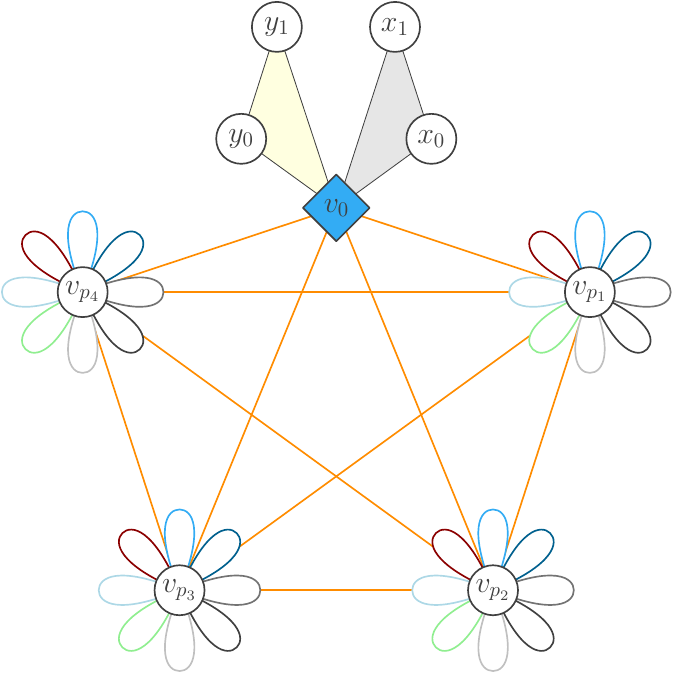}
    \caption{Depiction of $\vchoice$.}
    \label{fig:v-chose-i}
\end{minipage}
\hfill
\begin{minipage}{0.5\textwidth}
\vspace{8.35mm}
\centering
\begin{minipage}{0.6\textwidth}
$\mondet(Q,\views, \Sigma)$:
\begin{algorithmic}[1]
    \For{$Q_n$ approximation of  $Q$}
    \State $C_n \defeq \chase_{\Sigma}(\canondb(Q_n))$
     \State $\viewinst_n \defeq \views(F_n)$                       
        \For{$Q'_{m,n} \in \backview_{\views}(D_n)$}             
         \State $C'_{m,n} \defeq \chase_{\Sigma}(Q'_{m,n})$
        \If{$C'_{m,n} \not \models Q$}                      
            \State \textbf{return} \textbf{false}
        \EndIf
         \EndFor
      \EndFor
            \State \textbf{return} \textbf{true}
\end{algorithmic}
\end{minipage}
\vspace{8.25mm}
\caption{Process for checking monotonic determinacy.} \label{fig:alg-restated}
\end{minipage}
\end{figure}

We let $\vunf(v_0, x_0, x_1, y_0, y_1, \vps) = \bigvee_{i \in \cols} \vchoice(v_0, x_0, x_1, y_0, y_1, \vps)$. And finally we are able to define the unique view of our schema:
\[
V(x_0, x_1, y_0, y_1, \vps) = \exists{v_0} ~ \qfree(v_0, x_0, x_1, y_0, y_1, \vps) \vee \vunf(v_0, x_0, x_1, y_0, y_1, \vps)
\]

\bigskip
\noindent
\noindent
\myparagraph{Correctness of the reduction}
With $Q, V, \Sigma$ constructed, let us show that the transformation is a correct
reduction from tiling to monotonic determinacy.
We consider the process \cref{alg:query-mondet-constraints}, restated for convenience in Figure \ref{fig:alg-restated}. For the remainder of this section we will focus on proving the following lemma:

\begin{lemma}\label{lem:app-third-main}
There exists a valid tiling of $\pair{\cols, \forb}$ if and only if $\mondet(Q,\views, \Sigma)$ returns false.
\end{lemma}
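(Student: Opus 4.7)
The plan is to trace the process $\mondet(Q,\views,\Sigma)$ through its stages and link its outcome to the validity of a tiling of $\pair{\cols,\forb}$. Since $Q$ is a CQ, its only CQ-approximation is itself, so there is a single outer iteration. First I would describe $C=\chase_\Sigma(\canondb(Q))$: the canonical database of $Q$ extended by the two infinite UID-chains $\xs{0},\xs{1},\ldots$ (via $\ax$) and $\ys{0},\ys{1},\ldots$ (via $\ay$); no other rules fire. Next I would characterize the view image $D=V(C)$: the unique $\gs$-fact on $\vs{0}$ and the distinctive tile-predicate at each $\vs{p}$ pin the variables $v_0,v_p$ of the $\qfree$-disjunct, while $x_0,x_1$ (resp.\ $y_0,y_1$) may range over any successor pair in the $\ax$-chain (resp.\ $\ay$-chain); the $\vunf$-disjunct cannot match $C$ because $\vslot$ demands loop atoms absent from $C$. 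Hence $D=\{F_{i,j}:(i,j)\in\mathbb{N}^2\}$ with $F_{i,j}=V(\xs{i},\xs{i+1},\ys{j},\ys{j+1},\vsps)$.

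Second, I would reduce the pipeline to a tiling question. If any $F_{i,j}$ is un-viewed via the $\qfree$-disjunct, the resulting $Q'$ contains a fresh copy of $\canondb(\qfree)$ on the shared constants and so already satisfies $Q$. Hence the only branches that can falsify $Q$ un-view every $F_{i,j}$ via some $\vchoice_{t_{i,j}}$; the map $(i,j)\mapsto t_{i,j}$ is then a tiling of the quarter-plane. The resulting $Q'$ (unchanged by the final $\Sigma$-chase) has: a fresh cell $c_{i,j}$ per position tagged with $T_{t_{i,j}}$ and linked by $\gx,\gy$ to the axis segments; for each $(i,j)$, a net-clique from $\vnet$ on $\{c_{i,j}\}\cup\vsps$; and ``slot'' structure at each $\vs{p}$ carrying every single-variable atom of $\sig$ except the loop $\net(x,x)$.

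The main step---and the chief technical hurdle---is to show $Q'\models Q$ iff the tiling $t$ has a forbidden adjacency. For the contrapositive of one direction, suppose $t$ is valid and $h\colon Q\to Q'$ is a homomorphism. The atom $\gs(v_0)$ forces $h(v_0)$ to a slot $\vs{q}$; each $h(v_p)$ is also a slot, because mapping $v_p$ to a grid cell $c_{i',j'}$ would force the $v'_p$-witness in $\badp$ (via the shared $\gx,\gy$ coordinates) to the adjacent cell $c_{i',j'+1}$ or $c_{i'+1,j'}$, producing an actual violation of $p$ in $t$. But $\qnet(v_0,\vps)$ demands a $\net$-edge between every pair among $\{v_0\}\cup\{v_p:p\in\forb\}$, and since slots lack $\net$-self-loops, $h$ must be injective on these $|\forb|+1$ variables---impossible with only $|\forb|$ slots. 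For the reverse direction, given a vertical violation of $p^*$ at $(i,j)/(i,j+1)$, the explicit assignment $v_0\mapsto\vs{p^*}$, $v_{p^*}\mapsto c_{i,j}$ (with $v'_{p^*}\mapsto c_{i,j+1}$), and $v_p\mapsto\vs{p}$ for $p\neq p^*$ satisfies every conjunct of $Q$, using the $\vnet$-clique through the grid cell and the universal slot atoms at each $\vs{p}$. Combining all three steps, a valid tiling produces a falsifying branch so $\mondet$ returns \textbf{false}, and conversely any \textbf{false} outcome arises from an all-$\vchoice$ branch whose tile-map has no violation. The pigeonhole step is the main obstacle, and crucially relies on the sole design choice of excluding $\net(x,x)$ from $\vslot$: without it the slots would serve as universal witnesses for $Q$ regardless of the tiling.
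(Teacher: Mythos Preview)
Your proposal is correct and follows essentially the same route as the paper: characterize the view image of the chase, observe that only the all-$\vchoice$ branches can possibly falsify $Q$, identify those branches with tilings, and then run the pigeonhole argument on the $\net$-clique (using that slots lack $\net$-self-loops) to show $Q$ holds on such a branch iff the tiling admits a forbidden pair. Your justification for why $\vunf$ cannot match the chase (the loop atoms required by $\vslot$) differs from the paper's stated reason but is correct, and your contrapositive phrasing of the key direction is the same pigeonhole as in the paper's Lemma on $Q$ versus $\ut$.
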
 

\medskip

We start by taking the view image of the chase of the canonical database.
We note that the ``unfaithful disjuncts'' will not contribute anything:

\begin{claim}\label{claim:v-unfaithful-does-not-map-to-q}
$\vunf$ does not map to $\chase_{\Sigma}(\canondb(Q))$.
\end{claim}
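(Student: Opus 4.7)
The plan is to exploit the fact that each disjunct of $\vunf$ contains $\vslot(\vps)$ as a conjunct, and that $\vslot$ demands an element satisfying all unary/self-loop atoms of the schema, which cannot exist in the chase.

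First I would unfold definitions: $\vunf = \bigvee_{i \in \cols} \vchoicei{i}$, so a homomorphism from $\vunf$ into $\chase_\Sigma(\canondb(Q))$ amounts to a homomorphism from some single disjunct $\vchoicei{i}$ into the chase. Each disjunct contains $\vslot(\vps) = \bigwedge_{p \in \forb} Slot(v_p)$ as a subquery, so in particular any such homomorphism $h$ must map each $h(v_p)$ to a chase element witnessing $Slot(\cdot)$.

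Next I would extract the concrete atomic obligation. By definition, $Slot(x)$ is the conjunction of \emph{all} atoms over $\sig$ with the single variable $x$, with the sole exclusion of $\net(x,x)$. Since $\sig$ contains the binary relation $\ax$, the atom $\ax(x,x)$ is in $Slot(x)$. Hence, satisfying $\vslot$ would require, for each $p \in \forb$, an element $e_p$ in the chase with $\ax(e_p,e_p)$ holding.

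The heart of the argument, and the only step requiring a small verification, is the observation that $\chase_\Sigma(\canondb(Q))$ contains no $\ax$ self-loops. The only $\ax$-facts in $\canondb(Q)$ come from the $\qstrt$ conjunct, yielding $\ax(\xs{0},\xs{1})$, and $\Sigma$ is the single UID $\ax(x,x') \to \exists x''\, \ax(x',x'')$. A straightforward induction on chase steps then shows that every $\ax$-fact in the chase has the form $\ax(\xs{i},\xs{i+1})$ where the $\xs{i}$ are pairwise distinct (the fresh null introduced by each chase step differs from every previously occurring element); in particular no fact of the form $\ax(e,e)$ is ever produced. I anticipate no real obstacle here — the restriction to UIDs and the absence of equality-forcing structure makes the induction immediate.

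Combining these, no element of $\chase_\Sigma(\canondb(Q))$ can serve as the image of any $v_p$ under a homomorphism witnessing $\vslot(\vps)$, so no disjunct $\vchoicei{i}$ maps, and therefore $\vunf$ does not map to $\chase_\Sigma(\canondb(Q))$, as required. (As a sanity check, the same argument via $\ay(x,x)$, or via the absence of $\gx,\gy$ self-loops of the shape required by $Slot$, would work equally well, confirming that the slot condition is designed precisely to block any embedding into the ``axis skeleton'' produced by chasing.)
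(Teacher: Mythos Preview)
Your argument is correct. You correctly identify that every disjunct $\vchoicei{i}$ of $\vunf$ contains $\vslot(\vps)$, that $Slot(x)$ by definition contains the self-loop $\ax(x,x)$, and that the chase of $\canondb(Q)$ under the two UIDs produces only the strictly increasing chain $\ax(\xs{i},\xs{i+1})$ with pairwise-distinct nulls, hence no $\ax$ self-loop. That suffices.

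Your route differs from the paper's. The paper's one-line proof asserts that ``$\vunf$ contains $\gx$ and $\gy$ atoms, which do not appear in $\qwhole$.'' Taken literally this is inaccurate: the relation symbols $\gx$ and $\gy$ \emph{do} appear in $\qwhole$, via the $\badp$ conjuncts (each $\badp$ contains atoms $\gx(v_p,x,x')$ and $\gy(v_p,y,y')$), and these atoms are present in $\canondb(Q)$ and hence in the chase. So the paper's stated justification does not go through as written. Your approach via the $\vslot$ conjunct and the impossibility of realising $Slot(\cdot)$ in the chase (using the $\ax(x,x)$ atom, or equivalently $\gs$, or $\gx(x,x,x)$, any of which fail) is the cleaner and actually correct argument. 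The underlying claim is of course true either way; what you have supplied is a sound proof where the paper's is at best a slip.
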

\begin{proof}
$\vunf$ contains $\gx$ and $\gy$ atoms, which do not appear in $\qwhole$.
\end{proof}

In contrast,  $\qfree$ will map to $\chase_{\Sigma}(\canondb(Q))$ and produce a number of view facts.

\begin{claim}\label{claim:q-free-mappings-to-m-h-identity}
Let $h$ be any homomorphism from $\qfree$ to $\chase_{\Sigma}(\canondb(Q))$.
Then $h(v_i) = \vs{i}$ for any variable subscript $i$. Thus different view facts generated by such an $h$ can only differ on the  ``axis-related variables'' $x_0, x_1, y_0$, and $y_1$. 
\end{claim}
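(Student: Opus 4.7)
The plan is to trace through the conjuncts of $\qfree$ in sequence, showing that each group of atoms fixes more of the homomorphism. Since $\chase_{\Sigma}(\canondb(Q))$ extends $\canondb(Q)$ only by fresh elements along the $\ax$- and $\ay$-chains, and those new elements carry no $\gs$, $\net$, $T_i$, $\gx$, or $\gy$ facts, every $v$-variable of $\qfree$ must already map into the canonical-database part.

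The opening steps are essentially forced. The conjunct $\gs(v_0)$ of $\qstrt$ gives $h(v_0) = \vs{0}$, since $\vs{0}$ is the only element of the chase with a $\gs$-fact. The clique $\qnet(v_0, \vps)$ then requires each $h(v_p)$ to be a $\net$-neighbour of $\vs{0}$, and by construction those are precisely the constants $\{\vs{p'} : p' \in \forb\}$. So $h$ induces a function $\pi : \{v_p\}_{p \in \forb} \to \{\vs{p'}\}_{p' \in \forb}$.

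The main step -- and the core obstacle -- is showing $\pi$ is the identity, i.e. $\pi(v_p) = \vs{p}$. Multiple forbidden patterns may share a tile or a direction, so no single atom of $\badp$ pins $v_p$ down on its own; the argument relies on $\badp$ being a rigid little gadget whose image in $\canondb(Q)$ lives only on the $\vs{p}$/$v'_p$-pair. Concretely, fix $p$ with $p = \pair{i_p, j_p, d_p}$ and suppose $h(v_p) = \vs{p'}$. The atom $T_{i_p}(v_p)$ forces $i_p = i_{p'}$. The $\gx$- and $\gy$-atoms of $\badp$ then force $h(v'_p)$ to be the partner constant of $\vs{p'}$ inside the $p'$-gadget (it is the unique element of $\canondb(Q)$ with the required $\gx/\gy$-connections to $\vs{p'}$, since the coordinate variables used in different $\badp$ queries are disjoint), so $T_{j_p}(v'_p)$ now gives $j_p = j_{p'}$. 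Finally, the alignment pattern of the $\gx$- and $\gy$-atoms distinguishes the two directions -- in vertical $\badp$ the $\gx$-atoms share their last two arguments while the $\gy$-atoms are offset; in horizontal $\badp$ the roles are reversed -- which forces $d_p = d_{p'}$ and hence $p = p'$.

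It follows that $h$ is the identity on every $v$-variable of $\qfree$, including each $v'_p$ by the argument above. The only free variables whose images are not thereby pinned down are the axis-related $x_0, x_1, y_0, y_1$, constrained only by $\ax(x_0, x_1)$ and $\ay(y_0, y_1)$; these may be witnessed by any consecutive pair $(\xs{k}, \xs{k+1})$ on the $\ax$-chain and any $(\ys{k}, \ys{k+1})$ on the $\ay$-chain, supplying the promised source of variation between view facts produced by $h$.
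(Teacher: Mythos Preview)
Your proof is correct and follows essentially the same approach as the paper's, namely that each $\badp$ gadget is rigid and can only be mapped to the corresponding $p$-gadget in $\canondb(Q)$. The paper compresses this into three bullet points (identity automorphism of each $\badp$; each $\badp$ maps only to itself; the $\net$-clique forces a bijection on the $v$-variables), whereas you spell out concretely why $p=p'$ by matching $i_p$, $j_p$, and the direction $d_p$. One small difference: the paper uses the clique structure to get bijectivity on $\{v_0\}\cup\{v_p\}$ and then invokes gadget rigidity, while you first anchor $h(v_0)=\vs{0}$ via the unique $\gs$ fact and use the clique only to restrict the range of $h(v_p)$; both routes work and yours is arguably more direct.
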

\begin{proof}
Note that:
\begin{enumerate}
\item
For every $\badp$, its only automorphism is the identity. 
\item  Every $\badp$ can be mapped only to itself. 
\item  Variables of $\qnet$ need to be mapped bijectively onto themselves due to the $\net$-clique structure of $\qnet$. 
\end{enumerate}
The claim follows easily from these observations.
\end{proof}

Thus we will generate a collection of view facts for different matches of $\qfree$.
Now we characterize exactly what these view facts are.

\begin{claim}\label{claim:q-free-mappings-axis}
Let $H$ be the set of homomorphisms from $\qfree$ to $\chase_{\Sigma}(\canondb(Q))$. Then $\set{h(x_0, x_1, y_0, y_1) \mid h \in H} = \set{\pair{\xs{n}, \xs{n + 1}, \ys{m}, \ys{m + 1}} \mid n,m \in \mathbb{N}}$.
\end{claim}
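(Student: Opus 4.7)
The plan is to prove the two inclusions of the stated set equality separately, relying on one key structural observation: in $\qfree$, the variables $x_0, x_1, y_0, y_1$ appear \emph{only} in the atoms $\ax(x_0,x_1)$ and $\ay(y_0,y_1)$ that come from the $\qstrt$ conjunct. They do not appear in $\qnet(v_0,\vps)$ (which involves only $v$-variables and $\net$-atoms) nor in any $\badp(v_p)$ (whose internal existentially quantified axis variables are fresh and disjoint from $x_0,x_1,y_0,y_1$). Thus any homomorphism $h$ from $\qfree$ can be analyzed by decomposing $\qfree$ into the ``axis part'' and the ``grid/net part''.

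For the $\subseteq$ direction, I would first describe the $\ax$- and $\ay$-edges present in $\chase_\Sigma(\canondb(Q))$. Since the only $\ax$-atom in $\canondb(Q)$ is $\ax(\xs{0},\xs{1})$ (no other conjunct of $Q$ mentions $\ax$), and the sole rule affecting $\ax$ is the linear TGD $\ax(x,x')\to\exists x''\,\ax(x',x'')$, the $\ax$-facts of the chase are exactly $\{\ax(\xs{n},\xs{n+1})\mid n\in\mathbb{N}\}$. Symmetrically for $\ay$. Any $h\in H$ must send $\ax(x_0,x_1)$ and $\ay(y_0,y_1)$ to such edges, so $h(x_0,x_1,y_0,y_1) = \langle \xs{n},\xs{n+1},\ys{m},\ys{m+1}\rangle$ for some $n,m\in\mathbb{N}$.

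For the $\supseteq$ direction, I would fix $n,m\in\mathbb{N}$ and exhibit an explicit homomorphism $h_{n,m}$. On the subquery $\qnet(v_0,\vps)\wedge\bigwedge_{p\in\forb}\badp(v_p)$, every atom of this subquery is literally a fact of $\canondb(Q)\subseteq\chase_\Sigma(\canondb(Q))$ (using the identity assignment on the $v$-variables and on the existentially quantified inner variables of each $\badp$), so the identity-like assignment is a valid partial homomorphism there. I would then independently set $h_{n,m}(x_0)=\xs{n}$, $h_{n,m}(x_1)=\xs{n+1}$, $h_{n,m}(y_0)=\ys{m}$, $h_{n,m}(y_1)=\ys{m+1}$. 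Because the axis variables share no atoms with the grid/net part, these two partial assignments combine into a total homomorphism $h_{n,m}\colon\qfree\to\chase_\Sigma(\canondb(Q))$ realizing the desired tuple.

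The argument is essentially a ``disjoint variable'' decomposition, so I expect no serious obstacle. The only small point requiring care is confirming that the inner existentially quantified $x,x',y,y',y''$ of each $\badp$ (which in principle could be identified with axis elements of the chase) do not need to be — it is fine to send them to the corresponding canonical-database witnesses already present in $\canondb(Q)$. Once this is checked, the equality follows.
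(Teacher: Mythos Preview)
Your proposal is correct and follows essentially the same approach as the paper's proof, which also proceeds by (i) identifying the $\ax$- and $\ay$-facts of the chase as exactly the chain atoms $\ax(\xs{n},\xs{n+1})$ and $\ay(\ys{m},\ys{m+1})$, and (ii) observing that the atoms $\ax(x_0,x_1)$ and $\ay(y_0,y_1)$ can be mapped freely to any such chain atom independently of the rest of $\qfree$. One tiny oversight: in your $\supseteq$ argument the ``grid/net part'' should also include the atom $\gs(v_0)$ from $\qstrt$, but this is handled by the same identity assignment since $\gs(\vs{0})\in\canondb(Q)$.
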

\begin{proof}
Here we note that:
\begin{enumerate}
\item  The set of $\ax$ atoms contained in $\chase_{\Sigma}(\canondb(Q))$  is $\set{\ax(\xs{n}, \xs{n + 1}) \mid n \in \mathbb{N}}$. 
\item  The set of $\ay$ atoms $\chase_{\Sigma}(\canondb(Q))$ contains is $\set{\ay(\ys{m}, \ys{m + 1}) \mid m \in \mathbb{N}}$. 
\item  Atom $\ax{x_0, x_1}$ can be freely mapped to any $\ax$ atom of $\chase_{\Sigma}(\canondb(Q))$. 
\item  Atom $\ay{y_0, y_1}$ can be freely mapped to any $\ay$ atom of $\chase_{\Sigma}(\canondb(Q))$.
\end{enumerate}
The claim follows directly from these observations.
\end{proof}

Thus, we conclude that we have view facts corresponding  to axis-related variables as follows:

\begin{lemma} \label{lem:imageunfaithful}
When applying the view $V(x_0, x_1, y_0, y_1, \vps)$ to  $\chase_{\Sigma}(\canondb(Q))$, we get view facts for  exactly these tuples:
$$ \set{\pair{\xs{n}, \xs{n+1}, \ys{m}, \ys{m + 1}, \vsps} \mid n,m \in \mathbb{N}}.$$
\end{lemma}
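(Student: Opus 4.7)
The plan is to prove Lemma \ref{lem:imageunfaithful} by decomposing the view along its two disjuncts, eliminating one via Claim \ref{claim:v-unfaithful-does-not-map-to-q}, and using the two structural claims about $\qfree$-homomorphisms to pin down exactly the contributions of the other disjunct.

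First I would unfold the definition of $V$ as the disjunction of $\qfree$ and $\vunf(v_0,x_0,x_1,y_0,y_1,\vec{v_p})$, recalling that $\vunf$ is itself a union of the $\vchoicei{i}$ for $i\in\cols$. A view fact $V(\vec t)$ is generated on $\chase_\Sigma(\canondb(Q))$ exactly when at least one disjunct has a homomorphism sending its free tuple to $\vec t$. By Claim \ref{claim:v-unfaithful-does-not-map-to-q}, no $\vchoicei{i}$ admits such a homomorphism (the $\gx$ and $\gy$ atoms simply are not present in the chase). Hence the view image is determined entirely by homomorphisms from $\qfree$ into $\chase_\Sigma(\canondb(Q))$.

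Next I would apply Claim \ref{claim:q-free-mappings-to-m-h-identity} to conclude that every homomorphism $h:\qfree\to\chase_\Sigma(\canondb(Q))$ satisfies $h(v_0)=\vs{0}$ and $h(v_p)=\vs{p}$ for each $p\in\forb$, so on the ``non-axis'' free variables the image tuple $h(\vec{v_p})$ is forced to be $\vsps$. This reduces the characterization to the behaviour of $h$ on the four axis-related variables $x_0,x_1,y_0,y_1$. Applying Claim \ref{claim:q-free-mappings-axis}, the set of possible images $h(x_0,x_1,y_0,y_1)$ is exactly $\{\langle \xs{n},\xs{n+1},\ys{m},\ys{m+1}\rangle\mid n,m\in\mathbb N\}$: each $\ax$ and $\ay$ atom in the chase lies on the infinite successor chain, and $h$ may independently pick any position on each chain.

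Combining the two, the set of tuples $\vec t$ such that $V(\vec t)$ holds in $\views(\chase_\Sigma(\canondb(Q)))$ is precisely $\{\langle \xs{n},\xs{n+1},\ys{m},\ys{m+1},\vsps\rangle \mid n,m\in\mathbb N\}$, which is the stated claim. The proof is a straightforward assembly of three prior claims; the only routine check is verifying that conversely every such tuple \emph{is} realized, which follows because each choice of $(n,m)$ gives a legitimate homomorphism sending $x_0,x_1,y_0,y_1$ to $\xs{n},\xs{n+1},\ys{m},\ys{m+1}$ and keeping the other variables on their canonical witnesses. No obstacle is expected here: all the real work has already been carried out in Claims \ref{claim:v-unfaithful-does-not-map-to-q}--\ref{claim:q-free-mappings-axis}, and this lemma is a compilation step.
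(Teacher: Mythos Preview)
Your proposal is correct and matches the paper's own proof essentially step for step: the paper's argument is just the two-line chain $V(\chase_\Sigma(\canondb(Q))) = \qfree(\chase_\Sigma(\canondb(Q)))$ by Claim~\ref{claim:v-unfaithful-does-not-map-to-q}, and then $\qfree(\chase_\Sigma(\canondb(Q))) = \{\langle \xs{n},\xs{n+1},\ys{m},\ys{m+1},\vsps\rangle \mid n,m\in\mathbb N\}$ by Claims~\ref{claim:q-free-mappings-to-m-h-identity} and~\ref{claim:q-free-mappings-axis}. Your write-up is simply a more expanded version of the same compilation.
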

\begin{proof}
\begin{align*}
V(\chase_{\Sigma}(\canondb(Q))) \quad=&\quad \qfree(\chase_{\Sigma}(\canondb(Q))) \tag{\cref{claim:v-unfaithful-does-not-map-to-q}}\\
\qfree(\chase_{\Sigma}(\canondb(Q))) \quad=&\quad  \set{\pair{\xs{n}, \xs{n+1}, \ys{m}, \ys{m + 1}, \vsps} \mid n,m \in \mathbb{N}} \tag{Claims~\ref{claim:q-free-mappings-to-m-h-identity} and \ref{claim:q-free-mappings-axis}}
\end{align*}
\end{proof}

Now let us inspect the structures considered in Line~\ref{alg:query-determinacy-constraints:view-acc-inverse} of $\mondet(Q,\views, \Sigma)$. 
For each view fact, we will choose a view definition, either as a $\qfree$ disjunct or an unfaithful disjunct.
Whenever a structure is created by choosing  a $\qfree$ disjunct of $V$,  $Q$ will be satisfied 
(the return in Line ~ \ref{alg:query-determinacy-constraints:return}  will not fire), 
and thus we continue with the process.

Let $\unfaithfultests$, the set of \emph{unfaithful tests},   be the set of all structures considered in the {\bf for} loop of \cref{alg:query-mondet-constraints}
that are created by taking a $\vunf$ disjunct of $V$.

Consider applying the ``inverse rule'' corresponding to $\,\vunf$, applied
 to an element $\pair{\xs{n}, \xs{n+1}, \ys{m}, \ys{m + 1}, \vsps}$ within $V(\chase_{\Sigma}(\canondb(Q)))$. As $\vunf$ is a UCQ this means applying one of its $\vchoicei{t}$ disjuncts. 
More precisely, we apply the inverse of $\exists{v_0}\; \vchoicei{t}(v_0, x_0, x_1, y_0, y_1, \vps)$ to $\pair{\xs{n}, \xs{n+1}, \ys{m}, \ys{m + 1}, \vsps}$.\footnote{Note, that the variable $v_0$ is existentially quantified in $V$.} 


From now on, for convenience, we rename the fresh element created from applying a step for an unfaithful test above as $\vs{i,j}$. Also, let:
$$\gridnode{n,m}{t} ::= \vchoicei{t}(\vs{i,j}, \xs{m}, \xs{n+1}, \ys{m}, \ys{m + 1}, \vsps).$$


\begin{definition}\label{def:third-grid}
Let $f\in  \cols^{\mathbb{N}\times \mathbb{N}}$ be a function encoding a tiling of a quarter plane. Then let 
$$\thirdgrid{f} ::= \bigcup_{\pair{n,m}\in \mathbb{N}\times\mathbb{N}} \gridnode{n,m}{f(n,m)}.$$
\end{definition}

From  Lemma \ref{lem:imageunfaithful} 
we conclude that the structures of interest to us in Line~\ref{alg:query-determinacy-constraints:view-acc-inverse}  -- that is,  those contained in $\unfaithfultests$ --  will have the 
shape of the structures above.

\begin{corollary}\label{cor:app-third-ut-encodes-tilling}
For every (not necessarily valid), tiling $f\in  \cols^{\mathbb{N}\times \mathbb{N}}$ of the quarter plane the structure $\thirdgrid{f}$ is contained (up to isomorphism) in $\unfaithfultests$.
Conversely, every member of $\unfaithfultests$ is isomorphic to such a structure.
\end{corollary}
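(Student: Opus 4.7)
My plan is to argue that the elements of $\unfaithfultests$ are in bijective correspondence with the choice functions $f : \mathbb{N} \times \mathbb{N} \to \cols$, and that the structure corresponding to $f$ is exactly $\thirdgrid{f}$. The bijection is driven by the fact that when we ``chase backward'' through the UCQ view $V$, each view fact independently requires a choice of disjunct of $V$, and restricting to those elements of $\backview_\views(\vinst)$ for which every chosen disjunct lies inside $\vunf$ (rather than being the $\qfree$ disjunct) is precisely what membership in $\unfaithfultests$ means.

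First I would pin down the view image: by Lemma~\ref{lem:imageunfaithful}, $\views(\chase_\Sigma(\canondb(Q)))$ consists of exactly the facts $V(\xs{n}, \xs{n+1}, \ys{m}, \ys{m+1}, \vsps)$ for $(n,m) \in \mathbb{N}\times\mathbb{N}$. Thus any structure in $\backview_\views$ arises by choosing, for each such fact, one disjunct of $V$ (either the $\qfree$ disjunct or one of the $|\cols|$ many $\vchoicei{t}$ disjuncts appearing in $\vunf$), and then taking the canonical database of that disjunct with $v_0$ replaced by a fresh element. A member of $\unfaithfultests$ is one where the $\vunf$ branch is chosen for every view fact; a choice of a particular $\vchoicei{t}$ disjunct at the fact indexed by $(n,m)$ is exactly the datum $f(n,m) = t$. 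So $\unfaithfultests$ is in bijective correspondence with $\cols^{\mathbb{N}\times\mathbb{N}}$.

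Next I would check that the structure built from the choice function $f$ equals $\thirdgrid{f}$ on the nose. The inverse-chase step at the view fact indexed by $(n,m)$ introduces a single fresh witness for the existentially quantified variable $v_0$ of $V$ -- call this $\vs{n,m}$ -- and produces the atoms of $\vchoicei{f(n,m)}(\vs{n,m}, \xs{n}, \xs{n+1}, \ys{m}, \ys{m+1}, \vsps)$, which is $\gridnode{n,m}{f(n,m)}$ by definition. Two points require a brief check: (i) the existentially quantified variable $v_0$ in $V$ must yield a \emph{distinct} fresh witness per view fact, which is fine since each inverse-chase step introduces its own labelled null; (ii) the free variables $\vps$ in $V$ are bound to the same $\vsps$ in every view fact, so the $\vslot(\vsps)$ and $\vnet$ clique atoms from different steps are compatible -- repeated applications simply re-derive the same facts on the shared elements. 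Taking the union over all $(n,m)$ yields $\bigcup_{(n,m)}\gridnode{n,m}{f(n,m)} = \thirdgrid{f}$.

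Conversely, given any $M \in \unfaithfultests$, reading off for each $(n,m)$ which $\vchoicei{t}$ disjunct was chosen at the $(n,m)$-view fact defines an $f \in \cols^{\mathbb{N}\times\mathbb{N}}$, and by the previous paragraph $M \cong \thirdgrid{f}$. The mild obstacle is making the bijection actually well-defined ``up to isomorphism'', since $\backview_\views$ is only specified up to choice of fresh nulls; this is handled by the canonical renaming $v_0 \mapsto \vs{n,m}$ used in the definition of $\gridnode{n,m}{t}$. With this in place, both directions of the corollary follow.
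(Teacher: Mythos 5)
Your proposal is correct and follows essentially the same route as the paper's (much terser) proof: identify the view facts via Lemma~\ref{lem:imageunfaithful} as being indexed by $(n,m)\in\mathbb{N}\times\mathbb{N}$, and observe that a member of $\unfaithfultests$ is determined by the choice of a $\vchoicei{t}$ disjunct at each such fact, which is exactly a function $f\in\cols^{\mathbb{N}\times\mathbb{N}}$. The extra care you take about fresh witnesses for $v_0$ and the sharing of $\vsps$ across inverse-chase steps is a welcome elaboration of details the paper leaves implicit, but it is not a different argument.
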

\begin{proof}
Consider \cref{def:third-grid}. Then
the range of $i$ and $j$ comes from Lemma \ref{lem:imageunfaithful}, 
and the range of $f$ comes from disjuncts of $\vunf$.
\end{proof}

This means that each element of $\unfaithfultests$ is a grid-like structure, part of which is depicted on \cref{fig:n-structure}.
\begin{figure}[H]
\begin{minipage}{0.5\textwidth}
    \centering
    \includegraphics[width=0.9\textwidth]{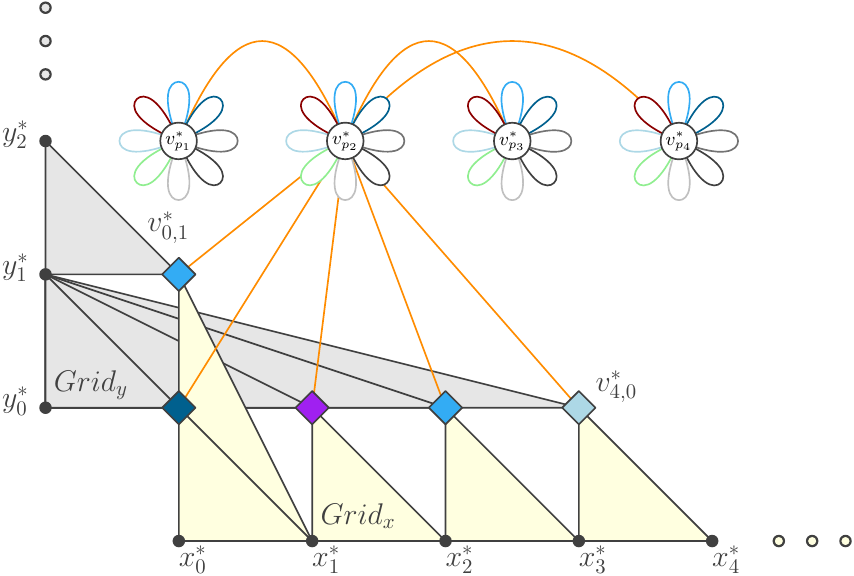}
    \caption{Partial depiction of a structure $\ut$.}
    \label{fig:n-structure}
\end{minipage}
\end{figure}

In the absence of any assumption of monotonic determinacy, the structures in $\unfaithfultests$, based on the $T_i(v_{m,n})$ atoms,  define all tilings of a quarter plane, including the invalid ones.

\begin{definition}
For $p \in \forb$ and natural $n,m$ define a mapping $\badmatch{p, n, m}$ from the variables of $\badp$ to the domain of $\ut$ as:
$$
\badmatch{p, n, m} = \begin{cases}
\pair{x, x,', x'', y, y', v_p, v_p'}\mapsto\pair{\xs{n}, \xs{n + 1}, \xs{n + 2}, \ys{m}, \ys{m + 1}, \vs{n,m}, \vs{n + 1, m}} & \text{when $p$ is a horizontal pattern,}\\
\pair{x, x,', y, y', y'', v_p, v_p'}\mapsto\pair{\xs{n}, \xs{n + 1}, \ys{m}, \ys{m + 1}, \ys{m + 2}, \vs{n, m}, \vs{n, m + 1}} & \text{when $p$ is a vertical pattern.}
\end{cases}
$$
\end{definition}

\begin{claim}\label{claim:third-qbad-mappings}
There exists $p \in \forb$ appearing at coordinates\footnote{Recall that this means tiles at $\pair{(n,m), (n + 1, m)}$ 
form $p$ if it is horizontal, or tiles at $\pair{(n,m), (n, m + 1)}$ form $p$ if it is vertical.} $(n,m)$ in the tiling defined by $\ut \in \unfaithfultests$ if and only if $\badmatch{p,n,m}$ is a homomorphism from $\badp$ to $\ut$.
\end{claim}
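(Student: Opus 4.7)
The claim is a direct biconditional about the interaction between the structure of the ``bad query'' $\badp$ and the grid-like shape of structures in $\unfaithfultests$. My plan is to handle each direction by a straightforward case analysis on whether $p$ is horizontal or vertical, relying on two structural observations about any $\ut = \thirdgrid{f} \in \unfaithfultests$ (guaranteed by \cref{cor:app-third-ut-encodes-tilling}): first, each grid vertex $\vs{n,m}$ occurs in exactly one $\gridnode{n,m}{f(n,m)}$ atom-set, so $\vs{n,m}$ carries exactly one label $T_t$, namely $t = f(n,m)$; second, the atoms $\gx(\vs{n,m}, \xs{n}, \xs{n+1})$ and $\gy(\vs{n,m}, \ys{m}, \ys{m+1})$ are the only $\gx$- and $\gy$-atoms incident to $\vs{n,m}$. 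The UIDs in $\Sigma$ concern only $\ax$ and $\ay$, hence the chase cannot create any new $T_i$, $\gx$, or $\gy$ atoms on the grid vertices.

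For the forward direction, suppose $p$ occurs at $(n,m)$ in the tiling $f$ encoded by $\ut$. Take first the horizontal case $p = \pair{i,j,\horz}$, so $f(n,m) = i$ and $f(n+1,m) = j$. Then $\gridnode{n,m}{i}$ and $\gridnode{n+1,m}{j}$ are contained in $\ut$, contributing exactly the six atoms $T_i(\vs{n,m})$, $T_j(\vs{n+1,m})$, $\gx(\vs{n,m}, \xs{n}, \xs{n+1})$, $\gx(\vs{n+1,m}, \xs{n+1}, \xs{n+2})$, $\gy(\vs{n,m}, \ys{m}, \ys{m+1})$, and $\gy(\vs{n+1,m}, \ys{m}, \ys{m+1})$. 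Unfolding the definition of the horizontal $\badp$ and of $\badmatch{p,n,m}$, one checks that each atom of $\badp$ is mapped to one of these six atoms of $\ut$. The vertical case is symmetric, using $\gridnode{n,m}{i}$ and $\gridnode{n,m+1}{j}$ together with the shared intermediate value $\ys{m+1}$.

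For the backward direction, suppose $\badmatch{p,n,m}$ is a homomorphism $\badp \to \ut$. Consider the horizontal case $p = \pair{i,j,\horz}$: the atom $T_i(v_p)$ in $\badp$ must be satisfied at $\badmatch{p,n,m}(v_p) = \vs{n,m}$, so $T_i(\vs{n,m}) \in \ut$. By the first structural observation above, $\vs{n,m}$ has a unique $T$-label, which must therefore be $T_i$, so $f(n,m) = i$. The same reasoning applied to $T_j(v_p')$ and $\vs{n+1,m}$ gives $f(n+1,m) = j$. Together these say precisely that $p$ appears at $(n,m)$. The vertical case is identical, substituting the vertical adjacency.

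The only potential obstacle is confirming that no \emph{extra} $T_i$, $\gx$, or $\gy$ atoms are ever produced on the grid vertices $\vs{n,m}$; otherwise the backward direction could fail because a spurious $T_i(\vs{n,m})$ could make $\badmatch{p,n,m}$ a homomorphism without $p$ actually appearing in $f$. This is resolved by inspecting the construction: each $\vs{n,m}$ is a fresh witness introduced by a single application of the inverse of exactly one $\vchoicei{t}$ disjunct (by construction of $\unfaithfultests$), and the atoms $\vslot(\vps)$ populate only the $\vsps$ elements, not the fresh $\vs{n,m}$. Combined with the fact that $\Sigma$ adds no atoms outside the $\ax$/$\ay$ chains, the uniqueness of the $T$-label on each $\vs{n,m}$ is immediate, completing the argument.
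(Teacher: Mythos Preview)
Your proposal is correct and follows essentially the same approach as the paper's proof: both directions are handled by unfolding the definitions of $\badp$, $\badmatch{p,n,m}$, and $\thirdgrid{f}$, with the horizontal and vertical cases treated symmetrically. Your version is considerably more detailed than the paper's (which dispatches each direction in a single sentence), and in particular you make explicit the structural observation that each $\vs{n,m}$ carries a unique $T$-label and that $\Sigma$ introduces no new $T_i$, $\gx$, or $\gy$ atoms---points the paper leaves implicit but which are indeed what makes the backward direction work.
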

\begin{proof}
 Let $f \in\cols^{\mathbb{N}\times \mathbb{N}}$ define the tiling of $\ut$.

($\Rightarrow$) Assume that one of the horizontal  forbidden patterns $p = \pair{i,j,\horz} \in \forb$ appears in the tiling defined by $f$ at coordinates $\pair{n,m}$. That is, $\pair{f(n,m), f(n+1, m)} = \pair{i,j}$. Then $\badmatch{p,n,m}$ clearly defines a homomorphism from $\badp$ to $\ut$. The case for vertical forbidden pattern is analogous.

($\Leftarrow$) Assume that $\badmatch{p,n,m}$ is a homomorphism from $\badp$ to $\ut$. Assume $p = \pair{i,j,\horz}$, then $\pair{f(n,m), f(n+1, m)} = \pair{i,j}$ thus $p$ appears in the tiling defined by $\ut$.
\end{proof}

Thus we know that incorrectness of a tiling associated with a member $\ut$ of $\unfaithfultests$
corresponds to mapping one of the $\badp$ queries, which are a component of $\qwhole$, into $\ut$.
 We still need to show that an unsuccessful tiling corresponds to mapping the entirety of $\qwhole$, which
involves a mapping witnessing all $\badp$ queries at once. That is, we need to solve
the challenge mentioned early, due to the fact that $\qwhole$ is a CQ.

Informally, the idea is as follows. In one direction, suppose the tiling is invalid. This is witnessed
by
 the presence of some forbidden pattern
$p$.  Towards defining a homomorphism of
$\qwhole$, we  map $v_0$ to the constant $\vs{p}$,
 and the other variables $v_{p'}$ 
to the constants $\vs{p'}$ corresponding to themselves. For a given $p' \neq p$,  we
can  extend the mapping
to become a homomorphism of  $\badpb$ in a ``vacuous way'' -- by mapping all quantified 
variables of $\badpb$ into $\vs{p'}$. This will give us a homomorphism
because  $\vs{p'}$ satisfies
each unary predicate, other than $\net$ self-loops.
Turning to the variable $v_p$, we can extend to a homomorphism of
$\badp$,  because $p$ is a violation. 

In the other direction, suppose we have a homomorphism $h$ of $\qwhole$. All variables $v_p$ must map
either to constant $\vs{p'}$  for some $p'$,  or to some constant $\vs{n,m}$.
Because there are no $\net$ self-loops, these variables must map injectively. Thus there
is one variable $v_p$ which must map on to some constant $\vs{n,m}$. The fact that
$h$ is a homomorphism and the properties of $\vs{n,m}$ imply
that the forbidden pattern $p$ occurs in the tiling.

We now explain more formally.

\begin{lemma}\label{lem:app-third-ut-test}
The tiling defined by $\ut \in \unfaithfultests$ is invalid if and only if $\qwhole$ maps to $\ut$.
\end{lemma}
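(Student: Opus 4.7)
The plan is to prove each direction using the explicit atomic structure of $\ut$, which by Corollary~\ref{cor:app-third-ut-encodes-tilling} is (up to isomorphism) $\thirdgrid{f}$ for some tiling $f \in \cols^{\mathbb{N}\times\mathbb{N}}$: the universe consists of axis constants $\xs{n},\ys{m}$, grid constants $\vs{n,m}$ (each carrying the unique atoms $T_{f(n,m)}(\vs{n,m})$, $\gx(\vs{n,m},\xs{n},\xs{n+1})$, $\gy(\vs{n,m},\ys{m},\ys{m+1})$), and slot constants $\vs{p}$ for $p \in \forb$ (each carrying every one-variable atom over $\sig$ except the $\net$-self-loop). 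The $\net$-graph on $\ut$ therefore has a complete bipartite part between grid and slot constants, a clique on the slot constants, no edges between distinct grid constants, and no self-loops. The final chase under $\Sigma$ adds nothing, since each $\ax$- and $\ay$-trigger is already self-witnessed by the corresponding slot self-loop.

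For the forward direction, suppose $f$ admits a forbidden pattern $p$ at $(n,m)$. I will define $h \colon \qfree \to \ut$ by $h(v_0) = \vs{p}$, $h(v_p) = \vs{n,m}$, $h(v_{p'}) = \vs{p'}$ for $p' \neq p$, and $h(x_0) = h(x_1) = h(y_0) = h(y_1) = \vs{p}$. On the conjunct $\badp(v_p)$ I extend $h$ via the witnessing map $\badmatch{p,n,m}$ from Claim~\ref{claim:third-qbad-mappings}; on each conjunct $\badpb(v_{p'})$ with $p' \neq p$ I extend $h$ by collapsing every existentially quantified variable to $\vs{p'}$, which succeeds because $\badpb$ contains no $\net$-atom and every other one-variable atom holds at the slot $\vs{p'}$. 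Verifying $\qstrt$ reduces to $\gs(\vs{p})$, $\ax(\vs{p},\vs{p})$, and $\ay(\vs{p},\vs{p})$, all of which hold by the slot definition, and the $\qnet$-clique is immediate from the $\vnet$-atoms generated at $(n,m)$ together with the slot-slot atoms.

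For the backward direction, suppose $h \colon \qfree \to \ut$ is a homomorphism. The crux of the argument is a pigeonhole step on the $\qnet$-clique whose $|\forb|+1$ variables $\{v_0\}\cup\{v_p : p \in \forb\}$ must have pairwise distinct images under $h$ (no $\net$-self-loops in $\ut$), no two of which are distinct grid constants (no $\net$-edge joins distinct grid constants). Since only $|\forb|$ slot constants are available but $|\forb|+1$ distinct images are needed, at least one clique variable maps to a grid constant; because $\gs$ holds only on slots while $\qstrt$ forces $\gs(h(v_0))$, it must be some $h(v_p) = \vs{n,m}$, rather than $h(v_0)$, that hits a grid constant.

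The main obstacle will be to deduce from $h(v_p) = \vs{n,m}$ that the pattern $p$ actually appears at $(n,m)$ in $f$. I plan a case analysis on whether $p$ is horizontal or vertical. In both cases, the uniqueness of the $\gx$- and $\gy$-atoms at $\vs{n,m}$ pins down $h(x) = \xs{n}$, $h(x') = \xs{n+1}$ and the analogous $y$-assignments; the companion existential variable $h(v'_p)$ is then forced onto the adjacent grid constant, since slot constants carry only $\gx$- and $\gy$-self-loops that cannot unify with the axis constants $\xs{n+1}$ or $\ys{m+1}$, and axis constants themselves fail the required tile predicate. The two tile atoms in $\badp$ then pin down the values $f(n,m) = i$ and $f(n+1,m) = j$ (or $f(n,m+1) = j$ in the vertical case), exhibiting the forbidden pattern $p$ at $(n,m)$.
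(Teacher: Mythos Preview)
Your proof is correct and follows essentially the same approach as the paper's own proof. Both rely on the same structural facts about $\ut$ (the $\net$-graph shape, the slot self-loops, the unique $\gx$/$\gy$ atoms at grid constants), the same forward construction (map $v_0$ to the slot $\vs{p}$, map $v_p$ via $\badmatch{p,n,m}$, and collapse the remaining $\badpb$ conjuncts onto their slots), and the same backward pigeonhole on the $\qnet$-clique to force some $v_p$ onto a grid constant. The paper organizes the backward direction via abstract conditions $(\diamondsuit)$--$(\clubsuit)$ and a single claim $(\spadesuit)$ that any homomorphic image of $\badp$ is either constant at a slot or equals $\badmatch{p,n,m}$, whereas you spell out the $\gx$/$\gy$ uniqueness case analysis directly; these are just two presentations of the same reasoning.
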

\begin{proof}
We will start with a number of observations. Recall that from \cref{cor:app-third-ut-encodes-tilling} the shape of $\ut$ is as in \cref{def:third-grid}.

Consider a $\qstrt$ conjunct. It consists of three atoms: $\ax(x_0, x_1)$, $\ax(y_0, y_1)$,  and $\gs(v_0)$. For $x_0$ and $x_1$,  the
only way to map them homomorphically is to use one of $\set{\pair{x_0, x_1}\mapsto\pair{\vs{p}, \vs{p}} \mid p \in \forb}$ mappings as $\ax$ (and $\ay$) hold only for elements of $\vsps$ in $\ut$. 
The same can be said about $\pair{y_0, y_1}$. From now on, we will disregard the
mapping of $x_0, x_1, y_0$, and $y_1$ variables, as they can be always properly mapped to $\ut$, and are not appearing in $\qnet$. 

The atom $\gs(v_0)$ can only be homomorphically mapped to a $Slot(\vs{p})$ substructure of $\ut$, since $\gs$ atoms appear only there. 

Consider a $\badp$ conjunct of $\qwhole$.
Note that it can be homomorphically mapped to any $Slot(\vs{p'})$, no matter which
$p'$ is considered, as all atoms except $\net$ hold for $v_{p'}$. Observe that there are no other mappings of $\badp$ to $\ut$ when elements of $\set{\badmatch{p,n,m} \mid n,m\in\mathbb{N}}$ are disregarded\footnote{Note that some $\badmatch{p,n,m}$ might be a homomorphism from $\badp$ to $\ut$.}.

Now let us consider $\qnet$. The variables in consideration are those in $\vps$ and $v_0$. As $\exists{x}\;\net(x,x)$ does not hold in $\ut$, all of those variables need
 to be mapped to different elements of $\ut$. Moreover due to the shape of $\ut$ the variables in $\vps$ and $v_0$ have to be mapped to one of the following sets of variables $\set{\vsps \cup \set{v_{n,m}} \mid n,m \in \mathbb{N}}$.

Let us gather together what we noted so far about homomorphisms from $\qwhole$ to $\ut$:
\begin{itemize}
\item[($\diamondsuit$)] $v_0$ is always mapped to a variable of $\vps$.
\item[($\spadesuit$)] All variables $\badp$ conjuncts can only be mapped to a single $\vs{p'}$ of $\ut$ except for mappings defined by elements of $\set{\badmatch{p,n,m} \mid n,m\in\mathbb{N}}$ that are homomorphisms.
\item [($\heartsuit$)] Variables in $\vps$ and $v_0$ have to be injectively mapped to one of the following sets of variables $\set{\vsps \cup \set{v_{n,m}} \mid n,m \in \mathbb{N}}$.
\item[($\clubsuit$)] Any mapping from $\mathtt{vars}(\qwhole) \setminus \set{x_0, x_1, y_0, y_1}$ to the domain of $\ut$ that is a homomorphism can always be extended to variables $x_0, x_1, y_0$, and  $y_1$.
\end{itemize}
From the above reasoning it should be clear, that a mapping of $\mathtt{vars}(\qwhole) \setminus \set{x_0, x_1, y_0, y_1}$ to the domain of $\ut$ is a homomorphism if and only if it satisfies conditions ($\diamondsuit$), ($\spadesuit$), and ($\heartsuit$).

\medskip
Let $f \in\cols^{\mathbb{N}\times \mathbb{N}}$ define the tiling of $\ut$.

($\Rightarrow$) Assume that one of the horizontal  forbidden patterns $p = \pair{i,j,\horz} \in \forb$ appears in the tiling defined by $f$ at coordinates $\pair{n,m}$. That is $\pair{f(n,m), f(n+1, m)} = \pair{i,j}$. 
We describe a mapping of $\mathtt{vars}(\qwhole) \setminus \set{x_0, x_1, y_0, y_1}$ to the domain of $\ut$:
\begin{itemize}
\item $v_0$ is mapped to $v_p$. \hfill ($\diamondsuit$)
\item Variables of $\badp$ are mapped to $\ut$ via $\badmatch{p,n,m}$. \hfill satisfies ($\spadesuit$) from \cref{claim:third-qbad-mappings}
\item For $p' \in \forb$ s.t.  $p' \neq p$ variables of $\badpb$ are mapped to $\vs{p'}$. \hfill trivially satisfies ($\spadesuit$)
\end{itemize}

Note that ($\heartsuit$) is also satisfied. From ($\clubsuit$) we conclude that there exists a homomorphism from $\qwhole$ to $\ut$.

($\Leftarrow$) Let $h$ be  a witness homomorphism from $\qwhole$ to $\ut$. We know that $h$ satisfies conditions ($\diamondsuit$), ($\spadesuit$), and ($\heartsuit$). Moreover there are only $|\forb|$  $Slot$ substructures in $\ut$, there
are  $|\forb|$  $\badp$ conjuncts in $\ut$,  and $h$ maps $v_0$ to one of the elements in $\vsps$. Thus we can conclude that one of the 
$\badp$ is mapped via $\badmatch{p,n,m}$ to $\ut$. Then from \cref{claim:third-qbad-mappings} we observe that  $\ut$ encodes an invalid tiling.

\end{proof}
\setlength{\parindent}{\tindent}

\subsection{Rewritability for entailment problems with full TGDs, via inverse rules and forward-backward}

In this appendix we will give an additional rewritability result for full TGDs: with CQ views and Datalog queries, monotonically
determined queries have Datalog rewritings. This is of independent interest, although it is a variation of a standard result
in the absence of background knowledge in the form of rules.
 We will use it in the proof of one of the undecidability results, Theorem \ref{thm:mdl-cq-fulltgd}, which claims not just undecidability
of monotonic determinacy, but undecidability of Datalog rewritability.

Recall the notion of $Q$-entailing from the body of a paper: a view instance entails $Q$ with respect
to $\views$, $\Sigma$ if for every instance satisfying $\Sigma$ whose view image contains the given view instance, $Q$ holds.

A \emph{$\views$, $\Sigma$ certain answer rewriting of $Q$} is a query $R$ on the view schema such that
$R$ holds on $\vinst$ exactly when $\vinst$ is $Q$-entailing with respect to $\Sigma$, $\views$.

Notice that a certain answer rewriting is required to simulate entailment on all instances of the view schema,
not just the ones that are view images of base instances satisfying the rules.

In particular, if  $Q$ is monotonically determined by $\views$ with respect to $\Sigma$, and $R$ is a $\views$, $\Sigma$ certain answer
rewriting of $Q$, then $R$ is also a rewriting of $Q$ over the views

\begin{theorem} \label{thm:inverse} If $Q$ is in Datalog, $\views$ are CQ views, and $\Sigma$ consists of full TGDs, then we can compute
a $\views$, $\Sigma$ certain  answer rewriting in Datalog. 
\end{theorem}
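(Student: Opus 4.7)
The plan is to adapt the Duschka--Genesereth inverse rules algorithm to the presence of $\Sigma$. First, I would form the backward view rules $\backview_{\views}$: for each CQ view $V(\vec x) := \exists \vec y \, \phi(\vec x, \vec y)$, the source-to-target TGD $V(\vec x) \to \exists \vec y \, \phi(\vec x, \vec y)$. For any view instance $\vinst$, the chase $\chase_{\backview_{\views} \cup \Sigma}(\vinst)$ terminates and yields a finite universal model for the $\views,\Sigma$-sound realizations of $\vinst$: each rule in $\backview_{\views}$ fires at most once per view fact, and $\Sigma$, being full, only adds atoms over already-existing elements. Since $Q$ is a Datalog query and hence preserved under homomorphisms, $\vinst$ is $Q$-entailing iff $Q$ holds in this chase.

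I would then simulate the chase-and-evaluate procedure by a Datalog program $\Pi_R$ over the view schema. The full TGDs of $\Sigma$ translate directly into Datalog rules, and the rules of $Q$ are added verbatim. For each view $V(\vec x) := \exists \vec y \, \bigwedge_i R_i(\vec z_i)$ and each body atom $R_i(\vec z_i)$, I would add an inverse rule $R_i(\vec t_i) \datalogarrow V(\vec x)$, where $\vec t_i$ is $\vec z_i$ with each existential $y_j$ replaced by a Skolem term $f_{V,y_j}(\vec x)$. This yields a Datalog-with-Skolem-functions program whose goal predicate agrees with the certain answer by the universal model property.

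The main obstacle is eliminating the Skolem function symbols to obtain a pure Datalog rewriting. The key observation is that the Skolem terms are inert: distinct view facts produce non-unifying nulls, $\Sigma$ introduces no new nulls, and no Skolem term ever appears in the goal predicate. I would encode each Skolem term $f_{V, y_j}(\vec x)$ by the pair $(V, \vec x)$ via tagged copies of the base predicates: for each base predicate $R$ of arity $k$ I would introduce refinements $R^\sigma$ indexed by functions $\sigma$ assigning each position of $R$ either the marker ``real'' or a Skolem-slot label $(V, y_j)$, and extending the arity to carry the generating view-tuple at each Skolem position. Each rule of $\Sigma$ and of $Q$ is then unfolded into finitely many tag-compatible variants, in which joining two Skolem-tagged occurrences forces both the tag labels and the encoded view-tuples to match. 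Since there are only finitely many such refinements, $\Pi_R$ is pure Datalog, and correctness would follow by induction on the length of derivations, comparing $\Pi_R$ to the original Skolem-Datalog program.
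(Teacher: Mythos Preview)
Your proposal is correct and follows essentially the same approach as the paper's proof: both use the Duschka--Genesereth inverse rules with Skolemized existentials, observe that because $\Sigma$ is full and $Q$ is Datalog the Skolem depth never exceeds one, and then eliminate the function symbols via a finite predicate-tagging scheme. Your account is in fact more detailed on the tagging step than the paper's sketch, which simply defers to \cite{inverserules} and \cite{uspods20journal} for that simulation.
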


\begin{proof}
The approach is a variation of the standard ``inverse rules'' algorithm \cite{inverserules}.  $Q$-entailment corresponds
to entailment with $\Sigma$ and the ``backward view rules'' $v(\vec x) \rightarrow \phi_v(\vec x)$. In the case
of CQ views, the latter are linear source-to-target TGDs.  We can convert the existentials in the head to skolem functions,
and consider the view predicates as extensional predicates, the base predicates as intensional. Then the backward view
rules become Datalog rules with Skolem functions. Composing these with the TGDs and the rules of $Q$, we get a Datalog
program with Skolem functions. But since the  rules of $\Sigma$ and of $Q$ never introduce Skolems,
the depth of functions in Skolem terms never goes above $1$. Then the Skolem functions can be simulated with additional
predicates. See \cite{inverserules} or \cite{uspods20journal} for details on how to perform this simulation. 
\end{proof}

\begin{corollary} \label{cor:inverse} If $Q$ is in Datalog, $\views$ are CQ views, and $\Sigma$ consists of full TGDs, and $Q$ is monotonically
determined by $\views$ with respect to $\Sigma$, then there is a rewriting in Datalog.
\end{corollary}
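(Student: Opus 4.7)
The plan is to simply combine Theorem~\ref{thm:inverse} with the definition of monotonic determinacy. First, I would apply Theorem~\ref{thm:inverse} to obtain a Datalog query $R$ which is a $\views,\Sigma$ certain answer rewriting of $Q$, i.e., for every view instance $\vinst$, $R(\vinst)$ holds if and only if $\vinst \models_{\views,\Sigma} Q$. The remaining task is to verify that, under the monotonic determinacy hypothesis, this $R$ computes $Q$ correctly when evaluated on view images of base instances that satisfy $\Sigma$.

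Next I would check the two inclusions on any $\inst$ satisfying $\Sigma$. For one direction, if $R$ holds on $\views(\inst)$, then $\views(\inst)$ is $Q$-entailing, and since $\inst$ itself is a $\views,\Sigma$-sound realization of $\views(\inst)$ (it satisfies $\Sigma$, and its view image trivially contains $\views(\inst)$), we conclude $\inst \models Q$. For the other direction, suppose $\inst \models Q$ but $R$ fails on $\views(\inst)$. Then there would be a $\views,\Sigma$-sound realization $\inst'$ of $\views(\inst)$ with $\inst' \not\models Q$; but then $\views(\inst) \subseteq \views(\inst')$ while $\inst \models Q$ and $\inst' \not\models Q$, directly contradicting monotonic determinacy of $Q$ over $\views$ w.r.t.\ $\Sigma$. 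For non-Boolean $Q$ the same argument applies component-wise using the tuple formulation of monotonic determinacy.

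I do not expect any real obstacle here: the only nontrivial ingredient is Theorem~\ref{thm:inverse}, which packages the inverse-rules construction together with the simulation of depth-one Skolem terms by auxiliary predicates. Once that is in hand, the corollary is just the observation that under monotonic determinacy the gap between ``certain answer rewriting'' (quantifying over arbitrary realizations of the view instance) and ``rewriting'' (evaluating on view images of actual base instances) collapses.
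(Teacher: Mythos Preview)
Your proposal is correct and matches the paper's approach exactly: the paper states just before Theorem~\ref{thm:inverse} that a $\views,\Sigma$ certain answer rewriting is automatically a rewriting whenever monotonic determinacy holds, and then the corollary is recorded without further proof as an immediate consequence of Theorem~\ref{thm:inverse}. You have simply spelled out the two-inclusion verification that the paper leaves implicit.
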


\newcommand{\adrule}[2]{#2 &\datalogarrow \; #1}
\newcommand{\deprule}[2]{ #1 &\;\;\to\;\;{#2} } 

\subsection{Undecidability results for the case of full TGDs, proof of Theorem \ref{thm:mdl-cq-fulltgd}}

Using the forward-backward technique, we showed decidability of monotonic determinacy in cases where
the BTVIP holds and the rules, queries, and views are reasonably well-behaved -- e.g. FGDL views and $\fgtgd$ rules.
All of our undecidability results will rely on \emph{failure} of the BTVIP: we need the first steps
of Algorithm \ref{alg:query-mondet-constraints} -- unfolding the query, chasing, applying views -- to build a grid-like structure.

Recall that the chase terminates if for every finite instance
we can apply a finite number of chase steps to reach another finite instance where the rules hold.
In the body of the paper, we focused on rules where the chase is not guaranteed to terminate. The chase can fail to terminate
even for the  simplest rules used in our main undecidability results, which involve UIDs: see
Theorem~\ref{thm:undecuidcqucq}.

An obvious question is what happens if we only have terminating chase, but do not have the bounded treewidth property.
Obviously, if we are in one of the situations where we have undecidability without rules, we are still undecidable.

But can adding rules with terminating chase move us from decidability to undecidability?

The answer is yes. We illustrate this with the case where the query is in MDL and the views are CQs.
This case is decidable in the absence of rules, using the BTVIP:  see Figure  \ref{fig:nocondecide}.
Recall from the body that \emph{full TGDs} are TGDs with no existentials in the head. Clearly, applying chase
steps to a finite instance with full TGDs always leads to a finite instance where the rules hold -- that is, the
chase always terminates.

We now prove Theorem \ref{thm:mdl-cq-fulltgd}, which we recall:

\medskip

The problem of monotonic determinacy is undecidable when $Q$ ranges over MDL, $\views$ over atomic views, and $\Sigma$ over full TGDs.
Likewise the problem of Datalog rewritability, CQ rewritability, and the variants of all these problems over finite instances.

\medskip

Notice first that by Corollary \ref{cor:inverse}, monotonic determinacy is the same as Datalog rewritability for this class
of $Q, \views, \Sigma$. Notice also that since the chases of all instances involved will be finite, finite controllability of monotonic
determinacy (as well as Datalog or CQ rewritability) will be obvious.

Thus our strategy will consist of first showing a family of examples where monotonic determinacy is undecidable.
And this example will have some  additional properties that will suffice to show that \emph{whenever they are monotonically
determined, they are CQ rewritable}. From this last fact, it will follow from undecidability of  monotonic determinacy that
CQ rewritability and Datalog rewritability are undecidable.

The first property they will have is that \emph{all view predicates are unary atomic queries}. In particular,  when we chase 
the view images of approximations of the query $Q$, as in the early steps of Algorithm \ref{alg:query-mondet-constraints},
we will get a database with only unary atoms. If we turn each such  database into a Boolean query, we get an infinite disjunction
of CQs with only unary atoms, and clearly any such disjunction degenerates to a UCQ. But in fact, \emph{there will only be one
possible view image}, so the corresponding query is a CQ. Thus if the query is monotonically determined, the CQ formed from
the view image will be a rewriting.

We  now give the details.

\newcommand{\mstates}{\mathcal{S}}
\newcommand{\mstate}{s}
\newcommand{\mstatestrt}{\mstate_{\mathrm{start}}}
\newcommand{\mstateend}{\mstate_{\mathrm{end}}}

\newcommand{\malphabet}{\mathcal{A}}
\newcommand{\predicate}[1]{\mathtt{#1}}
\newcommand{\apred}{\predicate{A}}
\newcommand{\bpred}{\predicate{B}}
\newcommand{\cpred}{\predicate{C}}
\newcommand{\predstart}{\predicate{First}}
\newcommand{\predend}{\predicate{Last}}
\newcommand{\predleft}{\predicate{Left}}
\newcommand{\predright}{\predicate{Right}}
\newcommand{\predblank}{\predicate{Blank}}
\newcommand{\predreach}{\predicate{Reached}}
\newcommand{\prededge}{\predicate{Succ}}
\newcommand{\predpath}{\predicate{Succ^{+}}}
\newcommand{\predzero}{\predicate{0}}
\newcommand{\predone}{\predicate{1}}
\newcommand{\dtm}{\mathcal{M}}
\newcommand{\msignature}{\sigma}
\newcommand{\cencodings}{\mathcal{C}}
\newcommand{\mcell}{CellAt_{i}}

\newcommand{\deltaleft}{\delta_{\mathrm{left}}}
\newcommand{\deltamid}{\delta_{\mathrm{mid}}}
\newcommand{\deltaright}{\delta_{\mathrm{right}}}

\subsubsection{Deterministic Turing Machines}
Throughout this section, we will utilize a deterministic Turing Machine (TM) variant operating on a finite tape. Formally, we define a TM as a tuple $\pair{\malphabet, \predblank, \predleft, \predright, \mstates, \mstatestrt, \mstateend, \Delta}$, consisting of the tape alphabet $\malphabet$ with a distinguished symbol $\predblank$ and symbols $\predleft$ and $\predright$, signifying the left and right ends of the tape, respectively. The set of states $\mstates$ includes the distinguished starting and halting states $\mstatestrt$ and $\mstateend$, along with the transition function $\Delta$. We assume that a $\dtm$ is not allowed to traverse beyond the tape or change the tape cells marked with $\predleft$ and $\predright$, and that in a state $\mstateend$, it halts the execution.

\smallskip
\noindent
\textbf{Halting problem.}
Given a tape, we say that it is {\em initial} when its first cell is filled with $\predleft$, the last is filled with $\predright$ and the rest are filled with $\predblank$.
The problem of determining whether there exists $i$ such that a given TM $\dtm$ halts when starting from the state $\mstatestrt$ with its head over the leftmost tape cell of the initial tape of length $i$ blank-filled tape tape of length $i$ with its edges marked by $\predleft$ and $\predright$
is undecidable. We say that TM $\dtm$ {\em halts} if there exists such natural $i$.

\smallskip
\noindent
\textbf{Relational encoding.}
In a later construction, we will encode the above-defined machine in a relational structure. To this end, we define a set of binary 
predicates representing cell states of the tape over time. Let $\cencodings$ be the set $\malphabet \cup \mstates \times \malphabet$, where each of its elements is treated as a binary predicate symbol. 
Here, the elements of $\malphabet$ represent cells without the head of the TM, and $\mstates \times \malphabet$ is used to 
represent a cell content with a head over it. We intend to use the first position of each symbol 
in $\cencodings$ to represent a position on the tape (space), and the second to represent a step number of the TM's execution (time). 
Thus, $\apred(s, t)$, for $\apred \in \cencodings$, should be read as `'the $s^{th}$ cell at time $t$ contains $\apred$.'' We will consistently use variables $s$, and $t$ to denote ``space and time coordinates''.

\smallskip
\noindent
\textbf{Exploiting determinism.}
Given a TM $\dtm$ and an initial tape $T$ of length $i$, let $\mcell(s, t): \mathbb{N} \times \mathbb{N} \to \cencodings$ be a partial binary 
function returning the contents of the $s^{th}$ tape cell in the time $t$ of the run of $\dtm$ over $T$. Note that there is a functional dependence between:
\begin{align*}
\mcell(1, t),\; \mcell(2, t) \quad\text{and}& \quad \mcell(1, t + 1), \tag*{for any $t$}\\
\mcell(s - 1, t),\; \mcell(s, t),\; \mcell(s + 1, t) \quad\text{and}&\quad \mcell(s, t + 1), \tag*{for any $t$ and $1 < s < i$} \\
\mcell(i-1, t),\; \mcell(i, t) \quad\text{and}&\quad \mcell(s, t + 1). \tag*{for any $t$}
\end{align*} 
as TM is deterministic. Let functions 
$\deltaleft: \cencodings \times \cencodings \to \cencodings$,
$\deltamid: \cencodings \times \cencodings \times \cencodings \to \cencodings$, and 
$\deltaright: \cencodings \times \cencodings \to \cencodings$
represent those dependencies. Note that $\deltaleft$ and $\deltaright$ are not constant, as the head might appear over the edges of the tape.

\subsubsection{The reduction}
In this section we show the reduction from the non-halting problem to monotonic determinacy problem from \cref{thm:mdl-cq-fulltgd}.
From now on fix a DTM $\dtm = \pair{\malphabet, \predblank, \predleft, \predright, \mstates, \mstatestrt, \mstateend, \Delta}$.
We use the signature $\msignature$ consisting of a set of unary symbols $\set{\predstart, \predend}$ and a set of binary symbols $\set{\prededge, \predpath} \cup \cencodings$.

%
%
\noindent
\textbf{Query.}
Let $Q$ be the following Boolean MDL query:
\begin{align*}
	\adrule{\predreach(x)\wedge \;\predend(x)}{\predicate{Goal}}\\
	\adrule{\predstart(x)}{\predreach(x)}\\
	\adrule{\predreach(x)\wedge \;\prededge(x,y)}{\predreach(y)}
\end{align*}

%
%
\noindent
\textbf{Views.}
Let $V$ be the following set of unary atomic views:
\begin{align*}
	V_{\predstart}(x) &= \predstart(x)\\
	V_{\predend}(x) &= \predend(x)
\end{align*}

\newcommand{\sigmapath}{\Sigma_{\predpath}}
\newcommand{\sigmasetup}{\Sigma_{\mathrm{setup}}}
\newcommand{\sigmatm}{\Sigma_{\Delta}}
\newcommand{\sigmabad}{\Sigma_{\mathrm{bad}}}
%
%
\noindent
\textbf{Constraints.}
Let $\Sigma$ be the set of full TGDs formed by the union of four disjoint below-defined sets $\sigmapath, \sigmasetup, \sigmatm \sigmabad$. As noted before, we will use variables $s$, and $t$ to denote ``space and time coordinates''. We allow for a few comments in the definition to explain their intended meanings. Finally, the definition order reflects the execution order during the chase. 

$\sigmapath$ computes transitive closure of $\prededge$:
\begin{align*}
\deprule{\prededge(x, y)}{\predpath(x, y)}\\
\deprule{\predpath(x, y), \prededge(y,z)}{\predpath(x, z)}\\
\end{align*}

$\sigmasetup$ sets up the relational encoding of the initial tape (binary atoms in first two TGDs are necessary for \cref{prop:mdlcqfull-second-chase-void}):
\begin{align*}
\deprule{\predstart(s)\wedge \predstart(t) \wedge \prededge(t, t')}{\pair{\mstatestrt, \predleft}(s,t)}\tag{the first cell contains $\predleft$ and the head}\\
\deprule{\predend(s)\wedge \predstart(t)\wedge \prededge(t, t')}{\predright(s,t)}\tag{the last cell contains $\predright$}\\
\deprule{\predstart(s)\wedge \predpath(s,s')\wedge \predpath(s',s'')\wedge \predend(s'')\wedge \predstart(t)}{\predblank(s',t)}\tag{every cell between the first and the last contains $\predblank$}
\end{align*}

$\sigmatm$ simulates $\deltaleft, \deltamid,$ and $\deltaright$ and contains:
\begin{align*}
\deprule{\predstart(s) \wedge \prededge(s, s') \wedge \apred(s,t)\wedge \bpred(s', t)\wedge  \prededge(t, t')}{\deltaleft(\apred, \bpred)(s, t')}\tag{for every $(\apred, \bpred) \in dom(\deltaleft)$}\\
\deprule{\prededge(s, s')\wedge \prededge(s', s')\wedge \apred(s,t)\wedge \bpred(s', t)\wedge \cpred(s'', t)\wedge  \prededge(t, t')}{\deltamid(\apred, \bpred, \cpred)(s', t')}\tag{for every $(\apred, \bpred, \cpred) \in dom(\deltamid)$}\\
\deprule{\prededge(s,s')\wedge\predend(s')\wedge \apred(s,t)\wedge \bpred(s', t) \wedge \prededge(t, t')}{\deltaright(\apred, \bpred)(s', t')}\tag{for every $(\apred, \bpred) \in dom(\deltaright)$}\\
\end{align*} 

$\sigmabad$ contains, for every $\predicate{S} \in \cencodings$ that encodes a tape cell containing the head of $\dtm$ in a non-halting state:
\begin{align*}
\deprule{\predicate{S}(c, t)\wedge \predend(t)}{\predstart(t)}
\end{align*}

\noindent
\textbf{Reduction - proof of correctness.}\\
Note that  $Q$ is a very simple MDL query. We have:
\begin{proposition}
The set of CQ approximations of $Q$ consists of the following instances $Q_i$, for every $i \in \mathbb{N}$:
$$	\predstart(a_1), \prededge(a_1, a_2), \ldots, \prededge(a_{n-1}, a_n), \predend(a_n).$$
\end{proposition}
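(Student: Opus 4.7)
The plan is a straightforward induction on the depth of the approximation tree, exploiting the very simple recursive structure of $Q$. Looking at the rules of $Q$, the only rule whose head is $\predicate{Goal}$ is the first one, so every CQ approximation tree for $Q$ must have its root labelled $\predicate{Goal}$ and expanded by the first rule. This produces a leaf labelled with the extensional atom $\predend(a_n)$ (for some fresh variable $a_n$) together with an intensional child labelled $\predreach(a_n)$ that must be unfolded further.

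Next I would show by induction on the number of recursive unfoldings that every approximation tree rooted at $\predreach(a)$, for a variable $a$ with no repeated occurrences, yields a CQ of the form
\[
 \predstart(a_1),\ \prededge(a_1,a_2),\ \ldots,\ \prededge(a_{k-1}, a_k)
\]
with $a_k = a$. The base case is an application of the rule $\predreach(x) \datalogarrow \predstart(x)$, producing just the leaf $\predstart(a)$. In the inductive step we apply $\predreach(y) \datalogarrow \predreach(x) \wedge \prededge(x,y)$, which contributes one extensional atom $\prededge(a', a)$ (introducing a fresh variable $a'$, since the HU condition forbids repeated variables in rule heads and so the children use fresh names as described in the appendix's construction of approximation trees) and a new intensional child $\predreach(a')$. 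Applying the induction hypothesis to that child yields a chain terminating at $a'$, which together with the new $\prededge(a',a)$ atom produces the desired chain terminating at $a$.

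Combining these two observations gives the claim: each approximation of $Q$ is obtained by choosing a number $n \geq 1$ of unfoldings of the recursive rule, and the resulting CQ is exactly $\predstart(a_1), \prededge(a_1,a_2), \ldots, \prededge(a_{n-1}, a_n), \predend(a_n)$ as stated. The main thing to be careful about is the freshness of variables and the absence of any ``shortcuts'': since the head of every rule contains at most one variable and the HU convention mandates fresh variables at every unfolding step, no non-trivial identifications among the $a_i$ are ever forced. Conversely, for every $n \geq 1$ the above chain is indeed produced by taking exactly $n-1$ applications of the recursive rule followed by the base rule, so every such $Q_n$ is realised as an approximation. This is essentially a verification rather than a deep argument; the only real care needed is to track the HU convention and the matching of rule heads with intensional atoms in the approximation-tree construction recalled in the appendix.
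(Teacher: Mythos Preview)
Your argument is correct and is exactly the natural unfolding-by-induction verification one would expect; the paper in fact offers no proof at all for this proposition, introducing it only with ``Note that $Q$ is a very simple MDL query'' and treating the characterisation of the approximations as self-evident. Your write-up is thus more explicit than the paper, but entirely aligned with the intended reasoning.
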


As the case for $Q_0$ is always trivial ($Q_0 = \set{\predstart(x_1), \predend(x_1)}$) - the set of views determines the query under the rules, from now on we shall consider only cases for $Q_i$ with $i \geq 1$.

Consider \cref{alg:query-mondet-constraints}, and let $Q'_i$ be the instance $\textsc{BackV}_{V}(V(\textsc{Chase}_{\Sigma}(Q_i)))$. Note $Q'_i$ is a single instance as $\textsc{BackV}_{V}$ is deterministic for CQ views. Now, the intermediate goal is to
connect satisfaction of $Q$ in $Q'_i$ with 
a non-halting state appearing in a precise time in the chase of $Q_i$.

\begin{proposition}\label{prop:mdlcqfull-qp-only-unary}
Instance $Q'_i$ consists only of unary atoms.
\end{proposition}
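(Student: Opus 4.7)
The plan is to trace through the three-stage construction of $Q'_i$ and observe that at each stage only unary atoms can be produced, since the views are unary atomic.

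First, I would unpack the definition: $Q'_i = \textsc{BackV}_V(V(\textsc{Chase}_\Sigma(Q_i)))$. The instance $\textsc{Chase}_\Sigma(Q_i)$ may contain atoms of arbitrary arities (binary atoms like $\prededge$, $\predpath$, and the binary $\cencodings$-atoms produced by $\sigmasetup$ and $\sigmatm$). But the set of views consists solely of $V_{\predstart}(x) = \predstart(x)$ and $V_{\predend}(x) = \predend(x)$, both unary. Hence, by definition of view image, $V(\textsc{Chase}_\Sigma(Q_i))$ contains only the unary atoms $V_{\predstart}(a)$ and $V_{\predend}(a)$ for those $a$ witnessing $\predstart$ or $\predend$ in the chase.

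Next, I would apply the backward view mapping. Because the views are CQs, $\textsc{BackV}_V$ is given by chasing with the source-to-target rules $V_{\predstart}(x) \rightarrow \predstart(x)$ and $V_{\predend}(x) \rightarrow \predend(x)$, as described in the body of the paper. These rules produce only unary atoms ($\predstart$ and $\predend$), and they take unary atoms as inputs. Thus every fact produced by the backward chase is unary, and combined with the already unary view facts, the resulting instance $Q'_i$ consists exclusively of unary atoms over the predicates $V_{\predstart}, V_{\predend}, \predstart, \predend$.

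There is essentially no obstacle: the result follows directly from inspecting the arities of the view definitions and their corresponding backward rules. The proof is a one-line observation that unary atomic views, together with their trivial inverses, cannot introduce any higher-arity atoms into the backward-mapped instance.
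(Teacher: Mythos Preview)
Your proposal is correct and takes essentially the same approach as the paper: the key observation is that the views are unary and atomic, so both the view image and its backward unfolding can only produce unary atoms. The paper's own proof is the one-line version of exactly what you wrote out in detail.
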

\begin{proof}
Note that $V$ consists of unary and atomic views.
\end{proof}

\newcommand{\chaseqi}{\textsc{Chase}_{\Sigma}(Q_i)}
\newcommand{\chaseqpi}{\textsc{Chase}_{\Sigma}(Q'_i)}

\begin{proposition}\label{prop:mdlcqfull-second-chase-void}
Instances $Q'_i$ and $\chaseqpi$ are equal.
\end{proposition}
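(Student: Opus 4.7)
My plan is to prove the identity by showing that the chase applied to $Q'_i$ is vacuous: no rule in $\Sigma$ has any trigger on $Q'_i$, so $\chaseqpi = Q'_i$ trivially. The argument reduces to a straightforward case analysis once the shape of $Q'_i$ is made explicit.

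First, I would unpack $Q'_i$. Since the only views $V_{\predstart}$ and $V_{\predend}$ are unary atomic, the view image $V(\chaseqi)$ contains only $\predstart$ and $\predend$ facts (on the two endpoints of the path $Q_i$, which persist through the chase). The backward step $\textsc{BackV}_V$ is deterministic for atomic CQ views: it simply introduces a fresh constant carrying the corresponding unary predicate for each view fact. Thus $Q'_i$ is isomorphic to $\{\predstart(a), \predend(b)\}$ for two fresh distinct elements $a, b$, consistent with \cref{prop:mdlcqfull-qp-only-unary}, and, crucially, contains no binary atoms and no atoms built from predicates in $\cencodings$.

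Second, I would inspect each rule in $\Sigma = \sigmapath \cup \sigmasetup \cup \sigmatm \cup \sigmabad$ and observe that every rule body mentions at least one binary atom, built from $\prededge$, $\predpath$, or some $\predicate{S} \in \cencodings$. For $\sigmapath$, $\sigmatm$, and $\sigmabad$ this is evident from their definitions. For $\sigmasetup$ it is slightly more subtle: the two setup rules deriving $\pair{\mstatestrt,\predleft}$ and $\predright$ each contain a $\prededge(t,t')$ conjunct in the body, and the third rule contains $\predpath$ atoms. The first two $\prededge$ conjuncts play no deductive role in the intended Turing machine simulation; they are inserted precisely so that those rules cannot fire when only $\predstart$ and $\predend$ facts are present, as flagged in the parenthetical remark at the definition of $\sigmasetup$. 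Since $Q'_i$ contains no binary atoms whatsoever, no homomorphism from any rule body into $Q'_i$ exists, and no chase step is ever taken.

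I expect no real obstacle: the proposition holds by design, and the only care required is an exhaustive enumeration to confirm that every rule body really does carry a binary atom. In particular, without the $\prededge(t,t')$ padding in $\sigmasetup$, rule~1 could fire with $s, t \mapsto a$ and rule~2 with $s \mapsto b, t \mapsto a$, and the claim would fail; this is exactly why those conjuncts were put there. Concluding the case analysis, the chase stabilizes immediately at $Q'_i$, so $\chaseqpi = Q'_i$.
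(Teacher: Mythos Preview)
Your proof is correct and follows the same approach as the paper's one-line argument: every rule body in $\Sigma$ contains at least one binary atom, $Q'_i$ contains only unary atoms (Proposition~\ref{prop:mdlcqfull-qp-only-unary}), so no trigger exists and the chase is vacuous. One minor inaccuracy worth flagging: $Q'_i$ need not be exactly $\{\predstart(a),\predend(b)\}$, since $\sigmabad$ may already have derived $\predstart(a_n)$ during the first chase, and for atomic views $\textsc{BackV}_V$ introduces no fresh elements (it just copies the view fact back to the base predicate on the same constant); this does not affect your argument, which only needs the absence of binary atoms.
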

\begin{proof}
Note that every body of TGDs in $\Sigma$ contains at least one binary atom, then the proposition follows from \cref{prop:mdlcqfull-qp-only-unary}.
\end{proof}

The following is an immediate conclusion from the above. Note that the MDL query $Q$ can be satisfied over an instance $\inst$ containing only unary atoms if and only if $\inst \models \exists{x}\;\predstart(x) \wedge \predend(x)$.

\begin{corollary}\label{prop:mdlcqfulltgds-1}
$\chaseqi \models\exists{x}\;\predstart(x) \wedge \predend(x) \iff \chaseqpi \models Q.$ 
\end{corollary}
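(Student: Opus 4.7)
The plan is to chain Propositions \ref{prop:mdlcqfull-qp-only-unary} and \ref{prop:mdlcqfull-second-chase-void} with a direct analysis of how $Q$ evaluates on instances containing only unary facts. First I would observe that by Proposition \ref{prop:mdlcqfull-second-chase-void} the right-hand side $\chaseqpi$ equals $Q'_i$, and by Proposition \ref{prop:mdlcqfull-qp-only-unary} this instance contains only unary atoms; in particular, no $\prededge$ facts.

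Recall that $Q$ computes the closure of $\predstart$ under $\prededge$-steps and then looks for a reachable element in $\predend$. Every CQ approximation of $Q$ therefore either uses at least one $\prededge$-atom in its body, or else is the degenerate base approximation $\predstart(x) \wedge \predend(x)$. Since $\chaseqpi$ contains no binary facts at all, approximations of the first kind cannot match, and so $\chaseqpi \models Q$ is equivalent to $\chaseqpi \models \exists x\, \predstart(x) \wedge \predend(x)$. This is the promised simplification of the right-hand side of the biconditional.

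For the left-hand side, the plan is to show that $\chaseqi$ and $\chaseqpi$ agree on the existential statement $\exists x\, \predstart(x)\wedge \predend(x)$. The only views are the atomic unary queries $V_{\predstart}(x)=\predstart(x)$ and $V_{\predend}(x)=\predend(x)$, so $V(\chaseqi)$ is exactly the set of $V_{\predstart}$-facts and $V_{\predend}$-facts coming from the corresponding base facts of $\chaseqi$ on the same elements; the deterministic backward step $\textsc{BackV}_V$ reintroduces each such view fact as a base fact via the inverse rules $V_{\predstart}(x)\to \predstart(x)$ and $V_{\predend}(x)\to \predend(x)$. These inverse rules are existential-free and preserve the argument, so $\predstart(a)$ (resp.\ $\predend(a)$) appears in $Q'_i$ at the same element $a$ at which it appeared in $\chaseqi$. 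Hence the existential is preserved in both directions, finishing the chain of equivalences.

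The argument is essentially bookkeeping, and there is no serious obstacle. The one mild point to check is that the element identities survive the $V,\textsc{BackV}_V$ round-trip, so that when $\predstart(a)$ and $\predend(a)$ both hold at a common $a$ in $\chaseqi$, they still share an argument in $Q'_i$; this is immediate for atomic unary views, whose backward rules are mere renamings of the view predicate back to the base predicate.
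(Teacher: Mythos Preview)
Your proof is correct and follows the same approach as the paper. The paper simply states that this is ``an immediate conclusion from the above,'' noting that $Q$ over an instance with only unary atoms collapses to $\exists x\,\predstart(x)\wedge\predend(x)$; you have just spelled out the bookkeeping that the paper leaves implicit, in particular the observation that the atomic unary views and their inverses preserve element identities so that the existential transfers between $\chaseqi$ and $Q'_i=\chaseqpi$.
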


\begin{proposition}\label{prop:mdlcqfulltgds-2}
$\chaseqi \models\exists{x}\;\predstart(x) \wedge \predend(x) \iff \chaseqi \models \exists{s,t}\;\predicate{S}(s, t)\wedge \predend(t)$ for some $\predicate{S} \in \cencodings$ that encodes a tape cell containing the head of $\dtm$ in a non-halting state.
\end{proposition}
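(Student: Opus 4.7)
The plan is to trace back the possible derivations of a coincidence $\predstart(x) \wedge \predend(x)$ in $\chaseqi$ through the rule set $\Sigma = \sigmapath \cup \sigmasetup \cup \sigmatm \cup \sigmabad$. The whole proof rests on two purely syntactic observations about where the predicates $\predstart$ and $\predend$ are allowed to appear in rule heads, so the bulk of the work is a careful scan of the four rule groups rather than any delicate chase-theoretic argument.

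First I would audit the heads of all rules in $\Sigma$ and record: (i) $\predend$ never appears in the head of any rule, so the set of $\predend$-facts of $\chaseqi$ coincides with the $\predend$-facts of $Q_i$, i.e.\ it consists exactly of $\predend(a_n)$; and (ii) $\predstart$ appears in a rule head only in $\sigmabad$, and every such rule has the shape $\predicate{S}(c,t) \wedge \predend(t) \to \predstart(t)$ for some $\predicate{S} \in \cencodings$ encoding a non-halting head cell. Observations~(i) and~(ii) are immediate from inspecting $\sigmapath$, $\sigmasetup$, $\sigmatm$, $\sigmabad$ one by one, and they are the only facts about the rules I will use.

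For the forward direction, suppose $\chaseqi \models \exists x \; \predstart(x) \wedge \predend(x)$ and let $x$ be a witness. Observation~(i) forces $x = a_n$, and since we restrict to $i \geq 1$ we have $a_n \neq a_1$, so $\predstart(a_n)$ is not among the atoms of $Q_i$ and must have been derived. By observation~(ii) its derivation used a rule of $\sigmabad$ whose body was $\predicate{S}(s,a_n) \wedge \predend(a_n)$ for some non-halting $\predicate{S}$; this directly produces the desired witness for $\exists s, t\; \predicate{S}(s,t) \wedge \predend(t)$. The converse direction is even cleaner: given a witness $\predicate{S}(s,t) \wedge \predend(t)$ with non-halting $\predicate{S}$, a single application of the matching $\sigmabad$-rule yields $\predstart(t)$, so $\predstart(t) \wedge \predend(t)$ witnesses the left-hand side.

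The step I expect to be most error-prone is not conceptual but bureaucratic: exhaustively verifying that no rule in $\sigmapath \cup \sigmasetup \cup \sigmatm$ accidentally puts $\predend$ in a head, and that $\sigmabad$ is the only source of $\predstart$-atoms outside of $Q_i$. Once observations~(i) and~(ii) are locked in, the two implications above are one-line unfoldings of rule bodies and heads, with no combinatorial obstruction remaining.
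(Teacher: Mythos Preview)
Your proof is correct and follows essentially the same approach as the paper: both arguments hinge on the two syntactic observations that no rule derives $\predend$ and that only $\sigmabad$ derives $\predstart$. The paper compresses this into two lines, while you spell out the case analysis and the $i \geq 1$ requirement explicitly; there is no substantive difference.
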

\begin{proof}
First, note that $Q_i \not\models\exists{x}\;\predstart(x) \wedge \predend(x)$. From this, it is enough to note that only TGDs from $\sigmabad$ can derive $\predstart$ atoms, and that no $\predend$ atom can be derived during the chase.
\end{proof}

We are now ready to connect satisfaction of $Q$  in $Q'_i$ with the Turing Machine:

\begin{corollary}\label{cor:mdlcqfull-qi-qpi-q-conn}
$\chaseqpi \models Q \iff \chaseqi \models \exists{s,t}\;\predicate{S}(s, t)\wedge \predend(t)$ for some $\predicate{S} \in \cencodings$ that encodes a tape cell containing the head of $\dtm$ in a non-halting state.
\end{corollary}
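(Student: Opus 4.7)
The plan is essentially mechanical: the corollary is immediate from chaining the two biconditionals established just above, namely Proposition \ref{prop:mdlcqfulltgds-1} and Proposition \ref{prop:mdlcqfulltgds-2}. Since both have the common middle statement $\chaseqi \models \exists x\,\predstart(x) \wedge \predend(x)$, transitivity of $\iff$ delivers the desired equivalence.

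More concretely, I would begin by invoking Proposition \ref{prop:mdlcqfulltgds-1}, which rewrites the left-hand side $\chaseqpi \models Q$ as $\chaseqi \models \exists x\,\predstart(x) \wedge \predend(x)$. Then I would apply Proposition \ref{prop:mdlcqfulltgds-2} to rewrite the latter as the existence of some $\predicate{S} \in \cencodings$ encoding a tape cell containing the head of $\dtm$ in a non-halting state with $\chaseqi \models \exists s,t\,\predicate{S}(s,t)\wedge \predend(t)$. Composing these two equivalences yields the statement.

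There is no genuine obstacle here; the content of the corollary lies entirely in the two preceding propositions. The only thing worth being mildly careful about is quantifier scoping — the ``for some $\predicate{S}$'' in the statement is an outer disjunction over the finite collection of symbols in $\cencodings$ corresponding to non-halting head configurations, matching exactly the way Proposition \ref{prop:mdlcqfulltgds-2} phrases its right-hand side, so no reindexing is needed. Thus the proof can be dispatched in one or two lines.
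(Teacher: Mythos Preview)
Your proposal is correct and matches the paper's own proof exactly: the paper simply states that the corollary follows directly from \cref{prop:mdlcqfulltgds-1} and \cref{prop:mdlcqfulltgds-2}, which is precisely the chaining of biconditionals you describe.
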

\begin{proof}
Follows directly from \cref{prop:mdlcqfulltgds-1} and \cref{prop:mdlcqfulltgds-2}
\end{proof}

Now we want to argue the chase over $Q_i$ simulates the first $i$ steps of the run of TM $\dtm$ over initial tape of length $i$.

\begin{proposition}\label{prop:mdlcqfulltgds-3}
For every natural number $i$ and every pair of natural numbers $s,t \leq i$, and every $\apred \in cencodings$ we have:
$$\mcell(s,t) = \apred \iff \apred(x_{s}, x_{t})\in \chaseqi.$$
\end{proposition}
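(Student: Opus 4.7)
The plan is to establish the biconditional by a stratified analysis of $\chaseqi$ that mirrors the TM execution, combined with an induction on the time coordinate $t \le i$. I will organize the chase into natural stages: $\sigmapath$ first saturates $\predpath$ to the transitive closure of $\prededge$ on $\{x_1, \ldots, x_i\}$; $\sigmasetup$ initializes the time-$x_1$ layer; the $\sigmatm$ rules advance time iteratively from $x_t$ to $x_{t+1}$ using $\prededge(x_t, x_{t+1}) \in Q_i$; and finally $\sigmabad$ may fire at time $x_i$, possibly re-enabling a $\sigmasetup$ rule.

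For the forward direction ($\Rightarrow$), I will induct on $t \le i$. In the base case $t = 1$, the three rules of $\sigmasetup$ produce $(\mstatestrt, \predleft)(x_1, x_1)$, $\predright(x_i, x_1)$, and $\predblank(x_j, x_1)$ for $1 < j < i$, matching the initial tape exactly. For the inductive step $t \to t+1 \le i$, the three rules of $\sigmatm$ are direct syntactic encodings of $\deltaleft, \deltamid, \deltaright$, and fire at each spatial position using the inductively available time-$t$ encoding together with the atom $\prededge(x_t, x_{t+1})$ from $Q_i$ to produce $\mcell(s, t+1)$ as a new atom in the $(t+1)$-layer.

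For the backward direction ($\Leftarrow$), I must show no spurious cell atoms are derived. The rules of $\sigmapath$ and $\sigmabad$ produce only $\predpath$ and $\predstart$ atoms respectively, so they contribute no cell atoms. Every firing of a $\sigmasetup$ or $\sigmatm$ rule that derives $\apred(x_s, x_t)$ is pinned by its body (through $\predstart, \predend, \prededge$) to specific spatial and time coordinates $s$ and $t$, and the value $\apred$ is determined by the premises and the determinism of the $\delta$ maps, which by the forward direction applied to those premises equals $\mcell(s, t)$.

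The principal obstacle, requiring careful chase-ordering, is the feedback loop from $\sigmabad$ into $\sigmasetup$: once $\sigmabad$ adds $\predstart(x_i)$, the first two rules of $\sigmasetup$ cannot re-fire because they require an outgoing edge $\prededge(x_i, t')$ that is absent from $Q_i$, but the third rule can re-fire with $t = x_i$ and threaten to derive $\predblank(x_j, x_i)$ for internal $x_j$. Handling this delicately is the crux: I will order the chase so that $\sigmabad$ fires only after the $\sigmatm$-based simulation at time $x_i$ is complete, and then carry out a fine analysis of the rule-3 re-firings at $t = x_i$ to verify that they do not contradict the backward direction, leveraging that rule 3 only produces $\predblank$ and pins $s'$ strictly between $x_1$ and $x_i$. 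This boundary analysis between the $\sigmatm$ simulation and the $\sigmabad$-triggered re-firings is where the bulk of the technical work lies.
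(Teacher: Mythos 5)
Your overall strategy---induction on the time coordinate, with $\sigmasetup$ supplying the $t=1$ layer and $\sigmatm$ advancing it one $\prededge$-step at a time---is the same as the paper's, and your forward direction is fine. You also deserve credit for isolating the genuinely delicate point, the feedback from $\sigmabad$ back into $\sigmasetup$, which the paper's own two-line proof does not mention at all. The difficulty is that your proposed resolution of that crux cannot work. Since $\Sigma$ consists of full TGDs, $\chaseqi$ is the unique least superset of $Q_i$ closed under all the rules; no choice of ``chase ordering'' can prevent a rule instance whose body is satisfied in the final instance from contributing its head. So once $\sigmabad$ fires and $\predstart(x_i)$ appears, every $\sigmasetup$ instance it enables is present in $\chaseqi$.

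Two such instances actually refute the biconditional as stated. First, you overlooked that the first rule of $\sigmasetup$ places the $\prededge(t,t')$ requirement on $t$, not on $s$: instantiating it with $s = x_i$ and $t = x_1$ (both now satisfy $\predstart$) derives $\pair{\mstatestrt,\predleft}(x_i,x_1)$, whereas $\mcell(i,1)=\predright$, so the right-to-left implication already fails at $(s,t)=(i,1)$. Second, the third rule with $t=x_i$ derives $\predblank(x_j,x_i)$ for every $1<j<i$, and there is no reason for $\mcell(j,i)$ to equal $\predblank$ once the machine has written to cell $j$; the ``fine analysis'' you defer to cannot show these re-firings are harmless, because they are not. (These spurious atoms can then cascade further through $\sigmatm$.) The statement that is both true and sufficient for the downstream corollaries is weaker: the left-to-right implication holds unconditionally, and the right-to-left implication holds under the additional hypothesis that $\sigmabad$ never fires, i.e.\ $\predstart(x_i)\notin\chaseqi$. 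That suffices because once $\sigmabad$ fires the target conclusion $\exists x\,(\predstart(x)\wedge\predend(x))$ is already witnessed, and the only remaining obligation is to check that $\sigmabad$ cannot fire for the first time on a spurious atom---which holds since spurious atoms arise only after $\predstart(x_i)$ has been derived. Your proof (and, to be fair, the paper's) should be restructured along these lines rather than defending the biconditional in full generality.
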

\begin{proof}
Consider $\sigmasetup$, it should be clear that the lemma holds for time $t = 1$, that is for the initial initial tape of length $i$.

Finally, note how each application of TGDs from $\sigmatm$ simulate $\deltaleft$, $\deltamid$, and $\deltaright$. The rest of the lemma follows by a simple inductive argument.
\end{proof}

We can conclude the following  from combining \cref{cor:mdlcqfull-qi-qpi-q-conn,prop:mdlcqfulltgds-3}.
\begin{corollary}
The following are equivalent:
\begin{itemize}
	\item $\chaseqpi \models Q$, for every $i$.
	\item TM $\dtm$ does not halt. 
\end{itemize}
\end{corollary}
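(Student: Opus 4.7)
My plan is to combine \cref{cor:mdlcqfull-qi-qpi-q-conn} with \cref{prop:mdlcqfulltgds-3} to translate the statement ``$\chaseqpi \models Q$ for every $i$'' into a purely operational claim about the Turing machine $\dtm$. I would argue one fixed $i$ at a time and then universally quantify.

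First I would fix $i$ and unpack $\chaseqpi \models Q$ via \cref{cor:mdlcqfull-qi-qpi-q-conn}: this is equivalent to $\chaseqi$ containing an atom $\predicate{S}(x_s,x_t)$ together with $\predend(x_t)$ for some $\predicate{S} \in \cencodings$ encoding a non-halting head state. The key side observation is that $\predend$ appears in $Q_i$ only at its final element, and no rule in $\Sigma$ derives new $\predend$-atoms, so $t$ must correspond to the last time step of the simulation.

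Next I would apply \cref{prop:mdlcqfulltgds-3}, which guarantees $\predicate{S}(x_s,x_t) \in \chaseqi$ iff $\mcell(s,t) = \predicate{S}$ for $s,t \leq i$. Substituting this into the previous step rephrases the condition as: on an initial tape of length $i$, at time step $i$, $\dtm$ still has its head in a non-halting state. Taking the universal quantification over $i$ then yields: ``$\chaseqpi \models Q$ for every $i$'' iff ``for every $i$, $\dtm$ on the initial tape of length $i$ has not reached $\mstateend$ within $i$ steps.''

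The hard part will be matching the quantified statement to the stated notion of non-halting for $\dtm$, namely $\neg \exists i$ such that $\dtm$ halts on the initial tape of length $i$. The forward direction is immediate since an infinite run on each tape certainly fails to reach $\mstateend$ within any bound. For the converse, one needs that if $\dtm$ halts on some tape length $i_0$ then it already halts within $i$ steps on tape length $i$ for some $i$; this is guaranteed by framing the source undecidable problem via a padding argument that bounds the halting time by the tape length, which preserves undecidability of the halting problem introduced at the start of the reduction. Once this small bookkeeping is in place, the corollary follows by directly chaining the two cited results.
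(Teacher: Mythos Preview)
Your proposal is correct and follows the same route as the paper, which also derives the corollary by simply chaining \cref{cor:mdlcqfull-qi-qpi-q-conn} with \cref{prop:mdlcqfulltgds-3}. You are in fact more careful than the paper: the mismatch you flag between ``not halted by step $i$ on tape of length $i$, for all $i$'' and the paper's notion of non-halting is real, and your padding fix (arranging the source problem so that halting time is bounded by tape length) is the standard way to close it, whereas the paper leaves this implicit.
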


From this, we conclude that the monotonic decidability problem for MDL queries, unary CQ views, and full TGDs is undecidable. 

For the undecidability of CQ rewritability and Datalog rewritability, we observe that we have fulfilled the requirements 
promised before we began the construction.
The construction guarantees that whenever monotonic determinacy holds, the view image is always isomorphic to the instance $\set{V_\predstart(x), V_\predend(x), V_\predstart(y)}$. Thus monotonic determinacy is equivalent to the
existence of a CQ rewriting: the CQ obtained by turning this view image into a CQ by existentially quantifying
the variables.

This concludes the proof of \cref{thm:mdl-cq-fulltgd}.


\end{document}